\documentclass[11pt]{article}
\usepackage{amsmath,amsfonts,amsthm,amssymb,dsfont}
\usepackage{times}
\usepackage{comment}
\usepackage[margin=1in]{geometry}
\usepackage{soul,color}
\usepackage{graphicx,float,wrapfig}
\usepackage{mathrsfs}
\usepackage[usenames,dvipsnames]{pstricks}
\usepackage{epsfig}
\usepackage{pst-grad} 
\usepackage{pst-plot} 
\usepackage[linesnumbered,boxed,ruled,vlined]{algorithm2e}
\usepackage{tikz}
\usetikzlibrary{arrows,automata}

\newtheorem{prob}{Problem}
\newtheorem{problem}{Problem}
\newtheorem{theo}{Theorem}[section]

\newtheorem{lemma}[theo]{Lemma}

\newtheorem{prop}[theo]{Proposition}
\newtheorem{cor}[theo]{Corollary}
\newtheorem{defi}[theo]{Definition}
\newtheorem{rem}[theo]{Remark}

\newtheorem{ob}{Observation}
\newtheorem{as}{Assumption}
\newtheorem{assumption}{Assumption}

\newtheorem{conjecture}{Conjecture}

\newenvironment{proofof}[1]{\begin{proof}[Proof of #1]}{\end{proof}}
\newenvironment{proofsketch}{\begin{proof}[Proof Sketch]}{\end{proof}}

\newcommand{\spz}[1]{|#1\rangle}

\newcommand{\rpz}[1]{\langle #1 |}

\newcommand{\ct}[1]{{#1}^{\dagger}}

\newcommand{\Tr}{\mathrm{Tr}}

\newcommand{\outpt}[1]{\spz{#1}\rpz{#1}}

\newcommand{\Ex}{\operatorname*{\mathbb{E}}}
\everymath{\displaystyle}
\newcommand{\betw}{\ |\ }


\newcommand{\SampBPP}{\textsf{SampBPP}}
\newcommand{\SampBQP}{\textsf{SampBQP}}
\newcommand{\PSPACE}{\textsf{PSPACE}}
\newcommand{\PH}{\textsf{PH}}
\newcommand{\TQBF}{\textsf{TQBF}}
\newcommand{\BosonSampling}{\textsf{BosonSampling}}

\newcommand{\Sipser}{\textsf{Sipser}}
\newcommand{\ACz}{\textsf{AC}_0}

\newcommand{\oracle}{\mathcal{O}}
\newcommand{\distr}{\mathcal{D}}
\newcommand{\distrO}{\distr_\oracle}

\newcommand{\pj}{\mathcal{P}}
\newcommand{\heb}{\mathcal{H}}

\newcommand{\floor}[1]{\lfloor #1 \rfloor}
\newcommand{\ceil}[1]{\lceil #1 \rceil}

\newcommand{\eps}{\varepsilon}

\renewcommand{\epsilon}{\varepsilon}

\newcommand{\ffishing}{\textsf{Fourier Fishing}}
\newcommand{\fsampling}{\textsf{Fourier Sampling}}
\newcommand{\ffish}{\textsf{Ffishing}}
\newcommand{\pffish}{\textsf{promise-Ffishing}}
\newcommand{\fsamp}{\textsf{Fsampling}}
\newcommand{\OR}{\mathsf{OR}}

\newcommand{\adv}{\mathsf{adv}}
\newcommand{\Qsucc}{\mathsf{Succ}_{Q}}
\newcommand{\Rsucc}{\mathsf{Succ}_{R}}


\newcommand{\ppoly}{\textsf{P/poly}}
\newcommand{\PTIME}{\textsf{P}}
\newcommand{\NP}{\textsf{NP}}

\newcommand{\BPP}{\textsf{BPP}}

\newcommand{\SZK}{\textsf{SZK}}
\newcommand{\BQP}{\textsf{BQP}}
\newcommand{\DTIME}{\textsf{DTIME}}
\newcommand{\GapMaj}{\mathsf{GapMaj}}
\newcommand{\Hwrs}{H_{\textrm{wrs}}}

\newcommand{\PRP}{\mathsf{PRP}}
\newcommand{\PRF}{\mathsf{PRF}}
\newcommand{\key}{\mathcal{K}}
\newcommand{\domain}{\mathcal{X}}
\newcommand{\image}{\mathcal{Y}}

\newcommand{\PRPraw}{\PRP^{\mathsf{raw}}}
\newcommand{\PRFmod}{\PRF^{\mathsf{mod}}}
\newcommand{\keyraw}{\key^{\mathsf{raw}}}
\newcommand{\keymod}{\key^{\mathsf{mod}}}
\newcommand{\domainraw}{\domain^{\mathsf{raw}}}
\newcommand{\domainmod}{\domain^{\mathsf{mod}}}
\newcommand{\imageraw}{\domainraw}
\newcommand{\imagemod}{\image^{\mathsf{mod}}}
\newcommand{\moduliset}{\mathcal{A}}

\def\ShowAuthNotes{1}
\ifnum\ShowAuthNotes=1
\newcommand{\authnote}[2]{\ \\ \textcolor{red}{\parbox{0.9\linewidth}{[{\footnotesize {\bf #1:} { {#2}}}]}}\newline}
\else
\newcommand{\authnote}[2]{}
\fi


\begin{document}
	\title{Complexity-Theoretic Foundations of Quantum Supremacy Experiments}
	\author{Scott Aaronson\thanks{The University of Texas at Austin.
			\ aaronson@cs.utexas.edu. \ Supported by a Vannevar Bush Faculty Fellowship
			from the US Department of Defense, and by the Simons Foundation
			\textquotedblleft It from Qubit\textquotedblright\ Collaboration.}
		\and Lijie Chen\thanks{Tsinghua University.
			\ wjmzbmr@gmail.com. \ Supported in part by the National Basic Research Program of China Grant 2011CBA00300, 2011CBA00301, the National Natural Science Foundation of China Grant 61361136003.}}
	\date{}

	\clearpage\maketitle
	\thispagestyle{empty}

	\begin{abstract}
		In the near future, there will likely be special-purpose quantum computers
		with 40-50 high-quality qubits. \ This paper lays general theoretical
		foundations for how to use such devices to demonstrate \textquotedblleft
		quantum supremacy\textquotedblright: that is, a clear quantum speedup for
		\textit{some} task, motivated by the goal of overturning the Extended
		Church-Turing Thesis as confidently as possible.
		
		First, we study the hardness of sampling the output distribution of a random
		quantum circuit, along the lines of a recent proposal by the Quantum AI group at
		Google. \ We show that there's a natural average-case hardness assumption,
		which has nothing to do with sampling, yet implies that no polynomial-time
		classical algorithm can pass a statistical test that the quantum sampling
		procedure's outputs do pass. \ Compared to previous work---for example, on
		$\BosonSampling$ and $\mathsf{IQP}$---the central advantage is that we can now
		talk directly about the observed outputs, rather than about the distribution
		being sampled.
		
		Second, in an attempt to refute our hardness assumption, we give a new
		algorithm, inspired by Savitch's Theorem, for simulating a general quantum
		circuit with $n$ qubits and depth $d$ in polynomial space and $d^{O\left(
			n\right)  }$ time. \ We then discuss why this and other known algorithms fail
		to refute our assumption.
		
		Third, resolving an open problem of Aaronson and Arkhipov, we show that any
		strong quantum supremacy theorem---of the form \textquotedblleft if
		approximate quantum sampling is classically easy, then the polynomial
		hierarchy collapses\textquotedblright---must be non-relativizing.
		\ This sharply contrasts with the situation for exact sampling.
		
		Fourth, refuting a conjecture by Aaronson and Ambainis, we show that there is a
        sampling task, namely $\fsampling$, with a \textit{1 versus linear} separation
		between its quantum and classical query complexities.
		
		Fifth, in search of a \textquotedblleft happy medium\textquotedblright%
		\ between black-box and non-black-box arguments, we study quantum supremacy
		relative to oracles in $\mathsf{P/poly}$. \ Previous work implies that, if
		one-way functions exist, then quantum supremacy is possible relative to such
		oracles. \ We show, conversely, that \textit{some} computational assumption is
		needed: if $\mathsf{SampBPP}=\mathsf{SampBQP}$ and $\mathsf{NP}\subseteq
		\mathsf{BPP}$, then quantum supremacy is impossible relative to oracles with
		small circuits.
		
	\end{abstract}
	
	\addtocounter{page}{-1}
	\newpage
	
	\section{Introduction\label{INTRO}}
	
	The \textit{Extended Church-Turing Thesis}, or ECT, asserts that every
	physical process can be simulated by a deterministic or probabilistic Turing
	machine with at most polynomial overhead. \ Since the 1980s---and certainly
	since the discovery of Shor's algorithm \cite{shor}\ in the 1990s---computer
	scientists have understood that quantum mechanics might refute the ECT in
	principle. \ Today, there are actual experiments being planned (e.g.,
	\cite{martinis:supremacy})\ with the goal of severely challenging the ECT in
	practice. \ These experiments don't yet aim to build full,
	fault-tolerant, universal quantum computers, but \textquotedblleft
	merely\textquotedblright\ to demonstrate \textit{some} quantum speedup over
	the best known or conjectured classical algorithms, for some
	possibly-contrived task, as confidently as possible. \ In other words, the
	goal is to answer the skeptics \cite{kalai,levin}\ who claim that genuine
	quantum speedups are either impossible in theory, or at any rate, are
	hopelessly out of reach technologically. \ Recently,\ the term
	\textquotedblleft quantum supremacy\textquotedblright\ has come into vogue for
	such experiments,\footnote{As far as we know, the first person to use the term in print was John Preskill \cite{preskill:solvay}.} although the basic goal goes back several decades, to the
	beginning of quantum computing itself.
	
	Before going further, we should address some common misunderstandings about
	quantum supremacy.
	
	The ECT is an asymptotic claim,\ which of course means that \textit{no}
	finite experiment could render a decisive verdict on it, even in principle.
	\ But this hardly makes experiments irrelevant. \ If
	
	\begin{enumerate}
		\item[(1)] a quantum device performed some task (say) $10^{15}$\ times faster
		than a highly optimized simulation written by \textquotedblleft
		adversaries\textquotedblright\ and running on a classical computing cluster,
		with the quantum/classical gap appearing to increase exponentially with the
		instance size across the whole range tested, and
		
		\item[(2)] this observed performance closely matched theoretical results that
		\textit{predicted} such an exponential quantum speedup for the task in
		question, and
		
		\item[(3)] all other consistency checks passed (for example: removing quantum
		behavior from the experimental device destroyed the observed speedup),
	\end{enumerate}
	
	\noindent this would obviously \textquotedblleft raise the
	stakes\textquotedblright\ for anyone who still believed the ECT! \ Indeed,
	when some quantum computing researchers have
	criticized previous claims to have experimentally achieved quantum speedups
	(see, e.g., \cite{aar:dwave}), it has typically been on the ground that, in those researchers' view,
	the experiments failed to meet one or more of the conditions above.
	
	It's sometimes claimed that any molecule in Nature or the laboratory,
	for which chemists find it computationally prohibitive to solve the
	Schr\"{o}dinger equation and calculate its ground state, \textit{already}
	provides an example of \textquotedblleft quantum supremacy.\textquotedblright%
	\ \ The idea, in other words, is that such a molecule constitutes a
	\textquotedblleft useful quantum computer, for the task of simulating
	itself.\textquotedblright
	
	For us, the central problem with this idea is that in theoretical computer
	science, we care less about individual \textit{instances} than about solving
	\textit{problems} (i.e., infinite collections of instances) in a more-or-less
	uniform way. \ For any one molecule, the difficulty in simulating classically
	it \textit{might} reflect genuine asymptotic hardness, but it might also
	reflect other issues (e.g., a failure to exploit special structure in the
	molecule, or the same issues of modeling error, constant-factor overheads, and
	so forth that arise even in simulations of classical physics).
	
	Thus, while it's possible that complex molecules could form the basis
	for a convincing quantum supremacy demonstration, we believe more work would need to be
	done. \ In particular, one would want a device that could synthesize
	\textit{any} molecule in some theoretically infinite class---and one would
	then want complexity-theoretic evidence that the general problem, of
	simulating a given molecule from that class, is asymptotically hard for a
	classical computer. \ And in such a case, it would seem more natural to call
	the synthesis machine the \textquotedblleft quantum
	computer,\textquotedblright\ rather than the molecules themselves!
	
	In summary, we regard quantum supremacy as a central milestone for quantum
	computing that hasn't been reached yet, but that might be reached in the near
	future. \ This milestone is essentially negative in character: it has no
	obvious signature of the sort familiar to experimental physics, since it
	simply amounts to the \textit{nonexistence} of an efficient classical
	algorithm to simulate a given quantum process. \ For that reason, the tools of
    theoretical computer science will be essential to understand when quantum supremacy has or
	hasn't been achieved. \ So in our view, even if it were uninteresting as
	TCS, there would still be an urgent need for TCS to
	contribute to the discussion about which quantum supremacy experiments to do,
	how to verify their results, and what should count as convincing evidence that
	classical simulation is hard. \ Happily, it turns out that there \textit{is} a great deal
	here of intrinsic TCS interest as well.
	
	\subsection{Supremacy from Sampling}
	
	In recent years, a realization has crystallized that, if our goal is to
	demonstrate quantum supremacy (rather than doing anything directly useful),
	then there are good reasons to shift our attention from decision and function
	problems to \textit{sampling} problems: that is, problems where the goal is to
	sample an $n$-bit string, either exactly or approximately, from a desired
	probability distribution.
	
	A first reason for this is that demonstrating quantum supremacy via a sampling
	problem doesn't appear to require the full strength of a universal quantum
	computer. \ Indeed, there are now at least a half-dozen proposals
	\cite{aark,bjs,farhiharrow,td,mff,jvn,abkm}\ for special-purpose devices that
	could efficiently solve sampling problems believed to be classically
	intractable, \textit{without} being able to solve every problem in the class
	$\mathsf{BQP}$, or for that matter even every problem in $\mathsf{P}$.
	\ Besides their intrinsic physical and mathematical interest,
	these intermediate models might be easier to realize than a universal quantum
	computer. \ In particular, because of their simplicity, they might let us
	avoid the need for the full machinery of quantum fault-tolerance \cite{ab}: something
	that adds a theoretically polylogarithmic but enormous-in-practice overhead to
	quantum computation. \ Thus, many researchers now expect that the first
	convincing demonstration of quantum supremacy will come via this route.
	
	A second reason to focus on sampling problems is more theoretical: in the
	present state of complexity theory, we can arguably be \textit{more} confident
	that certain quantum sampling problems really are classically hard, than we
	are that \textit{factoring} (for example) is classically hard, or even that
	$\mathsf{BPP}\neq\mathsf{BQP}$. \ Already in 2002, Terhal and DiVincenzo
	\cite{td} noticed that, while constant-depth quantum circuits can't solve any
	classically intractable decision problems,\footnote{This is because any qubit
		output by such a circuit depends on at most a constant number of input
		qubits.} they nevertheless have a curious power: namely, they can sample
	probability distributions that can't be sampled in classical polynomial time,
	unless $\mathsf{BQP}\subseteq\mathsf{AM}$, which would be a surprising
    inclusion of complexity classes. \ Then, in 2004, Aaronson showed
	that $\mathsf{PostBQP}=\mathsf{PP}$, where $\mathsf{PostBQP}$\ means
	$\mathsf{BQP}$\ with the ability to postselect on exponentially-unlikely
	measurement outcomes. \ This had the immediate corollary that, if there's an
	efficient classical algorithm to sample the output distribution of an
	arbitrary quantum circuit---or for that matter, any distribution whose
	probabilities are multiplicatively close to the correct ones---then%
	\[
	\mathsf{PP}=\mathsf{PostBQP}=\mathsf{PostBPP}\subseteq\mathsf{BPP}%
	^{\mathsf{NP}}.
	\]
	By Toda's Theorem \cite{toda}, this implies that the polynomial hierarchy
	collapses to the third level.

    Related to that, in 2009, Aaronson \cite{aar:ph}\ showed that,
	while it was (and remains) a notorious open problem to construct an oracle
	relative to which $\mathsf{BQP}\not \subset \mathsf{PH}$, one can construct
	oracular \textit{sampling} and \textit{relation} problems that are solvable in
	quantum polynomial time, but that are provably not solvable in randomized
	polynomial time augmented with a $\mathsf{PH}$\ oracle.
	
	Then, partly inspired by that oracle separation, Aaronson and Arkhipov
	\cite{aark}\ proposed $\BosonSampling$: a model that uses identical photons
	traveling through a network of beamsplitters and phaseshifters to solve
	classically hard sampling problems. \ Aaronson and Arkhipov proved that a polynomial-time exact
	classical simulation of $\BosonSampling$ would collapse $\mathsf{PH}$. \ They also gave a
	plausible conjecture implying that even an \textit{approximate} simulation would have
	the same consequence. \ Around the same time, Bremner, Jozsa, and Shepherd
	\cite{bjs}\ independently proposed the Commuting Hamiltonians or
	$\mathsf{IQP}$\ (\textquotedblleft Instantaneous Quantum
	Polynomial-Time\textquotedblright) model, and showed that it had the same
	property, that exact classical simulation would collapse $\mathsf{PH}$.
	\ Later, Bremner, Montanaro, and Shepherd \cite{bms,bms2} showed that, just
	like for $\BosonSampling$, there are plausible conjectures under which even a
	fast classical \textit{approximate} simulation of the $\mathsf{IQP}$\ model
	would collapse $\mathsf{PH}$.
	
	Since then, other models have been proposed with similar behavior. \ To take a
	few examples: Farhi and Harrow \cite{farhiharrow} showed that the so-called
	Quantum Approximate Optimization Algorithm, or QAOA, can sample distributions
	that are classically intractable unless $\mathsf{PH}$\ collapses. \ Morimae,
	Fujii, and Fitzsimons \cite{mff} showed the same for the so-called One Clean
	Qubit or $\mathsf{DQC1}$\ model, while\ Jozsa and Van den Nest \cite{jvn}%
	\ showed it for stabilizer circuits with magic initial states and nonadaptive
	measurements, and Aaronson et al.\ \cite{abkm} showed it for a model based on
	integrable particle scattering in $1+1$\ dimensions. \ In retrospect, the
	constant-depth quantum circuits considered by Terhal and DiVincenzo \cite{td}
	also have the property that fast exact classical simulation would collapse
	$\mathsf{PH}$.
	
	Within the last four years, quantum supremacy via sampling has made the leap
	from complexity theory to a serious experimental prospect. \ For example,
	there have by now been many small-scale demonstrations of $\BosonSampling$ in
	linear-optical systems, with the current record being a $6$-photon experiment
	by Carolan et al.\ \cite{carolan}. \ To scale up to (say) $30$ or $40$
	photons---as would be needed to make a classical simulation of the experiment
	suitably difficult---seems to require more reliable single-photon sources than
	exist today. \ But some experts (e.g., \cite{rudolph:optimistic,pgha}) are
	optimistic that optical multiplexing, superconducting resonators, or other
	technologies currently under development will lead to such photon sources. \ In the
	meantime, as we mentioned earlier, Boixo et al.\ \cite{martinis:supremacy}%
	\ have publicly detailed a plan, currently underway at Google, to perform a
	quantum supremacy experiment involving random circuits applied to a 2D array
	of 40-50 coupled superconducting qubits. \ So far, the group at
	Google has demonstrated the preparation and measurement of entangled states on
	a linear array of 9 superconducting qubits \cite{kelly}.
	
	\subsection{Theoretical Challenges\label{CHALLENGES}}
	
	Despite the exciting recent progress in both theory and experiment, some huge
	conceptual problems have remained about sampling-based quantum supremacy.
	\ These problems are not specific to any one quantum supremacy
	proposal (such as $\BosonSampling$, $\mathsf{IQP}$, or random quantum circuits),
	but apply with minor variations to all of them.\bigskip
	
	\textbf{Verification of Quantum Supremacy Experiments.} \ From the beginning,
	there was the problem of \textit{how to verify the results} of a
	sampling-based quantum supremacy experiment. \ In contrast to (say)
	factoring and discrete log, for sampling tasks such as $\BosonSampling$, it seems
	unlikely that there's any $\mathsf{NP}$ witness certifying the quantum
	experiment's output, let alone an $\mathsf{NP}$ witness\ that's also the
	experimental output itself. \ Rather, for the sampling tasks, not only
	simulation but even verification might need classical exponential time. \ Yet,
	while no one has yet discovered a general way around this,\footnote{In principle, one could use so-called
    \textit{authenticated quantum computing} \cite{abe,bfk}, but the known schemes for that might be much harder
    to realize technologically than a basic quantum supremacy experiment, and in any case, they all presuppose the validity of quantum mechanics.} it's far from the fatal
	problem that some have imagined. \ The reason is simply that experiments can
	and will target a \textquotedblleft sweet spot,\textquotedblright\ of (say)
	40-50 qubits, for which classical simulation and verification of the results
	is \textit{difficult but not impossible}.
	
	Still, the existing verification methods have a second drawback. \ Namely,
	once we've fixed a specific verification test for sampling from a probability
	distribution $\mathcal{D}$, we ought to consider, not merely all
	classical algorithms that sample exactly or approximately from $\mathcal{D}$,
	but \textit{all classical algorithms that output anything that passes the
		verification test.} \ To put it differently, we ought to talk not about the
	sampling problem itself, but about an associated \textit{relation problem}:
	that is, a problem where the goal is to produce any output that satisfies a
	given condition.
	
	As it happens, in 2011, Aaronson \cite{aaronson2014equivalence} proved an extremely general
	connection between sampling problems and relation problems. \ Namely, given
	any approximate sampling problem $S$, he showed how to define a relation
	problem $R_{S}$\ such that, for every \textquotedblleft
	reasonable\textquotedblright\ model of computation (classical, quantum, etc.),
	$R_{S}$\ is efficiently solvable in that model if and only if $S$ is. \ This
	had the corollary that%
	\[
	\mathsf{SampBPP}=\mathsf{SampBQP}\iff\mathsf{FBPP}=\mathsf{FBQP,}%
	\]
	where $\mathsf{SampBPP}$ and $\mathsf{SampBQP}$\ are the classes of
	approximate sampling problems solvable in polynomial time by randomized and
	quantum algorithms respectively, and $\mathsf{FBPP}$\ and $\mathsf{FBQP}$\ are
	the corresponding classes of relation problems. \ Unfortunately, Aaronson's
	construction of $R_{S}$\ involved Kolmogorov complexity: basically, one asks
	for an $m$-tuple of strings, $\left\langle x_{1},\ldots,x_{m}\right\rangle $,
	such that%
	\[
	K\left(  x_{1},\ldots,x_{m}\right)  \geq\log_{2}\frac{1}{p_{1}\cdots p_{m}%
	}-O\left(  1\right)  ,
	\]
	where $p_{i}$\ is the desired probability of outputting $x_{i}$\ in the
	sampling problem. \ And of course, verifying such a condition is
	extraordinarily difficult, even more so than calculating the
	probabilities $p_{1},\ldots,p_{m}$.\footnote{Furthermore, this is true even if
		we substitute a resource-bounded Kolmogorov complexity, as Aaronson's result
		allows.} \ For this reason, it's strongly preferable to have a condition that
	talks only about the largeness of the $p_{i}$'s, and not about the
	algorithmic randomness of the $x_{i}$'s. \ But then hardness for the
	sampling problem no longer necessarily implies hardness for the relation
	problem, so a new argument is needed.\bigskip
	
	\textbf{Supremacy Theorems for Approximate Sampling.} \ A second difficulty is
	that any quantum sampling device is subject to noise and decoherence.
	\ Ultimately, of course, we'd like hardness results for quantum sampling that
	apply even in the presence of experimentally realistic errors. \ Very
	recently, Bremner, Montanaro, and Shepherd \cite{bms2}\ and Fujii
	\cite{fujii}\ have taken some promising initial steps in that direction. \ But
	even if we care only about the \textit{smallest} \textquotedblleft
	experimentally reasonable\textquotedblright\ error---namely, an error that
	corrupts the output distribution $\mathcal{D}$\ to some other distribution
	$\mathcal{D}^{\prime}$\ that's $\varepsilon$-close to $\mathcal{D}$\ in
	variation distance---Aaronson and Arkhipov \cite{aark} found that we already
	engage substantial new open problems in complexity theory, if we want evidence
	for classical hardness. \ So for example, their hardness argument for approximate
	$\BosonSampling$ depended on the conjecture that there's no $\mathsf{BPP}%
	^{\mathsf{NP}}$\ algorithm to estimate the permanent of an i.i.d.\ Gaussian
	matrix $A\thicksim N\left(  0,1\right)  _{\mathbb{C}}^{n\times n}$, with high
	probability over the choice of $A$.
	
	Of course, one could try to close that loophole by proving that this Gaussian
	permanent estimation problem is $\mathsf{\#P}$-hard, which is indeed a major challenge that
	Aaronson and Arkhipov left open. \ But this situation also raises more general
	questions. \ For example, is there an implication of the form
	\textquotedblleft if $\mathsf{SampBPP}=\mathsf{SampBQP}$, then $\mathsf{PH}%
	$\ collapses,\textquotedblright\ where again $\mathsf{SampBPP}$\ and
	$\mathsf{SampBQP}$\ are the \textit{approximate} sampling versions of $\mathsf{BPP}%
	$\ and $\mathsf{BQP}$\ respectively? \ Are there oracles relative to which
	such an implication does \textit{not} hold?
    \bigskip
	
	\textbf{Quantum Supremacy Relative to Oracles.} \ A third problem goes to
	perhaps the core issue of complexity theory (both quantum and classical): namely, we don't at
	present have a proof of $\mathsf{P}\neq\mathsf{PSPACE}$, much less of
	$\mathsf{BPP\neq BQP}$\ or\ $\mathsf{SampBPP\neq SampBQP}$, less still of the hardness
	of specific problems like factoring or estimating Gaussian permanents. \ So
	what reason do we have to believe that \textit{any} of these problems are hard?
	\ Part of the evidence has always come from oracle results, which we often
	\textit{can} prove unconditionally. \ Particularly in quantum complexity
	theory, oracle separations can already be highly nontrivial, and give us a
	deep intuition for why all the \textquotedblleft standard\textquotedblright%
	\ algorithmic approaches fail for some problem.
	
	On the other hand, we also know, from results like $\mathsf{IP}%
	=\mathsf{PSPACE}$ \cite{shamir}, that oracle separations can badly mislead us
	about what happens in the unrelativized world. \ Generally speaking, we might
	say, relying on an oracle separation is more dangerous, the less the oracle
	function resembles what would actually be available in an explicit
	problem.\footnote{Indeed, the \textit{algebrization barrier} of Aaronson and
		Wigderson \cite{awig} was based on precisely this insight: namely, if we force
		oracles to be \textquotedblleft more realistic,\textquotedblright\ by
		demanding (in that case) that they come equipped with algebraic extensions of
		whichever Boolean functions they represent, then many previously
		non-relativizing results become relativizing.}
	
	In the case of sampling-based quantum supremacy, we've known strong oracle
	separations since early in the subject. \ Indeed, in 2009, Aaronson
	\cite{aar:ph} showed that $\fsampling$---a quantumly easy sampling problem
	that involves only a \textit{random} oracle---requires classical exponential
	time, and for that matter, sits outside the entire polynomial hierarchy. \ But
	of course, in real life random oracles are unavailable. \ So a question
	arises: can we say anything about the classical hardness of $\fsampling$
	with a \textit{pseudorandom} oracle? \ More broadly, what hardness results can
	we prove for quantum sampling, relative to oracles that are efficiently
	computable? \ Here, we imagine that an algorithm doesn't have access to a
	succinct representation of the oracle function $f$, but it does know that a
	succinct representation \textit{exists} (i.e., that $f\in\mathsf{P/poly}$).
	\ Under that assumption, is there any hope of proving an
	\textit{unconditional} separation between quantum and classical sampling? \ If
	not, then can we at least prove quantum supremacy under weaker (or more
	\textquotedblleft generic\textquotedblright) assumptions than would be needed
	in the purely computational setting?
	
	\subsection{Our Contributions}
	\label{OURCONT}
	
	In this paper, we address all three of the above challenges. \ Our
	results might look wide-ranging, but they're held together by a
	single thread: namely, \textit{the quest to understand the classical hardness
		of quantum approximate sampling problems, and especially the meta-question of
		under which computational assumptions such hardness can be proven.} \ We'll be
	interested in both \textquotedblleft positive\textquotedblright\ results, of
	the form \textquotedblleft quantum sampling problem $X$ is classically hard
	under assumption $Y$,\textquotedblright\ and \textquotedblleft
	negative\textquotedblright\ results, of the form \textquotedblleft proving the
	classical hardness of $X$ \textit{requires} assumption $Y$.\textquotedblright%
	\ \ Also, we'll be less concerned with specific proposals such as $\BosonSampling$,
	than simply with the general task of approximately sampling the output
	distribution of a given quantum circuit $C$. \ Fortuitously, though, our
	focus on quantum circuit sampling will make some of our results an excellent
	fit to currently planned experiments---most notably, those at Google
	\cite{martinis:supremacy}, which will involve random quantum circuits on a 2D
	square lattice of $40$ to $50$ superconducting qubits. \ Even though we won't
	address the details of those or other experiments, our results (together with
	other recent work \cite{martinis:supremacy,bms2}) can help to inform the experiments---for
	example, by showing how the circuit depth, the verification test applied to the
	outputs, and other design choices affect the strength of the computational
	assumptions that are necessary and sufficient to conclude that quantum
	supremacy has been achieved.
	
	We have five main results.\bigskip
	
	\textbf{The Hardness of Quantum Circuit Sampling.} \ Our first result, in
	Section \ref{sec:proposal}, is about the hardness of sampling the output distribution
	of a random quantum circuit, along the general lines of the planned
	Google experiment. \ Specifically, we propose a simple verification
	test to apply to the outputs of a random quantum circuit. \ We then analyze
	the classical hardness of generating \textit{any} outputs that pass that test.
	
	More concretely, we study the following basic problem:
	
	\begin{problem}[HOG, or Heavy Output Generation] Given as input a random quantum
		circuit $C$ (drawn from some suitable ensemble), generate output strings
		$x_{1},\ldots,x_{k}$, at least a $2/3$ fraction of which have greater than the
		median probability in $C$'s output distribution.
	\end{problem}
	
	HOG is a relation problem, for which we can verify a claimed solution in
	classical exponential time, by calculating the ideal probabilities $p_{x_{1}%
	},\ldots,p_{x_{k}}$\ for each $x_{i}$\ to be generated by $C$, and then
	checking whether enough of the $p_{x_{i}}$'s are greater than the median value
	(which we can estimate analytically to extremely high confidence). \ Furthermore, HOG is easy to solve
	on a quantum computer, with overwhelming success probability, by the obvious strategy of
	just running $C$ over and over and collecting $k$ of its outputs.\footnote{Heuristically, one expects the $p_{x_{i}}$'s to be exponentially
		distributed random variables, which one can calculate implies that a roughly $\frac{1+\ln 2}{2}\approx 0.847$ fraction of the outputs will have probabilities exceeding the median value.}
	
	It certainly seems plausible that HOG is exponentially hard for a classical
	computer. \ But we ask: under what assumption could that hardness be
	proven? \ To address that question, we propose a new hardness assumption:
	
	\begin{assumption}[QUATH, or the QUAntum THreshold assumption] There is
		no polynomial-time classical algorithm that takes as input a description of a
		random quantum circuit $C$, and that guesses whether $\left\vert \left\langle
		0^n|C|0^n\right\rangle \right\vert ^{2}$\ is greater or less than the median of
		all $2^{n}$\ of the $\left\vert \left\langle 0^n|C|x\right\rangle \right\vert
		^{2}$\ values, with success probability at least $\frac{1}{2}+\Omega\left(
		\frac{1}{2^{n}}\right)  $\ over the choice of $C$.
	\end{assumption}
	
	Our first result says that if QUATH is true, then HOG is hard. \ While this
	might seem nearly tautological, the important point here is that QUATH makes
	no reference to sampling or relation problems. \ Thus, we can now shift our
	focus from sampling algorithms to algorithms that simply estimate amplitudes,
	with a minuscule advantage over random guessing.\bigskip
	
	\textbf{New Algorithms to Simulate Quantum Circuits.} \ But given what a tiny
	advantage $\Omega\left(  2^{-n} \right)  $\ is, why would
	anyone even conjecture that QUATH might be true? \ This brings us to our
	second result, in Section \ref{sec:new-algorithms-circuits}, which is motivated by the attempt to
	refute QUATH. \ We ask: what \textit{are} the best classical algorithms to
	simulate an arbitrary quantum circuit? \ For special quantum circuits (e.g.,
	those with mostly Clifford gates and a few T gates \cite{bravyigosset}),
	there's been exciting recent progress on improved exponential-time simulation
	algorithms, but for arbitrary quantum circuits, one might think there isn't
	much to say. \ Nevertheless, we \textit{do} find something basic to say that, to our knowledge, had been overlooked earlier.
	
	For a quantum circuit with $n$ qubits and $m$ gates, there are two obvious
	simulation algorithms. \ The first, which we could call the
	\textquotedblleft Schr\"{o}dinger\textquotedblright\ algorithm, stores the entire
	state vector in memory, using $\sim m2^{n}$\ time and $\sim2^{n}$\ space.
	\ The second, which we could call the \textquotedblleft
	Feynman\textquotedblright\ algorithm, calculates an amplitude as a sum of terms,
	using $\sim4^{m}$\ time and $\sim m+n$\ space, as in the proof of
	$\mathsf{BQP}\subseteq\mathsf{P}^{\mathsf{\#P}}$ \cite{bv}.
	
	Now typically $m\gg n$, and the difference between $m$ and $n$ could
	matter enormously in practice. \ For example, in the
	planned Google setup, $n$ will be roughly $40$ or $50$, while $m$
	will ideally be in the thousands. \ Thus, $2^{n}$\ time is reasonable whereas
	$4^{m}$\ time is not. \ So a question arises:
	
	\begin{itemize}
		\item \textit{When }$m\gg n$\textit{, is there a classical algorithm to
			simulate an }$n$\textit{-qubit, }$m$\textit{-gate quantum circuit using both
		}$\operatorname*{poly}\left(  m,n\right)  $\textit{\ space\ and much less than
	}$\exp\left(  m\right)  $\textit{\ time---ideally, more like }$\exp\left(
	n\right)  $\textit{?}
\end{itemize}

We show an affirmative answer. \ In particular, inspired by the proof of
Savitch's Theorem \cite{savitch}, we give a recursive, sum-of-products
algorithm that uses $\operatorname*{poly}\left(  m,n\right)  $\ space and
$m^{O\left(  n\right)  }$ time---or better yet, $d^{O\left(  n\right)  }%
$\ time, where $d$ is the circuit depth. \ We also show how to improve
the running time further for quantum circuits subject to nearest-neighbor
constraints, such as the superconducting systems currently under development.
\ Finally, we show the existence of a \textquotedblleft smooth
tradeoff\textquotedblright\ between our algorithm and the $2^{n}$-memory
Schr\"{o}dinger\ algorithm. \ Namely, starting with the Schr\"{o}dinger
algorithm, for every desired halving of the memory usage, one can multiply the
running time by an additional factor of $\sim d$.

We hope our algorithm finds some applications in quantum simulation. \ In
the meantime, though, the key point for this paper is that neither the Feynman
algorithm, nor the Schr\"{o}dinger algorithm, nor our new recursive algorithm
come close to refuting QUATH. \ The Feynman algorithm fails to refute QUATH
because it yields only a $1/\exp\left(  m\right)  $\ advantage over random
guessing, rather than a $1/2^{n}$\ advantage. \ The Schr\"{o}dinger and
recursive algorithms have much closer to the \textquotedblleft
correct\textquotedblright\ $2^{n}$\ running time, but they also fail to refute
QUATH because they don't calculate amplitudes as straightforward sums, so don't lead to
polynomial-time guessing algorithms at all. \ Thus, in asking whether we can
falsify QUATH, in some sense we're asking how far we can go in combining the
advantages of all these algorithms. \ This might, in turn, connect to
longstanding open problems about the optimality of Savitch's Theorem itself
(e.g., $\mathsf{L}$\ versus $\mathsf{NL}$).

Interestingly, our analysis of quantum circuit simulation algorithms explains
why\ this paper's hardness argument for quantum circuit sampling, based on
QUATH, would \textit{not} have worked for quantum supremacy proposals such as
$\BosonSampling$ or $\mathsf{IQP}$. \ It works only for the more general problem
of quantum circuit sampling. \ The reason is that for the latter, unlike for
$\BosonSampling$ or $\mathsf{IQP}$, there exists a parameter $m\gg n$ (namely,
the number of gates) that controls the advantage that a polynomial-time
classical algorithm can achieve over random guessing, even while $n$ controls
the number of possible outputs. \ Our analysis also underscores the
importance of taking $m\gg n$\ in experiments meant to show quantum supremacy,
and it provides some guidance to experimenters about the crucial question of
what \textit{circuit depth} they need for a convincing quantum
supremacy demonstration.

Note that, the greater the required depth, the
more protected against decoherence the qubits need to be. \ But the tradeoff is that
the depth must be high enough that simulation algorithms that exploit limited entanglement,
such as those based on tensor networks, are ruled out. \ Beyond that requirement, our $d^{O(n)}$ simulation algorithm gives some information about how much additional hardness one can purchase for a given increase in depth. \bigskip

\textbf{Strong Quantum Supremacy Theorems Must Be Non-Relativizing.} \ Next,
in Section \ref{sec:non-relativizing}, we switch our attention to a meta-question. \ Namely,
what sorts of complexity-theoretic evidence we could possibly hope to offer
for $\mathsf{SampBPP}\neq\mathsf{SampBQP}$: in other words, for quantum
computers being able to solve approximate sampling problems that are hard
classically? \ By Aaronson's sampling/searching equivalence theorem
\cite{aaronson2014equivalence}, any such evidence would \textit{also} be evidence for
$\mathsf{FBPP}\neq\mathsf{FBQP}$ (where $\mathsf{FBPP}$\ and $\mathsf{FBQP}%
$\ are the corresponding classes of relation problems), and vice versa.

Of course, an unconditional proof of these separations is out of the question
right now, since it would imply $\mathsf{P}\neq\mathsf{PSPACE}$. \ Perhaps the
next best thing would be to show that, if $\mathsf{SampBPP}=\mathsf{SampBQP}$,
then the polynomial hierarchy collapses. \ This latter is \textit{not} out of
the question: as we said earlier, we already know, by a simple relativizing
argument, that an equivalence between quantum and classical \textit{exact}
sampling implies the collapse $\mathsf{P}^{\#\mathsf{P}}=\mathsf{PH}%
=\mathsf{BPP}^{\mathsf{NP}}$. \ Furthermore, in their work on $\BosonSampling$,
Aaronson and Arkhipov \cite{aark}\ formulated a $\#\mathsf{P}$-hardness
conjecture---namely, their so-called \textit{Permanent of Gaussians Conjecture}, or
PGC---that if true, would imply a generalization of that collapse to the
physically relevant case of approximate sampling. \ More explicitly, Aaronson
and Arkhipov showed that if the PGC holds, then%
\begin{equation}
\mathsf{SampBPP}=\mathsf{SampBQP\Longrightarrow P}^{\#\mathsf{P}}%
=\mathsf{BPP}^{\mathsf{NP}}. \label{imp}%
\end{equation}
They went on to propose a program for proving the PGC, by exploiting the
random self-reducibility of the permanent. \ On the other hand, Aaronson and
Arkhipov also explained in detail why new ideas would be needed to complete
that program, and the challenge remains open.

Subsequently, Bremner, Montanaro, and Shepherd \cite{bms,bms2} gave
analogous $\#\mathsf{P}$-hardness conjectures that, if true, would
\textit{also} imply the implication (\ref{imp}), by going through the
$\mathsf{IQP}$\ model rather than through $\BosonSampling$.

Meanwhile, nearly two decades ago, Fortnow and Rogers \cite{fr}\ exhibited an
oracle relative to which $\mathsf{P}=\mathsf{BQP}$\ and yet the polynomial
hierarchy is infinite. \ In other words, they showed that any proof of the
implication%
\[
\mathsf{P}=\mathsf{BQP}\Longrightarrow\mathsf{PH}\text{ collapses}%
\]
would have to be non-relativizing. \ Unfortunately, their construction was
extremely specific to languages (i.e., total Boolean
functions), and didn't even rule out the possibility that the implication%
\[
\mathsf{P{}romiseBPP}=\mathsf{P{}romiseBQP}\Longrightarrow\mathsf{PH}\text{
	collapses}%
\]
could be proven in a relativizing way. \ Thus, Aaronson and Arkhipov
\cite[see Section 10]{aark}\ raised the question of which quantum supremacy
theorems hold relative to all oracles.

In Section \ref{sec:non-relativizing}, we fill in the final piece needed to resolve their question, by constructing an oracle
$A$\ relative to which $\mathsf{SampBPP}=\mathsf{SampBQP}$\ and yet
$\mathsf{PH}$\ is infinite. \ In other words, we show that \textit{any strong
	supremacy theorem for quantum sampling, along the lines of what Aaronson and
	Arkhipov \cite{aark} and Bremner, Montanaro, and Shepherd \cite{bms,bms2} were
	seeking, must use non-relativizing techniques}. \ In that respect, the
situation with approximate sampling is extremely different from that with
exact sampling.

Perhaps it's no surprise that one would need non-relativizing techniques to
prove a strong quantum supremacy theorem. \ In fact, Aaronson and Arkhipov
\cite{aark}\ were originally led to study $\BosonSampling$ precisely because of
the connection between bosons and the permanent function, and the hope that
one could therefore exploit the famous non-relativizing properties of the
permanent to prove hardness. \ All the same, this is the first time
we have explicit confirmation that non-relativizing techniques will be
needed.\bigskip

\textbf{Maximal Quantum Supremacy for Black-Box Sampling and Relation
	Problems.} \ In Section \ref{sec:sepa-samp-relation}, we turn our attention to the black-box
model, and specifically to the question: \textit{what are the largest possible
	separations between randomized and quantum query complexities for any
	approximate sampling or relation problem?} \ Here we settle another open
question. \ In 2015, Aaronson and Ambainis
\cite{aa:forrelation}\ studied $\fsampling$, in which we're given
access to a Boolean function $f:\left\{  0,1\right\}  ^{n}\rightarrow\left\{
0,1\right\}  $, and the goal is to sample a string $z$ with probability
$\widehat{f}\left(  z\right)  ^{2}$, where $\widehat{f}$ is the Boolean
Fourier transform of $f$, normalized so that $\sum_z \widehat{f}\left(
z\right)  ^{2}=1$. \ This problem is trivially solvable by a quantum algorithm
with only $1$ query to $f$. \ By contrast, Aaronson and Ambainis showed that
there exists a constant $\varepsilon>0$\ such that any classical algorithm
that solves $\fsampling$, to accuracy $\varepsilon$\ in variation
distance, requires $\Omega\left(  2^{n}/n\right)  $\ queries to $f$. \ They
conjectured that this lower bound was tight.

Here we refute that conjecture, by proving a $\Omega\left(  2^{n}\right)
$\ lower bound on the randomized query complexity of $\fsampling$, as long
as $\varepsilon$\ is sufficiently small (say, $\frac{1}{40000}$). \ This
implies that, for approximate sampling problems, the gap between quantum and
randomized query complexities can be as large as imaginable: namely,
$\mathit{1}$\textit{ versus linear (!)}.\footnote{We have learned (personal communication) that recently, and independently of us,
Ashley Montanaro has obtained a communication complexity result that implies this result as a corollary.} \ This sharply contrasts with the
case of partial Boolean functions, for which
Aaronson and Ambainis \cite{aa:forrelation} showed that any $N$-bit problem
solvable with $k$ quantum queries is also solvable with $O\left(
N^{1-1/2k}\right)  $\ randomized queries, and hence a constant versus linear
separation is impossible. \ Thus, our result helps once again to underscore the
advantage of sampling problems over decision problems for quantum supremacy experiments. \ Given the
extremely close connection between $\fsampling$ and the $\mathsf{IQP}%
$\ model \cite{bjs}, our result also provides
some evidence that classically simulating an $n$-qubit $\mathsf{IQP}%
$\ circuit, to within constant error in variation distance, is about as hard
as can be: it might literally require $\Omega\left(  2^{n}\right)  $\ time.

Aaronson and Ambainis \cite{aa:forrelation}\ didn't directly address the
natural relational version of $\fsampling$, which Aaronson \cite{aar:ph}%
\ had called $\ffishing$ in 2009. \ In $\ffishing$, the goal
is to output any string $z$\ such that $\widehat{f}\left(  z\right)
^{2}\ge 1$, with nontrivial success probability. \ Unfortunately, the best lower bound on the
randomized query complexity of $\ffishing$ that follows from
\cite{aar:ph}\ has the form $2^{n^{\Omega\left(  1\right)  }}$. \ As a further
contribution, in Section \ref{sec:sepa-samp-relation}\ we give a\ lower bound of $\Omega\left(
2^{n}/n\right)  $\ on the randomized query complexity of $\ffishing$,
which both simplifies and subsumes the $\Omega\left(  2^{n}/n\right)  $\ lower
bound for $\fsampling$ by Aaronson and Ambainis \cite{aa:forrelation}
(which, of course, we also improve to $\Omega(2^n)$ in this paper).\bigskip

\textbf{Quantum Supremacy Relative to Efficiently-Computable Oracles.} \ In
Section \ref{sec:oracle-sepa-ppoly}, we ask a new question: when proving quantum supremacy
theorems, can we \textquotedblleft interpolate\textquotedblright\ between the
black-box setting of Sections \ref{sec:non-relativizing}\ and \ref{sec:sepa-samp-relation}, and the
non-black-box setting of Sections \ref{sec:proposal}\ and \ref{sec:new-algorithms-circuits}? \ In particular,
what happens if we consider quantum sampling algorithms that can access an
oracle, \textit{but} we impose a constraint that the oracle has to be
\textquotedblleft physically realistic\textquotedblright? \ One natural
requirement here is that the oracle function $f$\ be computable in the class
$\mathsf{P/poly}$:\footnote{More broadly, we could let $f$\ be computable in
	$\mathsf{BQP/poly}$, but this doesn't change the story too much.} in other
words, that there are polynomial-size circuits for $f$, which we imagine that
our sampling algorithms (both quantum and classical) can call as subroutines.
\ If the sampling algorithms \textit{also} had access to explicit descriptions
of the circuits, then we'd be back in the computational setting, where we
already know that there's no hope at present of proving quantum supremacy
unconditionally. \ But what if our sampling algorithms
know only that small circuits for $f$ exist, without
knowing what they are? \ Could quantum supremacy be proven unconditionally
\textit{then}?

We give a satisfying answer to this question. \ First, by adapting constructions
due to Zhandry \cite{zhandry2012construct} and (independently) Servedio and Gortler \cite{servediogortler},
we show that if one-way functions exist, then there are oracles $A\in \mathsf{P/poly}$\ such that $\mathsf{BPP}^A\neq\mathsf{BQP}^A$, and indeed even $\mathsf{BQP}^A \not\subset\mathsf{SZK}^A$.
\ (Here and later, the one-way functions only need to be hard to invert
classically, not quantumly.)

Note that, in the unrelativized world, there
seems to be no hope at present of proving $\mathsf{BPP}\neq\mathsf{BQP}$ under
any hypothesis nearly as weak as the existence of one-way functions. \ Instead
one has to assume the one-wayness of extremely \textit{specific} functions,
for example those based on factoring or discrete log.

Second, and more relevant to near-term experiments, we show that if there
exist one-way functions that take at least subexponential time to invert, then there are
Boolean functions $f\in\mathsf{P/poly}$\ such that approximate $\fsampling$ on those $f$'s requires classical exponential time. \ In other words:
within our \textquotedblleft physically realistic oracle\textquotedblright%
\ model, there are feasible-looking quantum supremacy experiments, along the
lines of the $\mathsf{IQP}$\ proposal \cite{bjs},
such that a very standard and minimal cryptographic assumption is enough to
prove the hardness of simulating those experiments classically.

Third, we show that the above two results are essentially optimal, by proving
a converse result: that even in our $\mathsf{P/poly}$ oracle model,
\textit{some} computational assumption is still needed to prove quantum supremacy.
\ The precise statement is this: if $\mathsf{SampBPP}=\mathsf{SampBQP}$ and
$\mathsf{NP}\subseteq\mathsf{BPP}$, then $\mathsf{SampBPP}^{A}%
=\mathsf{SampBQP}^{A}$\ for all $A\in\mathsf{P/poly}$. \ Or equivalently: if
we want to separate quantum from classical approximate sampling relative to
efficiently computable\ oracles, then we need to assume \textit{something}
about the unrelativized world: either $\mathsf{SampBPP}\neq\mathsf{SampBQP}%
$\ (in which case we wouldn't even need an oracle), or else $\mathsf{NP}%
\not \subset \mathsf{BPP}$\ (which is closely related to the assumption we
\textit{do} make, namely that one-way functions exist).

So to summarize, we've uncovered a \textquotedblleft smooth
tradeoff\textquotedblright\ between the model of computation and the hypothesis needed
for quantum supremacy. \ Relative to \textit{some} oracle (and even a random
oracle), we can prove $\mathsf{SampBPP}\neq\mathsf{SampBQP}$
unconditionally. \ Relative to some efficiently computable oracle, we can
prove $\mathsf{SampBPP}\neq\mathsf{SampBQP}$, but only under a weak
computational assumption, like the existence of one-way functions. \ Finally,
with no oracle, we can currently prove $\mathsf{SampBPP}\neq\mathsf{SampBQP}%
$\ only under special assumptions, such as factoring being hard, or the
permanents of Gaussian matrices being hard to approximate in $\mathsf{BPP}%
^{\mathsf{NP}}$, or our QUATH assumption. \ Perhaps eventually, we'll be able to prove
$\mathsf{SampBPP}\neq\mathsf{SampBQP}$\ under the sole assumption that
$\mathsf{PH}$\ is infinite, which would be a huge step forward---but at any rate
we'll need \textit{some} separation of classical complexity classes.\footnote{Unless, of course, someone
	were to separate $\mathsf{P}$ from $\mathsf{PSPACE}$ unconditionally!}

One last remark: the idea of comparing complexity classes relative
to\ $\mathsf{P/poly}$\ oracles\ seems quite natural even apart from its
applications to quantum supremacy. \ So in Appendix \ref{sec:other_ppoly}, we take an
initial stab at exploring the implications of that idea for other central questions in complexity theory.
\ In particular, we prove the surprising result there that $\mathsf{P}%
^{A}=\mathsf{BPP}^{A}$\ for all oracles $A\in\mathsf{P/poly}$, \textit{if and
	only if} the derandomization hypothesis of Impagliazzo and Wigderson
\cite{iw}\ holds (i.e., there exists a function in $\mathsf{E}$\ with
$2^{\Omega\left(  n\right)  }$\ circuit complexity). \ In our view, this helps to clarify
Impagliazzo and Wigderson's theorem itself, by showing precisely in what way their circuit lower bound hypothesis
is stronger than the desired conclusion $\mathsf{P}=\mathsf{BPP}$. \ We also show that, if
there are quantumly-secure one-way functions, then there exists an
oracle $A\in\mathsf{P/poly}$\ such that $\mathsf{SZK}^{A}\not \subset
\mathsf{BQP}^{A}$.


\subsection{Techniques\label{TECHNIQUES}}

In our view, the central contributions of this work lie in the creation of new
questions, models, and hardness assumptions (such as QUATH and quantum
supremacy relative to $\mathsf{P/poly}$\ oracles), as well as in basic
observations that somehow weren't made before (such as the sum-products
algorithm for simulating quantum circuits)---all of it motivated by the goal
of using complexity theory to inform ongoing efforts in experimental
physics to test the Extended Church-Turing Thesis. \ While some of our proofs
are quite involved, by and large the proof techniques are ones that will be
familiar to complexity theorists. \ Even so, it seems appropriate to say a few
words about techniques here.

To prove, in Section \ref{sec:proposal}, that \textquotedblleft if QUATH is true, then
HOG is hard,\textquotedblright\ we give a fairly straightforward
reduction:\ namely, we assume the existence of a polynomial-time classical
algorithm to find high-probability outputs of a given quantum circuit $C$.
\ We then use that algorithm (together with a random self-reduction trick) to
guess the magnitude of a particular transition amplitude, such as
$\left\langle 0^n |C|0^n \right\rangle $, with probability slightly
better than chance, which is enough to refute QUATH.

One technical step is
to show that, with $\Omega(1)$ probability, the distribution over $n$-bit strings sampled by a random
quantum circuit $C$ is far from the uniform distribution. \ But not only can this be done, we
show that it can be done by examining only the very last gate of $C$, and
ignoring all other gates! \ A challenge that we leave open is to improve this, to show
that the distribution sampled by $C$ is far from uniform, not merely with $\Omega(1)$ probability,
but with $1-1/\exp(n)$ probability. \ In Appendix \ref{sec:tukareta}, we present numerical evidence
for this conjecture, and indeed for a stronger conjecture, that the probabilities appearing in the output
distribution of a random quantum circuit behave like independent, exponentially-distributed random variables. \ (We note that Brandao, Harrow and Horodecki \cite{brandaoharrow} recently proved a closely-related result, which unfortunately is not quite strong enough for our purposes.)

In Section \ref{sec:new-algorithms-circuits}, to give our polynomial-space, $d^{O\left(  n\right)  }%
$-time classical algorithm for simulating an $n$-qubit, depth-$d$\ quantum
circuit $C$, we use a simple recursive strategy, reminiscent of Savitch's
Theorem. \ Namely, we slice the circuit into two layers, $C_{1}$\ and $C_{2}$,
of depth $d/2$\ each, and then express a transition amplitude $\left\langle
x|C|z\right\rangle $\ of interest to us as%
\[
\left\langle x|C|z\right\rangle =\sum_{y\in\left\{  0,1\right\}  ^{n}%
}\left\langle x|C_{1}|y\right\rangle \left\langle y|C_{2}|z\right\rangle .
\]
We then compute each $\left\langle x|C_{1}|y\right\rangle $\ and $\left\langle
y|C_{2}|z\right\rangle $\ by recursively slicing $C_{1}$\ and $C_{2}$\ into
layers of depth $d/4$\ each, and so on. \ What takes more work is to obtain a
further improvement if $C$ has only nearest-neighbor interactions on a grid
graph---for that, we use a more sophisticated divide-and-conquer approach---and
also to interpolate our recursive algorithm with the $2^{n}$-space
Schr\"{o}dinger simulation, in order to make the best possible use of whatever
memory is available.

Our construction, in Section \ref{sec:non-relativizing}, of an oracle relative to which
$\mathsf{SampBPP}=\mathsf{SampBQP}$\ and yet $\mathsf{PH}$\ is infinite\ involves significant technical difficulty. \ As a first step,
we can use a $\mathsf{PSPACE}$\ oracle to collapse $\mathsf{SampBPP}$\ with
$\mathsf{SampBQP}$, and then use one of many known oracles (or, by the recent breakthrough of Rossman, Servedio,
and Tan \cite{rst}, even a \textit{random} oracle) to make $\mathsf{PH}%
$\ infinite. \ The problem is that, if we do this in any na\"{\i}ve way, then
the oracle that makes\ $\mathsf{PH}$\ infinite will also re-separate
$\mathsf{SampBPP}$\ and $\mathsf{SampBQP}$, for example because of the approximate $\fsampling$ problem. \ Thus, we need to \textit{hide} the oracle that makes $\mathsf{PH}%
$\ infinite, in such a way that a $\mathsf{PH}$\ algorithm can still find the
oracle (and hence, $\mathsf{PH}$\ is still infinite), but a $\mathsf{SampBQP}%
$\ algorithm can't find it with any non-negligible probability---crucially,
not even if the $\mathsf{SampBQP}$\ algorithm's input $x$\ provides a clue
about the oracle's location. \ Once one realizes that these are the
challenges, one then has about seven pages of work to ensure that
$\mathsf{SampBPP}$\ and $\mathsf{SampBQP}$\ remain equal, relative to the
oracle that one has constructed. \ Incidentally, we know that this equivalence can't possibly hold for \textit{exact} sampling, so \textit{something} must force small errors
to arise when the $\mathsf{SampBPP}$\ algorithm simulates the $\mathsf{SampBQP}$\ one. \ That something is basically the tiny probability that the quantum algorithm will succeed at finding the hidden oracle,
which however can be upper-bounded using quantum-mechanical linearity.

In Section \ref{sec:sepa-samp-relation}, to prove a $\Omega\left(  2^{n}\right)  $\ lower bound
on the classical query complexity of approximate $\mathsf{Fourier}\newline \mathsf{Sampling}$, we use the
same basic strategy that Aaronson and Ambainis \cite{aa:forrelation}\ used to
prove a $\Omega\left(  2^{n}/n\right)  $\ lower bound, but with a much more
careful analysis. \ Specifically, we observe that any $\fsampling$
algorithm would also yield an algorithm whose probability of accepting, while always small,
is extremely sensitive to some specific Fourier coefficient, say
$\widehat{f}\left(  0\cdots0\right)  $. \ We
then lower-bound the randomized query complexity of accepting with the required sensitivity to $\widehat{f}\left(  0\cdots0\right)  $, taking
advantage of the fact that $\widehat{f}\left(  0\cdots0\right)  $\ is simply
proportional to $\sum_x f\left(  x\right)  $, so that all $x$'s can be
treated symmetrically. \ Interestingly, we also give a different, much simpler
argument that yields a $\Omega\left(  2^{n}/n\right)  $\ lower bound on the
randomized query complexity of $\ffishing$, which then immediately implies
a $\Omega\left(  2^{n}/n\right)  $\ lower bound for $\fsampling$ as well.
\ However, if we want to improve the bound to $\Omega\left(  2^{n}\right)  $,
then the original argument that Aaronson and Ambainis \cite{aa:forrelation}
used to prove $\Omega\left(  2^{n}/n\right)  $\ seems to be needed.

In Section \ref{sec:oracle-sepa-ppoly}, to prove that one-way functions imply the existence of
an oracle $A\in\mathsf{P/poly}$\ such that $\mathsf{P}^{A}\neq\mathsf{BQP}%
^{A}$, we adapt a construction that was independently proposed by Zhandry
\cite{zhandry2012construct} and by Servedio and Gortler \cite{servediogortler}. \ In this
construction, we first use known reductions \cite{hill,ggm} to convert a
one-way function into a classically-secure pseudorandom permutation, say $\sigma$. \ We then define a
new function by $g_r(x) := \sigma(x \operatorname{mod}r )$,
where $x$\ is interpreted as an integer written in binary, and $r$ is a hidden
period. \ Finally, we argue that either Shor's algorithm \cite{shor} leads to a
quantum advantage over classical algorithms in finding the period of $g_r$,
or else $g_r$\ was not pseudorandom, contrary to assumption. \ To show
that subexponentially-secure one-way functions imply the existence of an oracle
$A\in\mathsf{P/poly}$\ relative to which $\fsampling$ is classically hard,
we use similar reasoning. \ The main difference is that now, to construct a
distinguisher against a pseudorandom function $f$, we need classical
exponential time just to \textit{verify} the outputs of a claimed
polynomial-time classical algorithm for $\fsampling$ $f$---and that's
why we need to assume $2^{n^{\Omega(1)}}$ security.

Finally, to prove that
$\mathsf{SampBPP}=\mathsf{SampBQP}$ and $\mathsf{NP}\subseteq\mathsf{BPP}$
imply $\mathsf{SampBPP}^{A}=\mathsf{SampBQP}^{A}$\ for all $A\in
\mathsf{P/poly}$, we design a step-by-step classical simulation of a quantum
algorithm, call it $Q$, that queries an oracle $A\in\mathsf{P/poly}$. \ We use
the assumption\ $\mathsf{SampBPP}=\mathsf{SampBQP}$\ to sample from the
probability distribution over queries to $A$\ that $Q$\ makes\ at any given time
step. \ Then we use the assumption $\mathsf{NP}\subseteq\mathsf{BPP}$\ to guess a
function $f\in\mathsf{P/poly}$\ that's consistent with $n^{O\left(  1\right)
}$ sampled classical queries to $A$. \ Because of the limited number of
functions in $\mathsf{P/poly}$, standard sample complexity bounds for
PAC-learning imply that any such $f$\ that we guess will probably agree with
the \textquotedblleft true\textquotedblright\ oracle $A$ on most inputs. \
Quantum-mechanical linearity then implies that the rare disagreements
between $f$\ and $A$ will have at most a small effect on the future behavior
of $Q$.

\section{Preliminaries}
\label{sec:prelim}

For a positive integer $n$, we use $[n]$ to denote the integers from $1$ to $n$.  Logarithms are base $2$.

\subsection{Quantum Circuits}
\label{sec:quantum-circuits}

We now introduce some notations for quantum circuits, which will be used throughout this paper.

In a quantum circuit, without loss of generality, we assume all gates are unitary and acting on exactly two qubits each\footnote{Except for oracle gates, which may act on any number of qubits.}.

Given a quantum circuit $C$, slightly abusing notation, we also use $C$ to denote the unitary operator induced by $C$. Suppose there are $n$ qubits and $m$ gates in $C$; then we index the qubits from $1$ to $n$. We also index gates from $1$ to $m$ in chronological order for convenience.

For each subset $S \subseteq [n]$ of the qubits, let $\heb_S$ be the Hilbert space corresponding to the qubits in $S$, and $I_S$ be the identity operator on $\heb_S$. Then the unitary operator $U_i$ for the $i$-th gate can be written as $U_i := O_i \otimes I_{[n]\setminus\{a_i,b_i\}}$, in which $O_i$ is a unitary operator on $\heb_{\{a_i,b_i\}}$ (the Hilbert space spanned by the qubits $a_i$ and $b_i$), and $I_{[n]\setminus\{a_i,b_i\}}$ is the identity operator on the other qubits.

We say that a quantum circuit has depth $d$, if its gates can be partitioned into $d$ layers (in chronological order), such that the gates in each layer act on disjoint pairs of qubits. Suppose the $i$-th layer consists of the gates in $[L_i,R_i]$. We define $C_{[r \leftarrow l]} = U_{R_{r}} \cdot U_{R_{r}-1} \dotsc U_{L_{l}+1} \cdot U_{L_{l}}$, that is, the sub-circuit between the $l$-th layer and the $r$-th layer.

\subsubsection*{Base Graphs and Grids}

In Sections \ref{sec:proposal} and \ref{sec:new-algorithms-circuits}, we will sometimes assume {\em locality} of a given quantum circuit. \ To formalize this notion, we define the {\em base graph} of a quantum circuit.

\begin{defi}
	Given a quantum circuit $C$ on $n$ qubits, its base graph  $G_C=(V,E)$ is an undirected graph defined by $V=[n]$, and
$$ E = \{ (a,b) \betw \text{there is a quantum gate that acts on qubits $a$ and $b$.} \}.$$
\end{defi}

We will consider a specific kind of base graph, the grids.

\begin{defi}\label{defi:grid}
	The grid $G$ of size $H \times W$ is a graph with vertices $V = \{ (x,y) \betw x \in [H], y \in [W] \}$ and edges $E = \{(a,b) \betw |a-b|_1 = 1, a\in V,b \in V \}$, and we say that grid $G$ has $H$ rows and $W$ columns.
\end{defi}

\subsection{Complexity Classes for Sampling Problems}

\subsubsection*{Definitions for $\SampBPP$ and $\SampBQP$}

We adopt the following definition for sampling problems from \cite{aaronson2014equivalence}.

\begin{defi}
	[Sampling Problems, $\SampBPP$, and $\SampBQP$]A \textit{sampling problem} $S$ is a collection of probability distributions
	$\left(  \mathcal{D}_{x}\right)  _{x\in\left\{  0,1\right\}  ^{\ast}}$, one
	for each input string $x\in\left\{  0,1\right\}  ^{n}$, where $\mathcal{D}%
	_{x}$\ is a distribution over $\left\{  0,1\right\}  ^{p\left(  n\right)  }$,
	for some fixed polynomial $p$. \ Then $\SampBPP$ is the class of
	sampling problems $S=\left(  \mathcal{D}_{x}\right)  _{x\in\left\{
		0,1\right\}  ^{\ast}}$ for which there exists a probabilistic polynomial-time
	algorithm $B$\ that, given $\left\langle x,0^{1/\varepsilon}\right\rangle $ as
	input, samples from a probability distribution $\mathcal{C}_{x}$\ such that
	$\left\Vert \mathcal{C}_{x}-\mathcal{D}_{x}\right\Vert \leq\varepsilon$.
	\ $\SampBQP$\ is defined the same way, except that $B$\ is a quantum
	algorithm rather than a classical one.
	
	Oracle versions of these classes can also be defined in the natural way.
\end{defi}

\subsubsection*{A Canonical Form of $\SampBQP$ Oracle Algorithms}
\label{sec:canonical-SampBQP}

To ease our discussion about $\SampBQP^{\oracle}$, we describe a canonical form of $\SampBQP$ oracle algorithms. Any other reasonable definitions of $\SampBQP$ oracle algorithms (like with quantum oracle Turing machines) can be transformed into this form easily.

Without loss of generality, we can assume a $\SampBQP$ oracle algorithm $M$ with oracle access to $\oracle_1,\oracle_2,\dotsc,\oracle_k$ ($k$ is a universal constant) acts in three stages, as follows.

\begin{enumerate}
	\item
	Given an input $\langle x,0^{1/\varepsilon}\rangle$, $M$ first uses a classical routine (which does not use the oracles) to output a quantum circuit $C$ with $p(n,1/\varepsilon)$ qubits and $p(n,1/\varepsilon)$ gates in polynomial time, where $p$ is a fixed polynomial. \ Note that $C$ can use the $\oracle_1,\oracle_2,\dotsc,\oracle_k$ gates in addition to a universal set of quantum gates.
	\item
	Then $M$ runs the outputted quantum circuit with the initial state $\spz{0}^{\otimes p(n,1/\varepsilon)}$, and measures all the qubits to get an outcome $z$ in $\{0,1\}^{p(n,1/\varepsilon)}$.
	
	\item
	Finally, $M$ uses another classical routine $A^{\mathsf{output}}$ (which does not use the oracles) on the input $z$, to output its final sample $A^{\mathsf{output}}(z) \in \{0,1\}^*$.
\end{enumerate}

Clearly, $M$ solves different sampling problems (or does not solve any sampling problem at all) given different oracles $\oracle_1,\oracle_2,\dotsc,\oracle_k$. \ Therefore, we use $M^{\oracle_1,\oracle_2,\dotsc,\oracle_k}$ to indicate the particular algorithm when the oracles are $\oracle_1,\oracle_2,\dotsc,\oracle_k$.

\subsection{Distinguishing Two Pure Quantum States}

We also need a standard result for distinguishing two pure quantum states.

\begin{theo}[Helstrom's decoder for two pure states]
	The maximum success probability for distinguishing two pure quantum states $\spz{\varphi_0}$ and $\spz{\varphi_1}$ given with prior probabilities $\pi_0$ and $\pi_1$, is given by
	$$
	p_{succ} = \frac{1+\sqrt{1-4\pi_0\pi_1F}}{2},
	$$
\noindent where $F := |\langle\varphi_0\spz{\varphi_1}|^2$ is the fidelity between the two states.
\end{theo}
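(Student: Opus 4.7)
The plan is to reduce the distinguishing problem to a spectral optimization over a single Hermitian operator, then compute two eigenvalues explicitly. First I would recall that any two-outcome measurement strategy is specified by a POVM $(E_0, E_1)$ with $E_0,E_1 \succeq 0$ and $E_0 + E_1 = I$, where the decoder declares hypothesis $b$ upon observing outcome $b$. By Naimark, it suffices to optimize over POVMs rather than projective measurements on larger systems. The success probability is
\begin{equation*}
p_{succ} = \pi_0 \langle \varphi_0 | E_0 | \varphi_0 \rangle + \pi_1 \langle \varphi_1 | E_1 | \varphi_1 \rangle = \pi_1 + \mathrm{Tr}(E_0 \Delta),
\end{equation*}
where $\Delta := \pi_0 |\varphi_0\rangle\langle \varphi_0| - \pi_1 |\varphi_1\rangle\langle \varphi_1|$ after substituting $E_1 = I - E_0$ and using $\pi_0 + \pi_1 = 1$.

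Next I would analyze $\Delta$. It is Hermitian and supported on the (at most) two-dimensional subspace spanned by $|\varphi_0\rangle$ and $|\varphi_1\rangle$, so it has at most two nonzero eigenvalues $\lambda_+ \ge 0 \ge \lambda_-$ (the signs follow because $\mathrm{Tr}(\Delta) = \pi_0 - \pi_1$ can be either sign but the two rank-one terms have opposite signs). The linear functional $E_0 \mapsto \mathrm{Tr}(E_0 \Delta)$ is maximized over $0 \preceq E_0 \preceq I$ by choosing $E_0$ to be the projector onto the nonnegative eigenspace of $\Delta$; this is the standard Helstrom argument and gives maximum value $\lambda_+$ (only nonnegative eigenvalues contribute). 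Thus the optimum is $p_{succ} = \pi_1 + \lambda_+$.

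It remains to compute $\lambda_+$ in closed form, and this is the only real calculation. I would use the two invariants of the $2 \times 2$ block: $\lambda_+ + \lambda_- = \mathrm{Tr}(\Delta) = \pi_0 - \pi_1$, and
\begin{equation*}
\lambda_+^2 + \lambda_-^2 = \mathrm{Tr}(\Delta^2) = \pi_0^2 + \pi_1^2 - 2\pi_0 \pi_1 |\langle \varphi_0 | \varphi_1 \rangle|^2 = \pi_0^2 + \pi_1^2 - 2\pi_0 \pi_1 F,
\end{equation*}
where the cross terms come from $\mathrm{Tr}(|\varphi_0\rangle\langle \varphi_0|\varphi_1\rangle\langle \varphi_1|) = F$. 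Combining these yields $(\lambda_+ - \lambda_-)^2 = (\lambda_+ + \lambda_-)^2 - 2(\lambda_+^2 + \lambda_-^2) + 2(\lambda_+^2+\lambda_-^2) \ldots$; more cleanly, $(\lambda_+ - \lambda_-)^2 = 2(\lambda_+^2+\lambda_-^2) - (\lambda_+ + \lambda_-)^2 = (\pi_0+\pi_1)^2 - 4\pi_0\pi_1 F = 1 - 4\pi_0\pi_1 F$. Hence $\lambda_+ = \tfrac{1}{2}\bigl((\pi_0-\pi_1) + \sqrt{1 - 4\pi_0\pi_1 F}\bigr)$, and substituting back gives
\begin{equation*}
p_{succ} = \pi_1 + \lambda_+ = \frac{1 + \sqrt{1 - 4\pi_0 \pi_1 F}}{2}.
\end{equation*}

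The only step requiring any care is the spectral optimization: one must justify that the optimum of $\mathrm{Tr}(E_0 \Delta)$ over $0 \preceq E_0 \preceq I$ is realized by the projector onto the positive part of $\Delta$, and that extending $E_0$ to the kernel of $\Delta$ cannot help. This is a one-line convexity/diagonalization argument in the eigenbasis of $\Delta$, so there is no genuine obstacle. Achievability of the bound is automatic because the proof exhibits the optimal POVM explicitly.
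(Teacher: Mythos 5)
Your proof is correct and complete: the reduction to maximizing $\mathrm{Tr}(E_0\Delta)$ over $0 \preceq E_0 \preceq I$, the observation that $\Delta$ has at most one positive and one negative eigenvalue, and the trace-invariant computation of $\lambda_+$ all check out, and they reproduce the standard Helstrom argument. The paper itself states this theorem as a known preliminary without proof, so there is nothing to compare against; the only blemish in your write-up is the garbled first attempt at $(\lambda_+-\lambda_-)^2$ before the ``more cleanly'' correction, which you should simply delete.
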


We'll also need that for two similar quantum states, the distributions induced by measuring them are close.

\begin{cor}\label{cor:close-dist}
	Let $\spz{\varphi_0}$ and $\spz{\varphi_1}$ be two pure quantum state such that $|\spz{\varphi_0} - \spz{\varphi_1}| \le \varepsilon$. For a quantum state $\varphi$, define $\distr(\varphi)$ be the distribution on $\{0,1\}^*$ induced by some quantum sampling procedure, we have
	
	$$
	\|\distr(\varphi_0) - \distr(\varphi_1)\| \le \sqrt{2\varepsilon}.
	$$
\end{cor}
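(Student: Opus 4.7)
The plan is to reduce this to the Helstrom bound stated just above. First, any ``quantum sampling procedure'' --- some unitary applied to the input state (possibly tensored with fixed ancillas), followed by a computational-basis measurement, as in the canonical form from Section~\ref{sec:canonical-SampBQP} --- is a quantum channel, so by the standard fact that total variation distance is contracted by any channel, $\|\mathcal{D}(\varphi_{0}) - \mathcal{D}(\varphi_{1})\|$ is at most the trace distance between $|\varphi_{0}\rangle$ and $|\varphi_{1}\rangle$. For pure states this trace distance equals $\sqrt{1-F}$, where $F = |\langle \varphi_{0}|\varphi_{1}\rangle|^{2}$ is the fidelity from the previous theorem (equivalently, $\sqrt{1-F}$ is twice the Helstrom distinguishing advantage at equal priors, obtained by setting $\pi_{0}=\pi_{1}=1/2$). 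So it suffices to show $1 - F \le 2\varepsilon$.

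Second, I would translate the Euclidean hypothesis into a lower bound on $F$. Expanding gives $\||\varphi_{0}\rangle - |\varphi_{1}\rangle\|^{2} = 2 - 2\,\mathrm{Re}\,\langle \varphi_{0}|\varphi_{1}\rangle \le \varepsilon^{2}$, hence $\mathrm{Re}\,\langle \varphi_{0}|\varphi_{1}\rangle \ge 1 - \varepsilon^{2}/2$. Since a global phase on $|\varphi_{1}\rangle$ does not affect any measurement statistics, I may rotate it so that the inner product is real and nonnegative, obtaining $|\langle \varphi_{0}|\varphi_{1}\rangle| \ge 1 - \varepsilon^{2}/2$ and thus $F \ge (1 - \varepsilon^{2}/2)^{2} \ge 1 - \varepsilon^{2}$. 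In the only nontrivial regime $\varepsilon \le 1$ this yields $1 - F \le \varepsilon^{2} \le 2\varepsilon$, which combined with the previous paragraph gives the desired bound $\|\mathcal{D}(\varphi_{0}) - \mathcal{D}(\varphi_{1})\| \le \sqrt{2\varepsilon}$.

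There is not really any obstacle here: the argument is a standard translation between closeness of state vectors in the Euclidean metric and closeness of their measurement distributions in the total variation metric. In fact a sharper accounting would give $\sqrt{1-F} \le \varepsilon$, and the stated $\sqrt{2\varepsilon}$ reflects a slightly loose conversion that is still good enough for how this corollary will be invoked. The only genuine bookkeeping is the factor-of-$2$ when passing from fidelity to trace distance, together with the observation that rotating one state by a global phase is legitimate because the quantity we care about --- the distribution of measurement outcomes --- is phase-invariant.
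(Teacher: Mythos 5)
Your argument is correct and is essentially the paper's own route: the paper packages your data-processing step as the observation that the sampling procedure induces a distinguisher with success probability $\frac{1+\|\distr(\varphi_0)-\distr(\varphi_1)\|}{2}$, which Helstrom bounds by $\frac{1+\sqrt{1-F}}{2}$ at equal priors --- the same inequality as ``TV distance $\le$ trace distance $=\sqrt{1-F}$ for pure states.'' The only difference is in bounding the fidelity: you use the exact expansion $\|\spz{\varphi_0}-\spz{\varphi_1}\|^2 = 2-2\,\mathrm{Re}\langle\varphi_0\spz{\varphi_1}$ to get $F\ge 1-\varepsilon^2$ and hence the sharper conclusion $\le\varepsilon$, whereas the paper uses the cruder estimate $|\langle\varphi_0\spz{\varphi_1}|\ge 1-\varepsilon$ and lands on $\sqrt{2\varepsilon-\varepsilon^2}\le\sqrt{2\varepsilon}$.
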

\begin{proof}
	Fix prior probabilities $\pi_0 = \pi_1 = \frac{1}{2}$.
	
	Note that we have a distinguisher of $\spz{\varphi_0}$ and $\spz{\varphi_1}$ with success probability $\frac{1+\|\distr(\varphi_0) - \distr(\varphi_1)\|}{2}$ by invoking that quantum sampling procedure.
	
	By the assumption, $|\rpz{\varphi_0}\spz{\varphi_1}| = |\rpz{\varphi_0} \cdot (\spz{\varphi_0} + (\spz{\varphi_1}-\spz{\varphi_0})| \ge 1 -\varepsilon$, hence $F = |\langle\varphi_0\spz{\varphi_1}|^2 \ge (1-\varepsilon)^2$. So we have
	
	$$
	\frac{1+\|\distr(\varphi_0) - \distr(\varphi_1)\|}{2} \le \frac{1+\sqrt{1-(1-\varepsilon)^2}}{2}
	$$
	
	\noindent This implies $\|\distr(\varphi_0) - \distr(\varphi_1)\|_1 \le \sqrt{1-(1-\varepsilon)^2} = \sqrt{2\varepsilon - \varepsilon^2} \le \sqrt{2\varepsilon}$.
	
\end{proof}

\subsection{A Multiplicative Chernoff Bound}

\begin{lemma}
	Suppose $X_1,X_2,\dotsc,X_n$ are independent random variables taking values in $[0,1]$. Let $X$ denote their sum and let $\mu = \Ex[X]$. Then for any $\delta > 1$, we have
	
	$$
	\Pr[X \ge (1+\delta)\mu] \le e^{-\frac{\delta\mu}{3}}.
	$$
\end{lemma}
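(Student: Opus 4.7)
The plan is to carry out the standard exponential-moment argument: for any $t>0$, apply Markov's inequality to $e^{tX}$, so that
\[
\Pr[X \geq (1+\delta)\mu] \;=\; \Pr\!\left[e^{tX} \geq e^{t(1+\delta)\mu}\right] \;\leq\; e^{-t(1+\delta)\mu}\,\mathbb{E}\!\left[e^{tX}\right].
\]
Using independence, the moment generating function factors as $\mathbb{E}[e^{tX}] = \prod_{i=1}^n \mathbb{E}[e^{tX_i}]$, so the task reduces to bounding each $\mathbb{E}[e^{tX_i}]$ when $X_i \in [0,1]$.

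Next I would invoke the convexity of $x \mapsto e^{tx}$, which yields the linear upper bound $e^{tx} \leq 1 + x(e^t - 1)$ for $x \in [0,1]$. Taking expectations gives $\mathbb{E}[e^{tX_i}] \leq 1 + \mathbb{E}[X_i](e^t - 1) \leq \exp\bigl(\mathbb{E}[X_i](e^t - 1)\bigr)$, using $1+y \leq e^y$. Multiplying over $i$ yields $\mathbb{E}[e^{tX}] \leq \exp\bigl(\mu(e^t - 1)\bigr)$. Plugging back into the Markov bound,
\[
\Pr[X \geq (1+\delta)\mu] \;\leq\; \exp\!\bigl(\mu(e^t - 1) - t(1+\delta)\mu\bigr).
\]
The canonical choice $t = \ln(1+\delta)$ minimizes the exponent and produces the standard form
\[
\Pr[X \geq (1+\delta)\mu] \;\leq\; \left(\frac{e^\delta}{(1+\delta)^{1+\delta}}\right)^{\!\mu}.
\]

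All that remains is to simplify this expression under the hypothesis $\delta > 1$ to obtain the claimed $e^{-\delta\mu/3}$ bound. Taking logarithms, the inequality to verify is $\delta - (1+\delta)\ln(1+\delta) \leq -\delta/3$, or equivalently $(1+\delta)\ln(1+\delta) \geq \tfrac{4}{3}\delta$. This is where the main (and only) bit of actual calculation lies, and it is easily handled by calculus on the single-variable function $f(\delta) := (1+\delta)\ln(1+\delta) - \tfrac{4}{3}\delta$: note that $f(1) = 2\ln 2 - \tfrac{4}{3} > 0$ and $f'(\delta) = \ln(1+\delta) + 1 - \tfrac{4}{3} \geq \ln 2 - \tfrac{1}{3} > 0$ for all $\delta \geq 1$, so $f$ is nonnegative throughout $[1,\infty)$. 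This delivers the desired bound $\Pr[X \geq (1+\delta)\mu] \leq e^{-\delta\mu/3}$ for every $\delta > 1$. There is no conceptual obstacle here; the only place that requires care is the numerical verification in this last step, where the constant $1/3$ is tight enough that one must check both the boundary value $\delta = 1$ and monotonicity rather than rely on an overly loose Taylor expansion.
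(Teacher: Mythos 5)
Your proof is correct and complete. The paper states this lemma as a standard multiplicative Chernoff bound without giving a proof, and your argument is exactly the canonical exponential-moment derivation one would supply: Markov on $e^{tX}$, the convexity bound $e^{tx}\le 1+x(e^t-1)$ on $[0,1]$, the choice $t=\ln(1+\delta)$, and the final calculus check that $(1+\delta)\ln(1+\delta)\ge \tfrac{4}{3}\delta$ for $\delta\ge 1$ (where your verification of both $f(1)=2\ln 2-\tfrac43>0$ and $f'\ge \ln 2-\tfrac13>0$ is correct and appropriately careful, since the constant is nearly tight at $\delta=1$).
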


\begin{cor}\label{cor:mul-bound}
	For any $0 < \tau$, suppose $X_1,X_2,\dotsc,X_n$ are independent random variables taking values in $[0,\tau]$. Let $X$ denote their sum and let $\mu = \Ex[X]$. Then for any $\delta > 1$, we have
	
	$$
	\Pr[X \ge (1+\delta)\mu] \le e^{-\frac{\delta\mu}{3\tau}}.
	$$
\end{cor}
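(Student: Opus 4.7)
The plan is to reduce directly to the previous lemma by rescaling. I would define $Y_i := X_i / \tau$ for each $i \in [n]$. Since each $X_i$ takes values in $[0, \tau]$, each $Y_i$ takes values in $[0,1]$, and the $Y_i$ are independent because the $X_i$ are. Setting $Y := \sum_{i=1}^n Y_i = X/\tau$, linearity of expectation gives $\mathbb{E}[Y] = \mu/\tau$.

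Next I would apply the preceding multiplicative Chernoff bound (for variables in $[0,1]$) to the $Y_i$'s with the same deviation parameter $\delta > 1$. This yields
\[
\Pr\bigl[Y \ge (1+\delta)\,\mu/\tau\bigr] \le \exp\!\left(-\frac{\delta(\mu/\tau)}{3}\right) = \exp\!\left(-\frac{\delta \mu}{3\tau}\right).
\]
Finally, I would observe that the event $\{Y \ge (1+\delta)\mu/\tau\}$ is identical to $\{X \ge (1+\delta)\mu\}$, since $Y = X/\tau$ and $\tau > 0$. Substituting gives exactly the claimed inequality.

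There is no real obstacle here: the only thing to check is that the hypotheses of the cited Chernoff bound are genuinely satisfied after rescaling (nonnegativity, boundedness in $[0,1]$, independence), all of which are immediate. The whole proof is essentially a change-of-variables, so I would keep it to three or four lines.
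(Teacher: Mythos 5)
Your proposal is correct and is exactly the paper's argument: the paper's entire proof is the one-line remark ``Replace each $X_i$ by $X_i/\tau$ and apply the previous lemma,'' which is precisely the rescaling you carry out in detail. Nothing is missing.
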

\begin{proof}
	Replace each $X_i$ by $X_i/\tau$ and apply the previous lemma.
\end{proof}

\newcommand{\murand}{\mu_{\mathsf{rand}}}
\newcommand{\muharr}{\mu_{\mathsf{Haar}}}
\newcommand{\mugrid}{\mu_{\mathsf{grid}}}
\newcommand{\nugrid}{\nu_{\mathsf{grid}}}
\newcommand{\QGuess}{\mathsf{QGuess}}
\newcommand{\QTest}{\mathsf{QTest}}
\newcommand{\SHA}{\mathsf{SHA}}
\newcommand{\prand}{p_{\mathsf{rand}}}
\newcommand{\pquantum}{p_{\mathsf{quantum}}}
\newcommand{\uphalf}{\mathsf{uphalf}}
\newcommand{\probs}{\mathsf{probList}}

\section{The Hardness of Quantum Circuit Sampling}
\label{sec:proposal}

We now discuss our random quantum circuit proposal for demonstrating quantum supremacy.

\subsection{Preliminaries}
We first introduce some notations. \ We use $\mathbb{U}(N)$ to denote the group of $N \times N$ unitary matrices, $\muharr^N$ for the Haar measure on $\mathbb{U}(N)$, and $\murand^N$ for the Haar measure on $N$-dimensional pure states.

For a pure state $\spz{u}$ on $n$ qubits, we define $\probs(\spz{u})$ to be the list consisting of $2^n$ numbers, $|\langle u\spz{x}|^2$ for each $x \in \{0,1\}^n$.

Given $N$ real numbers $a_1,a_2,\dotsc,a_N$, we use $\uphalf(a_1,a_2,\dotsc,a_N)$ to denote the sum of the largest $N/2$ numbers among them, and we let
$$
\adv(\spz{u}) = \uphalf(\probs(\spz{u})).
$$

\noindent Finally, we say that an output $z \in \{0,1\}^n$ is {\em heavy} for a quantum circuit $C$, if it is greater than the median of $\probs(C\spz{0^n})$.

\subsection{Random quantum circuit on grids}

Recall that we assume a quantum circuit consists of only $2$-qubit gates. \ Our random quantum circuit on grids of $n$ qubits and $m$ gates (assuming $m \ge n$) is generated as follows (though the basic structure of our hardness argument will not be very sensitive to details, and would also work for many other circuit ensembles):

\begin{itemize}
	\item All the qubits are arranged as a $\sqrt{n} \times \sqrt{n}$ grid (see Definition~\ref{defi:grid}), and a gate can only act on two adjacent qubits.
	\item For each $t \in [m]$ with $t \le n$, we pick the $t$-th qubit and a random neighbor of it.\footnote{The purpose here is to make sure that there is a gate on every qubit.}
	\item For each $t \in [m]$ with $t > n$, we pick a uniform random pair of adjacent qubits in the grid $\sqrt{n} \times \sqrt{n}$.
	\item Then, in either case, we set the $t$-th gate to be a unitary drawn from $\muharr^{4}$ acting on these two qubits.
\end{itemize}

Slightly abusing notation, we use $\mugrid^{n,m}$ to denote both the above distribution on quantum circuits and the distribution on $\mathbb{U}(2^n)$ induced by it.

\subsubsection*{Conditional distribution $\nugrid$}

For convenience, for a quantum circuit $C$, we abbreviate $\adv(C\spz{0^n})$ as $\adv(C)$. \ Consider a simple quantum algorithm which measures $C\spz{0^n}$ in the computational basis to get an output $z$. \ Then by definition, $\adv(C)$ is simply the probability that $z$ is heavy for $C$.

We want that, when a quantum circuit $C$ is drawn, $\adv(C)$ is {\em large} (that is, bounded above $1/2$), and therefore the simple quantum algorithm has a substantial advantage on generating a heavy output, compared with the trivial algorithm of guessing a random string.

For convenience, we also consider the following conditional distribution $\nugrid^{n,m}$: it keeps drawing a circuit $C \leftarrow \mugrid^{n,m}$ until the sample circuit $C$ satisfies $\adv(C) \ge 0.7$.

\subsubsection*{Lower bound on $\adv(C)$}

We need to show that a circuit $C$ drawn from $\nugrid^{n,m}$ has a large probability of having $\adv(C) \ge 0.7$. \ In order to show that, we give a cute and simple lemma, which states that the {\em expectation} of $\adv(C)$ is {\em large}. \ Surprisingly, its proof only makes use of the randomness introduced by the {\em very last gate}!
\begin{lemma}\label{lemma:adv-large-exp}
	For $n \ge 2$ and $m \ge n$
	$$
	\Ex_{C \leftarrow \mugrid^{n,m}}[\adv(C)] \ge \frac{5}{8}.
	$$
\end{lemma}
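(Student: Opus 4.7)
The plan is to extract the bound using only the randomness of the very \emph{last} gate, as the lemma's ``surprisingly'' clue suggests. Condition on every other random choice made in $\mugrid^{n,m}$ (the circuit structure, the earlier gates, and the pair $(a_m,b_m)$ of qubits on which the last gate acts). It then suffices to prove that for any fixed $n$-qubit state $\spz{\psi}$ (the state immediately before the last gate) and any fixed pair of qubits, one has $\Ex_{U \leftarrow \muharr^4}[\adv(U_m\spz{\psi})] \ge 5/8$, where $U_m$ denotes $U$ applied on qubits $(a_m,b_m)$ tensored with identity elsewhere.

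Next, I would decompose along the inactive qubits. Write $\spz{\psi} = \sum_{y \in \{0,1\}^{n-2}} \spz{\phi_y} \otimes \spz{y}$, where $y$ ranges over the computational basis of the $n-2$ qubits outside $(a_m,b_m)$ and $\spz{\phi_y}$ is an unnormalized $2$-qubit state with $p_y := \langle \phi_y\spz{\phi_y}$ and $\sum_y p_y = 1$. After applying $U$, the $2^n$ measurement probabilities partition into $2^{n-2}$ groups of $4$, where group $y$ consists of the four numbers $|\langle a|U\spz{\phi_y}|^2$ for $a \in \{0,1\}^2$, summing to $p_y$. By left-invariance of the Haar measure, for every $y$ with $p_y > 0$ the rescaled state $U\spz{\phi_y}/\sqrt{p_y}$ is a Haar-random pure state in $\mathbb{C}^4$, so the four probabilities in group $y$ are distributed as $p_y \cdot (q_1,q_2,q_3,q_4)$ with $(q_1,\ldots,q_4)$ uniform on the standard $3$-simplex (the Dirichlet$(1,1,1,1)$ law).

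The key calculation is then the expected sum of the top two order statistics $q_{(1)} \ge q_{(2)}$ of a uniform sample from the $3$-simplex. A R\'enyi-style representation of uniform spacings (or a direct inclusion--exclusion integration) yields $\Ex[q_{(1)}] = 25/48$ and $\Ex[q_{(2)}] = 13/48$, so $\Ex[q_{(1)} + q_{(2)}] = 19/24$. To avoid computing both order statistics, one can instead use the elementary inequality $q_{(2)} \ge (1 - q_{(1)})/3$ (since $q_{(2)}$ is the maximum of three nonnegative numbers summing to $1 - q_{(1)}$); combined with $\Ex[q_{(1)}] = 25/48$ this already gives $\Ex[q_{(1)} + q_{(2)}] \ge \tfrac{1}{3} + \tfrac{2}{3}\cdot\tfrac{25}{48} = \tfrac{49}{72} > \tfrac{5}{8}$, which suffices.

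Finally, observe that the top two probabilities within each of the $2^{n-2}$ groups together account for $2^{n-1}$ probabilities, so by the definition of $\uphalf$, $\adv(U_m\spz{\psi}) \ge \sum_y (\text{top two probabilities of group } y)$. Taking expectations over $U$ and using linearity,
\[
\Ex_{U \leftarrow \muharr^4}[\adv(U_m\spz{\psi})] \;\ge\; \sum_y p_y \cdot \tfrac{19}{24} \;=\; \tfrac{19}{24} \;\ge\; \tfrac{5}{8},
\]
and then taking the further expectation over the remaining randomness in $\mugrid^{n,m}$ preserves the inequality. The main obstacle is the order-statistic calculation, which is standard but cumbersome if done by hand; the elementary shortcut above sidesteps it. Note also that the argument actually gives a stronger bound ($19/24$), so the $5/8$ in the statement leaves slack that future tightening could exploit.
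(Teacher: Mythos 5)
Your proof is correct, and it reaches the conclusion by a genuinely different route from the paper's. The paper also exploits only the last gate, but it then discards most of that gate's randomness: it factors out a single-qubit Haar unitary $U_a$ from the two-qubit gate, partitions the $2^n$ basis states into $2^{n-1}$ buckets of size $2$, applies its Lemma~\ref{lemma:random} ($\Ex[\,||\rpz{u}0\rangle|^2-|\rpz{u}1\rangle|^2|\,]=1/2$ for a Haar-random qubit state) to lower-bound the total deviation from uniform, $\Ex[\dev]\ge 1/2$, and converts this to $\adv\ge 5/8$ via the separate combinatorial Lemma~\ref{lemma:dev-to-adv} ($\adv \ge \tfrac12+\tfrac{\dev}{4}$). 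You instead keep the full $\muharr^4$ randomness, partition into $2^{n-2}$ buckets of size $4$, use the standard fact that the squared amplitudes of a Haar-random state in $\mathbb{C}^4$ are uniform on the $3$-simplex, and lower-bound $\uphalf$ directly by the greedy selection of the top two entries per bucket, bypassing the $\dev$-to-$\adv$ conversion entirely. Both reductions to a fixed pre-gate state and a fixed qubit pair are valid, and your step $\uphalf \ge \sum_y(\text{top two of bucket } y)$ is sound since $\uphalf$ is the maximum over all half-size subsets; the correlations between buckets (they share the same $U$) are irrelevant because you only use linearity of expectation. Your order-statistic values $\Ex[q_{(1)}]=25/48$, $\Ex[q_{(2)}]=13/48$ are the classical uniform-spacings formulas and are correct, so you in fact obtain the stronger bound $19/24\approx 0.79$ (or $49/72$ via your shortcut), compared with the paper's $5/8$. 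What each approach buys: the paper's $\dev$-based route is modular (the deviation bound is reused conceptually elsewhere and needs only a one-qubit calculation), while yours is quantitatively sharper and closer to the heuristic value $\tfrac{1+\ln 2}{2}\approx 0.847$ of Conjecture~\ref{conj:adv-large}, at the cost of invoking the Dirichlet order-statistic computation.
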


In fact, we conjecture that $\adv(C)$ is large with an {\em overwhelming} probability.

\begin{conjecture}\label{conj:adv-large}
	For $n \ge 2$ and $m \ge n^2$, and for all constants $\eps > 0$,
	$$
	\Pr_{C \leftarrow \mugrid^{n,m}}\left[\adv(C) < \frac{1+\ln 2}{2} - \eps\right] < \exp\left\{-\Omega(n) \right\}.
	$$
\end{conjecture}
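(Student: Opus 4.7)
The conjecture has two layers: pinning down the limiting constant $(1+\ln 2)/2$ and establishing exponential concentration around it. The strategy is a two-step reduction to the Haar-random case. First, I would establish the analogous statement when $C\spz{0^n}$ is replaced by a Haar-random pure state $\spz{u}\sim\murand^N$ with $N=2^n$. The joint distribution of $\probs(\spz{u})$ is symmetric Dirichlet $\mathrm{Dir}(1,\ldots,1)$, and in the $N\to\infty$ limit its coordinates behave like i.i.d.\ $\mathrm{Exp}(1)/N$ (the Porter--Thomas law). Memorylessness gives $\Ex[X\mid X>\ln 2]=1+\ln 2$, so the expected sum of the top half of the probabilities tends to $(1+\ln 2)/2$. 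For concentration, observe that $\adv(\spz{u})=\max_{|S|=N/2}\sum_{x\in S}|\langle x\spz{u}|^2$ is $O(1)$-Lipschitz in $\spz{u}$ with respect to the Euclidean norm (a direct Cauchy--Schwarz computation), so L\'evy's lemma on the unit sphere in $\mathbb{C}^N$ yields $\Pr_{\spz{u}\sim\murand^N}[\,|\adv(\spz{u})-\Ex[\adv]|>\eps\,]\le 2\exp(-\Omega(N\eps^2))$, a bound that is doubly-exponentially small in $n$ and hence far stronger than $\exp(-\Omega(n))$.

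\textbf{Transport to $\mugrid^{n,m}$.} To lift Haar concentration to the grid ensemble for $m\ge n^2$, I would appeal to the Brandao--Harrow--Horodecki theorem: random local circuits of size polynomial in $n$, $t$, and $\log(1/\delta)$ form $\delta$-approximate unitary $t$-designs. For any polynomial $P$ of degree at most $t$ in the entries of $C\spz{0^n}$, $\Ex_{\mugrid^{n,m}}[P]$ then agrees with $\Ex_{\muharr^N}[P]$ up to $\delta$. The target is to approximate either the indicator $\mathbf{1}[\adv<(1+\ln 2)/2-\eps]$ or a smoothed surrogate for $\adv$ by such a polynomial $P$, transfer the Haar tail estimate to $\mugrid^{n,m}$, and conclude by Markov's inequality.

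\textbf{Main obstacle.} The genuinely hard part is the one the paper flags: $\adv(C)$ is a non-smooth order statistic on $2^n$ real numbers, and the quantitative parameters of Brandao--Harrow--Horodecki at size $m=n^2$ do not obviously yield a $t$-design of large enough order for the polynomial-approximation step to deliver $\exp(-\Omega(n))$ concentration. A more promising workaround may be a gate-by-gate refinement of Lemma~\ref{lemma:adv-large-exp}. That lemma uses the randomness of just the \emph{final} gate to push $\Ex[\adv]$ above $5/8$; iterating the same computation after allowing $\Omega(n^2)$ layers of prior Haar gates to equilibrate the amplitudes ought to push the expectation all the way to $(1+\ln 2)/2-o(1)$, and one could try to simultaneously bound the variance (or a higher even moment) of $\adv$ by a $2^{-\Omega(n)}$-type quantity using only the low-order design properties that $m\ge n^2$ actually supplies. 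The key technical challenge is marrying a sharp expectation estimate with a quantitative higher-moment bound that survives the passage from $\murand^N$ to $\mugrid^{n,m}$.
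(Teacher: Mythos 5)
This statement is a \emph{conjecture} in the paper, not a theorem: the authors offer no proof, only the expectation bound $\Ex_{C\leftarrow\mugrid^{n,m}}[\adv(C)]\ge 5/8$ of Lemma~\ref{lemma:adv-large-exp} (which uses only the last gate), numerical evidence in Appendix~\ref{sec:tukareta}, and an explicit listing of the conjecture as an open problem. So your proposal cannot be compared against a proof in the paper; it has to stand on its own, and as you yourself acknowledge, it does not close the argument.

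The Haar-random half of your plan is sound, and indeed stronger than needed: $\adv(\spz{u})=\max_{|S|=N/2}\|P_S\spz{u}\|^2$ is a maximum of $O(1)$-Lipschitz functions on the sphere, so L\'evy's lemma gives concentration $\exp(-\Omega(2^n\eps^2))$ around a mean that tends to $(1+\ln 2)/2$ by the Porter--Thomas heuristic. The genuine gap is exactly the transfer step, and it is the same obstruction the paper flags when it says the Brandao--Harrow--Horodecki result is ``not quite strong enough.'' Quantitatively: BHH produce $\delta$-approximate $t$-designs only from circuits of size polynomial in $t$ with a large exponent, so $m=n^2$ gates yield designs of order $t$ that is at most a small power of $n$ (far below $t=\Omega(n)$), whereas extracting an $\exp(-\Omega(n))$ tail for a functional via moment or polynomial-approximation methods requires controlling moments of order $\Omega(n)$. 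Worse, $\adv$ is an order statistic --- a maximum over $\binom{N}{N/2}$ quadratic forms --- not a low-degree polynomial in the amplitudes, so even an exact $t$-design of the needed order would not immediately let you equate $\Ex_{\mugrid}[P]$ with $\Ex_{\muharr}[P]$ for the relevant $P$; you would need a separate, uniform polynomial approximation of the top-half-sum on the support of both measures, which you do not supply. Your fallback (iterating the last-gate computation of Lemma~\ref{lemma:adv-large-exp} and bounding the variance by $2^{-\Omega(n)}$) is a plausible research direction, but the variance bound is precisely the unproven content of the conjecture, so this is a restatement of the problem rather than a proof. In short: the proposal correctly locates the constant and correctly diagnoses the obstacle, but the obstacle is real and unresolved, both in your write-up and in the paper.
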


We give some numerical simulation evidence for Conjecture~\ref{conj:adv-large} in Appendix~\ref{sec:tukareta}.

\begin{rem}
	Assuming Conjecture~\ref{conj:adv-large}, in practice, one can sample from $\nugrid$ by simply sampling from $\mugrid$, the uniform distribution over circuits\textemdash doing so only introduces an error probability of $\exp \{-\Omega(n)\}$.
\end{rem}

\subsection{The HOG Problem}

Now we formally define the task in our quantum algorithm proposal.

\begin{prob}[HOG, or Heavy Output Generation]
	Given a random quantum circuit $C$ from $\nugrid^{n,m}$ for $m \ge n^2$, generate $k$ binary strings $z_1,z_2,\dotsc,z_k$ in $\{0,1\}^n$ such that at least a $2/3$ fraction of $z_i$'s are heavy for $C$.
\end{prob}

The following proposition states that there is a simple quantum algorithm which solves the above problem with overwhelming probability.

\begin{prop}
	There is a quantum algorithm that succeeds at HOG with probability $1-\exp\{- \Omega(k) \}$.
\end{prop}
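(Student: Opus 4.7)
The plan is to analyze the most obvious quantum algorithm: prepare $\spz{0^n}$, apply the input circuit $C$, measure in the computational basis, and repeat the procedure $k$ times independently to obtain samples $z_1,\ldots,z_k$. This is trivially polynomial-time on a quantum computer, so the only content is the success probability estimate.

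The key observation is that by the definition of $\adv$ given just before the construction of $\nugrid$, $\adv(C)$ is \emph{exactly} the probability that a single computational-basis measurement of $C\spz{0^n}$ yields a heavy output; the samples $z_1,\ldots,z_k$ are therefore $k$ i.i.d.\ Bernoulli trials with success probability $\adv(C)$. Since the input distribution is $\nugrid^{n,m}$, which conditions on $\adv(C) \ge 0.7$, each trial succeeds independently with probability at least $0.7$.

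The conclusion is then a one-line Chernoff estimate. Let $Y_i$ be the indicator that $z_i$ is heavy for $C$, so that $\Ex[Y_1 + \cdots + Y_k] \ge 0.7 k$, and note that HOG only fails when $Y_1 + \cdots + Y_k < (2/3)k$, i.e.\ when the empirical mean falls below its expectation by a constant gap of at least $0.7 - 2/3 = 1/30$. A standard multiplicative (or additive) Chernoff bound then gives failure probability at most $\exp\{-\Omega(k)\}$, as desired.

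There is essentially no obstacle here: the only substantive step is noticing that the conditioning built into $\nugrid$ guarantees $\adv(C) \ge 0.7$, which is the whole reason for defining $\nugrid$ in the first place and which produces a constant additive gap above the $2/3$ threshold. The analysis does \emph{not} require the stronger Conjecture~\ref{conj:adv-large} nor even Lemma~\ref{lemma:adv-large-exp}; those are used to justify that $\nugrid$ is well-behaved (and samplable in practice from $\mugrid$), whereas for this proposition alone the deterministic bound from the conditioning suffices.
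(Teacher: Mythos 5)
Your proposal is correct and matches the paper's own proof essentially verbatim: run $C$ on $\spz{0^n}$, measure $k$ times, use the fact that the conditioning in $\nugrid$ guarantees $\adv(C)\ge 0.7 > 2/3$, and apply a Chernoff bound to the constant gap. Nothing to add.
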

\begin{proof}
	The algorithm just simulates the circuit $C$ with initial state $\spz{0^n}$, then measures in the computational basis $k$ times independently to output $k$ binary strings.
	
	From the definition of $\nugrid$, we have $\adv(C) \ge 0.7 > 2/3$. \ So by a Chernoff bound, with probability $1-\exp\{ \Omega(k) \}$, at least a $2/3$ fraction of $z_i$'s are heavy for $C$, in which case the algorithm solves HOG.
\end{proof}

\subsection{Classical Hardness Assuming QUATH}

We now state our classical hardness assumption.

\begin{as}[QUATH, or the \textbf{Qua}ntum \textbf{Th}reshold assumption]
	There is no polynomial-time classical algorithm that takes as input a random
	quantum circuit $C \leftarrow \nugrid^{n,m}$ for $m \ge n^2$ and decides whether
	$0^{n}$ is heavy for $C$ with success probability $1/2+\Omega(2^{-n})$.
\end{as}

\begin{rem}
	Note that $1/2$ is the success probability obtained by always outputting either $0$ or $1$. \ Therefore, the above assumption means that no efficient algorithm can beat the trivial algorithm even by $\Omega(2^{-n})$.
\end{rem}

Next, we show that QUATH implies that no efficient classical algorithm can solve HOG.

\begin{theo}
	Assuming QUATH, no polynomial-time classical algorithm can solve HOG with probability at least $0.99$.
\end{theo}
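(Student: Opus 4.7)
The plan is to prove the contrapositive: assuming a polynomial-time classical algorithm $A$ that solves HOG with probability at least $0.99$, construct an algorithm $B$ that, on input $C \sim \nugrid^{n,m}$, decides whether $0^n$ is heavy for $C$ with success probability $\tfrac{1}{2} + \Omega(2^{-n})$, contradicting QUATH. The key bridge is a random self-reduction that converts the HOG guarantee (an average over random $C$ and over all $2^{n-1}$ heavy strings of $C$) into an advantage at deciding heaviness of the single string $0^n$.

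First I would set up the self-reduction. Given $C \sim \nugrid^{n,m}$ and a uniformly random $z \in \{0,1\}^n$, construct $\tilde C$ by absorbing the bit-flip operator $X^z$ into the circuit: for each qubit $i$ with $z_i = 1$, locate the outermost (last-applied) gate of $C$ touching qubit $i$, and left-multiply it by the single-qubit Pauli $X_i$. The resulting $\tilde C$ has exactly the same two-qubit gate placements as $C$, but realizes the unitary $X^z C$; hence $y$ is heavy for $\tilde C$ iff $y \oplus z$ is heavy for $C$, and in particular ``$0^n$ heavy for $C$'' iff ``$z$ heavy for $\tilde C$''. Since each modified gate was originally Haar-random on $\mathbb{U}(4)$ and the Haar measure is invariant under left-multiplication by any unitary, $\tilde C$ is again distributed as $\mugrid^{n,m}$; moreover $\adv(\tilde C) = \adv(C)$, so the conditioning $\adv \ge 0.7$ is preserved. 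Under the map $(C,z) \mapsto (\tilde C, z)$, the pair $(\tilde C, z)$ is therefore distributed as an independent pair, with $\tilde C \sim \nugrid^{n,m}$ and $z$ uniform.

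Now define $B$ as follows: sample $z$, form $\tilde C$, run $A(\tilde C)$ to obtain $y_1, \ldots, y_k$, pick a uniformly random index $j$ and set $y := y_j$; output ``heavy'' if $y = z$, else flip a fair coin. Averaging the HOG guarantee over $j$ gives $\Pr[y \text{ heavy for } \tilde C] \ge c$ for some explicit $c > 1/2$ (e.g.\ $c \ge \tfrac{2}{3}\cdot 0.99 \ge 0.66$, since with probability $\ge 0.99$ at least $2/3$ of $y_1,\ldots,y_k$ are heavy). Using that $z$ is independent of $(\tilde C, y)$ and uniform, and that ``$0^n$ heavy for $C$'' iff ``$z$ heavy for $\tilde C$'',
\[
\Pr\bigl[y = z \text{ and } 0^n \text{ heavy for } C\bigr] \;=\; 2^{-n}\cdot\Pr[y \text{ heavy for } \tilde C] \;\ge\; 0.66\cdot 2^{-n},
\]
while the analogous probability with ``not heavy'' is at most $0.34\cdot 2^{-n}$. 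Subtracting, $B$'s success probability is at least $\tfrac{1}{2} + \tfrac{1}{2}(0.66 - 0.34)\cdot 2^{-n} = \tfrac{1}{2} + \Omega(2^{-n})$, refuting QUATH.

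The step I expect to take the most care is the self-reduction itself: confirming that the absorption procedure yields a legitimate element of $\mugrid^{n,m}$ with exactly the same distribution (one must handle the case in which two qubits' $X$'s are absorbed into a single gate, and check that $X^{z_i}$ commutes past every gate standing between it and its target before landing on the intended gate). Once that is in place the counting is short; the remaining care is just converting the HOG statement ``at least $2/3$ of $k$ outputs are heavy with probability $\ge 0.99$'' into a per-output heaviness probability via averaging over the random index $j$.
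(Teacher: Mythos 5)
Your proposal is correct and follows essentially the same route as the paper's proof: the same random self-reduction that absorbs $X^z$ into the last gate acting on each qubit (using Haar invariance to preserve the circuit distribution), followed by the same guess-and-coin-flip decision rule and the same $\tfrac{1}{2}+0.16\cdot 2^{-n}$ counting. Your treatment is, if anything, slightly more careful than the paper's on the points you flag (independence of $(\tilde C, z)$, preservation of the conditioning $\adv(C)\ge 0.7$ under the permutation of output probabilities).
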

\begin{proof}
	Suppose by contradiction that there is such a classical polynomial-time algorithm $A$. \ Using $A$, we will construct an algorithm to violate QUATH.
	
	The algorithm is quite simple. \ Given a quantum circuit $C \leftarrow \nugrid^{n,m}$, we first draw a uniform random string $z \in \{0,1\}^n$. Then for each $i$ such that $z_i = 1$, we apply a $\mathsf{NOT}$ gate on the $i$-th qubit. \ Note that this gate can be ``absorbed'' into the last gate acting on the $i$-th qubit in $C$. \ Hence, we still get a circuit $C'$ with $m$ gates. \ Moreover, it is easy to see that $C'$ is distributed exactly the same as $C$ even if conditioning on a particular $z$, and we have $ \rpz{0^n}C\spz{0^n} = \rpz{0^n}C'\spz{z} $, which means that $0^n$ is heavy for $C$ if and only if $z$ is heavy for $C'$.
	
	Next our algorithm runs $A$ on circuit $C'$ to get $k$ outputs $z_1,\dotsc,z_k$, and picks an output $z_{i^{\star}}$ among these $k$ outputs uniformly at random. \ If $z_{i^\star} = z$, then the algorithm outputs $1$; otherwise it outputs a uniform random bit.
	
	Since $A$ solves HOG with probability $0.99$, we have that each $z_k$ is heavy for $C'$ with probability at least $0.99 \cdot 2/3$.
	
	Now, since $z$ is a uniform random string, the probability that our algorithm decides correctly whether $z$ is heavy for $C'$ is
	\begin{align*}
	\Pr[z = z_{i^{\star}}] \cdot 0.99 \cdot \frac{2}{3} + \Pr[z \ne z_{i^{\star}}] \cdot 1/2 &= 2^{-n} \cdot 0.99 \cdot \frac{2}{3} + (1 - 2^{-n}) \cdot 1/2 \\
	&= \frac{1}{2} + \Omega(2^{-n}).
	\end{align*}
	
	\noindent But this contradicts QUATH, so we are done.
\end{proof}

\subsection{Proof for Lemma~\ref{lemma:adv-large-exp}}

\newcommand{\dev}{\mathsf{dev}}

We first need a simple lemma which helps us to lower bound $\adv(\spz{u})$.

For a pure quantum state $\spz{u}$, define
$$
\dev(\spz{u}) = \sum_{w \in \{0,1\}^n} \Big| |\langle u\spz{w}|^2 - 2^{-n} \Big|.
$$
In other words, $\dev(\spz{u})$ measures the {\em non-uniformity} of the distribution obtained by measuring $\spz{u}$ in the computational basis.

The next lemma shows that, when $\dev(\spz{u})$ is large, so is $\adv(\spz{u})$. \ Therefore, in order to establish Lemma~\ref{lemma:adv-large-exp}, it suffices to lower-bound $\dev(\spz{u})$.

\begin{lemma}\label{lemma:dev-to-adv}
	For a pure quantum state $\spz{u}$, we have
	$$
	\adv(\spz{u}) \ge \frac{1}{2} + \frac{\dev(u)}{4}.
	$$
\end{lemma}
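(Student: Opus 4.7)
The plan is to recenter the probabilities. Let $N = 2^n$, write $p_w = |\langle u\spz{w}|^2$, and set $q_w = p_w - 1/N$, so that $\sum_w q_w = 0$ and $\sum_w |q_w| = \dev(\spz{u})$; call this last quantity $D$. If $T \subseteq \{0,1\}^n$ denotes the set of indices achieving the top $N/2$ values of $p_w$ (equivalently the top $N/2$ values of $q_w$, since $p_w$ and $q_w$ differ only by the constant $1/N$), then $\adv(\spz{u}) = \sum_{w \in T} p_w = \frac{1}{2} + \sum_{w \in T} q_w$. The claim thus reduces to showing $\sum_{w \in T} q_w \ge D/4$.

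The key structural fact is that $\sum_w q_w = 0$ forces the strictly positive $q_w$'s to sum to $D/2$ and the nonpositive $q_w$'s to have absolute values also summing to $D/2$. Let $k$ denote the number of strictly positive $q_w$'s. I would split into cases $k \le N/2$ and $k > N/2$, and in each case apply the elementary averaging inequality: among $M$ nonnegative numbers with total $E$, the smallest $m$ sum to at most $mE/M$, while the largest $m$ sum to at least $mE/M$.

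When $k \le N/2$, the set $T$ must contain all $k$ positive indices (which together contribute $D/2$), together with the $N/2 - k$ nonpositive indices whose $q_w$ is closest to zero; by the averaging inequality their absolute values sum to at most $\frac{N/2 - k}{N - k} \cdot \frac{D}{2}$, giving
$$\sum_{w \in T} q_w \;\ge\; \frac{D}{2} - \frac{N/2 - k}{N - k} \cdot \frac{D}{2} \;=\; \frac{D}{2} \cdot \frac{N/2}{N - k} \;\ge\; \frac{D}{4}.$$
When $k > N/2$, $T$ lies entirely among the $k$ positive indices and consists of the top $N/2$ of them, so the other direction of the averaging inequality gives $\sum_{w \in T} q_w \ge \frac{N/2}{k} \cdot \frac{D}{2} \ge \frac{D}{4}$, where the last step uses $k \le N$.

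No serious obstacle is hiding here; the argument is elementary. The only thing to watch out for is that the more naive hope of proving $\adv \ge 1/2 + \dev/2$ actually fails --- for instance, $\spz{u} = \spz{0^n}$ has $\adv = 1$ but $\dev = 2 - 2/N$, so the factor $1/4$ rather than $1/2$ is genuinely necessary. Recognizing that tightness is what motivates the two-sided averaging argument above and the case split on where $k$ sits relative to $N/2$.
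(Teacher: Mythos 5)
Your proof is correct and follows essentially the same route as the paper's: both recenter the probabilities to $q_w = p_w - 1/N$, use that the positive and negative parts each contribute $\dev/2$, and split into cases according to whether the number of positive entries exceeds $N/2$, applying the sortedness/averaging bound in each case. The paper phrases this via the sorted sequence and the first nonnegative index $t$ rather than your count $k$, but the argument is the same.
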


We will also need the following technical lemma.

\begin{lemma}\label{lemma:random}
	Let $\spz{u} \leftarrow \murand^2$. \ Then
	$$
	\Ex_{\spz{u} \leftarrow \murand^2}\left[ \Big||\rpz{u} 0 \rangle|^2 - |\rpz{u} 1 \rangle|^2 \Big|\right] = 0.5.
	$$
\end{lemma}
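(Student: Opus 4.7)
The plan is to reduce the computation to a one-dimensional integral by using a standard fact about the marginal distribution of a single amplitude of a Haar-random qubit state. Write $\spz{u} = a\spz{0} + b\spz{1}$ with $|a|^2 + |b|^2 = 1$, and set $p := |a|^2 = |\langle u | 0 \rangle|^2$, so $|\langle u | 1 \rangle|^2 = 1 - p$ and the quantity inside the expectation becomes simply $|2p - 1|$.

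The key step is to argue that under $\murand^2$, the random variable $p$ is distributed uniformly on $[0,1]$. The cleanest way to see this is via the Bloch-sphere parameterization $\spz{u} = \cos(\theta/2)\spz{0} + e^{i\phi}\sin(\theta/2)\spz{1}$, where the Haar measure on 2-dimensional pure states pushes forward to the uniform measure on $S^2$; in particular $\cos\theta$ is uniform on $[-1,1]$. Since $p = \cos^2(\theta/2) = (1+\cos\theta)/2$, the change of variables shows $p$ is uniform on $[0,1]$. (Alternatively, one can obtain the same conclusion by writing $\spz{u}$ as a normalized complex Gaussian vector and computing the marginal, or by recognizing this as the $d=2$ case of the Dirichlet distribution governing squared amplitudes of a Haar-random pure state in $\mathbb{C}^d$.)

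Once uniformity of $p$ is established, the remaining step is a direct one-line integral:
\[
\Ex_{\spz{u} \leftarrow \murand^2}\bigl[\,\bigl||\langle u|0\rangle|^2 - |\langle u|1\rangle|^2\bigr|\,\bigr] = \int_0^1 |2p-1|\,dp = \tfrac{1}{2}.
\]

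I don't expect any real obstacle here; the only thing that needs justification is the uniformity of $p = |\langle u|0\rangle|^2$, which is a textbook calculation. Writing it up via the Bloch-sphere substitution makes the argument self-contained and avoids invoking the general Dirichlet formula.
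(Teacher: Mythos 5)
Your proposal is correct, and it takes a genuinely different (and shorter) route than the paper. The paper's proof represents $\spz{u}$ as a normalized vector of four i.i.d.\ real Gaussians and evaluates the resulting four-dimensional integral directly, passing to polar coordinates in the $(x_1,x_2)$ and $(x_3,x_4)$ planes to reduce it to $\int_0^{\infty}\int_0^{\infty} \frac{|\rho_1^2-\rho_2^2|}{\rho_1^2+\rho_2^2}\,\rho_1\rho_2\, e^{-(\rho_1^2+\rho_2^2)/2}\,d\rho_1\,d\rho_2 = \tfrac{1}{2}$. You instead invoke the standard fact that $p = |\rpz{u}0\rangle|^2$ is uniform on $[0,1]$ for a Haar-random qubit state (via the Bloch-sphere pushforward, or equivalently the $d=2$ Dirichlet distribution of squared amplitudes), after which the answer is the one-line integral $\int_0^1 |2p-1|\,dp = \tfrac{1}{2}$. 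The two arguments are essentially the same computation organized differently: your route front-loads a known structural fact and makes the final calculation trivial, which is cleaner and more illuminating; the paper's route is fully self-contained at the cost of a somewhat tedious multivariate integral (which is consistent with the authors' own description of it as a "simple but tedious calculation"). Either write-up is acceptable; if you use the Bloch-sphere version, do include the short justification that $\cos\theta$ is uniform on $[-1,1]$ so that the uniformity of $p$ is not left as an unproved assertion.
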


The proofs of Lemma~\ref{lemma:dev-to-adv} and Lemma~\ref{lemma:random} are based on simple but tedious calculations, so we defer them to Appendix~\ref{sec:missing-proofs-proposal}.

Now we are ready to prove Lemma~\ref{lemma:adv-large-exp}.
\begin{proofof}{Lemma~\ref{lemma:adv-large-exp}}
	Surprisingly, our proof only uses the randomness introduced by the very last gate. \ That is, the claim holds even if there is an adversary who fixes all the gates except for the last one.

We use $I_n$ to denote the $n$-qubit identity operator.
	
	Let $C \leftarrow \mugrid^{n,m}$. \ From Lemma~\ref{lemma:dev-to-adv}, it suffices to show that
	$$
	\Ex_{C \leftarrow \mugrid^{n,m}}[\dev(C \spz{0^n})] \ge \frac{1}{2}.
	$$
	
	Suppose the last gate $U \leftarrow \muharr^{4}$ acts on qubits $a$ and $b$. Let the unitary corresponding to the circuit before applying the last gate be $V$, and $\spz{v} = V\spz{0^n}$. Now, suppose we apply another unitary $U_a$ drawn from $\muharr^{2}$ on the qubit $a$. It is not hard to see that $U$ and $ (U_a \otimes I_1) \cdot U$ are identically distributed. So it suffices to show that
	$$
	\Ex_{U \leftarrow \muharr^4, U_a \leftarrow \muharr^2} \left[ \adv\Big( (U_a \otimes I_{n-1})(U \otimes I_{n-2})\spz{v} \Big) \right] \ge 0.6.
	$$
	
	We are going to show that the above holds even for a fixed $U$. \ That is, fix a $U \in \mathbb{U}(4)$ and let $\spz{u} = U \otimes I_{n-2})\spz{v}$. \ Then we will prove that
	$$
	\Ex_{U_a \leftarrow \muharr^2} \left[ \dev\Big( (U_a \otimes I_{n-1})\spz{v}\Big) \right] \ge \frac{1}{2}.
	$$
	Without loss of generality, we can assume that $a$ is the last qubit. Then we write
	$$
	\spz{u} = \sum_{w \in \{0,1\}^n} a_{w} \spz{w},
	$$
	and
	$$
	\spz{z} = (U_a \otimes I_{n-1})\spz{u}.
	$$
	
	Now we partition the $2^n$ basis states into $2^{n-1}$ buckets, one for each string in $\{0,1\}^{n-1}$. \ That is, for each $p \in \{0,1\}^{n-1}$, there is a bucket that consists of basis states $\{ \spz{p0},\spz{p1} \}$. \ Note that since $U_a$ acts on the last qubit, only amplitudes of basis states in the same bucket can affect each other.
	
	For a given $p\in \{0,1\}^{n-1}$, if both $a_{p0}$ and $a_{p1}$ are zero, we simply ignore this bucket. \ Otherwise, we can define a quantum state
	$$
	\spz{t_p} = \frac{a_{p0} \spz{0} + a_{p1} \spz{1}}{\sqrt{|a_{p0}|^2 +　|a_{p1}|^2} },
	$$
	and
	$$
	\spz{z_p} = U_a \spz{t_p}.
	$$
	
	Clearly, we have $\rpz{z} p0 \rangle = \sqrt{|a_{p0}|^2 +　|a_{p1}|^2} \cdot \rpz{z_p} 0 \rangle$ and $\rpz{z} p1 \rangle = \sqrt{|a_{p0}|^2 +　|a_{p1}|^2} \cdot \rpz{z_p} 1 \rangle$. Plugging in, we have
	\begin{align*}
	&\Ex_{U_a \leftarrow \muharr^2}\left[ \Big||\rpz{z} p0 \rangle|^2 - 2^{-n} \Big| + \Big| |\rpz{z} p1 \rangle|^2 - 2^{-n} \Big| \right]\\
	\ge&\Ex_{U_a \leftarrow \muharr^2}\left[ \Big| |\rpz{z} p0 \rangle|^2 - |\rpz{z} p1 \rangle|^2 \Big|\right] \tag{triangle inequality}\\
	=& \left(|a_{p0}|^2 +　|a_{p1}|^2\right) \cdot \Ex_{U_a \leftarrow \muharr^2}\left[ \Big||\rpz{z_p} 0 \rangle|^2 - |\rpz{z_p} 1 \rangle|^2 \Big|\right].
	\end{align*}
	
	Now, since $\spz{t_p}$ is a pure state, and $U_a$ is drawn from $\muharr^{2}$, we see that $\spz{z_p}$ is distributed as a Haar-random pure state. \ So from Lemma~\ref{lemma:random}, we have
	
	$$
	\Ex_{U_a \leftarrow \muharr^2}\left[ \Big||\rpz{z_p} 0 \rangle|^2 - |\rpz{z_p} 1 \rangle|^2 \Big|\right] = 0.5.
	$$
	
	Therefore,
	$$
	\Ex_{U_a \leftarrow \muharr^2}\left[ \Big||\rpz{z} p0 \rangle|^2 - 2^{-n} \Big| + \Big| |\rpz{z} p1 \rangle|^2 - 2^{-n} \Big| \right] \ge  \frac{1}{2} \cdot \left(|a_{p0}|^2 +　|a_{p1}|^2\right).
	$$
	
	\noindent Summing up for each $p \in \{0,1\}^{n-1}$, we have
	$$
	\Ex_{U_a \leftarrow \muharr^2}[\dev(\spz{z})] \ge \frac{1}{2},
	$$
	which completes the proof.
	
\end{proofof}

\section{New Algorithms to Simulate Quantum Circuits}
\label{sec:new-algorithms-circuits}

In this section, we present two algorithms for simulating a quantum circuit with $n$ qubits and $m$ gates: one algorithm for arbitrary circuits, and another for circuits that act locally on grids. \ What's new about these algorithms is that they use both polynomial space and close to $\exp(n)$ time (but despite that, they don't violate the QUATH assumption from Section \ref{sec:proposal}, for the reason pointed out in Section \ref{OURCONT}). \ Previously, it was known how to simulate a quantum circuit in polynomial space and $\exp(m)$ time (as in the proof of $\mathsf{BQP} \subseteq \mathsf{P}^{\mathsf{\#P}} $), or in exponential space and $\exp(n)$ time.

In addition, we provide a time-space trade-off scheme, which enables even faster simulation at the cost of more space usage. See Section~\ref{sec:quantum-circuits} for the quantum circuit notations that are used throughout this section.

\subsection{Polynomial-Space Simulation Algorithms for General Quantum Circuits}

We first present a simple recursive algorithm for general circuits.

\begin{theo}\label{theo:PSPACE-SIMUL}
	Given a quantum circuit $C$ on $n$ qubits with depth $d$, and two computational basis states $\spz{x},\spz{y}$,
	we can compute $\rpz{y} C \spz{x}$ in $O(n\cdot(2d)^{n+1})$ time and $O(n \log d)$ space.
\end{theo}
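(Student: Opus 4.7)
The plan is to give a Savitch-style recursion on circuit depth. Slicing $C$ into its first $\lceil d/2 \rceil$ layers $C_1$ and its last $\lfloor d/2 \rfloor$ layers $C_2$ and inserting a resolution of the identity between the two pieces yields
\[
\langle y|C|x\rangle \;=\; \sum_{z \in \{0,1\}^n} \langle y|C_2|z\rangle \,\langle z|C_1|x\rangle.
\]
The algorithm iterates $z$ through $\{0,1\}^n$ using a single $n$-bit counter, and for each $z$ makes two recursive calls (one for each amplitude factor), multiplies them, and adds the product into a single accumulator. The crucial design choice is that we never materialize the length-$2^n$ list of intermediate products; only the current $z$ and the running sum live in memory at any time.

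The base case is $d=1$. A single layer is a tensor product of disjoint two-qubit gates and identities on the remaining qubits, so $\langle y|C|x\rangle$ either vanishes (whenever the restrictions of $x$ and $y$ disagree on some qubit untouched by any gate of the layer) or equals a product of at most $n/2$ entries drawn from the $4\times 4$ gate matrices, indexed by the appropriate bits of $x$ and $y$. Either way it is computable in $O(n)$ time and $O(n)$ space. For space at general depth, each recursive frame holds the loop variable $z$, a fixed-precision complex accumulator, and $O(\log d)$ bits identifying the current sub-range of layers; this is $O(n)$ bits per frame, and the recursion has depth $\lceil \log_2 d\rceil$, so the total workspace is $O(n \log d)$, as claimed.

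For time, let $T(d)$ denote the running time on a depth-$d$ sub-circuit. The recursive step performs $2^n$ iterations, each issuing $2$ recursive calls, giving the recurrence $T(d) \le 2 \cdot 2^n \cdot T(\lceil d/2 \rceil) + O(n)$ with base case $T(1) = O(n)$. Setting $k = \lceil \log_2 d \rceil$, so that $2^k \le 2d$, this unrolls to $T(d) \le (2 \cdot 2^n)^k \cdot T(1) = O(n \cdot (2d)^{n+1})$, matching the statement. The main obstacle I expect is the bookkeeping required to keep the space bound tight: I need to verify that the $z$-loop is implemented with a single shared counter rather than by enumerating subsets of $\{0,1\}^n$; that each recursive call releases its frame before the next iteration of the enclosing loop begins (so that frames are truly in series, not in parallel); and that a polynomial number of bits of precision per accumulator suffice to add $2^n$ products without error blow-up. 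None of these points is conceptually difficult, but each is easy to get wrong in a way that would silently inflate the space usage to $\exp(n)$, defeating the entire purpose of the recursion.
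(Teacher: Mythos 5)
Your proposal is correct and follows essentially the same route as the paper's proof: the same depth-halving decomposition with a resolution of the identity inserted between the two halves, the same recurrence $F(d)\le 2^{n+1}F(\lceil d/2\rceil)$ unrolling to $O(n\cdot(2d)^{n+1})$, and the same space accounting of $O(n)$ bits per recursion frame over $O(\log d)$ levels. The implementation caveats you flag (serial rather than parallel frames, a single shared counter for $z$, bounded-precision accumulation) are sensible but routine, and the paper handles them implicitly in the same way.
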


\begin{proof}
	In the base case $d=1$, the answer can be trivially computed in $O(n)$ time.
	
	When $d>1$, we have
	\begin{align}
	\rpz{y} C \spz{x} &= \rpz{y} C_{[d \leftarrow d/2+1]} \cdot C_{[d/2 \leftarrow 1]} \spz{x} \notag \\
	&=\rpz{y} C_{[d \leftarrow d/2+1]} \left(  \sum_{z \in \{0,1\}^n} \outpt{z} \right) C_{[d/2 \leftarrow 1]} \spz{x} \notag \\
	&=\sum_{z \in \{0,1\}^n} \rpz{y} C_{[d \leftarrow d/2+1]} \spz{z} \cdot \rpz{z} C_{[d/2 \leftarrow 1]} \spz{x}. \label{eq:sum}
	\end{align}
	
	Then, for each $z$, we calculate $\rpz{y} C_{[d \leftarrow d/2+1]} \spz{z} \cdot \rpz{z} C_{[d/2 \leftarrow 1]} \spz{x}$ by recursively calling the algorithm on the two sub-circuits $C_{[d \leftarrow d/2+1]}$ and $C_{[d/2 \leftarrow 1]}$ respectively; and sum them up to calculate \eqref{eq:sum}.
	
	It is easy to see the above algorithm is correct, and its running time can be analyzed as follows: let $F(d)$ be its running time on a circuit of $d$ layers; then we have $F(1) = O(n)$, and by the above discussion
	$$
	F(d) \le 2^{n+1} \cdot F(\ceil{d/2}) = O(n\cdot 2^{(n+1)\ceil{\log d}}) = O(n \cdot (2^{\ceil{\log d}})^{n+1}) \le O(n \cdot (2d)^{n+1}),
	$$
	
	\noindent which proves our running time bound.
	
	Finally, we can see in each recursion level, we need $O(n)$ space to save the indices of $\spz{x}$ and $\spz{y}$, and $O(1)$ space to store an intermediate answer. \ Since there are at most $O(\log d)$ recursion levels, the total space is bounded by $O(n \log d)$.
\end{proof}

\subsection{Faster Polynomial Space Simulation Algorithms for Grid Quantum Circuits}

When a quantum circuit is spatially local, i.e., its base graph can be embedded on a grid, we can further speed up the simulation with a more sophisticated algorithm.

We first introduce a simple lemma which shows that we can find a small balanced cut in a two-dimensional grid.	

\begin{lemma}\label{lm:triv-lemma}
	Given a grid $G=(V,E)$ of size $H \times W$ such that $|V| \ge 2$, we can find a subset $S\subset E$ such that
	\begin{itemize}
		\item $|S| \le O(\sqrt{|V|})$, and
		\item after $S$ is removed, $G$ becomes a union of two disconnected grids with size smaller than $\frac{2}{3} |V|$.
	\end{itemize}
\end{lemma}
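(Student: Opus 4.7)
The plan is to find a small balanced cut by slicing the grid perpendicular to its longer dimension. Without loss of generality I assume $W \ge H$ (otherwise swap the roles of rows and columns, which does not alter the grid structure). Under this assumption, $H \le \sqrt{HW} = \sqrt{|V|}$.

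For the cut itself, I pick a column index $c$ near the middle of $[1, W-1]$, say $c = \lceil W/2 \rceil$, and let $S$ consist of every edge that joins a vertex in column $c$ to its right neighbor in column $c+1$. By Definition~\ref{defi:grid} there are exactly $H$ such edges, so $|S| = H \le \sqrt{|V|}$, which is the first bullet point.

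After $S$ is removed the graph decomposes into a left subgrid of size $H \times c$ and a right subgrid of size $H \times (W-c)$, which are themselves grids in the sense of Definition~\ref{defi:grid}, of sizes $Hc$ and $H(W-c)$ respectively. For the second bullet point I would check that both $Hc$ and $H(W-c)$ are at most $\frac{2}{3}|V| = \frac{2}{3}HW$; since $c$ lies near the middle of $[1,W-1]$, both quantities are approximately $HW/2$, comfortably below the $\frac{2}{3}HW$ threshold.

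The only delicate point, and I expect the main (though ultimately minor) obstacle, is handling very small grids: for instance, when $W = 3$, any cut between two adjacent columns produces one piece of size exactly $\frac{2}{3}|V|$, narrowly violating the strict inequality. The natural fix is either to relax the bound to $\le \frac{2}{3}|V|$ (which is what the downstream divide-and-conquer analysis really uses), or to treat grids with $|V|$ below a fixed constant as base cases handled by brute force, so that the lemma is only invoked when $W$ (or $H$) is large enough that a properly balanced integer cut exists.
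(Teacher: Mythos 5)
Your proof is correct and takes essentially the same approach as the paper: the paper assumes WLOG that $H \ge W$ and takes $S$ to be all edges between rows $\lfloor H/2\rfloor$ and $\lfloor H/2\rfloor+1$, giving $|S| \le W \le \sqrt{|V|}$, exactly mirroring your column cut. Your observation about the boundary case (e.g.\ $W=3$, where one piece has size exactly $\frac{2}{3}|V|$ rather than strictly less) is a genuine nit that the paper glosses over with ``both claims are easy to verify,'' and your fix---noting that the downstream recursion in Theorem~\ref{theo:fast-small-depth} only uses the non-strict bound $k \le \frac{2}{3}n$---is the right one.
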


\begin{proof}
	We can assume $H \ge W$ without loss of generality and simply set $S$ to be the set of all the edges between the $\floor{H/2}$-th row and
	the $\floor{H/2}+1$-th row; then both claims are easy to verify.
\end{proof}

We now present a faster algorithm for simulating quantum circuits on grids.

\begin{theo}\label{theo:fast-small-depth}
	Given a quantum circuit $C$ on $n$ qubits with depth $d$, and two computational basis states $\spz{x},\spz{y}$, assuming that $G_C$ can be embedded into a two-dimensional grid with size $n$ (with the embedding explicitly specified), we can compute $\rpz{y} C \spz{x}$ in $2^{O(d \sqrt{n})}$ time and $O(d \cdot n \log n)$ space.
\end{theo}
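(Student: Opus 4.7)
The plan is to slice the grid with a small balanced separator, sum over configurations of all gates that cross the cut, and recurse on each side. By Lemma~\ref{lm:triv-lemma}, we can find an edge set $S \subseteq E(G_C)$ with $|S| \le W = O(\sqrt{n})$ whose removal splits $G_C$ into two disjoint subgrids $A$ and $B$, each with at most $\tfrac{2}{3}n$ vertices. Call a gate of $C$ a \emph{cut gate} if its two qubits lie on an edge of $S$; every other gate acts either entirely within $A$ or entirely within $B$. Because each layer of $C$ consists of gates on disjoint pairs of qubits, at most $W$ cut gates appear in any single layer, so the circuit contains at most $dW = O(d\sqrt{n})$ cut gates overall.

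Next, I would expand every cut gate in the computational basis. For a cut gate $G$ acting on $a \in A$ and $b \in B$, write
\[
G \;=\; \sum_{i,j,k,l \in \{0,1\}} G_{ij,kl}\, \bigl(|i\rangle\langle k|\bigr)_a \otimes \bigl(|j\rangle\langle l|\bigr)_b.
\]
Substituting this decomposition into every cut gate expresses $\langle y|C|x\rangle$ as a sum of at most $16^{dW} = 2^{O(d\sqrt{n})}$ terms, one per \emph{configuration} $\sigma$ that specifies an index quadruple for each cut gate. For every fixed $\sigma$, the resulting overall operator factorizes across $A$ and $B$, so the $\sigma$-term is a product $\alpha_A(\sigma)\cdot\alpha_B(\sigma)$, where each factor is an amplitude of the form $\langle y'|C'|x'\rangle$ on the corresponding half of the grid, augmented only by single-qubit rank-one ``stub'' operators that force certain boundary qubits into specified computational basis states at specified layers. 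Since these stubs do not change the depth or the underlying grid, each factor can be evaluated by recursively applying the same slicing procedure on the subgrid, with the base case $n=O(1)$ handled directly (or via Theorem~\ref{theo:PSPACE-SIMUL}).

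Letting $T(n)$ denote the running time for a depth-$d$ circuit on an $n$-vertex grid, we obtain the recurrence
\[
T(n) \;\le\; 2^{O(d\sqrt{n})} \cdot 2\,T\!\bigl(\tfrac{2}{3}n\bigr) + \operatorname{poly}(n,d),
\]
which unrolls to $T(n) = 2^{O(d\sqrt{n})}$, since $\sum_{k \ge 0}(\sqrt{2/3})^{k}$ converges to a constant and the $O(\log n)$ recursion levels contribute only a polynomial factor. For space, each recursion level needs $O(d\sqrt{n})$ bits to enumerate its configurations and $O(n)$ bits to keep track of the current boundary values; reusing memory across sibling calls, the $O(\log n)$-level recursion fits into the claimed $O(dn\log n)$ total space.

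The main obstacle I expect is not the high-level strategy but the careful book-keeping: one must verify that, once the cut-gate stubs pin certain qubits to fixed computational basis states at fixed layers, the subproblems on $A$ and $B$ genuinely fit the inductive hypothesis (e.g., by folding stubs into the nearest in-half gate, or by temporally splitting the affected qubit timelines), and that the shrinking subgrid sizes keep the geometric sum $\sum_{k} d\sqrt{(2/3)^{k} n}$ bounded by $O(d\sqrt{n})$. Once these details are in place, both the time and space bounds reduce to the routine recursion above.
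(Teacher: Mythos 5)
Your proposal is correct and follows essentially the same route as the paper: a balanced $O(\sqrt{n})$ grid separator from Lemma~\ref{lm:triv-lemma}, expansion of the $O(d\sqrt{n})$ crossing gates into $16$ rank-one terms each, factorization of each of the $2^{O(d\sqrt{n})}$ resulting terms across the two halves, and the same recurrence for time and space. The book-keeping issue you flag about the rank-one ``stubs'' is handled in the paper exactly as you suggest, by generalizing the circuit model to allow $1$-qubit and $0$-qubit gates so the subproblems fit the inductive hypothesis.
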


\begin{proof}
	For ease of presentation, we slightly generalize the definition of quantum circuits: now each gate can be of the form $O_i \otimes I_{[n] \setminus \{a_i,b_i\}}$ (a 2-qubit gate) or $O_i \otimes I_{[n] \setminus \{a_i\}} $ (a 1-qubit gate) or simply $I_{[n]}$ (a 0-qubit gate, which is introduced just for convenience).
	
	The algorithm works by trying to break the current large instance into many small instances which we then solve recursively. \ But unlike the algorithm in Theorem~\ref{theo:PSPACE-SIMUL}, which reduces an instance to many sub-instances with fewer {\em gates}, our algorithm here reduces an instance to many sub-instances with fewer {\em qubits}.
	
	{\bf The base case, $n=1$ qubit.} In this case, all the gates are either 1-qubit or 0-qubit; hence the answer can be calculated straightforwardly in $O(m)$ time and constant space.
	
	{\bf Cutting the grid by a small set.} When $n \ge 2$, by Lemma~\ref{lm:triv-lemma}, we can find a subset $S$ of edges with $|S| \le O(\sqrt{n})$. \ After $S$ is removed, the grid becomes a union of two disconnected grids $A$ and $B$ (we use $A,B$ to denote both the grids and the sets of the vertices in the grid for simplicity) with size smaller than $\frac{2}{3} n$.
	
	Let
	$$
	\{ R = i \betw U_i \text{ is of the form } O_{i} \otimes I_{[n] \setminus \{a_i,b_i\}} \text{ and } (a_i,b_i) \in S \},
	$$ that is, the set of the indices of the gates crossing the cut $S$. \ Without loss of generality, we can assume that for each $i \in R$, we have $a_i \in A$ and $b_i \in B$.
	
	Since in a single layer, there is at most one gate acting on a particular adjacent pair of qubits, we have
	$$
	|R| \le O(d \sqrt{n}).
	$$
	
	{\bf Breaking the gates in $R$.} Now, for each $i \in R$, we decompose $O_i$ (which can be viewed as a matrix in $\mathbb{C}^{4 \times 4}$) into a sum of $16$ single-entry matrices $O_{i,1},O_{i,2},\dotsc,O_{i,16}$.
	
	Write $O_i$  as $$ O_i = \sum_{x,y \in \{0,1\}^2} \rpz{y} O_i \spz{x} \cdot \spz{y} \rpz{x}.$$
	Then we set $O_{i,j} = \rpz{y_j} O_i \spz{x_j} \cdot \spz{y_j} \rpz{x_j}$ for each $j \in [16]$, where $(x_j,y_j)$ is the $j$-th ordered pair in $\{0,1\}^2 \times \{0,1\}^2$.
	
	{\bf Decomposing the instance.} Now, we are going to expand each $U_i = O_i \otimes I_{[n] \setminus \{a_i,b_i\}}$ as a sum
	$$
	U_i = \sum_{j=1}^{16} O_{i,j} \otimes I_{[n] \setminus \{a_i,b_i\}}
	$$
	for each $i \in R$, and therefore decompose the answer $ \rpz{y} C \spz{x} = \rpz{y} U_{m}U_{m-1}\cdots U_{1} \spz{x}$ into a sum of $16^{|R|}$ terms. \ More concretely, for a mapping $\tau$ from $R$ to $[16]$ and an index $i \in [m]$, we define
	$$
	U_{i,\tau} = \begin{cases}
	O_{i,\tau(i)} \times I_{[n] \setminus \{a_i,b_i\}} &\quad\quad i \in R.\\
	U_i                                                &\quad\quad i \not\in R.\\
	\end{cases}
	$$
	Let $\mathcal{T}$ be the set of all mappings from $R$ to $[16]$. \ Then we have
	$$
	\rpz{y} C \spz{x} = \rpz{y} U_{m}U_{m-1}\cdots U_{1} \spz{x} = \sum_{\tau \in \mathcal{T}} \rpz{y} U_{m,\tau} U_{m-1,\tau} \cdots U_{1,\tau} \spz{x}.
	$$
	
	{\bf Dealing with the sub-instance.} For each $\tau \in \mathcal{T}$ and an index $i \in [m]$, we are going to show that $U_{i,\tau}$ can be decomposed as $U_{i,\tau}^A \otimes U_{i,\tau}^B$, where $U_{i,\tau}^A$ and $U_{i,\tau}^B$ are operators on $\heb_{A}$ and $\heb_{B}$ respectively.
	
	When $i \in R$, by definition, there exist $x,y \in \{0,1\}^2$ and $\alpha \in \mathbb{C}$ such that
	$$
	U_{i,\tau} = \alpha \cdot \spz{y}\rpz{x} \otimes I_{[n] \setminus \{a_i,b_i\}} = \alpha \cdot \left(\spz{y_{0}}\rpz{x_{0}} \otimes I_{A \setminus \{a_i\}} \right) \otimes \left(\spz{y_{1}}\rpz{x_{1}} \otimes I_{B \setminus \{b_i\}} \right).
	$$
	
	Otherwise $i \notin R$. \ In this case, if $O_i$ is of the form $O_i \otimes I_{[n] \setminus \{a_i,b_i\}}$, then $a_i,b_i$ must be both in $A$ or in $B$ and the claim trivially holds; and the claim is also obvious when $O_i$ is of the form $O_i \otimes I_{[n] \setminus \{a_i\}}$ or $I_{[n]}$.
	
	Moreover, one can easily verify that each $U_{i,\tau}^A$ is of the form $O_i^A \otimes I_{A \setminus \{a_i,b_i\}}$ or $O_i^A \otimes I_{A \setminus \{a_i\}} $ or simply $I_{A}$, in which $O_i^A$ is (respectively) a 2-qubit operator on $\heb_{\{a_i,b_i\}}$ or a 1-qubit operator on $\heb_{\{a_i\}}$), and the same holds for each $U_{i,\tau}^B$.
	
	Hence, we have
	\begin{align}
	&\rpz{y} U_{m}U_{m-1}\cdots U_{1} \spz{x} \notag \\
	=&\sum_{\tau \in \mathcal{T}} \rpz{y} U_{m,\tau} U_{m-1,\tau} \cdots U_{1,\tau} \spz{x}. \notag \\
	=&\sum_{\tau \in \mathcal{T}} \rpz{y} (U_{m,\tau}^A \otimes U_{m,\tau}^B) (U_{m-1,\tau}^A \otimes U_{m-1,\tau}^B) \cdots(U_{1,\tau}^A \otimes U_{1,\tau}^B) \spz{x}. \notag\\
	=&\sum_{\tau \in \mathcal{T}} \rpz{y_A} U_{m,\tau}^A U_{m-1,\tau}^A \cdots U_{1,\tau}^A \spz{x_A} \cdot \rpz{y_B} U_{m,\tau}^B U_{m-1,\tau}^B \cdots U_{1,\tau}^B \spz{x_B}, \label{eq:final-result}
	\end{align}
	where $x_A,x_B$ ($y_A,y_B$) is the projection of $x$ ($y$) on $\heb_A$ and $\heb_B$.
	
	So from the above discussion, we can then calculate $ \rpz{y_A} U_{m,\tau}^A U_{m-1,\tau}^A \cdots U_{1,\tau}^A \spz{x_A}$ with a recursive call with computational basis states $\spz{x_A}$ and $\spz{y_A}$, grid $A$, and $m$ gates $U_{1,\tau}^A,U_{2,\tau}^A,\dotsc,U_{m,\tau}^A$.
	
	The matrix element $\rpz{y_B} U_{m,\tau}^B U_{m-1,\tau}^B \cdots U_{1,\tau}^B \spz{x_B}$ can be computed similarly. \ After that we sum up all the terms in \eqref{eq:final-result} to get the answer.
	
	{\bf Complexity analysis.} Now we are going to bound the running time. \ Let $F(n)$ be an upper bound on the running time when the size of the remaining grid is $n$. \ Then we have
	
	$$
	F(n) =   \begin{cases}
	O(m)       & \quad \text{when }n = 1.\\
	2^{O(d\sqrt{n})}\cdot \max_{k \in [n/3,2n/3]} F(k)  & \quad \text{otherwise}.\\
	\end{cases}	
	$$
	The second case is due to the fact that the sizes of sub-instances (i.e., the sizes of $A$ and $B$) lie in $[n/3,2n/3]$, and $\mathcal{T} = 16^{|R|} = 2^{O(d\sqrt{n})}$. It is not hard to see that $F(n)$ is an increasing function, so we have $F(n) = 2^{O(d\sqrt{n})} F(2n/3)$ for $n>1$, which further simplifies to $F(n) = 2^{O(d\sqrt{n})}$.
	
	Finally, we can see that at each recursion level, we need $O(d \cdot n)$ space to store the circuit, and $O(1)$ space to store the intermediate answer. \ Since there are at most $\log n$ recursion levels, the space complexity is $O(d \cdot n \log n)$.
	
\end{proof}

Interestingly, by using tensor network methods, Markov and Shi \cite{markovshi} gave an algorithm for simulating quantum circuits on grids with similar running time to ours. \ However, the difference is that Markov and Shi's algorithm requires $2^{O(d \sqrt{n})}$ time {\em and} $2^{O(d \sqrt{n})}$ space, whereas ours requires $2^{O(d \sqrt{n})}$ time and only polynomial space.

The algorithm of Theorem~\ref{theo:fast-small-depth} achieves a speedup over Theorem~\ref{theo:PSPACE-SIMUL} only for small $d$, but we can combine it with the algorithm in Theorem~\ref{theo:PSPACE-SIMUL} to get a faster algorithm for the whole range of $d$.

\begin{theo}
	There is a constant $c$ such that, given a quantum circuit $C$ on $n$ qubits with depth $d$, and two computational basis states $\spz{x},\spz{y}$, assuming that $G_C$ can be embedded into a two dimensional grid with size $n$ (with the embedding explicitly specified), we can compute $\rpz{y} C \spz{x}$ in
	$$
	O(2^n\cdot \left[1 + \left(\frac{d}{c \sqrt{n}}\right)^{n+1} \right] )
	$$
	time and $O(d \cdot n \log n)$ space.
\end{theo}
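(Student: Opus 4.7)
The plan is to interpolate between the two earlier simulation algorithms by running the Savitch-style recursion of Theorem~\ref{theo:PSPACE-SIMUL} only until the depth of each sub-circuit drops below a threshold proportional to $\sqrt{n}$, and then invoking the grid algorithm of Theorem~\ref{theo:fast-small-depth} as the base case. Let $K$ denote the hidden constant in the $2^{O(d\sqrt{n})}$ bound of Theorem~\ref{theo:fast-small-depth}, and fix a threshold $\alpha := 1/K$, so that any sub-circuit on $n$ qubits of depth at most $\alpha\sqrt{n}$ can be simulated in time at most $2^{n}$ and space $O(n^{3/2}\log n)$.

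When $d \le \alpha\sqrt{n}$, I would simply invoke Theorem~\ref{theo:fast-small-depth} directly. When $d > \alpha\sqrt{n}$, I would split $C$ at its middle layer exactly as in Theorem~\ref{theo:PSPACE-SIMUL} and write
\[
\rpz{y}C\spz{x} \;=\; \sum_{z\in\{0,1\}^{n}} \rpz{y}C_{[d\leftarrow d/2+1]}\spz{z}\cdot\rpz{z}C_{[d/2\leftarrow 1]}\spz{x},
\]
recursively evaluating each factor; the grid embedding is inherited by each sub-circuit, so every recursive call is a legal instance of the same problem. Letting $F(d)$ be the worst-case running time, one gets $F(d) \le 2^{n+1}\cdot F(\lceil d/2\rceil)$ for $d > \alpha\sqrt{n}$ and $F(d) \le 2^{n}$ at the base case. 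Unrolling $k = \lceil \log_{2}(d/(\alpha\sqrt{n}))\rceil$ levels gives $F(d) \le (2^{n+1})^{k}\cdot 2^{n}$, and using $k \le \log_{2}(d/(\alpha\sqrt{n})) + 1$ this becomes $F(d) \le 2^{n}\cdot(2d/(\alpha\sqrt{n}))^{n+1}$. Setting $c := \alpha/2 = 1/(2K)$ rewrites this as $O(2^{n}\cdot(d/(c\sqrt{n}))^{n+1})$, which together with the trivial base-case bound yields the claimed $O(2^{n}\cdot[1 + (d/(c\sqrt{n}))^{n+1}])$.

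For the space bound, each level of the outer recursion stores only the indices $x,y,z\in\{0,1\}^{n}$ and a running partial sum, costing $O(n)$ space per level, and only $O(\log d)$ levels are live on the call stack at once; at a leaf, Theorem~\ref{theo:fast-small-depth} contributes an additional $O(n^{3/2}\log n)$. The total is $O(n\log d + n^{3/2}\log n)$, which is dominated by $O(d\cdot n\log n)$ whenever $d\ge \alpha\sqrt{n}$, and is directly covered by Theorem~\ref{theo:fast-small-depth}'s own $O(d\cdot n\log n)$ bound in the small-depth case. The main subtlety, and essentially the only obstacle, is the quantitative calibration: one must choose $\alpha$ small enough that each leaf sub-problem runs in time at most $2^{n}$ rather than $2^{\Theta(n)}$, and then choose $c=\alpha/2$ small enough that the factor of $2^{n+1}$ arising from the ceiling in the number of recursion levels is absorbed into the base of the $(d/(c\sqrt{n}))^{n+1}$ term. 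Once this is arranged, no new algorithmic idea is needed beyond a clean composition of the two prior theorems.
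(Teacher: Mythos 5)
Your proposal is correct and follows essentially the same route as the paper's proof: run the Savitch-style layer-halving recursion of Theorem~\ref{theo:PSPACE-SIMUL} until the depth drops to $O(\sqrt{n})$, then switch to the grid algorithm of Theorem~\ref{theo:fast-small-depth} as the base case, unrolling the recurrence $F(d)\le 2^{n+1}F(\lceil d/2\rceil)$ with $F(d)=O(2^n)$ at the threshold. Your explicit calibration of the constant $c$ to absorb the ceiling and the factor of $2^{n+1}$ is a bit more careful than the paper's write-up, but it is the same argument.
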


\begin{proof}
	By Theorem~\ref{theo:fast-small-depth}, there is a constant $c$ such that we have an $O(2^{n})$ time and polynomial space algorithm for calculating $\rpz{y} C \spz{x}$ when the depth is at most $c\sqrt{n}$ for circuit on grids. \ So we can use the same algorithm as in Theorem~\ref{theo:PSPACE-SIMUL}, except that we revert to the algorithm in Theorem~\ref{theo:fast-small-depth} when the depth is no more than $c \sqrt{n}$.
	
	We still let $F(d)$ be the running time on a circuit of $d$ layers. \ We then have $F(d) = O(2^n)$ when $d \le c\sqrt{n}$. \ From the above discussion, we can see that for $d>c\sqrt{n}$,
	$$
	F(d) \le 2^{n+1} \cdot F(\ceil{\frac{d}{2}}) = O(2^n\cdot 2^{(n+1)\ceil{\log(d/c\sqrt{n})}}) = O(2^n\cdot\left(\frac{d}{c\sqrt{n}}\right)^{n+1}),
	$$
	which proves the running time bound. \ And it is not hard to see that the algorithm's space usage is dominated by $O(d \cdot n \log n)$.
\end{proof}

\subsection{Space-Time Trade-off Schemes}

We now show how to optimize the running time for whatever space is available.

\begin{theo}\label{theo:ST-trade-off}
	Given a quantum circuit $C$ on $n$ qubits with depth $d$, two computational basis states $\spz{x},\spz{y}$ and an integer $k$, we can compute $\rpz{y} C \spz{x}$ in
	$$
	O(n2^{n-k}\cdot 2^{(k+1)\ceil{\log d}}) \le O(n2^{n-k}\cdot(2d)^{k+1})
	$$
	time and $O(2^{n-k} \log d)$ space.
\end{theo}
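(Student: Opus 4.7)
The plan is to generalize the Savitch-style recursion of Theorem~\ref{theo:PSPACE-SIMUL} so that, instead of computing a single amplitude at each recursive call, we compute an entire vector of $2^{n-k}$ amplitudes. The extra $O(2^{n-k})$ memory is exactly what we need to hold such a vector, and the pay-off is that each recursion level only has to enumerate $2^k$ intermediate-state prefixes (instead of $2^n$ full intermediate states), cutting the branching factor from $2^{n+1}$ down to $2^{k+1}$.

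Concretely, I would partition the $n$ qubits into a ``prefix'' block $S$ of size $k$ and its complement $\bar S$ of size $n-k$, and define the generalized sub-problem
\[
P(C',\,w_1,\,\vec\phi,\,z_1)[z_2] \ :=\ \rpz{z_1 z_2}\, C'\, \bigl(\spz{w_1}\otimes\vec\phi\bigr),\quad z_2\in\{0,1\}^{n-k},
\]
where $C'$ is a sub-circuit, $w_1,z_1\in\{0,1\}^k$, and $\vec\phi\in\mathbb{C}^{2^{n-k}}$ is an arbitrary coefficient vector on $\bar S$. Splitting $C'=C'_2 C'_1$ into halves of depth $d'/2$ and inserting a resolution of identity $\sum_v \outpt{v}$ between them with $v=(v_1,v_2)$ gives the clean recursion
\[
P(C',w_1,\vec\phi,z_1)\ =\ \sum_{v_1\in\{0,1\}^k}\, P\bigl(C'_2,\ v_1,\ P(C'_1,w_1,\vec\phi,v_1),\ z_1\bigr),
\]
because summing over $v_2$ amounts to inserting $\spz{v_1}\otimes\vec\psi_{v_1}$ as the input to the $C'_2$-call, where $\vec\psi_{v_1}=P(C'_1,w_1,\vec\phi,v_1)$. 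Each level has $2^k$ values of $v_1$ and two recursive calls per value, giving branching $2^{k+1}$ over $\lceil\log d\rceil$ levels; at each level we hold only a constant number of $2^{n-k}$-vectors, so the total space is $O(2^{n-k}\log d)$.

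The main technical step, and the one I expect to be the crux, is handling the base case $d'=1$ in the claimed $O(n\cdot 2^{n-k})$ time per call. Here $C'$ is a single layer of disjoint 1- and 2-qubit gates, which I would partition into gates lying entirely in $S$ (set $F$), entirely in $\bar S$ (set $E$), and crossing gates (set $G$). Because the gates in a single layer act on disjoint qubits, the matrix entry factors as
\[
\rpz{z_1 z_2}\,C'\,\spz{w_1 b_2}\ =\ f(z_1,w_1)\,\cdot\,T_{z_1,w_1}[z_2,b_2],
\]
where $f(z_1,w_1)=\prod_{g\in F} g[z_1|_g,w_1|_g]$ is a scalar (once $z_1,w_1$ are fixed), and $T_{z_1,w_1}$ is a tensor product on $\bar S$ whose factors are the $E$-gates (as 2-qubit operators) together with the $2\times 2$ restrictions $\tilde g[z_2|_{b_g},b_2|_{b_g}]:=g[z_1^{(a_g)}z_2^{(b_g)},\,w_1^{(a_g)}b_2^{(b_g)}]$ coming from each crossing gate $g\in G$ evaluated at the fixed $S$-bits. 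Thus computing $P$ in the base case reduces to applying the tensor-product operator $T_{z_1,w_1}$ to $\vec\phi$, which takes $O(n)$ gate-applications on a $2^{n-k}$-vector, i.e.\ $O(n\cdot 2^{n-k})$ time, then scaling by $f(z_1,w_1)$.

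Putting it all together, the recursion unrolls to $2^{(k+1)\lceil\log d\rceil}$ base-case invocations, each costing $O(n\cdot 2^{n-k})$, for a total running time of $O\bigl(n\cdot 2^{n-k}\cdot 2^{(k+1)\lceil\log d\rceil}\bigr)\le O(n\cdot 2^{n-k}\cdot(2d)^{k+1})$ and space $O(2^{n-k}\log d)$. To conclude Theorem~\ref{theo:ST-trade-off}, the top-level caller invokes $P(C, x_1, \vec e_{x_2}, y_1)$, where $\vec e_{x_2}$ is the standard basis vector indexed by $x_2$ and $x=x_1 x_2$, $y=y_1 y_2$, then reads off the $y_2$-th entry of the returned vector to recover $\rpz{y}C\spz{x}$. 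The two extreme cases $k=n$ and $k=0$ recover Theorem~\ref{theo:PSPACE-SIMUL} and (up to polynomial factors) the Schr\"odinger algorithm, respectively, which serves as a useful sanity check on the bounds.
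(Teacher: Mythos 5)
Your proposal is correct and is essentially the same as the paper's proof: your sub-problem $P(C',w_1,\vec\phi,z_1)$ is exactly the paper's generalized task of computing $\pj_t C'\spz{u}$ for a vector $\spz{u}$ supported on a fixed $k$-bit prefix subspace, and the recursion with branching factor $2^{k+1}$ and the space accounting are identical. The only difference is that you spell out the depth-$1$ base case (factoring the layer into $S$-gates, $\bar S$-gates, and crossing gates restricted to the fixed prefix bits), which the paper simply asserts is "straightforward" in $O(n\cdot 2^{n-k})$ time; your elaboration is correct.
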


\begin{proof}
	
	{\bf Decomposing the whole Hilbert space $\heb_{[n]}$.} We first decompose $\heb_{[n]}$ into a direct sum of many subspaces. \ Let $w_i$ be the $i$-th string in $\{0,1\}^k$ in lexicographic order. \ For each $i \in [2^k]$, let $\heb_i = \mathsf{Span}(\spz{w_i 0^{n-k}},...\spz{w_i 1^{n-k}})$. \ Then we have
	$$
	\heb_{[n]} = \bigoplus_{i=1}^{2^k} \heb_i.
	$$.
	
	Also, let $\pj_i$ be the projection from $\heb_{[n]}$ to $\heb_i$; then
	$$
	I_{[n]} = \sum_{i=1}^{2^k} \pj_i.
	$$
	
	Now we generalize the original problem as follows: given two indices $s,t \in [2^k]$ and a pure state $\spz{u}$ in $\heb_s$, we want to compute $\pj_{t} C\spz{u}$. \ By choosing $s$ and $t$ such that $\heb_s$ contains $\spz{x}$ and $\heb_t$ contains $\spz{y}$, we can easily solve the original problem.
	
	{\bf The base case $d = 1$.} When there is only one layer, $\pj_{t} C \spz{u}$ can be calculated straightforwardly in $O(n\cdot 2^{n-k})$ time and $O(2^{n-k})$ space.
	
	{\bf Recursion.} When $d>1$, we have
	\begin{align*}
	\pj_{t} C \spz{u} &= \pj_{t} C_{[d \leftarrow d/2+1]} \cdot C_{[d/2 \leftarrow 1]} \spz{u}\\
	&=\pj_{t} C_{[d \leftarrow d/2+1]} \left(  \sum_{z \in [2^k]} \pj_{z} \right) C_{[d/2 \leftarrow 1]} \spz{u}\\
	&=\sum_{z \in [2^k]} \pj_{t} C_{[d \leftarrow d/2+1]}  \pj_{z} C_{[d/2 \leftarrow 1]} \spz{u}.
	\end{align*}
	We can then calculate $\pj_{t} C_{[d \leftarrow d/2+1]}  \pj_{z} C_{[d/2 \leftarrow 1]} \spz{u}$ for each $z$ as follows: we first use a recursive call to get $\spz{b} = \pj_{z} C_{[d/2 \leftarrow 1]} \spz{u}$ and a second recursive call to compute $\pj_{t} C_{[d \leftarrow d/2+1]} \spz{b}$ (note that $\spz{b} \in \heb_{z}$).

	{\bf Complexity analysis.} It is easy to see that the total space usage is $O(2^{n-k} \log d)$, since for each $i$, storing a vector in $\heb_{i}$ takes $O(2^{n-k})$ space, and we only need to record $O(1)$ such vectors at each recursion level. \ In addition, when $d=1$, we need only $O(2^{n-k})$ space.
	
	For the running time bound, let $F(d)$ denote the running time on a circuit of $d$ layers; then $F(1) = O(n2^{n-k})$. \ From the above discussion, it follows that
	$$
	F(d) \le 2^{k+1} \cdot F(\ceil{d/2}) = O(n2^{n-k}\cdot 2^{(k+1)\ceil{\log(d)}}) = O(n2^{n-k}\cdot(2d)^{k+1}).
	$$
\end{proof}

The above trade-off scheme can be further improved for quantum circuits on grids.

\begin{theo}
	There is a constant $c$ such that, given a quantum circuit $C$ on $n$ qubits with depth $d$, two computational basis states $\spz{x},\spz{y}$ and an integer $k$, assuming that $G_C$ can be embedded into a two dimensional grid with size $n$, we can compute $\rpz{y} C \spz{x}$ in
	$$
	2^{O(n)}\cdot \left[1 +(2d/c\sqrt{n})^{k+1}\right]
	$$
	time and
	$$
	O\left(2^{n-k} \max(1,\log(d/\sqrt{n}))\right)
	$$ space.
\end{theo}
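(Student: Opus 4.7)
The plan is to combine the two recursive schemes already established, namely the Hilbert-space decomposition of Theorem~\ref{theo:ST-trade-off} and the fast grid simulator of Theorem~\ref{theo:fast-small-depth}. As in Theorem~\ref{theo:ST-trade-off}, partition $\heb_{[n]}$ into $2^k$ subspaces $\heb_i$ of dimension $2^{n-k}$ and attack the generalized problem: given $s,t\in[2^k]$ and $\spz{u}\in\heb_s$, compute $\pj_t C\spz{u}$. The driver identity is
$$
\pj_t C\spz{u} \;=\; \sum_{z\in[2^k]} \pj_t\, C_{[d\leftarrow d/2+1]}\,\pj_z\, C_{[d/2\leftarrow 1]}\spz{u},
$$
which reduces one depth-$d$ instance to $2\cdot 2^k$ depth-$\lceil d/2\rceil$ sub-instances via the same two nested recursive calls used before.

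The one change is the base case. Instead of stopping the halving at $d=1$, I would stop as soon as the current depth drops below the threshold $c\sqrt{n}$ for the constant $c$ furnished by Theorem~\ref{theo:fast-small-depth}. At that point, to evaluate $\pj_t C\spz{u}$ I would enumerate all $2^{n-k}$ basis vectors $\spz{y_i}\in\heb_t$ together with all (at most) $2^{n-k}$ basis vectors $\spz{x_j}\in\heb_s$ appearing in $\spz{u}$, and compute each amplitude $\rpz{y_i}C\spz{x_j}$ using Theorem~\ref{theo:fast-small-depth}; each such call runs in $2^{O(c\sqrt{n}\cdot\sqrt{n})}=2^{O(n)}$ time in polynomial space. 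The total base-case cost is therefore $2^{n-k}\cdot 2^{n-k}\cdot 2^{O(n)} = 2^{O(n)}$.

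For the time bound, let $F(d)$ denote the running time at depth $d$. Then $F(d)=2^{O(n)}$ for $d\leq c\sqrt{n}$, and $F(d)\leq 2^{k+1}F(\lceil d/2\rceil)$ otherwise; unrolling the $\lceil\log(d/(c\sqrt{n}))\rceil$ halvings gives
$$
F(d)\;\leq\; 2^{O(n)}\cdot 2^{(k+1)\lceil \log(d/(c\sqrt{n}))\rceil}\;\leq\; 2^{O(n)}\cdot\bigl(2d/(c\sqrt{n})\bigr)^{k+1},
$$
where the $+1$ term in the theorem's statement simply absorbs the regime $d\leq c\sqrt{n}$. For the space bound, each recursion level stores $O(1)$ vectors of dimension $2^{n-k}$ in some $\heb_i$, and there are at most $O(\max(1,\log(d/\sqrt{n})))$ levels before hitting the base case. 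The base-case work itself needs only $O(2^{n-k})$ space to hold $\spz{u}$ and accumulate $\pj_t C\spz{u}$, since Theorem~\ref{theo:fast-small-depth} already runs in polynomial space per amplitude. Summing gives the claimed $O\bigl(2^{n-k}\max(1,\log(d/\sqrt{n}))\bigr)$ space.

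I do not expect any serious obstacle: both ingredients have already been built and shown to compose cleanly (one recurses on depth, the other on qubits/subspace dimension), and the only bookkeeping is to verify that at the handoff we are handed a circuit whose base graph is still a grid of size $n$, which is automatic since the circuit's gate set is unchanged by the Hilbert-space decomposition. The most delicate point is keeping the polynomial-space guarantee of Theorem~\ref{theo:fast-small-depth} intact while still writing down the $2^{n-k}$-dimensional output of $\pj_t C\spz{u}$ at the base case, but since a single $2^{n-k}$-entry buffer is already charged to the space bound, this is cost-free.
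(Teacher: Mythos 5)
Your proposal is correct and matches the paper's proof essentially verbatim: the paper likewise runs the Theorem~\ref{theo:ST-trade-off} recursion but switches to $2^{2(n-k)}$ calls of the Theorem~\ref{theo:fast-small-depth} grid algorithm once the depth falls below $c\sqrt{n}$, yielding the same $O(2^{n+2(n-k)+(k+1)\ceil{\log(d/c\sqrt{n})}})$ time and $O(2^{n-k}\log(d/c\sqrt{n}))$ space analysis. No gaps.
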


\begin{proof}
	By Theorem~\ref{theo:fast-small-depth}, there is a constant $c$ such that we have an $O(2^{n})$ time algorithm for calculating $\rpz{y} C \spz{x}$ for circuits on grids with depth at most $c \sqrt{n}$.
	
	Then we use the same algorithm as in Theorem~\ref{theo:ST-trade-off}, with the only modification that when $d \le c \sqrt{n}$, we calculate $\pj_{t} \cdot C \spz{u}$ by $2^{2(n-k)}$ calls of the algorithm in Theorem~\ref{theo:fast-small-depth}.
	
	With the same analysis as in Theorem~\ref{theo:ST-trade-off}, when $d > c\sqrt{n}$, we can see that the total space usage is $O(2^{n-k} \log(d/c\sqrt{n}))$ , and the running time is
	$$
	O(2^{n+2(n-k)+(k+1)\ceil{\log(d/c\sqrt{n})}}) = O(2^{O(n)}\cdot(2d/c\sqrt{n}))^{k+1}).
	$$
	
	Combining with the algorithm for $d \le c \sqrt{n}$ proves our running time and space bound.
\end{proof}

\section{Strong Quantum Supremacy Theorems Must Be Non-Relativizing}
\label{sec:non-relativizing}

In this section we show that there is an oracle relative to which $\SampBPP = \SampBQP$, yet $\PH^\oracle$ is infinite.

Recall that an oracle $\oracle$ is a function $ \oracle : \{0,1\}^* \to \{0,1\}$, and the combination of two oracles $\oracle_0,\oracle_1$, denoted as $\oracle_0 \oplus \oracle_1$, simply maps $z \in \{0,1\}^*$ to $\oracle_{z_1}(z_2,z_3,\dotsc,z_{|z|})$ (cf. \cite{fenner2003oracle}). We use $\oracle_n$ to denote the restriction of $\oracle$ on $\{0,1\}^n$.

\subsection{Intuition}

We have two simultaneous objectives: (1) we need $\SampBPP$ and $\SampBQP$ to be equal; and (2) we also need $\PH$ to be infinite. \ So it will be helpful to review some previous results on (1) and (2) separately.

\begin{itemize}
	\item An oracle $\oracle$ such that $\SampBPP^{\oracle} = \SampBQP^{\oracle}$: in order to make two classes equal, we can use the standard method: {\em adding a much more powerful oracle}~\cite{baker1975relativizations}. \ That is, we set $\oracle$ to be a $\PSPACE$-complete language, like $\TQBF$. \ Then it is easy to see both $\SampBPP^{\TQBF}$ and $\SampBQP^{\TQBF}$ become $\mathsf{SampPSPACE}$ (i.e., the class of approximate sampling problems solvable in polynomial space).
	
	\item An oracle $\oracle$ such that $\PH^{\oracle}$ is infinite: a line of works by Yao~\cite{yao1985separating}, H{\aa}stad~\cite{hastad1986almost}, and others constructed relativized worlds where $\PH$ is infinite, and a very recent breakthrough by Rossman, Servedio, and Tan~\cite{rst} even shows that $\PH$ is infinite relative to a random oracle with probability $1$.
	
\end{itemize}
\subsubsection*{A Failed Attempt: Direct Combination}

The first natural idea is to combine the previous two results straightforwardly by setting the oracle to be $\TQBF \oplus \oracle$, where $\oracle$ is a random oracle.

Alas, it is not hard to see that this does not work: while $\PH$ is still infinite, a $\SampBQP$ algorithm can perform \fsampling\ (cf.\ Definition~\ref{defi:Ffish-Fsamp}) on the random oracle bits, and it is known that no $\SampBPP$ algorithm can do that~\cite{aa:forrelation} (see also Theorem~\ref{theo:fsamp-lowerbound}). \ Hence, in this case $\SampBQP \ne \SampBPP$.

\subsubsection*{Another Failed Attempt: Hiding a ``Secret Random String'' in a Secret Location}

The failure of the naive approach suggests that we must somehow ``hide'' the random oracle bits, since if the $\SampBQP$ algorithm has access to them, then $\SampBPP$ and $\SampBQP$ will not be equal. \ More specifically, we want to hide a ``secret random string'' among the oracle bits so that:

\begin{enumerate}
	\item[(1)] a $\PH$ algorithm can find it, so that $\PH$ is still infinite, but
	\item[(2)] a $\SampBQP$ algorithm cannot find it, so that we can still make $\SampBPP = \SampBQP$ by attaching a $\TQBF$ oracle.
\end{enumerate}

Inspired by the so-called cheat-sheet construction~\cite{aaronson2015separations}, it is natural to consider a direct hiding scheme. \ Imagine that the oracle bits are partitioned into two parts: one part is $\log N$ copies of the $\mathsf{OR}$ function on $N$ bits, and another part is $N$ binary strings $y_1,\dotsc,y_N$, each with length $N$. \ Let $t=a_1,a_2,\dotsc,a_{\log N} \in \{0,1\}^{\log N}$ be the answer to the copies of $\mathsf{OR}$; we can also interpret $t$ as an integer in $[N]$. \ Finally, set $y_{t}$ to be a random input, while other $y_i$'s are set to zero.

Intuitively, a $\PH$ algorithm can easily evaluate the $\log N$ copies of $\mathsf{OR}$ and then get access to the random string; while it is known that $\OR$ is hard for a quantum algorithm, so no quantum algorithm should be able to find the location of the random string efficiently.

Unfortunately, there is a fatal issue with the above approach: a $\SampBQP$ algorithm is also given an input $x \in \{0,1\}^n$ and it may {\em guess} that the input $x$ denotes the location of the random string. \ That is, on some particular input, the $\SampBQP$ algorithm is ``lucky'' and gets access to the random string, which still makes $\SampBPP$ and $\SampBQP$ unequal.

\subsubsection*{Hiding the ``Secret Random String'' in a Bunch of $\OR$'s}

Therefore, our final construction goes further. \ Instead of hiding the random string in a secret location amid the oracle bits, we hide it using a bunch of $\mathsf{OR}$s. \ That is, suppose we want to provide $N$ uniform random bits. \ Then we provide them each as an $\mathsf{OR}$ of $N$ bits. \ In this way, a $\PH$ algorithm is still able to access the random bits, while a quantum algorithm, even if it's ``lucky'' with its additional input, still can't get access to these hidden random bits.

\subsection{Implementation}

\subsubsection*{The Distribution $\distr_\oracle$ on Oracles.}

We first describe formally how to hide a random string inside a bunch of $\OR$'s by defining a distribution $\distrO$ on oracles.

For notational convenience, our constructed oracles always map all odd-length binary strings to $0$. \ So we can alternatively describe such an oracle $\oracle$ by a collection of functions $\{ f_n \}_{n=0}^{+\infty}$, where each $f_n$ is a function from $\{0,1\}^{2n} \to \{0,1\}$. \ That is, $\oracle_{2n}$ is set to be $f_n$ for each $n$, while the $\oracle_{2n+1}$'s are all constant zero functions.

For each string $p \in \{0,1\}^n$, we use $B_{n,p}$ to denote the set of strings in $\{0,1\}^{2n}$ with $p$ as a prefix. \ Now we first define a distribution $\distr_{n}$ on functions $\{0,1\}^{2n} \to \{0,1\}$, from which a sample function $f_n$ is generated as follows: initially, we set $f_n(x) = 0$ for all $x \in \{0,1\}^{2n}$; then for each $p \in \{0,1\}^{n}$, with probability $0.5$, we pick an element $e$ in $B_{n,p}$ at uniformly random and set $f_n(e) = 1$. \ Observe that by taking the $\OR$ of each $B_{n,p}$, we get a function $g(p) := \lor_{x \in B_{n,p}} f_n(x)$, which is a uniform random function from $\{0,1\}^n$ to $\{0,1\}$ by construction.

Finally, the $\distr_{n}$'s induce a distribution $\distrO$ on oracles, which generates an oracle $\oracle$ by drawing $f_n \sim \distr_{n}$ independently for each integer $n$. \ That is, we set $\oracle_{2n}$ to be $f_n$, and $\oracle_{2n+1}$ to be $\mathbf{0}$, for each $n$.

Having defined the distribution $\distrO$, we are ready to state our result formally.

\begin{theo}\label{theo:oralce-result-distrO}
	For an oracle $\oracle$ drawn from the distribution $\distrO$, the following two statements hold with probability $1$:
	
	\begin{itemize}
		\item $\SampBPP^{\TQBF,\oracle} = \SampBQP^{\TQBF,\oracle}$.
		\item $\PH^{\TQBF,\oracle}$ is infinite.
	\end{itemize}
	
\end{theo}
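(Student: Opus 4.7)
The plan is to handle the two claims separately, exploiting the ``$\OR$-of-blocks'' structure of $\oracle \sim \distrO$ in opposite ways. For $\PH^{\TQBF,\oracle}$ infinite, I will observe that the function $g_n(p) := \bigvee_{x \in B_{n,p}} \oracle_{2n}(x)$ is, by construction, a uniformly random function $\{0,1\}^n \to \{0,1\}$, and that it lies in $\NP^{\oracle} \cap \coNP^{\oracle}$ (existentially guess the unique planted $1$ in the block, or in the complement case universally witness its absence). Hence $\PH^\oracle$ can simulate $\PH^g$ with no level increase. The breakthrough of Rossman, Servedio, and Tan then gives that the polynomial hierarchy is infinite relative to a random oracle with probability $1$, and this transfers to $\PH^\oracle$. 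The TQBF oracle adds PSPACE-level computation \emph{between} queries to $\oracle$; the RST separating problems are $\ACz$-depth hierarchies whose lower bounds depend only on the query structure on the random bits, so the separations survive the addition of TQBF with only polynomially many $\oracle$-queries.

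For $\SampBPP^{\TQBF,\oracle} = \SampBQP^{\TQBF,\oracle}$, I will classically simulate any $\SampBQP$ algorithm $Q$ using TQBF for PSPACE-power computation and only $\operatorname{poly}(T)$ classical queries to $\oracle$, where $T = \operatorname{poly}(|x|,1/\varepsilon)$ is the quantum query bound. The simulator will maintain a ``compressed proxy'' oracle $\tilde{\oracle}$, initially identically zero, and will iteratively refine it in Grover-hybrid style: (i) using TQBF, exactly simulate $Q$ on $\langle x,0^{1/\varepsilon}\rangle$ with oracle $\tilde{\oracle}$, and compute, for each address $w$ in the oracle's domain, the total squared query amplitude that $Q$ places on $w$; (ii) take the $\operatorname{poly}(T/\varepsilon)$ addresses with the largest such magnitudes, classically query the real $\oracle$ there, and overwrite $\tilde{\oracle}$ accordingly; (iii) re-simulate and repeat until $\tilde{\oracle}$ stabilizes. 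A BBBV/hybrid lemma will then ensure that $Q$'s output distribution on $\tilde{\oracle}$ is $\varepsilon$-close in total variation to its distribution on $\oracle$, because every planted $1$ in $\oracle$ not already captured by $\tilde{\oracle}$ has cumulative query magnitude at most $O(\varepsilon^2/T)$.

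The main obstacle---and the whole reason for the careful $\OR$-of-blocks hiding scheme---is the adaptive-input attack previewed in the intuition: the input $x$ to $Q$ could, in principle, reveal which block contains a planted $1$, letting $Q$ concentrate all of its query amplitude on that block. The saving grace is that the planted $1$ inside the block sits at a \emph{uniformly random} position within a block of size $2^n$, so even armed with the correct block index, $Q$ needs $\Omega(2^{n/2})$ intra-block queries to locate it, and the simulator's top-magnitude list of size $\operatorname{poly}(T)$ therefore covers every $1$ that $Q$ can ``feel.'' Making this quantitative requires a union bound over every input $x$ of bit-length $\operatorname{poly}(n)$ and over the draw $\oracle \sim \distrO$, upper-bounding the probability that some $Q^\oracle(x)$ accumulates $\Omega(\varepsilon)$ query magnitude on an undiscovered $1$. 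Quantum-mechanical linearity then reduces the simulation error per input to roughly the success probability of the associated Grover-style search, which is exponentially small in $n$; summing over inputs still yields the with-probability-$1$ conclusion claimed in the theorem.
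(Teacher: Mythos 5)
Your second half (the classical simulation of $\SampBQP^{\TQBF,\oracle}$) is essentially the paper's argument: replace the oracle by a sparse proxy that agrees with $\oracle$ on all high-query-mass positions, and use the fact that each undiscovered planted $1$ sits at a uniformly random position in its block of size $2^n$, so its expected query mass is $2^{-n}$ and a concentration bound controls the deviation of the final state. Two bookkeeping remarks: the paper avoids your ``iterate until $\tilde{\oracle}$ stabilizes'' loop (whose termination and round count you would still have to justify) by fixing the proxy gate-by-gate in chronological order, so that each query distribution is computed once against an already-frozen prefix; and the ``with probability $1$'' conclusion is not a plain union bound over inputs but a first-moment argument showing the expected number of bad pairs $(x,\varepsilon)$ is finite, after which the finitely many exceptions are hardwired into the simulator.

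The first half has a genuine gap. You establish only a one-directional simulation: since $g_n(p)=\bigvee_{x\in B_{n,p}}\oracle_{2n}(x)$ is computable low in $\PH^{\oracle}$, every level of $\PH^{g}$ embeds into $\PH^{\oracle}$ (with at worst a constant shift). But ``$\PH^{g}$ is infinite and $\PH^{\oracle}$ simulates $\PH^{g}$'' does not imply $\PH^{\oracle}$ is infinite: $\oracle$ carries strictly more information than $g$ (the locations of the planted $1$s), so $\PH^{\oracle}$ could a priori collapse while still containing all of $\PH^{g}$ inside one fixed level; the languages witnessing the infiniteness of $\PH^{g}$ need not witness anything about $\PH^{\oracle}$. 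What is actually required is a \emph{lower bound} against small constant-depth circuits whose inputs are the raw bits of $\oracle$, not the bits of $g$ --- i.e., that $\Sipser_d\circ\OR$ remains average-case hard for depth-$(d-1)$ circuits over the distribution $\distr_n$. The paper proves exactly this (its Lemma on the distributional lower bound over $\distr_n$) by writing $\distr_n$ as an average of distributions in which each block's potential $1$ sits at a fixed position $y_i$, restricting the circuit to those $N$ positions, and invoking Rossman--Servedio--Tan on the resulting depth-$(d-1)$ circuit over a uniform input. Your proposal asserts that ``the separations survive'' but supplies no argument in this direction. Separately, your claim that $g$ lies in $\NP^{\oracle}\cap\coNP^{\oracle}$ is false: $g_n(p)=1$ has a short witness, but $g_n(p)=0$ asserts that all $2^n$ bits of a block vanish and has no $\NP^{\oracle}$ certificate; this is a secondary error, since even a level-preserving simulation in your direction would not close the main gap.
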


From which our desired result follows immediately.

\begin{cor}
	There exists an oracle $\oracle' = \TQBF \oplus \oracle$ such that $\SampBPP^{\oracle'} = \SampBQP^{\oracle'}$ and $\PH^{\oracle'}$ is infinite.
\end{cor}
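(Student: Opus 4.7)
The second claim---that $\PH^{\TQBF,\oracle}$ is infinite with probability $1$---is the easier direction. By construction of $\distrO$, the bit $g_n(p) := \bigvee_{x \in B_{n,p}} f_n(x)$ is an unbiased coin flip, independently across all $(n,p)$, so $g$ is a uniformly random oracle. Since a query $g_n(p)$ reduces to one $\mathsf{NP}^{\oracle}$ query, $\PH^g \subseteq \PH^{\TQBF,\oracle}$ up to a constant shift of levels, and by Rossman, Servedio, and Tan~\cite{rst} the class $\PH^g$ is almost surely infinite.

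For the first claim, I would fix a canonical-form $\SampBQP$ algorithm $M$ as in Section~\ref{sec:canonical-SampBQP} making $T = \operatorname{poly}(|x|, 1/\varepsilon)$ queries, and construct a $\SampBPP$ simulator $M'$ with the same oracles. On input $\langle x, 0^{1/\varepsilon} \rangle$, $M'$ picks a level threshold $n^{\star}$, uses classical $\oracle$-queries to determine the tables $f_n$ at the small levels $n \le n^{\star}$, and pretends $f_n \equiv 0$ at the larger levels $n > n^{\star}$. With this explicit ``approximate oracle'' in hand, $M'$ uses its $\TQBF$ oracle to perform an exact $\mathsf{PSPACE}$ simulation of $M$'s polynomial-size quantum circuit, and samples from the resulting distribution.

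The crux is bounding the total-variation distance between $M'$'s output and the true output of $M^{\TQBF,\oracle}(x)$, almost surely over $\oracle \sim \distrO$. I would use the standard quantum hybrid argument: writing $|\psi_i^{\oracle}\rangle$ and $|\psi_i^{\text{app}}\rangle$ for the post-$i$-th-query states in the true and approximate runs, $\| |\psi_i^{\oracle}\rangle - |\psi_i^{\text{app}}\rangle \|$ grows at the $i$-th step by at most $\| P_i |\psi_{i-1}^{\text{app}}\rangle \|$, where $P_i$ projects onto the $1$-entries of $f_n$ at the $i$-th query level $n$ (and is zero for $n \le n^{\star}$ since the two oracles agree there). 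The crucial step---the ``quantum-mechanical linearity'' invoked in Section~\ref{TECHNIQUES}---is that $|\psi_{i-1}^{\text{app}}\rangle$ is \emph{independent} of the hidden $1$-positions at levels above $n^{\star}$, so a direct calculation over the random position of each $1$-entry gives $\mathbb{E}_{\oracle} \| P_i |\psi_{i-1}^{\text{app}}\rangle \|^2 \le 2^{-n-1}$. This bound survives even when the input $x$ is used as an adversarial ``clue,'' the very failure mode of the naive ``secret location'' scheme discussed in the intuition. Telescoping over $T$ queries and applying Corollary~\ref{cor:close-dist} yields expected TV error $O(T \cdot 2^{-n^{\star}/2})$.

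The main obstacle---where the seven pages of work alluded to in Section~\ref{TECHNIQUES} live---is converting this expected-error bound into an almost-sure, simultaneous guarantee over all countably many $(M, x, \varepsilon)$. A naive brute-force of the low-level tables costs $2^{O(n^{\star})}$ queries, so driving the failure probability per $x$ below $2^{-|x|}$ as needed for a union bound over inputs would force $n^{\star} = \Omega(|x|)$ and hence an exponential query budget. I expect the resolution is a more adaptive simulator that only queries $f_n$ at the polynomially many locations the approximate-oracle run would classically ``touch'' with nontrivial amplitude, so that the query cost scales with $T$ rather than with $2^{n^{\star}}$. A Borel-Cantelli argument across input lengths and $\SampBQP$ machines then upgrades the per-input guarantee to an almost-sure, uniform one, completing the proof.
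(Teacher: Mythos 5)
Your proposal reproves the underlying theorem (Theorem~\ref{theo:oralce-result-distrO}) rather than just citing it, which is fine, but both halves have problems and the $\PH$ half contains a genuine logical error. The inference ``$\PH^{g}$ is infinite and $\PH^{g}\subseteq\PH^{\TQBF,\oracle}$, hence $\PH^{\TQBF,\oracle}$ is infinite'' is a non sequitur: containing an infinite hierarchy says nothing about whether one's own hierarchy collapses. Indeed $\PH^{g}\subseteq\PTIME^{\TQBF\oplus g}$ already, and that tells you nothing about $\PH^{\TQBF\oplus g}$. To separate the levels of $\PH^{\TQBF,\oracle}$ you need a \emph{lower bound} against machines that query the raw bits $f_n(x)$, i.e., an average-case lower bound for quasipolynomial-size depth-$(d-1)$ circuits computing $\Sipser_d\circ\OR$ when the $2^{2n}$ input bits are drawn from $\distr_{n}$ --- and $\distr_{n}$ is far from uniform (each block holds at most one $1$), so Rossman--Servedio--Tan does not apply off the shelf. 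This is exactly the content of the paper's Lemma~\ref{lm:avg-lowb}, proved by averaging over the possible positions of the $1$'s and restricting so as to recover the Sipser-function bound of Theorem~\ref{theo:average-case}. Your proposal omits this step entirely and replaces it with an implication that goes the wrong way.

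For $\SampBPP^{\TQBF,\oracle}=\SampBQP^{\TQBF,\oracle}$, your strategy (approximate the oracle, hybrid argument, bound the disturbance by the query mass on the hidden $1$'s) is the same as the paper's, and the expectation bound $\Ex\bigl[\|P_i|\psi_{i-1}^{\mathrm{app}}\rangle\|^2\bigr]\le 2^{-n-1}$ is correct. But, as you yourself observe, an expectation bound plus Markov cannot survive the union bound over the $2^{|x|}$ inputs of each length without forcing $n^{\star}=\Omega(|x|)$ and hence exponentially many queries. The fix you only ``expect'' to work --- before each oracle gate, query exactly the polynomially many positions carrying query mass at least $\tau$ and zero out the rest --- is what the paper does, but its point is not merely to save queries: it caps each block $B_{n,p}$'s contribution to the disagreement mass at $\tau$, and since the blocks are independent under the posterior distribution this yields a multiplicative Chernoff bound (Corollary~\ref{cor:mul-bound}) with failure probability $\exp\{-(2n+1/\varepsilon)\}$ per gate, which is what makes the sum over all inputs, precisions, and machines converge. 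That concentration argument is the technical heart of Lemma~\ref{lemma:simulation}, and it is precisely the step your sketch leaves open.
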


The rest of this section is devoted to the proof of Theorem~\ref{theo:oralce-result-distrO}.

\subsection{$\SampBPP^{\TQBF,\oracle} = \SampBQP^{\TQBF,\oracle}$ with Probability $1$.}

We first describe an algorithm for simulating $\SampBQP^{\TQBF,\oracle}$ in $\SampBPP^{\TQBF,\oracle}$, thereby proving the first part of Theorem~\ref{theo:oralce-result-distrO}. \ In the following, we assume that all oracle algorithms are given access to two oracles, $\TQBF$ and $\oracle$.

Given a $\SampBQP$ oracle algorithm $M$, our central task is to give a $\SampBPP$ oracle algorithm that simulates $M$ closely. \ Formally:

\begin{lemma}\label{lemma:simulation}
	For any $\SampBQP$ oracle algorithm $M$, there is a $\SampBPP$ oracle algorithm $A$ such that:
	
	Let $\oracle$ be an oracle drawn from $\distrO$, and let $\distr^{M}_{x,\varepsilon}$ and $\distr^{A}_{x,\varepsilon}$ be the distributions output by  $M^{\TQBF,\oracle}$ and $A^{\TQBF,\oracle}$ respectively on input $\langle x, 0^{1/\varepsilon} \rangle$. \ Then with probability at least $1 - \exp\{ -(2 \cdot |x|+1/\varepsilon) \}$, we have
	$$
	\| \distr^{M}_{x,\varepsilon} - \distr^{A}_{x,\varepsilon} \| \le \varepsilon.
	$$
\end{lemma}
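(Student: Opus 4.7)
The plan is to design $A$ so that it first replicates $M$'s classical first-stage preprocessing verbatim, constructing the same quantum circuit $C$ that $M$ would, and then samples from the output distribution of $C$ after replacing each $\oracle$-gate by lookups into a truncated oracle $\oracle^{\le L}$ that agrees with $\oracle$ on inputs of length at most $L$ and returns $0$ on longer inputs, where $L$ is chosen logarithmic in $|x|$ and $1/\varepsilon$. Two things then need to be established: (i) $A$ can sample from the output distribution of $M$'s quantum circuit run against $\oracle^{\le L}$ in polynomial time using $\TQBF$ and $\oracle$; and (ii) except on an oracle-event of probability at most $\exp\{-(2|x|+1/\varepsilon)\}$, this output distribution is $\varepsilon$-close in variation distance to $M^{\TQBF,\oracle}$'s.

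For (i), $A$ enumerates $\oracle$ at every string of length at most $L$ by direct query; there are only $O(2^L) = \mathrm{poly}(|x|,1/\varepsilon)$ such strings, so this step runs in polynomial time. With the lookup table in hand, every gate of $C$ is classically polynomial-time computable, so all output probabilities lie in $\PSPACE$; the standard bit-by-bit conditional-probability procedure then lets $A$ sample approximately from the resulting distribution using polynomially many $\TQBF$ calls.

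For (ii), the main tool is the Bennett--Bernstein--Brassard--Vazirani (BBBV) hybrid argument. Let $T = \mathrm{poly}(|x|,1/\varepsilon)$ bound the number of oracle queries $M$ makes, and let $S = \{z:|z|>L,\,\oracle(z)=1\}$ be the disagreement set. Then
\begin{equation*}
\Big\|\,\spz{M^{\TQBF,\oracle}} - \spz{M^{\TQBF,\oracle^{\le L}}}\,\Big\|^{2} \;\le\; 4T\sum_{t=1}^{T}\big\|\Pi_S\spz{\psi_t}\big\|^{2},
\end{equation*}
where $\spz{\psi_t}$ is $M$'s intermediate state just before its $t$-th query when run with $\oracle^{\le L}$. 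Corollary~\ref{cor:close-dist} then converts this state-distance bound into the desired variation-distance bound. Conditioning on the length-$\le L$ bits of $\oracle$ fixes all the $\spz{\psi_t}$, and $\|\Pi_S\spz{\psi_t}\|^{2}$ becomes a sum, over long buckets $B_{n,p}$, of the contributions $b_p\cdot|\langle e_p|\psi_t\rangle|^{2}$, with the pairs $(b_p,e_p)$ mutually independent across buckets under $\distrO$. A direct calculation gives $\Ex[\|\Pi_S\spz{\psi_t}\|^{2}] \le 2^{-L/2-1}$, hence $\Ex[Q] \le T\cdot 2^{-L/2-1}$ for the full sum $Q := \sum_{t}\|\Pi_S\spz{\psi_t}\|^{2}$.

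The main obstacle will be upgrading this expectation bound on $Q$ into a large-deviation bound that fails with probability at most $\exp\{-(2|x|+1/\varepsilon)\}$; a plain Markov bound is much too weak. My approach is to exploit bucketwise independence under $\distrO$ and apply a Bernstein-style inequality, viewing $Q$ (after conditioning on the length-$\le L$ bits of $\oracle$) as a sum of independent bounded contributions with small mean and small variance. If the gap between the maximum summand ($\max_x|a_x|^{2}\le 1$) and the tiny mean ($\sim 2^{-L/2}$) makes Bernstein too lossy, I would instead walk through the $T$ hybrids one query at a time, exposing the fresh bucket randomness step by step so the increments form a bounded-difference martingale and Azuma applies. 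Either route, combined with the choice $L = \Theta(\log(|x|+1/\varepsilon))$, should drive the failure probability below $\exp\{-(2|x|+1/\varepsilon)\}$; combining (i) and (ii) then yields the lemma.
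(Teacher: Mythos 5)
There is a genuine gap, and it sits exactly where you flagged the ``main obstacle'': no tail inequality can rescue the static truncation $\oracle^{\le L}$. After conditioning on the short bits, your quantity $Q=\sum_t\|\Pi_S\spz{\psi_t}\|^2$ is a sum over long buckets $B_{n,p}$ of terms $Y_p=b_p\,w_{e_p}$, where $w_z=\sum_t q_t(z)$ is the total query weight on $z$ and the only bound on an individual summand is $\max_z w_z$, which can be as large as $1$ (or even $T$). If $M$ concentrates its query amplitude on a single string $z_0$ of length $2n$ with $L<2n=O(|x|+1/\varepsilon)$, then with probability exactly $\tfrac12 2^{-n}$ the oracle has $\oracle(z_0)=1$, and conditioned on that event the hybrid distance between $M^{\TQBF,\oracle}$ and $M^{\TQBF,\oracle^{\le L}}$ is genuinely $\Omega(1)$ --- e.g.\ $M$ can simply query $z_0$ and output the answer. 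Since $\tfrac12 2^{-n}$ exceeds $\exp\{-(2|x|+1/\varepsilon)\}$ for all $n$ up to roughly $1.44(2|x|+1/\varepsilon)$, and your $L$ is only logarithmic, this is not a looseness of Bernstein or Azuma: the bad event really does occur with inverse-polynomial (not exponentially small) probability, so the deviation bound you need is false for your algorithm $A$.

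The missing idea is that the replacement function must be \emph{adaptive in the query-weight distribution}, not a length cutoff. The paper replaces the $T$ oracle gates one at a time; before replacing the $t$-th gate it computes the distribution $Q$ induced by the current intermediate state, explicitly queries $\oracle$ at every position with $Q(z)\ge\tau$ (only $O(1/\tau)=\mathrm{poly}(|x|,1/\varepsilon)$ of them, with $\tau\approx\varepsilon^4/(T^2(2n+\varepsilon^{-1}+\ln T))$), copies $f$ there, and zeroes out the rest. This caps each bucket's contribution to the error sum at $\tau$ (using that each bucket contains at most one $1$), and then the multiplicative Chernoff bound over the independent buckets gives failure probability $\exp\{-\delta\mu/(3\tau)\}=\exp\{-(2n+\varepsilon^{-1}+\ln T)\}$ per gate --- the exponent comes from the smallness of $\tau$, which your construction does not provide. (Your part (i), the use of $\TQBF$ to sample from the modified circuit, and the hybrid/telescoping structure of part (ii) are all in the same spirit as the paper; the short-length enumeration corresponds to the paper's brute-force case $\delta\le1$. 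The defect is solely in how the long queries are handled.)
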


Before proving Lemma~\ref{lemma:simulation}, we show it implies the first part of Theorem~\ref{theo:oralce-result-distrO}.

\begin{proofof}{the first part of Theorem~\ref{theo:oralce-result-distrO}}
	Fix a $\SampBQP$ oracle algorithm $M$, and let $\oracle$ be an oracle drawn from $\distrO$. \ We first show that with probability $1$, there is a classical algorithm $A_M$ such that
	\begin{equation}\label{eq:AM-exist}
	\text{$\|\distr^{M}_{x,\varepsilon} - \distr^{A_M}_{x,\varepsilon} \| \le \varepsilon$ for all $x \in \{0,1\}^*$ and $\varepsilon = 2^{-k}$ for some integer $k$.}
	\end{equation}
	
	Let $A$ be the $\SampBPP$ algorithm guaranteed by Lemma~\ref{lemma:simulation}. \ For an input $x \in \{0,1\}^{*}$ and an integer $k$, we call $(x,k)$ a \textit{bad pair} if $\|\distr^{M}_{x,2^{-k}} - \distr^{A}_{x,2^{-k}}\| > 2^{-k}$. By Lemma~\ref{lemma:simulation}, the expected number of bad pairs is upper-bounded by
	$$
	\sum_{n=1}^{+\infty} 2^{n} \cdot \sum_{k=1}^{+\infty} \exp(-(2n+2^k)) \le
	\sum_{n=1}^{+\infty} \sum_{k=1}^{+\infty} \exp(-(n+k)) \le O(1).
	$$
	
	This means that with probability 1, there are only finitely many bad pairs, so we can handle them by hardwiring their results into the algorithm $A$ to get the algorithm $A_M$ we want.
	
	Since there are only countably many $\SampBQP$ oracle algorithms $M$, we see with probability $1$, for every $\SampBQP$ oracle algorithm $M$, there is a classical algorithm $A_M$ such that \eqref{eq:AM-exist} holds. \ We claim that in that case, $\SampBQP^{\TQBF,\oracle} = \SampBPP^{\TQBF,\oracle}$.
	
	Let $\mathcal{S}$ be a sampling problem in $\SampBQP^{\TQBF,\oracle}$. \ This means that there is a $\SampBQP$ oracle algorithm $M$, such that for all $x\in \{0,1\}^{*}$ and $\varepsilon$, we have $\|\distr^{M}_{x,\varepsilon} - \mathcal{S}_x \| \le \varepsilon$. \ Let $A_M$ be the corresponding $\SampBPP$ algorithm. \ Now consider the following algorithm $A'$: given input $\langle x,0^{1/\varepsilon} \rangle$, let $k$ be the smallest integer such that $2^{-k} \le \varepsilon/2$; then run $A_M$ on input $\langle x, 0^{2^k} \rangle$ to get a sample from $\distr^{A_M}_{x,2^{-k}}$.
	
	Since
	\begin{align*}
	\| \distr^{A'}_{x,\varepsilon} - S_x \|  &=  \| \distr^{A_M}_{x,2^{-k}} - S_x \| \\
	&\le \|\distr^{M}_{x,2^{-k}} - \distr^{A_M}_{x,2^{-k}} \| + \|\distr^{M}_{x,2^{-k}} - \mathcal{S}_x \| \le 2 \cdot 2^{-k} \le \varepsilon,
	\end{align*}
	this means that $A'$ solves $\mathcal{S}$ and $\mathcal{S} \in \SampBPP^{\TQBF,\oracle}$. \ So $\SampBQP^{\TQBF,\oracle} \subseteq \SampBPP^{\TQBF,\oracle}$ with probability $1$, which completes the proof.
\end{proofof}

We now prove Lemma~\ref{lemma:simulation}, which is the most technical part of the whole section.

\begin{proofof}{Lemma~\ref{lemma:simulation}}
	Recall that from the canonical description in Section~\ref{sec:canonical-SampBQP}, there exists a fixed polynomial $p$, such that given input $\langle x,0^{1/\varepsilon} \rangle$, the machine $M$ first constructs a quantum circuit $C$ with $N=p(|x|,1/\varepsilon)$ qubits and $N$ gates classically ($C$ can contain $\TQBF$ and $\oracle$ gates). We first set up some notation.
	
	{\bf Notation.} Recall that $\oracle$ can be specified by a collection of functions $\{ f_n \}_{n=0}^{+\infty}$, where each $f_n$ maps $\{0,1\}^{2n}$ to $\{0,1\}$. \ Without loss of generality, we can assume that all the $\oracle$ gates act on an even number of qubits, and for each $n$, all the $f_n$ gates act on the first $2n$ qubits.

	For a function $f : \{0,1\}^{k} \to \{0,1\}$, we use $U_{f}$ to denote the unitary operator mapping $\spz{i}$ to $(-1)^{f(i)} \spz{i}$ for $i \in \{0,1\}^{k}$.
	
	Suppose there are $T$ $\oracle$-gates in total, and suppose the $i$-th $\oracle$-gate is an $f_{n_i}$ gate. \ Then the unitary operator $U$ applied by the circuit $C$ can be decomposed as
	$$
	U = U_{T+1} (U_{f_{n_T}} \otimes I_{N-2n_T}) \cdots (U_{f_{n_2}} \otimes I_{N-2n_2}) U_2 (U_{f_{n_1}} \otimes I_{N-2n_1}) U_1,
	$$
	\noindent where the $U_i$'s are the unitary operators corresponding to the sub-circuits which don't contain an $\oracle$ gate.
	
	Our algorithm proceeds by replacing each $\oracle$-gate by a much simpler gate, one by one, without affecting the final quantum state too much. \ It then simulates the final circuit with the help of the $\TQBF$ oracle.
	
	{\bf Replacing the $t$-th $\oracle$-gate.} Suppose we have already replaced the first $t-1$ $\oracle$-gates. \ That is, for each $i \in [t-1]$, we replaced the $f_{n_i}$ gate (the $i$-th $\oracle$-gate) with a $g_i$ gate, and now we are going to replace the $t$-th $\oracle$-gate.
	
	Let
	$$
	\spz{v} = U_{t} (U_{g_{t-1}} \otimes I_{N-2n_{t-1}}) \cdots (U_{g_2} \otimes I_{N-2n_2}) U_2 (U_{g_1} \otimes I_{N-2n_1}) U_1 \spz{0}^{\otimes N},
	$$
	\noindent which is the quantum state right before the $t$-th $\oracle$ gate in the circuit after the replacement.
	
	For brevity, we use $f$ to denote the function $f_{n_t}$, and we drop the subscript $t$ of $n_t$ when it is clear from context.
	
	{\bf Analysis of incurred error.} The $t$-th $\oracle$-gate is an $f$ gate. \ If we replace it by a $g$ gate, the change to the quantum state is
	$$
	\| U_f \otimes I_{N-2n} \spz{v} - U_g \otimes I_{N-2n} \spz{v}\| =
	\| (U_f - U_g) \otimes I_{N-2n} \spz{v} \|.
	$$
	
	We can analyze the above deviation by bounding its square. \ Let $H$ be the Hilbert space spanned by the last $N-2n$ qubits, and let $\rho = \Tr_{H}[\outpt{v}]$. \ Then we have
	\begin{align*}
	&\|((U_f-U_g) \otimes I_{N-2n}) \spz{v}\|^2\\
	=&\Tr\left[ \ct{(U_f-U_g)}(U_f-U_g) \otimes I_{N-2n} \spz{v} \rpz{v} \right]\\
	=&\Tr\left[\ct{(U_f-U_g)}(U_f-U_g) \rho\right].
	\end{align*}
	Note that $$\ct{(U_f-U_g)}(U_f-U_g) = 4 \sum_{f(i) \ne g(i)} \outpt{i}$$ from the definition. \ So we can further simplify the above trace as
	
	\begin{equation}\label{eq:error-tmp}
	\Tr\left[\ct{(U_f-U_g)}(U_f-U_g) \rho\right] = 4 \sum_{f(i) \ne g(i)} \Tr\left[\outpt{i} \rho\right] =
	4 \sum_{f(i) \ne g(i)} \rpz{i}\rho \spz{i}.
	\end{equation}
	
	Now, $\rho$ is a (mixed) quantum state on the first $2 n$ bits, and $\rpz{i}\rho \spz{i}$ is the probability of seeing $i$ when measuring $\rho$ in the computational basis. \ So we can define a probability distribution $Q$ on $\{0,1\}^{2n}$ by $Q(i) := \rpz{i}\rho \spz{i}$.
	
	Using the distribution $Q$, the error term \eqref{eq:error-tmp} can finally be simplified as:
	
	\begin{equation}\label{eq:error-term}
	4 \sum_{i \in \{0,1\}^{2n}} Q(i) \cdot [f(i) \ne g(i)] =4 \cdot \Pr_{i \sim Q} [f(i) \ne g(i)],
	\end{equation}
	\noindent where $[f(i) \ne g(i)]$ is the indicator function that takes value $1$ when $f(i) \ne g(i)$ and $0$ otherwise.
	
	\newcommand{\distrpost}{\distr_{n}^{\mathsf{post}}}
	
	{\bf A posterior distribution $\distrpost$ on functions from $\{0,1\}^{2n} \to \{0,1\}$.} Now, recall that $f = f_n$ is a function drawn from the distribution $\distr_{n}$. \ Our goal is to replace $f$ by another simple function $g$, such that with high probability, the introduced deviation \eqref{eq:error-term} is small.
	
	Note that when replacing the $t$-th $\oracle$ gate, we may already have previously queried some contents of $f$ (i.e., it is not the first $f_n$ gate in the circuit). So we need to consider the posterior distribution $\distrpost$ on functions from $\{0,1\}^{2n} \to \{0,1\}$. That is, we want a function $g$, such that with high probability over $f \sim \distrpost$, the error term \eqref{eq:error-term} is small.
	
	\newcommand{\fknown}{f_{\mathsf{known}}}
	
	We use a function $\fknown : \{0,1\}^{2n} \to \{0,1,*\}$ to encode our knowledge: if $f(i)$ is not queried, then we set $\fknown(i):=*$; otherwise we set $\fknown(i):=f(i)$. \ Then $\distrpost$ is simply the distribution obtained from $\distr_{n}$ by conditioning on the event that $f$ is consistent with $\fknown$.
	
	We can now work out the posterior distribution $\distrpost$ from the definition of $\distr_{n}$ and Bayes' rule.
	
	For $f \sim \distrpost$, we can see that all the sets $B_{n,p}$ (recall that $B_{n,p}$ is the set of all strings in $\{0,1\}^{2n}$ with $p$ as a prefix) are still independent. \ So we can consider each set separately.
	
	For each $p \in \{0,1\}^{n}$, if there is an $x \in B_{n,p}$  such that $\fknown(x) = 1$, then by the construction of $\distr_{n}$, all other elements $y \in B_{n,p}$ must satisfy $f(y)=0$.
	
	Otherwise, if there is no $x \in B_{n,p}$ such that $\fknown(x) = 1$, then we set $Z_p = |\{ \fknown(x) = 0 \betw x \in B_{n,p} \}|$ and note that $|B_{n,p}| = 2^n$. \ By Bayes' rule, we see that with probability $\frac{1}{2-Z_p \cdot 2^{-n}}$, all $y \in B_{n,p}$ satisfy $f(y)=0$; and for each $y \in B_{n,p}$ such that $\fknown(y)=*$, with probability $\frac{2^{-n}}{2-Z_p \cdot 2^{-n}}$, we have that $y$ is the only element of $B_{n,p}$ that satisfies $f(y)=1$.
	
	{\bf Construction and Analysis of $g$.}
	Our construction of $g$ goes as follows: we first set $g(x) = \fknown(x)$ for all $x$ such that $\fknown(x) \ne *$. \ Then for a parameter $\tau$ which will be specified later, we query all $x \in \{0,1\}^{2n}$ with $Q(x) \ge \tau$, and set $g(x)=f(x)$ for them. \ For all other positions of $g$, we simply set them to zero. \ Hence, there are at most $O(1/\tau)+W$ ones in $g$, where $W$ denotes the number of ones in $\fknown$.
	
	The following three properties of $g$ are immediate from the construction.
	\begin{flalign}\label{gprop-1}
	\qquad\bullet\qquad\text{$f(x) \ne g(x)$ implies $Q(x) \le \tau.$}&&
	\end{flalign}
	\vspace{-8mm}
	\begin{flalign}\label{gprop-2}
	\qquad\bullet\qquad\text{$g(x) = 1 $ implies $f(x) = g(x)$.}&&
	\end{flalign}
	\vspace{-8mm}
	\begin{flalign}\label{gprop-3}
	\qquad\bullet\qquad\text{For each $p \in \{0,1\}^n$, there is at most one $x \in B_{n,p}$ with $f(x) \ne g(x)$.}&&
	\end{flalign}
	
	{\bf Upper bounding the deviation~\eqref{eq:error-term}.}
	Now we are going to show that $\Pr_{x \sim Q} [f(x) \ne g(x)]$ is very small, with overwhelming probability over the posterior distribution $\distrpost$.
	
	We first define $2^n$ random variables $\{X_{p}\}_{p \in \{0,1\}^{n}}$, where $X_p = \sum_{x \in B_{n,p}} Q(x) \cdot [f(x) \ne g(x)] $ for each $p \in \{0,1\}^n$. By the construction of $\distrpost$, we can see that all $X_{p}$'s are independent. \
	Moreover, by properties~\eqref{gprop-1} and \eqref{gprop-3}, there is at most one $x \in B_{n,p}$ such that $f(x) \ne g(x)$, and that $x$ must satisfy $Q(x) \le \tau$. \ Therefore $X_{p} \in [0,\tau]$ for every $p$.
	
	Let $X = \sum_{p \in \{0,1\}^{n}} X_p$, and $\mu = \Ex[X]$. \ Alternatively, we can write $X$ as
	$$
	X = \sum_{x \in \{0,1\}^{2n}} Q(x) \cdot [f(x) \ne g(x)],
	$$
	so
	$$
	\mu = \sum_{x \in \{0,1\}^{2n}} Q(x) \cdot \Ex[f(x) \ne g(x)].
	$$
	
	We claim that $\Ex[f(x) \ne g(x)] \le 2^{-n}$ for all $x \in \{0,1\}^{2n}$, and consequently $\mu \le 2^{-n}$. Fix an $x \in \{0,1\}^{2n}$, and suppose $x \in B_{n,p}$. \ When $g(x)=1$, we must have $f(x) = g(x)$ by property~\eqref{gprop-2}. \ When $g(x)=0$, by the definition of $\distrpost$, we have $f(x)=1$ with probability at most $\frac{2^{-n}}{2-Z_p \cdot 2^{-n}} \le 2^{-n}$. \ So $\Ex[f(i) \ne g(i)] \le 2^{-n}$ in both cases and the claim is established.
	
	{\bf Applying the Chernoff Bound.} Set $\delta = \frac{\mu^{-1} \varepsilon^{4}}{32 T^2}$. \ If $\delta \le 1$, then we have
	
	$$
	32 T^2 \varepsilon^{-4} \ge \mu^{-1} \ge 2^{n}.
	$$
	
	This means that we can simply query all the positions in $f_{n}$ using $2^{2n} = O(T^{4} \cdot \varepsilon^{-8})$ queries, as this bound is polynomial in $|x|$ and $1/\varepsilon$ (recall that $T \le N = p(|x|,1/\varepsilon)$).
	
	Hence, we can assume that $\delta > 1$. \ So by Corollary~\ref{cor:mul-bound}, we have
	
	$$
	\Pr\left[X \ge 2\delta\mu \right] \le \Pr\left[X \ge (1+\delta)\mu \right] \le \exp \left\{-\frac{\delta\mu}{3\tau} \right\}.
	$$
	
	Finally, we set $\tau = \frac{\varepsilon^4}{96 T^2 \cdot (2n+\varepsilon^{-1}+\ln T)}$.
	
	Therefore, with probability $$1-\exp \left\{-\frac{\delta\mu}{3\tau} \right\} = 1-\exp(-(2n+ \varepsilon^{-1} +\ln T)) = 1-\frac{\exp(-(2n+\varepsilon^{-1})}{T},$$
	\noindent we have
	$$
	\|(U_f-U_g) \otimes I_{N-2n} \spz{v}\|^2 = 4 \cdot X  \le 8\delta\mu = \frac{\varepsilon^{4}}{4T^2},$$
	\noindent which in turn implies
	$$
	\|(U_f-U_g) \otimes I_{N-2n} \spz{v}\| \le \frac{\varepsilon^{2}}{2T}.
	$$
	
	Moreover, we can verify that $g$ only has $O(1/\tau) + W = \operatorname{poly}(n,1/\varepsilon)$ ones.
	
	\newcommand{\Cfinal}{C^{\mathsf{final}}}
	
	{\bf Analysis of the final circuit $\Cfinal$.} Suppose that at the end, for each $t \in [T]$, our algorithm has replaced the $t$-th $\oracle$-gate with a $g_t$ gate, where $g_t$ is a function from $\{0,1\}^{2n_t}$ to $\{0,1\}$. \ Let $\Cfinal$ be the circuit after the replacement.
	
	Let
	$$
	V = U_{T+1} (U_{g_T} \otimes I_{N-2n_T}) \cdots (U_{g_2} \otimes I_{N-2n_2}) U_2 (U_{g_1} \otimes I_{N-2n_1}) U_1
	$$
	\noindent be the unitary operator corresponding to $\Cfinal$. \ Also, recall that $U$ is the unitary operator corresponding to the original circuit $C$. \ We are going to show that $U \spz{0}^{\otimes N}$ and $V \spz{0}^{\otimes N}$, the final quantum states produced by $U$ and $V$ respectively, are very close.
	
	We first define a sequence of intermediate quantum states. \ Let $\spz{u_1} = U_1 \spz{0}^{\otimes N}$. \ Then for each $t > 1$, we define $$\spz{u_t} = U_{t} (U_{f_{n_{t-1}}} \otimes I_{N-2n_{t-1}}) \spz{u_{t-1}}.$$
	That is, $\spz{u_t}$ is the quantum state immediately before applying the $t$-th $\oracle$-gate in the original circuit. \ Similarly, we let $\spz{v_1} = U_1 \spz{0}^{\otimes N}$, and $$\spz{v_t} = U_{t} (U_{g_{t-1}} \otimes I_{N-2n_{t-1}}) \spz{u_{t-1}}$$ for each $ t > 1$.
	
	From the analysis of our algorithm, over $\oracle \sim \distr_{\oracle}$, for each $t \in [T]$, with probability $1-\exp(-(2n+\varepsilon^{-1}))/T$, we have
	
	\begin{equation}\label{eq:close-vt}
	\| U_{f_{n_t}} \otimes I_{N-2n_t} \spz{v_t} - U_{g_t} \otimes I_{N-2n_t} \spz{v_t} \| \le \frac{\varepsilon^{2}}{2T}.
	\end{equation}
	
	So by a simple union bound, with probability at least $1-\exp(-(2n+\varepsilon^{-1}))$, the above bound holds for all $t \in [T]$. \ We claim that in this case, for each $t \in [T+1]$, we have
	
	\begin{equation}\label{eq:close-vt-2}
	\|\spz{v_t} - \spz{u_t}\| \le (t-1) \cdot \frac{\varepsilon^{2}}{2T}.
	\end{equation}
	
	We prove this by induction. \ Clearly it is true for $t=1$. \ When $t>1$, suppose \eqref{eq:close-vt-2} holds for $t-1$; then
	
	\begin{align*}
	\|\spz{v_t} - \spz{u_t}\| =& \| U_{t} (\oracle_{g_{t-1}} \otimes I_{N-2n_{t-1}}) \spz{v_{t-1} } - U_{t} (f_{n_{t-1}} \otimes I_{N-2n_{t-1}}) \spz{u_{t-1} } \| \\
	=& \|U_{g_{t-1}} \otimes I_{N-2n_{t-1}} \spz{v_{t-1} } - U_{f_{n_{t-1}}} \otimes I_{N-2n_{t-1}} \spz{u_{t-1} } \| \\
	\le& \|U_{g_{t-1}} \otimes I_{N-2n_{t-1}} \spz{v_{t-1} } - U_{f_{n_{t-1}}} \otimes I_{N-2n_{t-1}} \spz{v_{t-1} } \| \\
	& + \|U_{f_{n_{t-1}}} \otimes I_{N-2n_{t-1}} \spz{v_{t-1} } - U_{f_{n_{t-1}}} \otimes I_{N-2n_{t-1}} \spz{u_{t-1} } \| \\
	\le& \frac{\varepsilon^{2}}{2T} + \|\spz{u_{t-1}} - \spz{v_{t-1}}\| \le (t-1) \cdot \frac{\varepsilon^{2}}{2T},
	\end{align*}
	\noindent where the second line holds by the fact that $U_t$ is unitary, the third line holds by the triangle inequality, and the last line holds by (\ref{eq:close-vt}) and the induction hypothesis.

	{\bf Upper-bounding the error.} Therefore, with probability at least $1-\exp(-(2n+\varepsilon^{-1}))$, we have
	$$
	\| \spz{v_{T+1}} - \spz{u_{T+1}}\| = \| U \spz{0}^{\otimes N} - V \spz{0}^{\otimes N}\| \le \frac{\varepsilon^2}{2}.
	$$
	
	Now, our classical algorithm $A$ then simulates stage 2 and 3 of the $\SampBQP$ algorithm $M$ straightforwardly. \ That is, it first takes a sample $z$ by measuring $\spz{v_{T+1}}$ in the computational basis, and then outputs $A^{\mathsf{output}}(z)$ as its sample, where $A^{\mathsf{output}}$ is the classical algorithm used by $M$ in stage 3.
	
	From our previous analysis, $A$ queries the oracle only $\operatorname{poly}(n,1/\varepsilon)$ times. \ In addition, it is not hard to see that all the computations can be done in $\PSPACE$, and therefore can be implemented in $\operatorname{poly}(n,1/\varepsilon)$ time with the help of the $\TQBF$ oracle. \ So $A$ is a $\SampBPP$ algorithm.
	
	By Corollary~\ref{cor:close-dist}, with probability at least $1-\exp(-(2n+\varepsilon^{-1}))$, the distribution $\distr^{A}_{x,\varepsilon}$ outputted by $A$ satisfies
	
	$$
	\|\distr^{A}_{x,\varepsilon} - \distr^{M}_{x,\varepsilon} \| \le \sqrt{2 \cdot \frac{\varepsilon^{2}}{2}} = \varepsilon,
	$$
	\noindent and this completes the proof of Lemma \ref{lemma:simulation}.
	
\end{proofof}

\subsection{$\PH^{\TQBF,\oracle}$ is Infinite with Probability 1.}

For the second part of Theorem~\ref{theo:oralce-result-distrO}, we resort to the well-known connection between $\PH$ and constant-depth circuit lower bounds.

\subsection*{The Average Case Constant-depth Circuit Lower Bound.}

For convenience, we will use the recent breakthrough result by Rossman, Servedio, and Tan~\cite{rst}, which shows that $\PH$ is infinite relative to a random oracle with probability $1$. \ (Earlier constructions of oracles making $\PH$ infinite would also have worked for us, but a random oracle is a particularly nice choice.)

\begin{theo}
	\label{theo:average-case}
	Let $2 \leq d \leq {\frac  {c\sqrt{\log n}} {\log \log n}}$, where $c>0$ is an absolute constant. \ Let $\Sipser_d$ be the explicit $n$-variable read-once monotone depth-$d$
	formula described in \cite{rst}. \ Then any circuit $C'$ of depth at most $d-1$ and size at most $S = 2^{n^{{\frac 1 {6(d-1)}}}}$ over $\{0,1\}^n$ agrees with $\Sipser_d$ on at most $({\frac 1 2} + n^{-\Omega(1/d)})\cdot 2^n$ inputs.
\end{theo}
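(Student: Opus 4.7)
The plan is to follow the random-projection method developed for average-case $\ACz$ lower bounds. First I would introduce a distribution over random projections $\psi$ on the variables of $\Sipser_d$: each variable is either set to $0$, set to $1$, or identified with a common ``block variable'' drawn from a smaller universe. The distribution should be designed so that with high probability the projected formula $\Sipser_d \circ \psi$ is isomorphic to a scaled-down copy of the same Sipser template---say $\Sipser_{d-1}$ on $n' = n^{1-\Omega(1/d)}$ fresh block variables---preserving the recursive hard structure. It should also be designed so that pulling back a uniformly random assignment to the block variables yields a distribution on the original $n$ inputs that is close enough to uniform for a hybrid argument to transfer the final correlation bound.

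Second, for the small depth-$(d-1)$ adversary $C'$, I would apply a multi-switching lemma (in the style of H\aa stad or of Rossman--Servedio--Tan themselves) to the bottom two layers of $C'$ after one stage of this projection. Such a lemma says that, with probability at least $1 - 2^{-s}$ over the restriction induced by $\psi$, there is a single ``common'' decision tree of depth $O(t)$ that simultaneously resolves all of the bottom-level subformulas of $C'$, provided the survival probability $p$ and the parameters $t,s$ are chosen compatibly with the bottom fan-in. Merging this common decision tree with the next layer up collapses two layers of $C'$ into one, so the overall depth of $C'$ drops by one while its new bottom fan-in becomes $O(t)$.

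Iterating this project-and-switch step roughly $d-2$ times, with parameters tuned so that (a) at least $n^{\Omega(1/d)}$ surviving block variables remain at the end and (b) a union bound over the at most $2^{n^{1/6(d-1)}}$ subcircuits still succeeds at each stage, reduces $C'$ to a single decision tree of depth much smaller than the number of surviving variables. At the same time $\Sipser_d$ has been projected, layer by layer, to essentially a single AND--OR slab (a Tribes-like function) on the surviving variables. A direct correlation bound---a shallow decision tree can agree with such a hard Tribes function on at most a $1/2 + n^{-\Omega(1/d)}$ fraction of its inputs---then finishes the argument, after the coupling constructed in the first step is invoked to transfer the bound from the projected distribution back to the uniform distribution on $\{0,1\}^n$.

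The main obstacle, and the genuinely novel ingredient in the Rossman--Servedio--Tan argument, is obtaining a multi-switching lemma strong enough to tolerate circuit size up to $2^{n^{1/6(d-1)}}$: a classical, single-subcircuit switching lemma only affords polynomial-size bounds per layer, whereas here one needs simultaneous switching across exponentially many subcircuits with a shared decision tree, with the failure probability still small enough to absorb a union bound. A secondary but delicate difficulty is designing the projection so that the induced distribution on original inputs is close enough to uniform for the final transfer step, without destroying the recursive $\Sipser_d$ structure that the inductive argument relies on.
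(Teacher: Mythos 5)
The paper does not prove this theorem at all: it is imported verbatim as a black-box citation of Rossman, Servedio, and Tan \cite{rst}, and the only work the paper does with it is the reduction in Lemma~\ref{lm:avg-lowb} (composing $\Sipser_d$ with $\OR$ and averaging over the supports of the blocks). So there is no in-paper proof to compare yours against; the relevant comparison is with the argument of \cite{rst} itself.

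As a description of that argument, your outline is faithful: the replacement of ordinary random restrictions by random \emph{projections} that collapse blocks of variables onto fresh block variables while preserving the Sipser template one level down, the multi-switching lemma giving a single common partial decision tree for all bottom-level subformulas simultaneously (which is exactly what lets the union bound survive size $2^{n^{1/6(d-1)}}$), the iterated project-and-switch collapsing the adversary circuit to a shallow decision tree while $\Sipser_d$ retains a hard slab, and the final transfer back to the uniform distribution are all the right ingredients in the right order. Two caveats. First, the two ``obstacles'' you name at the end are not side issues --- they \emph{are} the proof: the multi-switching lemma for projections and the careful design of the projection distribution occupy essentially all of \cite{rst}, so what you have is an accurate roadmap rather than a proof. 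Second, a technical correction to your first step: in \cite{rst} the projection stages are engineered so that composing all of them with a uniform assignment to the surviving block variables yields \emph{exactly} the uniform distribution on $\{0,1\}^n$ (this is why the correlation bound can be stated against the uniform distribution with no loss); ``close enough to uniform for a hybrid argument'' would introduce an additive error at each of the $d$ stages that you would then have to control separately, and the clean statement of the theorem as quoted would not follow without that exactness.
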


\subsubsection*{$\distr_{n}$ as a Distribution on $\{0,1\}^{2^{2n}}$.}

In order to use the above result to prove the second part of Theorem~\ref{theo:oralce-result-distrO}, we need to interpret $\distr_{n}$ (originally a distribution over functions mapping $\{0,1\}^{2n}$ to $\{0,1\}$) as a distribution on $\{0,1\}^{2^{2n}}$ in the following way.

Let $\tau$ be the bijection between $[2^{2n}]$ and $\{0,1\}^{2n}$ that maps an integer $i \in [2^{2n}]$ to the $i$-th binary string in $\{0,1\}^{2n}$ in lexicographic order. \ Then a function $f : \{0,1\}^{2n} \to \{0,1\}$ is equivalent to a binary string $x^f \in \{0,1\}^{2^{2n}}$, where the $i$-th bit of $x^f$, denoted $x^f_i$, equals $f(\tau(i))$. \ Clearly this is a bijection between functions from $\{0,1\}^{2n}$ to $\{0,1\}$ and binary strings in $\{0,1\}^{2^{2n}}$.

For notational simplicity, when we say a binary string $x \in \{0,1\}^{2^{2n}}$ is drawn from $\distr_{n}$, it means $x$ is generated by first drawing a sample function $f \sim \distr_{n}$ and then setting $x = x^{f}$.

Note that for $p \in \{0,1\}^{n}$, if $p$ is the $i$-th binary string in $\{0,1\}^n$, then the set $B_{n,p}$ corresponds to the bits $x_{(i-1)2^n + 1},\dotsc,x_{i 2^n}$.

\subsubsection*{Distributional Constant-Depth Circuit Lower Bound over $\distr_{n}$}

Now we are ready to state our distributional circuit lower bound over $\distr_{n}$ formally.

\begin{lemma}\label{lm:avg-lowb}
	For an integer $n$, let $N=2^n$ and $\Sipser_d$ be the $N$-variable Sipser function as in Theorem~\ref{theo:average-case}.
	
	Consider the Boolean function $(\Sipser_d \circ \OR)$ on $\{0,1\}^{N^2}$ defined as follows:
	
	Given inputs $x_1,x_2,\dotsc,x_{N^2}$, for each $1 \le i \le N$, set
	$$
	z_i := \lor_{j=(i-1)N+1}^{i N} x_j,
	$$
	and
	$$
	(\Sipser_d \circ \OR)(x) := \Sipser_d(z).
	$$
	
	Then any circuit $C'$ of depth at most $d-1$ and size at most $S = 2^{N^{{\frac 1 {6(d-1)}}}}$ over $\{0,1\}^{N^2}$ agrees with $(\Sipser_d \circ \OR)$ with probability at most ${\frac 1 2} + N^{-\Omega(1/d)}$ when inputs are drawn from the distribution $\distr_{n}$.
\end{lemma}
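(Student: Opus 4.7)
The plan is to reduce the average-case lower bound for $\Sipser_d \circ \OR$ under $\distr_n$ to the average-case lower bound for $\Sipser_d$ under the uniform distribution on $\{0,1\}^N$ (Theorem~\ref{theo:average-case}), via a random projection. The crucial observation is that the distribution $\distr_n$ was tailor-made so that taking the $\OR$ of each block $B_{n,p}$ of $N = 2^n$ consecutive input bits yields a uniform random bit: by construction, for each $p \in \{0,1\}^n$ independently, the block $B_{n,p}$ is all-zero with probability $1/2$ and otherwise contains exactly one $1$ in a uniformly chosen position. So if $x \sim \distr_n$ and $z_i := \lor_{j=(i-1)N+1}^{iN} x_j$, then $z \in \{0,1\}^N$ is uniform, and $(\Sipser_d \circ \OR)(x) = \Sipser_d(z)$.

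Now suppose for contradiction that some depth-$(d-1)$ circuit $C'$ of size at most $S$ agrees with $\Sipser_d \circ \OR$ with probability $p > 1/2 + N^{-\Omega(1/d)}$ under $x \sim \distr_n$. I would construct from $C'$ a small depth-$(d-1)$ circuit $C''$ on $N$ variables that agrees with $\Sipser_d$ with probability at least $p$ under the uniform distribution, contradicting Theorem~\ref{theo:average-case}. For a vector $\pi = (\pi_1,\ldots,\pi_N) \in [N]^N$, define a projection $\phi_\pi : \{0,1\}^N \to \{0,1\}^{N^2}$ whose $((i{-}1)N + j)$-th output coordinate equals $z_i$ if $j = \pi_i$ and $0$ otherwise. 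Each output bit of $\phi_\pi$ is either a constant $0$ or one of the input variables, so substituting $\phi_\pi$ into $C'$ yields a circuit $C''_\pi(z) := C'(\phi_\pi(z))$ of depth at most $d-1$ and size at most $S$.

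The key distributional check is that when $z$ is uniform over $\{0,1\}^N$ and $\pi$ is independent and uniform over $[N]^N$, the random string $\phi_\pi(z) \in \{0,1\}^{N^2}$ has distribution exactly $\distr_n$: each block $i$ is all-zero exactly when $z_i = 0$ (probability $1/2$), and otherwise contains a single $1$ at the uniformly random position $\pi_i$, with blocks independent across $i$. Hence
\[
\Ex_\pi \Pr_{z \text{ uniform}}\!\left[C''_\pi(z) = \Sipser_d(z)\right] = \Pr_{x \sim \distr_n}\!\left[C'(x) = (\Sipser_d \circ \OR)(x)\right] = p,
\]
so some fixed $\pi^*$ achieves $\Pr_z[C''_{\pi^*}(z) = \Sipser_d(z)] \ge p$. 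Applying Theorem~\ref{theo:average-case} to the resulting depth-$(d-1)$, size-$S$ circuit $C''_{\pi^*}$ gives $p \le 1/2 + N^{-\Omega(1/d)}$, completing the proof.

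I do not expect a serious obstacle here; the entire argument is a short reduction, and the only point that requires care is verifying that the joint distribution of $\phi_\pi(z)$ matches $\distr_n$ exactly (including independence across blocks, which follows from the coordinates of $\pi$ being independent and $z$ being uniform with independent coordinates). The design of $\distr_n$ in terms of block $\OR$'s is what makes this reduction so clean: the nontrivial content has been offloaded to the Rossman--Servedio--Tan theorem.
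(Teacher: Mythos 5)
Your proposal is correct and follows essentially the same route as the paper: the paper also writes $\distr_n$ as an average over position vectors $y \in [N]^N$ of distributions in which block $i$ is supported on the single coordinate $(i-1)N+y_i$, applies an averaging argument to fix one such vector, and then restricts $C'$ by substituting $z_i$ for that coordinate and $0$ elsewhere (your $\phi_{\pi^*}$) to get a depth-$(d-1)$, size-$S$ circuit contradicting Theorem~\ref{theo:average-case}. The distributional check you flag (that $\phi_\pi(z)$ with uniform $z,\pi$ has law exactly $\distr_n$, with independence across blocks) is indeed the only point requiring care, and it holds.
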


Before proving Lemma~\ref{lm:avg-lowb}, we show that it implies the second part of Theorem~\ref{theo:oralce-result-distrO} easily.

\begin{proofof}{the second part of Theorem~\ref{theo:oralce-result-distrO}}
	Consider the function $(\Sipser_d \circ \OR)$ defined as in Lemma~\ref{lm:avg-lowb}. \ It is easy to see that it has a polynomial-size circuit (in fact, a formula) of depth $d+1$; and by Lemma~\ref{lm:avg-lowb}, every polynomial size circuit of depth $d-1$ has at most $\frac{1}{2}+o(1)$ correlation with it when the inputs are drawn from the distribution $\distr_{n}$. \ So it follows from the standard connection between $\PH$ and $\ACz$ that $\PH^{\oracle}$ is infinite with probability $1$ when $\oracle \sim \distrO$.
\end{proofof}

Finally, we prove Lemma~\ref{lm:avg-lowb}.

\begin{proofof}{Lemma~\ref{lm:avg-lowb}}
	By Theorem~\ref{theo:average-case}, there is a universal constant $c$, such that any circuit $C$ of depth at most $d-1$ and size at most $S$ over $\{0,1\}^N$ agrees with $\Sipser_d$ on at most $\left({\frac 1 2} + N^{-c/d} \right)\cdot 2^N$ inputs.
	
	We are going to show this lemma holds for the same $c$. \ Suppose not; then we have a circuit $C$ of depth at most $d-1$ and size at most $S = 2^{N^{{\frac 1 {6(d-1)}}}}$ over $\{0,1\}^{N^2}$, such that
	
	$$
	\Pr_{x \sim \distr_{n}}[C(x) = (\Sipser_d \circ \OR)(x)] > {\frac 1 2} + N^{-c/d}.
	$$
	
	Now, for each $y_1,y_2,\dotsc,y_N \in [N]^N$, we define a distribution $\distr_{n}^{y_1,y_2,\dotsc,y_N}$ on $\{0,1\}^{N^2}$ as follows. \ To generate a sample $x \sim \distr_{n}^{y_1,y_2,\dotsc,y_N}$, we first set $x = 0^{N^2}$. \ Then for each $i \in [N]$, we set $x_{(i-1)N + y_i}$ to $1$ with probability $1/2$.
	
	By construction, we can see for all $x$ in the support of $\distr_{n}^{y_1,y_2,\dotsc,y_N}$,
	
	$$(\Sipser_d \circ \OR)(x) = \Sipser_d(x_{y_1},x_{N+y_2},x_{2 N+y_3},\dotsc,x_{(N-1)N + y_N}).$$
	
	Moreover, by definition, $\distr_{n}$ is just the average of these distributions:
	
	$$
	\distr_{n} = N^{-N} \cdot \sum_{y_1,y_2,\dotsc,y_{N}} \distr_{n}^{y_1,y_2,\dotsc,y_N}.
	$$
	
	By an averaging argument, there exist $y_1,y_2,\dotsc,y_{N} \in [N]^N$ such that
	
	$$
	\Pr_{x \sim \distr_{n}^{y_1,y_2,\dotsc,y_N}}[C(x) = (\Sipser_d \circ \OR)(x)] > {\frac 1 2} + N^{-c/d}.
	$$
	
	Setting $x_{(i-1)N + y_i}=z_i$ for each $i$, and all other inputs to 0 in the circuit $C$, we then have a circuit $D$ of size at most $S$ and depth at most $d-1$ over $\{0,1\}^N$. \ And by the construction of $\distr_{n}^{y_1,y_2,\dotsc,y_N}$ and the definition of the function $(\Sipser_d \circ \OR)$, we see that $D$ agrees with $\Sipser_d$ on at least a ${\frac 1 2} + N^{-c/d}$ fraction of inputs. \ But this is a contradiction.
	
\end{proofof}

\newcommand{\uniform}{\mathcal{U}}
\section{Maximal Quantum Supremacy for Black-Box Sampling and Relation
	Problems}
\label{sec:sepa-samp-relation}

In this section we present our results about $\ffishing$ and $\fsampling$.

We will establish an $\Omega(N/\log N)$ lower bound on the classical query complexity of $\ffishing$, as well as an optimal $\Omega(N)$ lower bound on the classical query complexity of $\fsampling$.

\subsection{Preliminaries}

We begin by introducing some useful notations. \ Throughout this section, given a function $f:\{0,1\}^n \to \{-1,1\}$,
we define the Fourier coefficient
$$
\widehat{f}(z) = 2^{-n/2} \sum_{x \in \{0,1\}^n} f(x) \cdot (-1)^{x \cdot z}
$$
for each $z \in \{0,1\}^n$.

We also define
$$
\adv(f) := 2^{-n} \cdot \sum_{z \in \{0,1\}^n,|\widehat{f}(z)| \ge 1} \widehat{f}(z)^2,
$$
and set $N=2^n$.

The following two constants will be used frequently in this section.

$$
\Qsucc = \frac{2}{\sqrt{2\pi}}\int_{1}^{+\infty} x^2e^{-x^2/2} dx \approx 0.801 \text{ and } \Rsucc = \frac{2}{\sqrt{2\pi}}\int_{1}^{+\infty} e^{-x^2/2} dx \approx 0.317.
$$

Finally, we use $\uniform_n$ to denote the uniform distribution on functions $f : \{0,1\}^{n} \to \{-1,1\}$.

\subsubsection*{An Approximate Formula for the Binomial Coefficients}

We also need the following lemma to approximate the binomial coefficients to ease some calculations in our proofs.

\begin{lemma}\label{lm:binom-approx} ((5.41) in \cite{spencer2014asymptopia})
	For value $n$ and $|k - n/2| = o(n^{2/3})$, we have
	
	$$
	\binom{n}{k} \approx \binom{n}{n/2} \cdot e^{-\frac{(k-n/2)^2}{n/2}}
	$$
	and
	$$
	\ln \binom{n}{k} = \ln \binom{n}{n/2} - \frac{(k-n/2)^2}{n/2} + o(1).
	$$
\end{lemma}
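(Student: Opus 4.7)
The plan is to prove the second (logarithmic) statement; the first will then follow by exponentiating, together with the observation that $e^{o(1)} = 1 + o(1)$. Without loss of generality assume $n$ is even and write $k = n/2 + j$, where by hypothesis $|j| = o(n^{2/3})$. The idea is to express $\binom{n}{k}/\binom{n}{n/2}$ as a telescoping product using the identity $\binom{n}{\ell+1}/\binom{n}{\ell} = (n-\ell)/(\ell+1)$, then take logarithms and Taylor-expand.

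Concretely, assuming $j \ge 0$ (the case $j < 0$ being symmetric), I would write
\[
\ln \frac{\binom{n}{n/2+j}}{\binom{n}{n/2}} \;=\; \sum_{i=0}^{j-1} \Bigl[\ln\!\bigl(1 - 2i/n\bigr) - \ln\!\bigl(1 + 2(i+1)/n\bigr)\Bigr].
\]
Since $|j| = o(n^{2/3})$ implies $i/n = o(n^{-1/3}) = o(1)$ uniformly over $i \in [0,j-1]$, I can apply the expansion $\ln(1+x) = x - x^2/2 + O(x^3)$ to each term. After cancellation, the linear-in-$1/n$ contributions sum to $-(4i+2)/n$, which when telescoped from $i=0$ to $j-1$ give exactly $-2j^2/n = -(k-n/2)^2/(n/2)$. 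The remaining pieces contribute error terms of order $\sum_{i<j} O(i/n^2) = O(j^2/n^2)$ from the quadratic parts and $\sum_{i<j} O(i^3/n^3) = O(j^4/n^3)$ from the cubic remainder.

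The main (and essentially only) technical point is that both error terms must be $o(1)$ in the prescribed regime. For $j = o(n^{2/3})$, one checks $j^2/n^2 = o(n^{-2/3}) = o(1)$ and $j^4/n^3 = o(n^{8/3 - 3}) = o(n^{-1/3}) = o(1)$, so both vanish asymptotically. Exactly this restriction on $|k - n/2|$ is what forces the threshold $o(n^{2/3})$: the cubic term in the Taylor expansion is the binding constraint, and a larger deviation would leave an uncontrolled $\Theta(j^4/n^3)$ contribution. The main ``obstacle,'' if one can call it that, is just careful bookkeeping of signs and indices in the telescoping sum; no new idea beyond Stirling-style Taylor expansion is required, and the argument appears as (5.41) of Spencer's \emph{Asymptopia}~\cite{spencer2014asymptopia}, which we may invoke directly.
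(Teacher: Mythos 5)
Your proposal is correct, and in fact it supplies more than the paper does: the paper does not prove this lemma at all, but simply cites it as (5.41) of Spencer's \emph{Asymptopia}. Your telescoping argument is a clean, self-contained derivation. The product identity, the cancellation that leaves $-(4i+2)/n$ as the surviving first-order term, the telescoped sum $-2j^2/n = -(k-n/2)^2/(n/2)$, and the error bounds $O(j^2/n^2)$ and $O(j^4/n^3)$ are all right, and both errors are indeed $o(1)$ when $j = o(n^{2/3})$. One small inaccuracy in your closing remark: the cubic remainder $O(j^4/n^3)$ is $o(1)$ already for $j = o(n^{3/4})$, and the quadratic piece only needs $j = o(n)$, so the threshold $o(n^{2/3})$ is \emph{not} actually forced by your expansion --- your argument proves the statement in a strictly wider range, and the $o(n^{2/3})$ hypothesis in the lemma is merely a sufficient condition inherited from Spencer's formulation (and is all that is needed in Section \ref{sec:sepa-samp-relation}, where the deviations are $O(\sqrt{N\log N})$). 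This does not affect the validity of the proof.
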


\subsection{\ffishing\ and \fsampling}

We now formally define the \ffishing\ and the \fsampling\ problems.

\begin{defi}\label{defi:Ffish-Fsamp}
	We are given oracle access to a function $f:\{0,1\}^n \to \{-1,1\}$.
	
	In \fsampling\ (or $\fsamp$ in short),  our task is to sample from a distribution $\distr$ over $\{0,1\}^n$ such that $\|\distr-\distr_f\| \le \varepsilon$, where $\distr_f$ is the distribution defined by
	$$
	\Pr_{\distr_f}[y] = 2^{-n}\widehat{f}(y)^2 = \left(\frac{1}{2^n} \sum_{x \in \{0,1\}^n} f(x) (-1)^{x \cdot y}\right)^2.
	$$
	
	In \ffishing\ (or $\ffish$ in short), we want to find a $z$ such that $|\widehat{f}(z)| \ge 1$. \
	We also define a promise version of \ffishing\ ($\pffish$ for short), where the function $f$ is promised to satisfy $\adv(f) \ge \Qsucc - \frac{1}{n}$.
\end{defi}

\subsubsection*{A Simple 1-Query Quantum Algorithm}

Next we describe a simple $1$-query quantum algorithm for both problems. \ It consists of a round of Hadamard gates, then a query to $f$, then another round of Hadamard gates, then a measurement in the computational basis.

The following lemma follows directly from the definitions of $\fsamp$ and $\ffish$.

\begin{lemma}\label{lemma:quantum-algo}
	Given oracle access to a function $f:\{0,1\}^n \to \{-1,1\}$, the above algorithm solves $\fsamp$ exactly (i.e.\ with $\varepsilon = 0$), and $\ffish$ with probability $\adv(f)$.
\end{lemma}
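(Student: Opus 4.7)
My plan is to prove the lemma by directly tracking the quantum state through the four steps of the algorithm and then reading off the measurement statistics.

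First I will compute the state after each step. Starting from $\spz{0^n}$, the initial layer of Hadamards produces the uniform superposition $2^{-n/2}\sum_{x\in\{0,1\}^n}\spz{x}$. The phase query to $f$ (standard for a $\pm1$-valued function) turns this into $2^{-n/2}\sum_x f(x)\spz{x}$. Applying Hadamards again and expanding $H^{\otimes n}\spz{x}=2^{-n/2}\sum_z(-1)^{x\cdot z}\spz{z}$, the amplitude on $\spz{z}$ becomes $2^{-n}\sum_x f(x)(-1)^{x\cdot z}$. Comparing with the definition of $\widehat f$, this amplitude equals $2^{-n/2}\,\widehat f(z)$, so the probability of measuring $z$ is exactly
\[
p(z)\;=\;2^{-n}\,\widehat f(z)^2\;=\;\Pr_{\distr_f}[z],
\]
which matches the target distribution from Definition~\ref{defi:Ffish-Fsamp} on the nose. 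Hence the algorithm solves $\fsamp$ exactly, giving the $\varepsilon=0$ claim.

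For $\ffish$, I just sum $p(z)$ over the "good" outcomes. By definition, the algorithm succeeds iff its measured outcome $z$ satisfies $|\widehat f(z)|\ge 1$, so
\[
\Pr[\text{success}]\;=\;\sum_{z:\,|\widehat f(z)|\ge 1} p(z)\;=\;2^{-n}\sum_{z:\,|\widehat f(z)|\ge 1}\widehat f(z)^2\;=\;\adv(f),
\]
by the definition of $\adv(f)$ given at the start of Section~\ref{sec:sepa-samp-relation}. This establishes the second half of the lemma. As a sanity check, Parseval's identity $\sum_z\widehat f(z)^2=2^n$ (since $f$ is $\pm 1$-valued) confirms that $\{p(z)\}$ is a bona fide probability distribution.

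There is essentially no obstacle here; the only thing to be careful about is the normalization conventions. The paper uses the $2^{-n/2}$-normalized Fourier transform, which is exactly the convention under which a Hadamard--oracle--Hadamard circuit has amplitudes $2^{-n/2}\widehat f(z)$ rather than $\widehat f(z)$ itself, so one should explicitly verify that the factor of $2^{-n}$ in $p(z)=2^{-n}\widehat f(z)^2$ agrees with $\Pr_{\distr_f}[z]$ as written in Definition~\ref{defi:Ffish-Fsamp}. Once the normalization is checked, both parts of the lemma are immediate from the calculations above.
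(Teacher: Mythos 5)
Your computation is correct and is exactly the standard verification the paper has in mind; the paper itself gives no proof, stating only that the lemma "follows directly from the definitions," and your state-tracking calculation (amplitude $2^{-n/2}\widehat f(z)$, hence probability $2^{-n}\widehat f(z)^2$, then summing over $z$ with $|\widehat f(z)|\ge 1$) is precisely what that remark elides. The normalization check against the paper's $2^{-n/2}$-normalized Fourier transform is the only subtlety, and you handle it correctly.
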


We can now explain the meanings of the constants $\Qsucc$ and $\Rsucc$. \ When the function $f$ is drawn from $\uniform_n$, by a simple calculation, we can see that $\Qsucc$ is the success probability for the above simple quantum algorithm on $\ffishing$, and $\Rsucc$ is the success probability for an algorithm outputting a uniform random string in $\{0,1\}^n$.

\subsection{The $\Omega(N/\log N)$ Lower Bound for \ffishing}

We begin with the $\Omega(N/\log N)$ randomized lower bound for $\ffishing$. \ Formally:

\begin{theo}\label{theo:pffish-lower-bound}
	There is no $o(N/\log N)$-query randomized algorithm that solves $\pffish$ with $\Rsucc + \Omega(1)$ success probability.
\end{theo}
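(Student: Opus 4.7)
The plan is to invoke Yao's minimax principle, taking as hard distribution the uniform $\uniform_n$ over $f:\{0,1\}^n\to\{-1,1\}$. Under $\uniform_n$, each $\widehat{f}(z)$ is a signed sum of $N=2^n$ i.i.d.\ $\pm 1/\sqrt{N}$ terms, so by a direct CLT computation $\mathbb{E}[\adv(f)]\to\Qsucc$; a one-sided concentration argument (e.g.\ McDiarmid applied to $\adv$ viewed as a function of the $N$ input bits, together with the fact that flipping one bit changes each $\widehat{f}(z)$ by $O(1/\sqrt N)$) then gives $\adv(f)\ge\Qsucc-1/n$ with probability $1-o(1)$. Thus the conditional distribution on promise-satisfying inputs is $o(1)$-close in total variation to $\uniform_n$, and it suffices to show that every deterministic $q$-query algorithm, with $q=o(N/\log N)$, has success probability at most $\Rsucc+o(1)$ under $\uniform_n$.

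Fix such an algorithm and view it as a depth-$q$ decision tree. For random $f$, let $L=L(f)$ be the reached leaf, $S\subseteq\{0,1\}^n$ the queried set with $|S|\le q$, and $z^\star=z_L$ the output. I would decompose
$$\widehat{f}(z)\;=\;a_L(z)+Y_L(z),\quad a_L(z):=2^{-n/2}\!\!\sum_{x\in S}f(x)(-1)^{x\cdot z},\quad Y_L(z):=2^{-n/2}\!\!\sum_{x\notin S}f(x)(-1)^{x\cdot z},$$
so that conditioned on $L$, the value $a_L(z)$ is determined for every $z$, while $Y_L(z)$ is a mean-zero sum of $N-|S|$ i.i.d.\ $\pm 1/\sqrt{N}$ terms, independent of $a_L$.

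The crux is to show $\max_z|a_L(z)|$ is uniformly small with high probability. For each fixed $z$, the process $a_t(z):=2^{-n/2}\sum_{i\le t}f(x_i)(-1)^{x_i\cdot z}$ is a martingale in the filtration generated by the query answers, with increments bounded by $1/\sqrt N$ (since the $t$-th query $x_t$ is determined by the previous answers while $f(x_t)$ is a fresh uniform $\pm 1$). Azuma--Hoeffding gives $\Pr[|a_q(z)|>\varepsilon]\le 2e^{-\varepsilon^2 N/(2q)}$, and a union bound over the $N$ possible outputs yields $\Pr[\max_z|a_L(z)|>\varepsilon]\le 2Ne^{-\varepsilon^2 N/(2q)}$. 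Choosing $\varepsilon=\sqrt{4q\log N/N}$ makes both this probability and $\varepsilon$ itself $o(1)$, exactly under the hypothesis $q=o(N/\log N)$.

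On the complementary good event, $|a_L(z^\star)|\le\varepsilon$, so $\Pr[|\widehat{f}(z^\star)|\ge 1\mid L]\le\Pr[|Y_L(z^\star)|\ge 1-\varepsilon\mid L]$. Since $Y_L(z^\star)/\sqrt{(N-|S|)/N}$ is a normalized i.i.d.\ $\pm 1$ sum of length $N-|S|$, Berry--Esseen places its distribution within $O(1/\sqrt N)$ of a standard normal in CDF distance; together with $|S|=o(N)$, $\varepsilon=o(1)$, and the continuity of $\Phi$ at $1$, this gives $\Pr[|Y_L(z^\star)|\ge 1-\varepsilon\mid L]\le\Rsucc+o(1)$. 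Summing the bad-event mass and the good-event conditional success bounds the total success probability by $\Rsucc+o(1)$, contradicting any algorithm beating $\Rsucc$ by a constant. The main obstacle will be the uniform control of $|a_L(z)|$ across all $z$ in the face of adaptive queries: the naive deterministic bound $|a_L(z)|\le q/\sqrt N$ only suffices for $q=o(\sqrt N)$, and it is only the martingale concentration together with the fact that outputs live in a set of size $N$ (so the union bound costs merely $\log N$) that pushes the threshold all the way to $q=\Theta(N/\log N)$.
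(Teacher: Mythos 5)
Your main argument---Yao's principle over $\uniform_n$, the decomposition of $\widehat{f}(z)$ into the contribution of the queried set and of the unqueried set, uniform control of the queried part via concentration plus a union bound over the $N$ candidate outputs (which is exactly what buys the threshold $q=\Theta(N/\log N)$ rather than $\sqrt N$), and a Gaussian/Berry--Esseen tail bound of $\Rsucc+o(1)$ for the unqueried part---is essentially identical to the paper's proof (Lemma~\ref{lemma:uniform-lowb}); the paper phrases the concentration of the seen part as a tail bound for a sum of $t$ i.i.d.\ signs rather than Azuma for the adaptive martingale, but this is the same estimate.

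There is, however, a genuine gap in the step where you reduce from $\pffish$ to the uniform distribution. You need $\Pr_{f\leftarrow\uniform_n}[\adv(f)<\Qsucc-1/n]=o(1)$, and the parenthetical McDiarmid argument does not deliver this, for two reasons. First, the bounded-difference constant of $\adv(f)=2^{-n}\sum_{z:|\widehat f(z)|\ge 1}\widehat f(z)^2$ is not controlled by the $O(1/\sqrt N)$ change in each $\widehat f(z)$: flipping one value of $f$ can move up to $N$ coefficients across the threshold $|\widehat f(z)|=1$ (Parseval permits all coefficients to sit near $1$ in absolute value), so the worst-case single-coordinate influence of $\adv$ is $\Theta(1)$, not $O(1/\sqrt N)$. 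Second, and more fundamentally, even granting per-coordinate influence $O(1/\sqrt N)$, McDiarmid only yields $\Pr[|\adv(f)-\Ex[\adv(f)]|\ge s]\le 2e^{-\Omega(s^2)}$, i.e.\ concentration at \emph{constant} scale, which is vacuous at the required deviation $s=1/n\to 0$. The paper instead proves Lemma~\ref{lemma:con-adv} by a direct second-moment computation showing $\operatorname{Var}[\adv(f)]=O(1/N)$ (using the near-independence of $\widehat f(z_1)$ and $\widehat f(z_2)$ for $z_1\ne z_2$) and then applies Chebyshev; some argument of this strength is genuinely needed, and you should replace the bounded-differences appeal with it. The remainder of your proof stands as written.
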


To prove Theorem~\ref{theo:pffish-lower-bound}, we first show that when the function $f$ is drawn from $\uniform_n$, no classical algorithm with $o(N/\log N)$ queries can solve $\ffish$ with probability $\Rsucc + \Omega(1)$; we then show with high probability, a function $f \leftarrow \uniform_n$ satisfies the promise of $\pffish$. Formally, we have the following two lemmas.

\begin{lemma}\label{lemma:con-adv}
	For large enough $n$,
	
	$$
	\Pr_{f \leftarrow \uniform_n}\left[ \adv(f) < \Qsucc - \frac{1}{n} \right] < \frac{1}{n}.
	$$
\end{lemma}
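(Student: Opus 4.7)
The plan is a second-moment argument: show that $\Ex[\adv(f)] = \Qsucc + o(1)$ and $\mathrm{Var}[\adv(f)] = o(1/n^3)$, then conclude via Chebyshev's inequality. Let $N = 2^n$ and write $\adv(f) = N^{-1} \sum_z Y_z$ with $Y_z := \widehat{f}(z)^2 \cdot \mathbf{1}[|\widehat{f}(z)| \ge 1]$. For each fixed $z$, the Fourier coefficient $\widehat{f}(z) = N^{-1/2} \sum_x f(x)(-1)^{x \cdot z}$ is a normalized sum of $N$ i.i.d.\ Rademacher random variables. By the local central limit theorem of Lemma~\ref{lm:binom-approx}, its distribution is within $o(1)$ total variation of $N(0,1)$ on any bounded region. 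Since $t \mapsto t^2 \mathbf{1}[|t| \ge 1]$ has rapidly decaying Gaussian tails, this yields $\Ex[Y_z] = \Qsucc + o(1)$, and averaging over $z$ gives $\Ex[\adv(f)] = \Qsucc + o(1)$.

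For the variance, I would decompose
$$\mathrm{Var}[\adv(f)] = N^{-2} \sum_z \mathrm{Var}[Y_z] + N^{-2} \sum_{z \ne z'} \mathrm{Cov}(Y_z, Y_{z'}).$$
The diagonal piece is $O(1/N)$ since $\Ex[Y_z^2] \le \Ex[\widehat{f}(z)^4]$ is uniformly bounded by a direct moment calculation (Parseval-type). For the off-diagonal pairs, I would establish a quantitative 2D local CLT for $(\widehat{f}(z), \widehat{f}(z'))$ via the following structural observation. For $z, z'$ nonzero and distinct (so linearly independent over $\mathbb{F}_2$), partition $\{0,1\}^n$ into the four classes $A_{\epsilon_1 \epsilon_2} := \{x : (-1)^{x \cdot z} = \epsilon_1, (-1)^{x \cdot z'} = \epsilon_2\}$, each of exact size $N/4$. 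The \textit{quadrant sums} $g_{\epsilon_1 \epsilon_2} := \sum_{x \in A_{\epsilon_1 \epsilon_2}} f(x)$ are then genuinely \textit{independent} centered binomials, and the pair $(S_z, S_{z'})$ is an explicit orthogonal linear transform of the 4-vector $(g_{\epsilon_1 \epsilon_2})$. Applying Lemma~\ref{lm:binom-approx} to each $g_{\epsilon_1 \epsilon_2}$ separately yields a 4-dimensional (hence 2-dimensional) local CLT: the joint law of $(\widehat{f}(z), \widehat{f}(z'))$ lies within $\widetilde{O}(N^{-1/2})$ total variation of $N(0, I_2)$ on bounded regions (the few pairs involving $z = 0$ or $z = z'$ are absorbed as $O(1/N)$ contributions or handled analogously with two quadrants). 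Combined with Gaussian tail bounds on $h(s,t) = s^2 t^2 \mathbf{1}[|s|, |t| \ge 1]$, this gives $\mathrm{Cov}(Y_z, Y_{z'}) = \widetilde{O}(N^{-1/2})$, so the off-diagonal sum contributes $\widetilde{O}(N^{-1/2})$ and $\mathrm{Var}[\adv(f)] = \widetilde{O}(2^{-n/2})$.

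For all sufficiently large $n$ we then have $|\Ex[\adv(f)] - \Qsucc| < 1/(2n)$, and Chebyshev yields
$$\Pr\left[\adv(f) < \Qsucc - \tfrac{1}{n}\right] \le \Pr\left[|\adv(f) - \Ex[\adv(f)]| > \tfrac{1}{2n}\right] \le 4n^2 \cdot \widetilde{O}(2^{-n/2}) < \tfrac{1}{n}.$$
The main obstacle will be the quantitative 2D local CLT: a purely asymptotic 2D CLT would only give $\mathrm{Cov}(Y_z, Y_{z'}) = o(1)$, which is far too weak once summed over $\Theta(N^2)$ pairs. The key observation that makes the rate tractable is the exact decomposition of $(S_z, S_{z'})$ into an orthogonal combination of \textit{independent} binomial sums, which reduces the multidimensional local CLT to parallel applications of the one-dimensional Lemma~\ref{lm:binom-approx} already available in the paper.
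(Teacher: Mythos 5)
Your proposal is correct in outline and follows essentially the same route as the paper's proof of Lemma~\ref{lemma:con-adv}: a second-moment bound plus Chebyshev, where the cross terms are tamed by decomposing $(\widehat{f}(z),\widehat{f}(z'))$ into independent partial sums over the classes on which both characters are constant, so that the Gaussian limit factors and the off-diagonal contribution collapses to $\Qsucc^2$. (The paper reduces WLOG to $z=0^n$ and uses the two halves $A,B$ rather than your four quadrants, and evaluates $\Ex[\adv(f)^2]$ directly as a 2D Gaussian integral that factors after the rotation $x=a+b$, $y=a-b$; this is the same computation phrased as a second moment rather than a covariance.) The one substantive caveat is the quantitative rate you lean on: both your mean estimate ($|\Ex[\adv(f)]-\Qsucc|<1/(2n)$ eventually) and your covariance estimate ($\mathrm{Cov}(Y_z,Y_{z'})=\widetilde{O}(N^{-1/2})$) require a CLT with an explicit $\widetilde{O}(N^{-1/2})$ error, whereas Lemma~\ref{lm:binom-approx} as stated supplies only an unquantified $o(1)$ — which, as you yourself observe, is fatal after summation; you would need a quantitative Stirling/Berry--Esseen strengthening (routine for symmetric Rademacher sums, also note that ``total variation to $N(0,I_2)$'' should really be closeness on rectangles plus a subgaussian tail cutoff, since a lattice law is at total-variation distance $1$ from any continuous one). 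To be fair, the paper's own proof is looser still on exactly this point — it treats the Fourier coefficients as exactly jointly Gaussian and asserts $\mathrm{Var}[\adv(f)]=O(1/N)$ with no CLT error analysis — so your $\widetilde{O}(2^{-n/2})$ variance bound is the more defensible conclusion, and it comfortably suffices for the lemma.
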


\begin{lemma}\label{lemma:uniform-lowb}
	Over $f \leftarrow \uniform_n$, no randomized algorithm with $o(N/\log N)$ queries can solve $\ffish$  with probability
	$$
	\Rsucc + \Omega(1).
	$$
\end{lemma}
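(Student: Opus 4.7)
My plan is to invoke Yao's minimax principle, reducing the problem to showing that every deterministic $T$-query algorithm $A$ with $T = o(N/\log N)$ solves $\ffish$ with probability at most $\Rsucc + o(1)$ when $f \leftarrow \uniform_n$. A deterministic algorithm is a decision tree: each leaf $\ell$ records queries $x_1^\ell,\ldots,x_T^\ell$, answers $b_1^\ell,\ldots,b_T^\ell \in \{-1,+1\}$, and an output $z^\ell \in \{0,1\}^n$. Since optimal adaptive queries are almost surely distinct, the answer sequence is iid uniform $\pm 1$ over random $f$, so each leaf is reached with probability exactly $2^{-T}$.

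For any leaf $\ell$, write $\widehat{f}(z^\ell) = \mu_\ell + S_\ell$, where
$$\mu_\ell = 2^{-n/2} \sum_{i=1}^{T} b_i^\ell (-1)^{x_i^\ell \cdot z^\ell}$$
is a deterministic ``shift'' and $S_\ell$ is a centered sum of $N-T$ iid $\pm 1$ variables scaled by $2^{-n/2}$, i.e., $S_\ell = (2K_\ell - (N-T))/\sqrt{N}$ for some $K_\ell \sim \mathrm{Bin}(N-T, 1/2)$. The first step is a local CLT estimate: using Lemma~\ref{lm:binom-approx} on the binomial distribution underlying $S_\ell$, I would show that
$$\Pr[\,|\widehat{f}(z^\ell)| \ge 1 \mid \ell\,] \le \Rsucc + K\mu_\ell^2 + o(1)$$
for some absolute constant $K$. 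The quadratic dependence is crucial: $\Phi(1-\mu) + \Phi(1+\mu)$ has vanishing first derivative at $\mu = 0$ by symmetry, so a small shift perturbs the two-sided tail probability only to second order. For the regime $|\mu_\ell| = \Omega(1)$, the bound follows trivially by enlarging $K$ and using $p \le 1$.

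The second step is to control $\Ex_f[\mu_{\ell(f)}^2]$ via martingale concentration. For each fixed $z \in \{0,1\}^n$, the partial sums $M_i^{(z)} = \sum_{j \le i} b_j (-1)^{x_j \cdot z}$ form a martingale (in the filtration generated by past answers) with increments bounded by $1$, so Azuma--Hoeffding yields $|M_T^{(z)}| \le O(\sqrt{T \log N})$ except with probability $N^{-10}$. A union bound over all $N$ values of $z$ then gives $\max_{z} |\mu_{\ell(f),z}| \le O(\sqrt{T \log N / N})$ with probability at least $1 - N^{-9}$ over random $f$. Combined with the trivial bound $\mu_\ell^2 \le T$ on the exceptional event, this yields $\Ex_f[\mu_{\ell(f),z^{\ell(f)}}^2] = O(T \log N / N)$, which is $o(1)$ for $T = o(N/\log N)$.

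Chaining the two steps gives a success probability of at most $\Rsucc + K \cdot o(1) + o(1) = \Rsucc + o(1)$, contradicting a gap of $\Omega(1)$. The main technical hurdle is the local CLT estimate in step one: the error must genuinely be of order $\mu^2$ rather than $|\mu|$, so a naive Gaussian approximation is insufficient. I would obtain the sharp bound either by a careful second-order expansion of the binomial-tail expression $\Pr[K_\ell \ge k^+] + \Pr[K_\ell \le k^-]$ using Lemma~\ref{lm:binom-approx}, or by exploiting the $\mu \leftrightarrow -\mu$ symmetry of the two-sided threshold to cancel the linear contribution outright. The $\log N$ factor in the query complexity then traces back directly to the union bound over the $N$ possible output strings in step two.
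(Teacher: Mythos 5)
Your proposal is correct and follows essentially the same route as the paper's proof: Yao's principle, a decomposition of $\widehat{f}(z)$ into the queried ("seen") and unqueried ("unseen") contributions, a union bound over all $N$ candidate outputs $z$ showing the queried part is $O(\sqrt{T\log N/N})=o(1)$ uniformly (this is where $T=o(N/\log N)$ and the $\log N$ factor enter, exactly as in the paper), and finally the two-sided tail of the fresh unqueried sum giving $\Rsucc+o(1)$. The only divergence is that your second-order estimate $\Rsucc+K\mu_\ell^2+o(1)$ is unnecessary: once the union bound gives $|\mu_\ell|=o(1)$ simultaneously for all $z$, the first-order bound $\Pr\left[|S_\ell|\ge 1-|\mu_\ell|\right]=\Rsucc+o(1)$ already closes the argument, which is precisely what the paper does, so the local-CLT cancellation you flag as the main technical hurdle can be skipped entirely.
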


Before proving these two technical lemmas, we show that they together imply Theorem~\ref{theo:pffish-lower-bound} easily.

\begin{proofof}{Theorem~\ref{theo:pffish-lower-bound}}
	Suppose by contradiction that there is an $o(N/\log N)$ query randomized algorithm $A$ which has a $\Rsucc + \Omega(1)$ success probability for $\pffish$. \ From Lemma~\ref{lemma:con-adv}, a $1-o(1)$ fraction of all functions from $\{0,1\}^n \to \{-1,1\}$ satisfy the promise of $\pffish$. \ Therefore, when the sample function $f$ is drawn from $\uniform_f$, with probability $1-o(1)$ it satisfies the promise of $\pffish$, and consequently $A$ has a $\Rsucc + \Omega(1)$ success probability of solving $\ffish$ with that $f$. \ This means that $A$ has a success probability of
	
	$$
	(1 - o(1)) \cdot (\Rsucc + \Omega(1)) = \Rsucc + \Omega(1)
	$$
	\noindent when $f \leftarrow \uniform_n$, contradicting Lemma~\ref{lemma:uniform-lowb}.
\end{proofof}

The proof of Lemma~\ref{lemma:con-adv} is based on a tedious calculation so we defer it to Appendix~\ref{sec:missing-proofs-sepa-samp}. \ Now we prove Lemma~\ref{lemma:uniform-lowb}.

\newcommand{\hatfseen}{\widehat{f}_{\mathsf{seen}}}
\newcommand{\hatfunseen}{\widehat{f}_{\mathsf{unseen}}}
\newcommand{\badevent}{\mathcal{E}_{\mathsf{bad}}}

\begin{proofof}{Lemma~\ref{lemma:uniform-lowb}}
	By Yao's principle, it suffices to consider only deterministic algorithms, and we can assume the algorithm $A$ makes exactly $t = o(N/\log N)$ queries without loss of generality.
	
	{\bf Notations.} Suppose that at the end of the algorithm, $A$ has queried the entries in a subset $S \subseteq \{0,1\}^n$ such that $|S| = t$.
	
	For each $z \in \{0,1\}^n$, we define
	$$
	\hatfseen(z) = \frac{1}{\sqrt{t}} \sum_{x \in S} f(x) \cdot (-1)^{x \cdot z}
	$$
	and similarly
	$$
	\hatfunseen{f}(z) = \frac{1}{\sqrt{N - t}} \sum_{x \in \{0,1\}^n \setminus S} f(x) \cdot (-1)^{x \cdot z}.
	$$
	
	From the definitions of $\widehat{f}(z)$, $\hatfseen(z)$ and $\hatfunseen(z)$, and note that $N/t = \omega(\log N) = \omega(\ln N)$, we have
	\begin{align}
	\widehat{f}(z) &= \left( \sqrt{t} \cdot \hatfseen(z) + \sqrt{N-t} \cdot \hatfunseen(z) \right) \Big/ \sqrt{N} \notag \\
	&= \hatfseen(z) /\omega(\sqrt{\ln N}) + \hatfunseen(z)\cdot(1-o(1)). \label{eq:fz}
	\end{align}
	
	{\bf W.h.p. $\hatfseen(z)$ is small for all $z \in \{0,1\}^n$.} We first show that, with probability at least $1 - o(1)$ over $f \leftarrow \uniform_n$, we have $|\hatfseen(z)| \le 2\sqrt{\ln N}$ for all $z \in \{0,1\}^n$.
	
	Fix a $z \in \{0,1\}^n$, and note that for the algorithm $A$, even though which position to query next might depend on the history, the value in that position is a uniform random bit in $\{-1,1\}$. So $\hatfseen(z)$ is a sum of $t$ uniform i.i.d.\ random variables in $\{-1,1\}$.
	
	Therefore, the probability that $|\hatfseen(z)|> 2\sqrt{\ln N}$ for this fixed $z$ is
	
	$$
	\frac{2}{\sqrt{2\pi}} \int_{2\sqrt{\ln N}}^{+\infty} e^{-x^2/2} dx = o\left(\frac{1}{N}\right).
	$$
	
	Then by a simple union bound, with probability $1-o(1)$, there is no $z \in \{0,1\}^n$ such that $|\hatfseen(z)| > 2\sqrt{\ln N}$ at the end of $t$ queries. \ We denote the nonexistence of such a $z$ as the event $\badevent$.
	
	{\bf The lower bound.} In the following we condition on $\badevent$. \ We show in this case, $A$ cannot solve $\ffish$ with a success probability better than $\Rsucc$, thereby proving the lower bound.
	
	From \eqref{eq:fz}, for each $z \in \{0,1\}^n$, we have
	$$
	\widehat{f}(z) = o(1) + \hatfunseen(z) \cdot (1 - o(1)).
	$$
	
	Therefore, the probability of $|\widehat{f}(z)| \ge 1$ is bounded by the probability that $|\hatfunseen(z)| \ge 1-o(1)$. \ Since $\hatfunseen(z)$ is independent of all the seen values in $S$, we have
	
	\begin{align*}
	\Pr\left[ \hatfunseen(z) \ge 1 - o(1) \right] &= \frac{2}{\sqrt{2\pi}}\int_{1-o(1)}^{+\infty} e^{-x^2/2} dx \\
	&= \frac{2}{\sqrt{2\pi}}\int_{1}^{+\infty} e^{-x^2/2} dx + o(1)\\
	&= \Rsucc + o(1).
	\end{align*}
	
	Hence, no matter which $z$ is outputted by $A$, we have $|\widehat{f}(z)| \ge 1$ with probability at most $\Rsucc + o(1)$. \ That means that if we condition on $\badevent$, then $A$ cannot solve $\ffish$ with probability $\Rsucc + \Omega(1)$. \ As $\badevent$ happens with probability $1-o(1)$, this finishes the proof.
\end{proofof}

\subsection{The Optimal $\Omega(N)$ Lower Bound for \fsampling}

We first show that in fact, Lemma~\ref{lemma:uniform-lowb} already implies an $\Omega(N /\log N)$ lower bound for $\fsampling$, which holds for a quite large $\varepsilon$.

\begin{theo}\label{theo:large-error}
	For any $\varepsilon < \Qsucc - \Rsucc \approx 0.483$, the randomized query complexity for $\fsamp$ is $\Omega(N/\log N)$.
\end{theo}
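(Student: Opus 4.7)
The plan is to reduce $\pffish$ to $\fsamp$ and then invoke Theorem~\ref{theo:pffish-lower-bound}. Suppose, toward a contradiction, that there is a randomized algorithm $A$ making $o(N/\log N)$ queries that solves $\fsamp$ to accuracy $\varepsilon$, where $\varepsilon < \Qsucc - \Rsucc$. I will use $A$ to build an $o(N/\log N)$-query randomized algorithm $B$ that solves $\pffish$ with success probability $\Rsucc + \Omega(1)$, contradicting Theorem~\ref{theo:pffish-lower-bound}.

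The algorithm $B$ is the obvious one: on input $f$, run $A$ to obtain a single sample $z$ from some distribution $\distr$ with $\|\distr - \distr_f\|\le \varepsilon$, and output $z$ as the claimed Fourier-fishing answer. First I would compute the success probability of $B$ assuming the $\pffish$ promise $\adv(f) \ge \Qsucc - 1/n$. By Lemma~\ref{lemma:quantum-algo} and the definition of $\adv(f)$, a sample from $\distr_f$ lands on some $z$ with $|\widehat{f}(z)|\ge 1$ with probability exactly $\adv(f)$. Since $z$ is instead drawn from $\distr$, and total variation distance can only decrease the probability of any event by at most $\varepsilon$, the success probability of $B$ is at least
\[
\adv(f) - \varepsilon \;\ge\; \Qsucc - \frac{1}{n} - \varepsilon.
\]
By the hypothesis $\varepsilon < \Qsucc - \Rsucc$, the right-hand side exceeds $\Rsucc + \Omega(1)$ for all sufficiently large $n$, so $B$ solves $\pffish$ with success probability $\Rsucc + \Omega(1)$.

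Since $B$ uses the same number of queries as $A$, namely $o(N/\log N)$, this directly contradicts Theorem~\ref{theo:pffish-lower-bound}, completing the proof. I do not expect any serious obstacle here: the only thing to be careful about is that the sampling accuracy $\varepsilon$ is in variation distance, which is exactly the right notion for a single-sample reduction from a relation problem, and that the additive $1/n$ slack in the $\pffish$ promise is absorbed into the $\Omega(1)$ gap whenever $\varepsilon$ is bounded strictly below $\Qsucc - \Rsucc$ by a constant.
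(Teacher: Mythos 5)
Your proof is correct and is essentially the paper's own argument: one sample from an $\varepsilon$-accurate $\fsamp$ algorithm solves Fourier fishing with probability at least $\adv(f)-\varepsilon$, which exceeds $\Rsucc+\Omega(1)$ whenever $\varepsilon$ is a constant below $\Qsucc-\Rsucc$, contradicting the $\Omega(N/\log N)$ fishing lower bound. The only (immaterial) difference is that you route the contradiction through the promise version, Theorem~\ref{theo:pffish-lower-bound}, whereas the paper invokes the distributional Lemma~\ref{lemma:uniform-lowb} directly over $f\leftarrow\uniform_n$; since the former is derived from the latter, the two are interchangeable here.
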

\begin{proof}
	Note when $f \leftarrow \uniform_n$, an exact algorithm for $\fsamp$ can be used to solve $\ffish$ with probability $\Qsucc$. Hence, a sampling algorithm for $\fsamp$ with total variance $\le \varepsilon$ can solve $\ffish$ with probability at least $\Qsucc - \varepsilon$, when $f \leftarrow \uniform_n$.
	
	Then the lower bound follows directly from Lemma~\ref{lemma:uniform-lowb}.
\end{proof}

Next we prove the optimal $\Omega(N)$ lower bound for \fsampling.

\begin{theo}\label{theo:fsamp-lowerbound}
	There is a constant $\varepsilon>0$, such that any randomized algorithm solving $\fsamp$ with error at most $\varepsilon$ needs $\Omega(N)$ queries.
\end{theo}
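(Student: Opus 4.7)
The plan is to reduce the $\fsamp$ classical query lower bound to an approximation-theoretic lower bound about the single Fourier coefficient $\widehat{f}(0^n)$, following the strategy behind Lemma~\ref{lemma:uniform-lowb} but avoiding the $\log N$ factor lost to the union bound over all $2^n$ frequencies $z$.

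First, given any classical randomized algorithm $A$ for $\fsamp$ making $t$ queries with TV error $\varepsilon$, I would extract a decision algorithm $B$ from $A$ whose acceptance probability tracks $\widehat{f}(0^n)^2/N$ tightly. The simplest instance is to run $A$ and accept iff the output equals $0^n$, in which case $\bigl|\Pr[B\text{ accepts on }f] - \widehat{f}(0^n)^2/N\bigr| \le \varepsilon$ for every $f$, because $\distr_f(0^n) = \widehat{f}(0^n)^2/N = (\sum_x f(x))^2/N^2$. To match the ``always small, extremely sensitive'' description from the techniques overview, I would further compose $B$ with an appropriate rescaling (for instance, an independent masking event), so that the resulting algorithm's acceptance probability is uniformly bounded by a small quantity while remaining proportionally sensitive to $\widehat{f}(0^n)^2$.

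Next I would exploit the fact that $\widehat{f}(0^n)^2/N = (\sum_x f(x))^2/N^2$ is invariant under permutations of the positions $x$. Prepending a uniformly random permutation $\pi$ of the coordinates to the rescaled $B$ yields a permutation-invariant algorithm with the same query count and accuracy guarantee, which may without loss of generality be assumed to query $t$ uniformly random positions, observe the count $X \sim \mathrm{Hyp}(N,m,t)$ of $+1$'s (where $m := |f^{-1}(+1)|$), and accept with some probability $\phi(X) \in [0,1]$. The $\fsamp$ lower bound thus reduces to showing that no such $\phi$ can simultaneously be ``always small'' and $\varepsilon'$-approximate the target $\alpha(2m/N-1)^2$ in hypergeometric expectation for every $m \in \{0,\ldots,N\}$, unless $t = \Omega(N)$.

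The main obstacle is this last approximation-theoretic step. My plan is to carry out the Aaronson--Ambainis analysis of Lemma~\ref{lemma:uniform-lowb} directly for the single coefficient $\widehat{f}(0^n)$: exhibit two carefully chosen regimes of $m$ (centered near $N/2$) for which the hypergeometric distributions $\mathrm{Hyp}(N,m,t)$ are indistinguishable in total variation when $t = o(N)$---using the Gaussian comparison provided by Lemma~\ref{lm:binom-approx}---but whose target quadratic values differ by more than $2\varepsilon'$. The extra ``always small'' constraint is crucial here, since without it the generic estimator $\phi(X) = (2X/t-1)^2$ already satisfies the basic approximation with only $t = O(1/\varepsilon)$ queries; the ``always small'' bound is what rules out such estimators and forces the information-theoretic cost to scale with $N$. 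The precise threshold on $\varepsilon$ (such as the $1/40000$ quoted in the introduction) is then determined by the constants arising from these indistinguishability and gap estimates.
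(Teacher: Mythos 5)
Your overall strategy matches the paper's---reduce to the single coefficient $\widehat{f}(0^n)$, symmetrize so that the algorithm sees only the hypergeometric count of ones among $t$ random queries, and exploit the fact that the acceptance probability must live at scale $1/N$---but two of your steps, as written, fail. The first is the reduction itself. Accepting iff $A$ outputs $0^n$ only yields $|p_{f,0^n} - \widehat{f}(0^n)^2/N| \le 2\varepsilon$, and since the target $\widehat{f}(0^n)^2/N$ is itself $\Theta(1/N)$ for typical $f$, a constant additive error swamps it entirely. Your proposed fix, composing with an independent masking event, rescales the signal and the error by the same factor and so cannot improve their ratio. The step that actually makes the reduction work is an averaging argument over the $N$ output strings: since $\sum_y |p_{f,y} - \distr_f(y)| \le 2\varepsilon$ for every $f$, there is a single $y^*$ with $\Ex_f\left[|p_{f,y^*} - \distr_f(y^*)|\right] \le 2\varepsilon/N$, and Markov then gives a pointwise error of $O(\varepsilon/N)$ for a $199/200$ fraction of $f$'s. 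Only at that error scale is the acceptance probability usably sensitive to $\widehat{f}(y^*)^2$; this is where the constant $\varepsilon = 1/40000$ enters.

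The second gap is the endgame. A two-point total-variation argument between $\mathrm{Hyp}(N,m_1,t)$ and $\mathrm{Hyp}(N,m_2,t)$ with $|m_1 - m_2| = \Theta(\sqrt{N})$ cannot close the proof: their TV distance is $\Theta(\sqrt{t/N})$, which is $o(1)$ for $t = o(N)$ but nowhere near the $o(1/N)$ resolution required, because the acceptance probabilities you must separate differ only by $\Theta(1/N)$. The paper instead fixes three weights $w_0, w_1, w_2$ (with $|Z| - N/2$ near $0$, near $\sqrt{N}/2$, and near $\sqrt{N}$) and proves a pointwise likelihood-ratio dichotomy: for every count $k$ within $O(\sqrt{t \ln N})$ of $t/2$, either $r_{k,w_0}/r_{k,w_1} \ge 0.05$ or $r_{k,w_2}/r_{k,w_1} \ge 10$. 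Combined with the constraints $p_{w_0} \le 0.02/N$, $p_{w_1} \ge 0.89/N$, and $p_{w_2} \le 4.01/N$, this forces $p_{w_1} \le 20\,p_{w_0} + p_{w_2}/10 < 0.89/N$, a contradiction. So the indistinguishability you need is multiplicative and three-point, not additive and two-point. Your observation that the always-small constraint is what rules out the naive estimator $(2X/t-1)^2$ is correct, but that constraint enters precisely through the $O(1/N)$ bounds on $p_{w_0}$ and $p_{w_2}$ in the likelihood-ratio bookkeeping, not through a TV bound.
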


\begin{proof}
	
	{\bf Reduction to a simpler problem.} Sampling problems are hard to approach, so we first reduce to a much simpler problem with Boolean output (``accept'' or ``reject'').
	
	Let $A$ be a randomized algorithm for $\fsamp$ with total variance $\le \varepsilon$. \ For a function $f : \{0,1\}^n \to \{-1,1\}$ and $y \in \{0,1\}^n$, we set $p_{f,y}$ to be the probability that $A$ outputs $y$ with oracle access to $f$.
	
	By the definition of $\fsamp$, for all $f$, we have
	
	$$
	\frac{1}{2} \sum_{y \in \{0,1\}^n} \left| p_{f,y} - 2^{-n}\widehat{f}(y)^2 \right| \le \varepsilon.
	$$
	
	By an averaging argument, this implies that there exists a $y^* \in \{0,1\}^n$ such that
	$$
	\Ex_{f \leftarrow \uniform_n}\left[ \left| p_{f,y^*} - 2^{-n}\widehat{f}(y^*)^2 \right| \right] \le \frac{2\varepsilon}{N}.
	$$
	
	Then by Markov's inequality, we have
	
	$$
	\left| p_{f,y} - 2^{-n}\widehat{f}(y^*)^2 \right| \le \frac{400\varepsilon}{N},
	$$
	\noindent for at least a $199/200$ fraction of $f$'s. Now we set $\varepsilon = \frac{1}{400}\cdot\frac{1}{100}$.
	
	Without loss of generality, we can assume that $y^{*}=0^n$. Let $z_i := \frac{1+f(x_i)}{2}$ (where $x_1,x_2,\dotsc,x_N$ is a lexicographic ordering of inputs), $Z := (z_1,\dotsc,z_N)$ and $|Z| := \sum_{i=1}^{N} z_i$. Then we have
	$$
	2^{-n}\widehat{f}(0^n)^2 = \left(\frac{2|Z|}{N} - 1\right)^2.
	$$
	
	Now we can simplify the question to one of how many $z_i$'s the algorithm $A$ needs to query, in order to output $0^n$ (we call it ``accept" for convenience) with a probability $p_Z = p_{f,0^n}$ that satisfies
	
	\begin{equation}\label{eq:req}
	\left|p_Z - \left(\frac{2|Z|}{N} - 1\right)^2\right| \le \frac{400\varepsilon}{N} \le \frac{0.01}{N}
	\end{equation}
	\noindent with probability at least $199/200$ over $Z \in \{0,1\}^N$.
	
	{\bf Analysis of the acceptance probability of $A$.} Without loss of generality, we can assume that $A$ non-adaptively queries $t$ randomly-chosen inputs $z_{i_1},z_{i_2},\dotsc,z_{i_t}$, and then accepts with a probability $q_k$ that depends solely on $k:=z_{i_1}+\cdots+z_{i_t}$. \ The reason is that we can change any other algorithm into this restricted form by averaging over all $N!$ permutations of $Z$ without affecting its correctness.
	
	Let $p_w$ be the probability that $A$ accepts when $|Z|=w$. \ Then
	$$
	p_w = \sum_{k=0}^{t} q_k \cdot r_{k,w},
	$$
	where  $r_{k,w} := \binom{t}{k}\binom{N-t}{w-k}\Big/\binom{N}{w}$, is the probability that $z_{i_1}+\cdots+z_{i_t}=k$ conditioned on $|Z| = w$.
	
	{\bf Construction and Analysis of the sets $U,V,W$.} Now, consider the following three sets:
	
	$$
	U := \left\{Z : \left||Z| - \frac{N}{2}\right| \le \frac{\sqrt{N}}{20} \right\},
	$$

	$$
	V := \left\{Z : \left(1-\frac{1}{20}\right) \frac{\sqrt{N}}{2} \le |Z| - \frac{N}{2} \le \frac{\sqrt{N}}{2} \right\},
	$$

	$$
	W := \left\{Z : \left(1-\frac{1}{20}\right)\sqrt{N} \le |Z| - \frac{N}{2} \le \sqrt{N} \right\}.
	$$
	
	We calculate the probability that a uniform random $Z$ belongs to these three sets. \ For a sufficiently large $N$, we have
	
	\newcommand{\erf}{\mathrm{erf}}
	
	$$
	\Pr_Z[Z \in U] \ge \erf\left(\frac{\sqrt{2}}{20}\right) - o(1) > 0.075,
	$$
	$$
	\Pr_Z[Z \in V] \ge \frac{1}{2} \cdot \left( \erf\left(\frac{\sqrt{2}}{2}\right) -\erf\left(\frac{\sqrt{2}}{2}\cdot \frac{19}{20}\right) \right) - o(1) > 0.01,
	$$
	$$
	\Pr_Z[Z \in W] \ge \frac{1}{2} \cdot \left( \erf(\sqrt{2}) -\erf\left(\sqrt{2}\cdot \frac{19}{20}\right) \right) - o(1) > 0.005.
	$$
	
	{\bf Construction and Analysis of $w_0,w_1,w_2$.}
	Since all $ \Pr_Z[Z \in U], \Pr_Z[Z \in V], \Pr_Z[Z \in W]$ $>0.005$, and recall that for at least a $1-0.005$ fraction of $Z$, we have $$\left|p_Z - \left(\frac{2|Z|}{N} - 1\right)^2\right| \le \frac{400\varepsilon}{N} \le \frac{0.01}{N}.$$
	So there must exist $w_0 \in U,w_1 \in V,w_2 \in W$ such that
	\begin{equation} \label{eq:wi-temp}
	\left|p_{w_i} - 4\cdot\left(\frac{w_i - N/2}{N}\right)^2 \right| \le \frac{0.01}{N}
	\end{equation}
	\noindent for each $i \in \{0,1,2\}$.
	
	To ease our calculation, let $u_i = \frac{w_i-N/2}{\sqrt{N}}$, then we have $w_i = N/2 + u_i \sqrt{N}$. \ By the definition of the $u_i$'s, we also have $|u_0| \le \frac{1}{20}, u_1 \in [0.475,0.5],u_2 \in [0.95,1]$.
	
	Plugging in $u_i$'s, for each $i \in \{0,1,2\}$, equation \eqref{eq:wi-temp} simplifies to
	\begin{equation} \label{eq:ui-temp}
	\left|p_{w_i} - \frac{4u_i^2}{N} \right| \le \frac{0.01}{N}.
	\end{equation}
	
	We can calculate the ranges of the $p_{w_i}$'s by plugging the ranges of the $u_i$'s,
	$$
	p_{w_0} \le \frac{0.02}{N},
	$$
	$$
	p_{w_1} \in \left[\frac{0.95^2-0.01}{N},\frac{1+0.01}{N}\right] \subseteq \left[\frac{0.89}{N},\frac{1.01}{N}\right],
	$$
	$$
	p_{w_2} \in \left[\frac{4\cdot0.95^2-0.01}{N},\frac{4+0.01}{N}\right] \subseteq \left[\frac{3.6}{N},\frac{4.01}{N}\right].
	$$
	
	We are going to show that the above is impossible when $t=o(N)$. \ That is, one cannot set the $q_k$'s in such a way that all $p_{w_i}$'s satisfy the above constraints when $t = o(N)$.
	
	{\bf It is safe to set $q_k$ to zero when $|k-t/2|$ is large.} To simplify the matters, we first show that we can set nearly all the $q_k$'s to zero. \ By the Chernoff bound without replacement, for each $w_i$ and large enough $c$ we have
	\begin{align*}
	&\sum_{k:|k-t/2|> c \sqrt{t}} r_{k,w_i}\\
	&  =\Pr\left[  \left|z_{i_{1}}%
	+\cdots+z_{i_{t}} -\frac{t}{2}\right|\geq c\sqrt{t}:\left\vert Z\right\vert =w_i = \frac{N}{2} + u_i \sqrt{N}\right] \\
	&  \le \Pr\left[  \left|z_{i_{1}}%
	+\cdots+z_{i_{t}} -\left(\frac{t}{2} + \frac{u_i t}{\sqrt{N}}\right)\right|\geq c\sqrt{t}-\left|\left(\frac{t}{2} + \frac{u_i t}{\sqrt{N}}\right) - \frac{t}{2}\right|:\left\vert Z\right\vert =w_i = \frac{N}{2} + u_i \sqrt{N}\right] \\
	&  \le \Pr\left[  \left|z_{i_{1}}%
	+\cdots+z_{i_{t}} -\left(\frac{t}{2} + \frac{u_i t}{\sqrt{N}}\right)\right|\geq c\sqrt{t}-|u_i|\sqrt{t}:\left\vert Z\right\vert = \frac{N}{2} + u_i \sqrt{N}\right] \tag{$\frac{t}{\sqrt{N}} \le \sqrt{t}$} \\
	&  \le \exp \left\{ - 2 \frac{(c\sqrt{t}-|u_i|\sqrt{t})^2}{t} \right\}\\
	&  = \exp \left\{ - 2(c-|u_i|)^2 \right\} \\
	&  \le \exp \left\{\Omega(c^2) \right\}.
	\end{align*}
	
	Then we can set $c = c_1 \sqrt{\ln N}$ for a sufficiently large constant $c_1$, so that for all $w_i$'s,
	$$
	\sum_{k:|k-t/2| > c \sqrt{t}} r_{k,w_i} \le \frac{1}{N^2}.$$
	This means that we can simply set all $q_k$'s with $|k-t/2| > c\sqrt{t}=c_1 \sqrt{t\ln N}$ to zero, and only consider $k$ such that $|k-t/2| \le c_1 \sqrt{t\ln N}$, as this only changes each $p_{w_i}$ by a negligible value. \ From now on, we call an integer $k$ {\em valid}, if $|k - t/2| \le c_1 \sqrt{t \ln N}$.
	
	{\bf Either $\frac{r_{k,w_0}}{r_{k,w_1}} \ge 0.05$ or $\frac{r_{k,w_2}}{r_{k,w_1}} \ge 10$.} Now, we are going to show the most technical part of this proof: for all valid $k$, we have either
	
	\begin{equation}\label{eq:k-claim}
	\frac{r_{k,w_0}}{r_{k,w_1}} \ge 0.05\text{ or }\frac{r_{k,w_2}}{r_{k,w_1}} \ge 10.
	\end{equation}
	
	Suppose for contradiction that there is a valid $k$ that satisfies
	\begin{equation}\label{eq:k-req}
	\frac{r_{k,w_0}}{r_{k,w_1}} < 0.05 \text{ and } \frac{r_{k,w_2}}{r_{k,w_1}} < 10.
	\end{equation}
	
	{\bf Estimation of $r_{k,w_i}$'s.} We first use Lemma~\ref{lm:binom-approx} to derive an accurate estimate of $\ln r_{k,w_i}$ for each $w_i$.
	
	We set $N_t = N - t$ for simplicity. \ Recall that
	$$
	r_{k,w} = \binom{t}{k}\binom{N_t}{w-k}\Big/\binom{N}{w}.
	$$
	
	For each $w_i$, since $|k-t/2| \le c_1 \sqrt{t\ln N}$ and $t=o(N)$, we have
	$$\left|w_i-k-\frac{N_t}{2}\right| \le \left|w_i-\frac{N}{2}\right|+\left|k-\frac{t}{2}\right|\le u_i\sqrt{N}+c_1 \sqrt{t\ln N}=o(N_t^{2/3}),$$
	\noindent and note that $|w_i-N/2| = |u_i \sqrt{N}| = o(N^{2/3})$. \ So we can apply Lemma~\ref{lm:binom-approx} to derive
	
	\begin{align*}
	\ln r_{k,w_i} &= \ln \binom{t}{k} + \ln \binom{N_t}{w_i-k} - \ln \binom{N}{w_i} \\
	&= -\frac{(w_i-k-N_t/2)^2}{N_t/2} + \frac{(w_i-N/2)^2}{N/2} + C + \ln \binom{t}{k} + o(1),
	\end{align*}
	\noindent in which $C$ is a constant that does not depend on $k$ or $w_i$.
	
	Let $d=(k-t/2)/\sqrt{t}$ (so $k = t/2 + d\sqrt{t}$), and recall that $w_i = N/2 + u_i \sqrt{N}$ for each $w_i$. \ We can further simplify the expression as
	\begin{align*}
	\ln r_{k,w_i}
	&= - \frac{(N/2+u_i\sqrt{N}-t/2-d\sqrt{t}-N_t/2)^2}{N_t/2} + \frac{(N/2+u_i\sqrt{N}-N/2)^2}{N/2} + C + \ln \binom{t}{k} + o(1) \\
	&= -\frac{(u_i \sqrt{N}-d\sqrt{t})^2}{N_t/2} +2 u_i^2 + C + \ln \binom{t}{k} + o(1).
	\end{align*}
	
	{\bf Estimation of $\frac{r_{k,w_j}}{r_{k,w_i}}$.} Note that $N_t = N-t = (1-o(1))N$. \ So we can approximate the ratio between two $r_{k,w_i}$ and $r_{k,w_j}$ by
	\begin{align*}
	\ln \frac{r_{k,w_j}}{r_{k,w_i}}
	&= \ln r_{k,w_j} - \ln r_{k,w_i}\\
	&= -\frac{(u_j \sqrt{N}-d\sqrt{t})^2}{N_t/2} +2 u_j^2 +\frac{(u_i \sqrt{N}-d\sqrt{t})^2}{N_t/2} -2 u_i^2 + o(1) \\
	&= 2u_j^2 - 2u_i^2 + \frac{((u_i+u_j)\sqrt{N}-2d\sqrt{t})(u_i-u_j)\sqrt{N}}{N_t/2} + o(1) \\
	&= 2u_j^2 - 2u_i^2 + 2(u_i^2-u_j^2) - 4d\frac{\sqrt{tN}}{N_t}(u_i-u_j) + o(1) \\
	&= - 4d\frac{\sqrt{tN}}{N_t}(u_i-u_j) + o(1).
	\end{align*}
	
	{\bf Verifying \eqref{eq:k-claim}.} Finally, to simplify matters further, we set $x = -4d\frac{\sqrt{tN}}{N_t}$, and substitute it in \eqref{eq:k-req} for $k$. \ We have
	
	$$
	\ln \frac{r_{k,w_0}}{r_{k,w_1}} = x(u_1-u_0) + o(1) < -\ln 20,
	$$
	which simplifies to
	$$
	x < \frac{-\ln 20}{u_1-u_0} + o(1) \le \frac{-\ln 20}{0.505} + o(1) \le -5.93 + o(1).
	$$
	
	Similarly, we have
	$$
	\ln \frac{r_{k,w_2}}{r_{k,w_1}} = x(u_1-u_2) + o(1) < \ln 10
	$$
	and
	$$
	x > -\frac{\ln 10}{u_2-u_1} - o(1) \ge -\frac{\ln 10}{0.45} - o(1) \ge -5.12 -o(1).
	$$
	\noindent contradiction.
	
	{\bf The lower bound.} So $\eqref{eq:k-claim}$ holds for all valid $k$, which means for all $k$ such that $|k-t/2| \le c_1\sqrt{t\ln N}$, either $\frac{r_{k,w_0}}{r_{k,w_1}} \ge 0.05$ or $\frac{r_{k,w_2}}{r_{k,w_1}} \ge 10$.
	
	Let $H$ be the set of all valid integers $k$. We set
	$$S = \left\{  k \in H : \frac{r_{k,w_0}}{r_{k,w_1}} \ge 0.05 \right\}\text{ and }T = H \setminus S.$$
	
	By \eqref{eq:k-claim}, for any $k \in T$, we have $\frac{r_{k,w_2}}{r_{k,w_1}} \ge 10$.
	
	Since $p_{w_1} = \sum_{k \in S} q_k \cdot r_{k,w_1} + \sum_{k \in T} q_k \cdot r_{k,w_1} \ge \frac{0.89}{N}$ (recall we have set all $q_k$'s to zero for $k \notin H$), we must have either $\sum_{k \in S} q_k \cdot r_{k,w_1} \ge \frac{0.445}{N}$ or $\sum_{k \in T} q_k \cdot r_{k,w_1} \ge \frac{0.445}{N}$.
	
	If $\sum_{k \in S} q_k \cdot r_{k,w_1} \ge \frac{0.445}{N}$, we have $$p_{w_0} \ge \sum_{k \in S} q_k \cdot r_{k,w_1} \cdot \frac{r_{k,w_0}}{r_{k,w_1}} \ge \frac{0.445}{N} \cdot 0.05 \ge \frac{0.022}{N},$$ which contradicts the constraint that $p_{w_0} \le \frac{0.02}{N}$. \ Otherwise, $\sum_{k \in T} q_k \cdot r_{k,w_1} \ge \frac{0.445}{N}$; then $$p_{w_2} \ge \sum_{k \in T} q_k \cdot r_{k,w_1} \cdot \frac{r_{k,w_2}}{r_{k,w_1}} \ge \frac{0.445}{N} \cdot 10 \ge \frac{4.45}{N},$$ which violates the requirement that $p_{w_2} \le \frac{4.01}{N}$.
	
	Since both cases lead to a contradiction, $A$ needs to make $\Omega(N)$ queries and this completes the proof.
	
\end{proof}

\section{Quantum Supremacy Relative to Efficiently-Computable Oracles}
\label{sec:oracle-sepa-ppoly}

We now discuss our results about quantum supremacy relative to oracles in $\ppoly$.

Building on work by Zhandry~\cite{zhandry2012construct} and Servedio and Gortler \cite{servediogortler}, we first show that, if (classical) one-way functions exist, then there exists an oracle $\oracle \in \ppoly$ such that $\BPP^{\oracle} \ne \BQP^{\oracle}$. \ Then we make a connection to the previous section by showing that, assuming the existence of (classical) subexponentially strong one-way functions, $\ffishing$ and $\fsampling$ are hard even when it is promised that the oracle is in $\ppoly$.

We also study several other complexity questions relative to $\ppoly$ oracles: for example, $\PTIME$ vs $\NP$, $\PTIME$ vs $\BPP$, and $\BQP$ vs $\SZK$. \ Since these questions are not connected directly with quantum supremacy, we will discuss them in Appendix~\ref{sec:other_ppoly}.

\subsection{Preliminaries}

Recall that an oracle $\oracle : \{0,1\}^{*} \to \{0,1\}$ is itself a language, so we say that an oracle $\oracle$ is in $\ppoly$ when the corresponding language belongs to $\ppoly$, and we use $\oracle_n$ to denote its restriction to $\{0,1\}^n$.

Given two sets $\domain$ and $\image$, we define $\image^{\domain}$ as the set of functions $f : \domain \to \image$. \ For a set $\domain$, we will sometimes abuse notation and write $\domain$ to denote the uniform distribution on $\domain$.

\subsubsection*{(Quantum) Pseudorandom Functions and Permutations}

We are going to use pseudorandom functions and permutations throughout this section, so we first review their definitions.

\begin{defi}[PRF and PRP]
	A pesudorandom function is a function $\PRF : \key \times \domain \to \image$, where $\key$ is the key-space, and $\domain$ and $\image$ are the domain and the range. $\key,\domain,\image$ are implicitly functions of the security parameter $n$.\footnote{We denote them by $\key_n,\domain_n,\image_n$ when we need to be clear about the security parameter $n$.} We write $y = \PRF_{k}(x)$.
	
	Similarly, a pesudorandom permutation is a function $\PRP : \key \times \domain \to \domain$, where $\key$ is the key-space, and $\domain$ is the domain of the permutation.
	\ $\key$ and $\domain$ are implicitly functions of the security parameter $n$. \ We write $y = \PRP_{k}(x)$. \ It is guaranteed that $\PRP_{k}$ is a permutation on $\domain$ for each $k \in \key$.
\end{defi}

For simplicity, we use $\PRF_{\key}$ to denote the distribution on functions $f : \domain \to \image$ by drawing $k \leftarrow \key$ and set $f := \PRF_k$.

We now introduce the definitions of classical and quantum security.

\begin{defi}[Classical-Security]
	A pseudorandom function $\PRF : \key \times \domain \to \image$ is (classically) secure if no classical adversary $A$ can distinguish between a truly random function and
	the function $\PRF_k$ for a random $k$ in polynomial time. That is, for every such $A$, there exists a negligible function $\varepsilon = \varepsilon(n)$ such that
	$$
	\left| \Pr_{k \leftarrow \key}[A^{\PRF_k}() = 1] - \Pr_{f \leftarrow \image^{\domain}}[A^{f}() = 1] \right| < \varepsilon.
	$$
	Also, we say that a pseudorandom function $\PRF$ is exponentially-secure, if the above holds even for classical adversaries that take $2^{O(n)}$ time.
	
	Similarly, a pseudorandom permutation $\PRP$ is (classically) secure if no classical adversary $A$ can distinguish between a truly random permutation and
	the function $\PRP_k$ for a random $k$ in polynomial time.
\end{defi}

Sometimes, especially in the context of one-way functions, we will talk about \textit{subexponential} security. \ By this we simply mean that there is no adversary running in $2^{n^{o(1)}}$ time.

\begin{defi}[Quantum-Security]
	A pseudorandom function $\PRF$ is quantum-secure if no quantum adversary $A$ making quantum queries can distinguish between a truly random function and
	the function $\PRF_k$ for a random $k$ in polynomial time.
	
	Also, a pseudorandom permutation $\PRP$ is quantum-secure if no quantum adversary $A$ making quantum queries can distinguish between a truly random permutation and
	the function $\PRP_k$ for a random $k$ in polynomial time.
	
\end{defi}

\subsubsection*{On the Existence of PRFs}

It is well-known that the existence of one-way functions implies the existence of PRFs and PRPs.

\begin{lemma}[\cite{hill,ggm,goldreich1989hard,luby1988construct}]\label{lemma:cond-exist}
	If one-way functions exist, then there exist secure PRFs and PRPs. \ Similarly, if subexponentially-secure one-way functions exist, then there exist exponentially-secure PRFs.
\end{lemma}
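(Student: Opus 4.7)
The plan is to invoke the classical chain of reductions in the order HILL $\Rightarrow$ GGM $\Rightarrow$ Luby--Rackoff, together with careful bookkeeping of the security parameters to obtain the exponential-security version.

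First, by the theorem of H\aa stad, Impagliazzo, Levin, and Luby~\cite{hill}, the existence of any one-way function $f$ implies the existence of a pseudorandom generator $G : \{0,1\}^n \to \{0,1\}^{2n}$ secure against polynomial-time classical adversaries. Second, given such a $G$, the Goldreich--Goldwasser--Micali tree construction~\cite{ggm} builds a pseudorandom function $\PRF : \key \times \domain \to \image$ by defining $\PRF_k(x)$ as the leaf reached when one walks down a binary tree of depth $|x|$, applying at each node the appropriate half of $G$ selected by the corresponding bit of $x$. A standard hybrid argument over the $|x|$ levels of the tree reduces any distinguisher for $\PRF$ to a distinguisher for $G$, with only a polynomial loss in advantage. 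Third, the Luby--Rackoff construction~\cite{luby1988construct} turns any secure PRF into a secure PRP by using it as the round function in a three- (or four-) round Feistel network; security again reduces via a hybrid argument to the PRF security, losing only a polynomial factor.

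For the second claim, I would rerun exactly the same three reductions, but this time tracking advantages and running times quantitatively. If $f$ is a subexponentially-secure one-way function, i.e.\ no adversary running in time $2^{n^{o(1)}}$ can invert $f$ with non-negligible probability, then the HILL construction yields a PRG whose distinguishing advantage is at most $2^{-n^{\Omega(1)}}$ against all adversaries of size $2^{n^{\Omega(1)}}$, by inspecting the parameters of the reduction. The GGM hybrid introduces a loss of a factor of $|x| = \mathrm{poly}(n)$ in the advantage and a polynomial blow-up in the running time of the reduction, so the resulting PRF retains advantage $2^{-n^{\Omega(1)}}$ against adversaries of size $2^{n^{\Omega(1)}}$, which is the desired exponential security. (Since the lemma only asserts exponential security of PRFs, not PRPs, there is no need to push through Luby--Rackoff a second time.)

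The only nontrivial step is the security-parameter bookkeeping in the subexponential case: one must choose the security parameter fed into HILL as a sufficiently small polynomial in $n$ so that the hybrid losses in GGM (which scale polynomially in $n$) do not overwhelm the subexponential hardness of the base OWF, while still ensuring that the final PRF runs in time $\mathrm{poly}(n)$. This is standard but is where almost all of the actual work lies; the rest is a direct appeal to known constructions.
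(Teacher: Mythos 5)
Your proposal matches the paper exactly in spirit: the paper states this lemma without proof, as a citation to precisely the chain you describe (HILL one-way function $\Rightarrow$ PRG, GGM $\Rightarrow$ PRF, Luby--Rackoff $\Rightarrow$ PRP, with parameter scaling for the subexponential case), so your sketch is a correct unpacking of the intended argument. The one slip is in the parameter bookkeeping you rightly identify as the only real work: to convert subexponential hardness of the base primitive into $2^{O(n)}$-security of the PRF on $\{0,1\}^n$, you must instantiate the underlying OWF/PRG at a sufficiently \emph{large} polynomial security parameter (e.g.\ $k = n^{c/\epsilon}$ if the OWF resists $2^{k^{\epsilon}}$-time inverters), not a small one; choosing $k$ too small leaves you with only subexponential security, and the upper bound ``$k = \mathrm{poly}(n)$'' is needed only so that the resulting PRF remains computable in time $\mathrm{poly}(n)$.
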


We remark here that these are all {\em purely classical} assumptions, which make no reference to quantum algorithms. \ Also, the latter assumption is the same one as in the famous {\em natural proofs} barrier~\cite{rr}.

\subsection{A Construction from Zhandry~\cite{zhandry2012construct}}
\label{sec:zhandry-result}

To prove our separations, we will use a construction from Zhandry~\cite{zhandry2012construct} with some modifications. \
We first construct a PRP and a PRF, and summarize some of their useful properties.

\subsubsection*{Definitions of $\PRPraw$ and $\PRFmod$}

Assuming one-way functions exist, by Lemma~\ref{lemma:cond-exist}, let $\PRPraw$ be a secure pesudorandom permutation with key-space $\keyraw$ and domain $\domainraw$. We interpret $\domainraw$ as $[N]$, where $N = N(n) = |\domainraw|$.

Then we define another pseudorandom function $\PRFmod_{(k,a)}(x) = \PRPraw_{k}((x-1) \bmod a + 1)$ where:

\begin{itemize}
	\item The key space of $\PRFmod$ is $\keymod = \keyraw \times \moduliset$ where $\moduliset$ is the set of primes in $[\sqrt{N}/4,\sqrt{N}/2]$.
	\item The domain and image are both $\domainraw$, that is, $\domainmod = \domainraw$ and $\imagemod = \domainraw$.
\end{itemize}

Note that we denote the latter one by $\PRFmod$ (not $\PRP^{\mathsf{mod}}$) because it is no longer a PRP.

\subsubsection*{Properties of $\PRPraw$ and $\PRFmod$}

We now summarize several properties of $\PRPraw$ and $\PRFmod$, which can be proved along the same lines as \cite{zhandry2012construct}.

\begin{lemma}[Implicit in Claim 1 and Claim 2 of \cite{zhandry2012construct}]\label{lemma:zhandry}
	The following statements hold when $\PRPraw$ is classical secure.
	\begin{enumerate}
		\item Both $\PRPraw$ and $\PRFmod$ are classical secure PRFs. \ Consequently, no classical algorithm $A$ can distinguish them with a non-negligible advantage.
		\item Given oracle access to $\PRFmod_{(k,a)}$ where $(k,a) \leftarrow \keymod$, there is a quantum algorithm that can recover $a$ with probability at least $1-\varepsilon$.
		\item There is a quantum algorithm that can distinguish $\PRPraw$ from $\PRFmod$ with advantage $1-\varepsilon$.
	\end{enumerate}
	
	Here $\varepsilon = \varepsilon(n)$ is a negligible function.
\end{lemma}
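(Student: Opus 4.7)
The plan is to establish the three parts in turn, following the hybrid-argument template of Zhandry~\cite{zhandry2012construct}, then derive the distinguishing algorithm for part~3 by combining parts~1 and~2 with a simple verification.

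For part~1, I would argue that $\PRFmod$ is classically indistinguishable from a truly random function by a two-step hybrid. First, using the classical PRP-security of $\PRPraw$, I replace $\PRPraw_k$ with a uniformly random permutation $\pi$ on $\domainraw$; this changes the classical adversary's view by at most a negligible amount. Second, I compare the hybrid $\pi((x-1)\bmod a + 1)$ with a uniform random function from $\domainraw$ to $\domainraw$. These two distributions agree on the adversary's view whenever the $q=\mathrm{poly}(n)$ classical queries $x_1,\dotsc,x_q$ reduce to \emph{distinct} residues modulo $a$, up to the standard PRP-to-PRF birthday gap of $O(q^2/N)$ which is negligible. For a fixed pair $(x_i,x_j)$ with $x_i\ne x_j$, a collision modulo $a$ requires $a\mid |x_i-x_j|$; since $a$ is drawn uniformly from the $\Theta(\sqrt{N}/\log N)$ primes in $[\sqrt{N}/4,\sqrt{N}/2]$ and $|x_i-x_j|<N$ has only $O(\log N)$ prime divisors, the collision probability is $O(\log^2 N/\sqrt{N})$ per pair. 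Union-bounding over $\binom{q}{2}$ pairs keeps the total collision probability negligible, even when the queries are chosen adaptively (one conditions step by step on the history and applies the bound to the next query against all earlier ones). Since both $\PRPraw$ and $\PRFmod$ are then negligibly far from a random function, the ``consequently'' clause follows by the triangle inequality.

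For part~2, I observe that $\PRFmod_{(k,a)}$, viewed as a function on $\{1,\dotsc,N\}$, is exactly $a$-periodic inside its domain, and moreover injective on each period because $\PRPraw_k$ is a permutation. I would apply Shor's period-finding algorithm in the standard periodic-function mode using $O(\log N)$ quantum queries: prepare a uniform superposition over inputs, query the oracle in superposition, perform a quantum Fourier transform on the input register, and measure. Because $a\le \sqrt{N}/2$, the domain contains $\Omega(\sqrt{N})$ full periods, so the Fourier sample is a near-multiple of $N/a$ and the continued-fractions step recovers $a$ with constant probability. Repeating $O(\log(1/\varepsilon))$ times and verifying each candidate by querying $f(1)$ and $f(1+a')$ boosts the success probability to $1-\varepsilon$.

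For part~3, the distinguisher runs the period-recovery routine from part~2 to produce a candidate $a'\in\moduliset$ and then queries the oracle at $1$ and $1+a'$, accepting iff the two values agree. If the oracle is $\PRFmod_{(k,a)}$, then with probability $1-\varepsilon$ we recover $a'=a$ and the verification always succeeds. If the oracle is $\PRPraw_k$, then it is a permutation on $\domainraw$, so $\PRPraw_k(1)\ne \PRPraw_k(1+a')$ for every nonzero $a'$, and the distinguisher rejects with probability $1$. The main obstacle will be part~1: carefully tracking the two hybrids under adaptive classical queries and confirming that neither the PRP-to-random-permutation step, nor the modular-collision step, nor the permutation-to-function step individually loses more than a negligible factor; everything else reduces to textbook applications of Shor's algorithm and PRP security.
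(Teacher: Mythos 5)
Your proposal is correct and follows essentially the same route as the paper: a hybrid argument reducing part~1 to counting modular collisions among the $\binom{q}{2}$ query differences (the paper bounds this slightly differently, noting each difference $<N$ is divisible by at most two primes in $[\sqrt{N}/4,\sqrt{N}/2]$, but your $O(\log^2 N/\sqrt{N})$ per-pair bound is equally negligible), quantum period-finding for part~2 (the paper cites Boneh--Lipton rather than spelling out Shor's routine), and the same recover-then-verify distinguisher for part~3. The only cosmetic difference is that you insert a random permutation as an intermediate hybrid before passing to a random function, whereas the paper invokes PRP-implies-PRF up front and compares $f_{\bmod a}$ for truly random $f$ directly against a random function; both are sound.
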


For completeness, we prove Lemma~\ref{lemma:zhandry} in Appendix~\ref{sec:missing-proofs-oracle-sepa}, by adapting the proofs of Claims 1 and 2 in \cite{zhandry2012construct}.

\subsection{$\BPP$ vs $\BQP$}

Next we discuss whether there is an oracle $\oracle \in \ppoly$ that separates $\BPP$ from $\BQP$. \ We show that the answer is yes provided that one-way functions exist.

\begin{theo}\label{theo:BPPvsBQP}
	Assuming one-way functions exist, there exists an oracle $\oracle \in \ppoly$ such that $\BPP^{\oracle} \ne \BQP^{\oracle}$.
\end{theo}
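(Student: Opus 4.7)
My plan is to instantiate the $\PRPraw$/$\PRFmod$ construction from Section~\ref{sec:zhandry-result}, which by Lemma~\ref{lemma:cond-exist} exists under the one-way function hypothesis, and then use it to encode a unary language $L$ that $\BQP$ can decide via Lemma~\ref{lemma:zhandry}.3 but that $\BPP$ provably cannot, via a diagonalization driven by Lemma~\ref{lemma:zhandry}.1. Concretely, I would define $\oracle$ so that for each input length $n$ it encodes a function $f_n : \{0,1\}^n \to \{0,1\}^n$ which is either $\PRPraw_{k_n}$ or $\PRFmod_{(k_n,a_n)}$ (with fresh random keys/moduli), the choice being indexed by a bit $b_n$. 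A convenient encoding sets $\oracle(n,x,i)$ to the $i$-th bit of $f_n(x)$. Since both $\PRPraw_{k_n}$ and $\PRFmod_{(k_n,a_n)}$ are evaluable by polynomial-size circuits with the keys hardwired, $\oracle \in \ppoly$ regardless of how the bits $\{b_n\}$ are chosen. The candidate separating language will be $L := \{1^n : b_n = 1\}$.

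The quantum upper bound is essentially free from Lemma~\ref{lemma:zhandry}.3: given oracle access to $f_n$, that lemma supplies a polynomial-time quantum procedure that identifies whether $f_n$ is of type $\PRPraw$ or of type $\PRFmod$ with advantage $1 - \varepsilon(n)$ for negligible $\varepsilon$. For $n$ past some threshold this is well above $2/3$, and the finite initial segment can be absorbed by hard-coding the correct answers into the $\BQP$ machine. Hence $L \in \BQP^{\oracle}$.

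For the classical lower bound, I would diagonalize against an enumeration $M_1,M_2,\dotsc$ of polynomial-time randomized oracle machines. I build the sequence $\{b_n\}$ in stages, together with an increasing sequence $n_1 < n_2 < \dotsb$ whose gaps grow fast enough that the execution of $M_j$ on $1^{n_j}$ is never affected by choices made at later stages (e.g.\ taking $n_{i+1}$ larger than the worst-case running time of $M_i$ on $1^{n_i}$). At stage $i$, let $p_{n_i}^b$ denote the probability that $M_i^{\oracle}(1^{n_i})$ outputs $1$ when $b_{n_i} = b$ and all other coordinates of $\oracle$ have been set to their committed or default values. If $p_{n_i}^0 > 1/3$, set $b_{n_i}=0$, forcing the true label to $0$ while $M_i$ errs with probability exceeding $1/3$; otherwise $p_{n_i}^0 \le 1/3$ and I would invoke Lemma~\ref{lemma:zhandry}.1 to conclude $|p_{n_i}^0 - p_{n_i}^1| < 1/3$ for $n_i$ large, so $p_{n_i}^1 < 2/3$, and setting $b_{n_i}=1$ again makes $M_i$ err with probability exceeding $1/3$. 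Either way $M_i$ violates the $\BPP$ error bound at $1^{n_i}$, so $L \notin \BPP^{\oracle}$.

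The main obstacle I expect is cleanly executing the security step: Lemma~\ref{lemma:zhandry}.1 speaks of a classical distinguisher with oracle access to a \emph{single} function, but in my setting $M_i$ has access to all of $\oracle$ at every length. I would resolve this by turning $M_i$ into a non-uniform classical distinguisher that has the polynomial-size circuits for $f_m$ ($m \neq n_i$) hard-coded as advice: it answers $M_i$'s queries at other lengths internally and forwards queries at length $n_i$ to its own challenge oracle. This reduces the bound $|p_{n_i}^0 - p_{n_i}^1| < 1/3$ to standard non-uniform classical security of $\PRPraw$, which follows in the usual way from the non-uniform security of the underlying one-way function. Scheduling the $n_i$'s fast enough to avoid cross-stage interference is then pure bookkeeping, which I expect to dispatch without difficulty.
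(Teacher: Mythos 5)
Your construction of the oracle and the language is the same as the paper's, and the quantum upper bound is handled identically via Lemma~\ref{lemma:zhandry}; the only genuine divergence is in how the classical lower bound is closed. The paper makes the choice between $\PRPraw_{\keyraw}$ and $\PRFmod_{\keymod}$ \emph{randomly} at every length and argues that each fixed $\BPP$ machine decides all lengths correctly with probability $0$ over the oracle (then takes a union over the countably many machines), whereas you run an explicit staged diagonalization, adversarially fixing $b_{n_i}$ to defeat $M_i$. Both finishes work, but yours carries two costs you should make explicit. First, your $p_{n_i}^b$ is an average over the keys at length $n_i$, so after choosing $b_{n_i}$ you still need a routine averaging step to commit to a \emph{specific} key on which $M_i$ errs with probability exceeding $1/3$; this is easy but is currently elided. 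Second, because the functions at earlier committed lengths are specific objects produced by such existential averaging (not samples from an efficiently samplable distribution), your reduction to the security of $\PRPraw$ really is non-uniform, as you note, and so you need the one-way function to be secure against non-uniform adversaries---a slightly stronger reading of the hypothesis than the paper's argument requires, since in the probabilistic version the distinguisher can itself sample the keys for the other lengths and remain uniform. What your route buys in exchange is a concrete, deterministic oracle rather than a measure-one statement, and it avoids the paper's slightly delicate step of conditioning $E_n(M)$ on the earlier events $E_1(M),\dotsc,E_{n-1}(M)$.
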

\begin{proof}
	
	We are going to use $\PRPraw$ and $\PRFmod$ from Section~\ref{sec:zhandry-result}.
	
	The oracle $\oracle$ will encode the truth tables of functions $f_1,f_2,\dotsc$, where each $f_n$ is a function from $\domainraw_n$ to $\imageraw_n$. \ For each $n$, with probability $0.5$ we draw $f_n$ from $\PRPraw_{\keyraw}$, that is, draw $k \leftarrow \keyraw$ and set $f_n := \PRPraw_{k}$, and with probability $0.5$ we draw $f_n$ from $\PRFmod_{\keymod}$ similarly. \ We set $L$ to be the unary language consisting of all $0^n$ for which $f_n$ is drawn from $\PRPraw_{\keyraw}$.
	
	By Lemma~\ref{lemma:zhandry}, there exists a $\BQP$ machine $M^{\oracle}$ that decides $L$ correctly on all but finite many values of $n$ with probability $1$. \ Since we can simply hardwire the values of $n$ on which $M^{\oracle}$ is incorrect, it follows that $L \in \BQP^{\oracle}$ with probability $1$.
	
	On the other hand, again by Lemma~\ref{lemma:zhandry}, no $\BPP$ machine can distinguish $\PRPraw_{\keyraw}$ and $\PRFmod_{\keymod}$ with a non-negligible advantage. \ So let $M$ be a $\BPP$ machine, and let $E_{n}(M)$ be the event that $M$ decides whether $0^{n} \in L$ correctly. \ We have
	$$
	\Pr_{\oracle} [E_{n}(M)] = \frac{1}{2} + o(1),
	$$
	even conditioning on events $E_{1}(M),\dotsc,E_{n-1}(M)$. \ Therefore, we have $\Pr_{\oracle} [\land_{i=1}^{+\infty} E_n(M)] = 0$, which means that a $\BPP$ machine $M$ decides $L$ with probability $0$. \ Since there are countably many $\BPP$ machines, it follows that $L \notin \BPP^{\oracle}$ with probability $1$. \ Hence $\BPP^{\oracle} \ne \BQP^{\oracle}$ with probability $1$.
	
	Finally, note that each $f_n$ has a polynomial-size circuit, and consequently $\oracle \in \ppoly$.
\end{proof}

\subsection{\ffishing\ and \fsampling}
\label{sec:ffish-fsamp-ppoly}

Finally, we discuss \ffishing\ and \fsampling. \ We are going to show that, assuming the existence of subexponentially-secure one-way functions, \ffishing\ and \fsampling\ are hard even when it is promised that the oracle belongs to $\ppoly$.

\begin{theo}
	Assuming the existence of subexponentially strong one-way functions, there is no polynomial-time classical algorithm that can solve \pffish\ with probability
	$$
	\Rsucc + \Omega(1),
	$$
	even when it is promised that the oracle function belongs to $\ppoly$.
\end{theo}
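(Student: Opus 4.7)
The plan is to reduce distinguishing the pseudorandom function from the assumed classical $\pffish$ algorithm. By Lemma~\ref{lemma:cond-exist}, the subexponentially-secure one-way function hypothesis yields an exponentially-secure PRF $\PRF : \key \times \{0,1\}^n \to \{-1,1\}$. I would take the oracle $\oracle$ to encode, for every input length $n$, the truth table of $\PRF_k$ for an independently drawn key $k \leftarrow \key_n$. Each $\PRF_k$ has a polynomial-size circuit, so $\oracle \in \ppoly$.

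Suppose for contradiction that a polynomial-time classical algorithm $A$ solves $\pffish$ with probability $\Rsucc + \Omega(1)$ on every $\ppoly$ oracle satisfying the promise. I would construct a distinguisher $D$ for the PRF as follows: given oracle access to a function $f : \{0,1\}^n \to \{-1,1\}$, run $A^f$ to obtain a candidate string $z \in \{0,1\}^n$, then brute-force read all $2^n$ values of $f$, compute $\widehat{f}(z) = 2^{-n/2}\sum_{x} f(x)(-1)^{x\cdot z}$ exactly, and output $1$ iff $|\widehat{f}(z)| \ge 1$. The total running time is $\mathrm{poly}(n) + 2^{O(n)} = 2^{O(n)}$, precisely the regime excluded by exponential PRF security.

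To bound the two acceptance probabilities of $D$: when $f = \PRF_k$ for random $k$, the predicate ``$\adv(f) \ge \Qsucc - 1/n$'' is itself decidable in $2^{O(n)}$ time, so combining Lemma~\ref{lemma:con-adv} with exponential security transfers the promise to the PRF world, giving that $f$ satisfies the promise except with negligible probability; conditioned on that, $A$ succeeds by assumption, so $D$ accepts with probability at least $\Rsucc + \Omega(1) - \mathsf{negl}(n)$. When $f \leftarrow \uniform_n$, since $A$ makes only $\mathrm{poly}(n) = o(N/\log N)$ queries, Theorem~\ref{theo:pffish-lower-bound} forces $A$'s success probability on $\pffish$ to be at most $\Rsucc + o(1)$, and hence $D$ accepts with the same bound. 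The gap yields a constant distinguishing advantage in time $2^{O(n)}$, contradicting the exponential security of $\PRF$.

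The main obstacle, and the reason the theorem demands a \emph{subexponentially}-secure one-way function rather than a merely polynomially-secure one, is that verifying a claimed $\pffish$ output requires computing a full Fourier coefficient $\widehat{f}(z)$, which inherently costs $2^n$ queries to $f$. This forces the distinguisher to run in exponential time, so the pseudorandomness of $\PRF$ must hold against exponential-time classical adversaries, which is exactly what Lemma~\ref{lemma:cond-exist} delivers from subexponentially-secure one-way functions. A secondary but minor subtlety is lifting the $\adv(f) \ge \Qsucc - 1/n$ promise from the uniform distribution to the PRF distribution, which is automatic once one notices the promise itself is an exponential-time decidable predicate.
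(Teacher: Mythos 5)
Your proposal is correct and follows essentially the same route as the paper: the paper also packages the argument as two exponential-time distinguishers against the PRF (one transferring the $\adv(f) \ge \Qsucc - 1/n$ promise from Lemma~\ref{lemma:con-adv} to the PRF distribution, one transferring the distributional lower bound of Lemma~\ref{lemma:uniform-lowb}), and then combines them exactly as you do. The only nitpick is that on the truly random side you should cite Lemma~\ref{lemma:uniform-lowb} directly rather than Theorem~\ref{theo:pffish-lower-bound}, since a uniform $f$ need not satisfy the $\pffish$ promise and what your distinguisher's acceptance probability needs is the unconditional distributional bound over $\uniform_n$.
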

\begin{proof}
	By Lemma~\ref{lemma:cond-exist}, we can use our one-way function to construct an exponentially-secure pseudorandom function, $\PRF : \key \times \domain \to \image$. \ Without loss of generality, we assume that $|\image| = 2$ and $|\domain| = 2^n$. \ Then we interpret $\domain$ as the set $\{0,1\}^n$, and $\image$ as the set $\{-1,1\}$.
	
	{\bf A Concentration Inequality.} Now, consider the distribution $\PRF_{\key}$ on functions $\{0,1\}^n \to \{-1,1\}$.
	We claim that
	\begin{equation}
	\Pr_{f \leftarrow \PRF_{\key}}[ \adv(f) > \Qsucc - 1/n ] > 1 - \frac{1}{n} - o(1). \label{eq:con-adv-PRF}
	\end{equation}
	
	To see this: from Lemma~\ref{lemma:con-adv}, we have
	$$
	\Pr_{f \leftarrow \image^{\domain}}[ \adv(f) > \Qsucc - 1/n ] > 1 - \frac{1}{n}.
	$$
	
	Therefore, if $\eqref{eq:con-adv-PRF}$ does not hold, then we can construct a distinguisher between $\PRF_{\key}$ and truly random functions $\domain^{\image}$ by calculating $\adv(f)$ in $2^{O(n)}$ time. \ But this contradicts the assumption that $\PRF$ is exponentially-secure.
	
	{\bf A distributional lower bound.} Next, we show that for every polynomial-time algorithm $A$, we have
	\begin{equation}
	\Pr_{f \leftarrow \PRF_{\key}}[A^{f} \text{ solves \ffish\ correctly}] \le \Rsucc + o(1). \label{eq:distr-lowb-PRF}
	\end{equation}
	This is because when $f$ is a truly random function, from Lemma~\ref{lemma:uniform-lowb}, we have
	$$
	\Pr_{f \leftarrow \image^{\domain}}[A^{f} \text{ solve \ffish\ correctly}] \le \Rsucc + o(1).
	$$
	So if \eqref{eq:distr-lowb-PRF} does not hold, then we can construct a distinguisher between $\PRF_{\key}$ and truly random functions $\domain^{\image}$ by simulating $A^{f}$ to get its output $z$, and then checking whether $z$ is a correct solution to $\ffish$ in $2^{O(n)}$ time. \ This again contradicts our assumption that $\PRF$ is exponentially-secure.
	
	{\bf The lower bound.} Finally, we prove the theorem. Suppose for contradiction that there is such a polynomial-time algorithm $A$. Then when $f \leftarrow \PRF_{\key}$, from \eqref{eq:con-adv-PRF}, with probability $1-1/n-o(1)$, we have that $f$ satisfies the promise of $\pffish$. \ Thus, $A$ solves $\ffish$ when $f \leftarrow \PRF_{\key}$ with probability at least
	$$
	(1-o(1)) \cdot (\Rsucc + \Omega(1)) = \Rsucc + \Omega(1),
	$$
	which contradicts \eqref{eq:distr-lowb-PRF}.
\end{proof}

By a similar reduction, we can show that \fsampling\ is also hard.

\begin{cor}
	Assuming the existence of subexponentially-secure one-way functions, no polynomial-time classical algorithm can solve $\fsamp$ with error
	$$
	\varepsilon <\Qsucc - \Rsucc \approx 0.483,
	$$
	even if it is promised that the oracle function belongs to $\ppoly$.
\end{cor}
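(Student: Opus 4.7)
The plan is to mimic the unrelativized reduction from $\fsamp$ to $\ffish$ used in Theorem~\ref{theo:large-error}, but inside the $\ppoly$-oracle world, so that the previous theorem (classical hardness of $\pffish$ under subexponentially-secure one-way functions) can be invoked as a black box. Concretely, I will assume for contradiction a polynomial-time classical algorithm $A$ that solves $\fsamp$ on $\ppoly$-oracles with error $\varepsilon < \Qsucc - \Rsucc$, and manufacture from $A$ a polynomial-time classical algorithm $A'$ that solves $\pffish$ on $\ppoly$-oracles with success probability at least $\Rsucc + \Omega(1)$, contradicting the theorem just proved.

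The construction of $A'$ is the obvious one: on oracle $f$, run $A^f$ with accuracy parameter $\varepsilon$ to draw a sample $z \in \{0,1\}^n$, and output $z$ as the candidate for $\ffish$. Since $A^f$ samples from some distribution $\mathcal{D}$ with $\|\mathcal{D} - \distr_f\| \le \varepsilon$, and since the ideal quantum sampler solves $\ffish$ with exactly probability $\adv(f)$ (Lemma~\ref{lemma:quantum-algo}), we get
\[
\Pr[A' \text{ solves } \ffish \text{ on } f] \;\ge\; \adv(f) - \varepsilon.
\]
Under the $\pffish$ promise $\adv(f) \ge \Qsucc - 1/n$, this is at least $\Qsucc - \varepsilon - 1/n$. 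By the assumption $\varepsilon < \Qsucc - \Rsucc$, for large enough $n$ this is $\Rsucc + \Omega(1)$, which is exactly the success probability ruled out by the preceding theorem.

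The only subtlety is that the preceding theorem rules out classical algorithms on \emph{every} $\ppoly$-oracle satisfying the $\pffish$ promise, not merely on a fixed distribution; but this is exactly what we need, since $A$ is assumed to work on all $\ppoly$-oracles. Indeed, the concrete hard instance in the proof of the preceding theorem was $f \leftarrow \PRF_{\key}$ with $\PRF$ an exponentially-secure PRF (obtained from Lemma~\ref{lemma:cond-exist}), whose oracle lies in $\ppoly$ for every key and which satisfies the $\pffish$ promise with probability $1 - 1/n - o(1)$ by \eqref{eq:con-adv-PRF}. Feeding this same distribution into $A'$ therefore gives a $\pffish$-solver of success probability $\Rsucc + \Omega(1)$ on $\PRF_{\key}$, directly contradicting \eqref{eq:distr-lowb-PRF} (or rather, its analogue for $\pffish$ established in the proof of the preceding theorem).

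I do not expect a real obstacle here: the sampling-to-search reduction is lossless up to the variation-distance slack $\varepsilon$, and the constants were chosen precisely so that $\Qsucc - \varepsilon > \Rsucc$ leaves an $\Omega(1)$ gap. The only thing to be mildly careful about is bookkeeping the $o(1)$ terms coming from the concentration of $\adv(f)$ over $\PRF_{\key}$ and from the approximation of $\Qsucc$ and $\Rsucc$; none of these absorb the constant gap $\Qsucc - \Rsucc - \varepsilon$, so the contradiction is genuine.
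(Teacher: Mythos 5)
Your proposal is correct and follows essentially the same route as the paper: run the $\fsamp$ sampler and output its sample as a $\ffish$ candidate, losing at most $\varepsilon$ in success probability relative to $\adv(f)$, then contradict the distributional lower bound \eqref{eq:distr-lowb-PRF} over $f \leftarrow \PRF_{\key}$ using the concentration bound \eqref{eq:con-adv-PRF}. The paper phrases this directly in terms of the two displayed inequalities rather than invoking the $\pffish$ theorem as a black box, but the substance is identical.
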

\begin{proof}
	For a function $f$, an exact algorithm for $\fsamp$ can be used to solve $\ffish$ with probability $\adv(f)$. \ Hence, a polynomial-time sampling algorithm $A$ for $\fsamp$ with error at most $\varepsilon$ can solve $\ffish$ with probability at least $\adv(f) - \varepsilon$.
	
	Note that by \eqref{eq:con-adv-PRF}, when $f\leftarrow \PRF_{\key}$, the algorithm $A$ can solve $\ffish$ with probability at least $$(\Qsucc - \frac{1}{n}  -\varepsilon) \cdot (1-o(1)) = \Qsucc - o(1) - \varepsilon.$$
	Therefore, by \eqref{eq:distr-lowb-PRF}, we must have $\varepsilon \ge \Qsucc - \Rsucc$, which completes the proof.
\end{proof}

\section{Complexity Assumptions Are Needed for Quantum Supremacy Relative to Efficiently-Computable Oracles}
\label{sec:needed}

In Section~\ref{sec:ffish-fsamp-ppoly}, we showed that the existence of subexponentially-secure one-way functions implies that \fsampling\ and \ffishing\ are classically hard, even when it is promised that the oracle function belongs to $\ppoly$. \ We also showed that if one-way functions exist, then there exists an oracle $\oracle \in \ppoly$ which separates $\BPP$ from $\BQP$.

It is therefore natural to ask whether we can prove the same statements {\em unconditionally}. \ In this section, we show that at least {\em some} complexity assumptions are needed.

\begin{theo}\label{theo:assump-needed}
	Suppose $\SampBPP = \SampBQP$ and $\NP \subseteq \BPP$. \ Then for every oracle $\oracle \in \ppoly$, we have $\SampBPP^{\oracle} = \SampBQP^{\oracle}$ (and consequently $\BPP^{\oracle} = \BQP^{\oracle}$).
\end{theo}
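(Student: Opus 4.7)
Given a $\SampBQP$ oracle machine $M$ and an input $\langle x, 0^{1/\varepsilon}\rangle$, use the canonical form from Section~\ref{sec:canonical-SampBQP} to write $M^A$ as a classical post-processor applied to the measurement of a polynomial-size quantum circuit $C$ containing $T = \operatorname{poly}(|x|, 1/\varepsilon)$ oracle calls to $A \in \ppoly$. Let $s = s(n) = \operatorname{poly}(n)$ be an upper bound on the circuit size of $A$. The plan is to produce, in $\SampBPP^A$, an output distribution within variation distance $\varepsilon$ of $M^A$'s.

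\textbf{Iteratively learn a surrogate circuit.} I will produce a single explicit circuit $f \in \ppoly$ of size $s$ that agrees with $A$ on the inputs $C$ actually feeds to its oracle gates, and then simulate $C^f$ classically. Starting with an arbitrary $f_0$, for each $t = 1, \ldots, T$: because $f_{t-1}$ is a known polynomial-size circuit, hardwiring it into $C$ turns the task of drawing an input to the $t$-th oracle gate of $C^{f_{t-1}}$ into a plain $\SampBQP$ computation with no oracle; call its output distribution $\mathcal{D}_t^{f_{t-1}}$. The hypothesis $\SampBPP = \SampBQP$ lets us draw $m = \operatorname{poly}(n, T, 1/\varepsilon)$ classical samples from $\mathcal{D}_t^{f_{t-1}}$. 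Label each using our oracle access to $A$, append to a growing set $S_t$, and use $\NP \subseteq \BPP$ to find a size-$s$ circuit $f_t$ consistent with $S_t$ (one exists, witnessed by the true circuit for $A$). Set $f := f_T$. The final circuit $C^f$ is oracle-free and of polynomial size, so a last appeal to $\SampBPP = \SampBQP$ simulates it in $\SampBPP$ to produce the output.

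\textbf{Correctness via Occam plus hybrids.} The class of size-$s$ Boolean circuits has only $2^{O(s \log s)}$ members, so a union bound over consistent hypotheses (Occam's razor) with $m = \operatorname{poly}(n, T, 1/\varepsilon)$ examples per round ensures, with probability $1 - o(1)$, that for every $t$, $\Pr_{z \sim \mathcal{D}_t^{f_{t-1}}}[f(z) \neq A(z)] \leq \varepsilon^4/T^2$. A BBBV-style hybrid argument then controls the final state: swapping $A$ for $f$ at a single oracle gate whose query register disagrees with probability $\delta$ changes the state by $O(\sqrt{\delta})$ in trace distance, so summed across the $T$ gates the two final states differ by $O(T \cdot \varepsilon^2/T) = O(\varepsilon^2)$, and Corollary~\ref{cor:close-dist} then gives $O(\varepsilon)$ in variation distance.

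\textbf{Main obstacle.} The subtle point is that PAC-learning controls the error of $f$ under the surrogate distributions $\mathcal{D}_t^{f_{t-1}}$ we actually sample from, whereas the hybrid argument needs the error under the true distributions $\mathcal{D}_t^A$, which we never sample. The fix is an induction threaded through the hybrid analysis: inductively, the states of $C^{f_{t-1}}$ and $C^A$ after the first $t-1$ layers are already $O(\varepsilon\, t / T)$-close, which by linearity forces $\mathcal{D}_t^{f_{t-1}}$ and $\mathcal{D}_t^A$ to be close in variation distance, so the PAC bound transfers from one to the other with only a small additional loss. Orchestrating the per-round sample sizes, the internal accuracy parameters of each $\SampBPP = \SampBQP$ subroutine, and the per-step error budget so that all contributions telescope to at most $\varepsilon$ total is the chief bookkeeping task; the two hypotheses $\SampBPP = \SampBQP$ and $\NP \subseteq \BPP$ are exactly strong enough, and no further ideas are required.
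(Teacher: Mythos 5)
Your construction is the same as the paper's: replace the oracle gates one at a time, use $\SampBPP=\SampBQP$ to sample the query distribution of the current gate, use $\NP\subseteq\BPP$ plus an Occam bound over $\mathsf{SIZE}(q(n))$ to find a consistent hypothesis, telescope the per-gate state errors, and finish with Corollary~\ref{cor:close-dist}. The parameters ($\delta\approx\varepsilon^4/T^2$ per gate, $O(\varepsilon^2)$ total state error, $O(\varepsilon)$ in variation distance) also match.

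The one substantive problem is your ``main obstacle'' paragraph. First, the obstacle is an artifact of your bookkeeping rather than a real difficulty: if at step $t$ you learn a circuit $g_t$ under the query distribution of the $t$-th oracle gate of the \emph{partially replaced} circuit (gates $1,\dots,t-1$ already running $g_1,\dots,g_{t-1}$) and then freeze $g_t$ forever, then the hybrid difference at gate $t$ is $\|(U_{A}-U_{g_t})\otimes I\,\spz{v_t}\|$ where $\spz{v_t}$ is exactly that partially replaced state --- so the disagreement probability you need is measured under exactly the distribution you sampled and learned on. No transfer to the ``true'' distributions $\distr_t^{A}$ is ever required; this is how the paper's Lemma~\ref{lemma:simulation-2} (via Lemma~\ref{lemma:simulation}) is organized. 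Your version creates a genuine mismatch because you keep revising to a single final $f_T$ used at every gate while having trained round $t$ on $\distr_t^{f_{t-1}}$. Second, the fix you propose would not survive the arithmetic: if the intermediate states are $O(\varepsilon t/T)$-close, Corollary~\ref{cor:close-dist} only gives variation distance $O(\sqrt{\varepsilon t/T})$ between the two query distributions, so the transferred disagreement probability is $\Omega(\sqrt{\varepsilon/T})$ rather than $\varepsilon^4/T^2$; feeding that back into the BBBV-style bound gives a per-gate error of order $(\varepsilon/T)^{1/4}$, and the sum over $T$ gates does not telescope to $O(\varepsilon)$. So as written, the correctness argument breaks; it is repaired by reverting to the per-gate, frozen-hypothesis setup, after which the rest of your proof goes through and coincides with the paper's.
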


Much like in the proof of Theorem~\ref{theo:oralce-result-distrO}, we need to show that under the stated assumptions, every $\SampBQP$ algorithm $M$ can be simulated by a $\SampBPP$ algorithm $A$.

\begin{lemma}\label{lemma:simulation-2}
	Suppose $\SampBPP = \SampBQP$ and $\NP \subseteq \BPP$. \ Then for any polynomial $q(n)$ and any $\SampBQP$ oracle algorithm $M$, there is a $\SampBPP$ oracle algorithm $A$ such that:
	
	For every $\oracle \in \mathsf{SIZE}(q(n)),\footnote{A language is in $\mathsf{SIZE}(q(n))$ if it can be computed by circuits of size $q(n)$.}$ let $\distr^{M}_{x,\varepsilon}$ and $\distr^{A}_{x,\varepsilon}$ be the distributions output by $M^{\oracle}$ and $A^{\oracle}$ respectively on input $\langle x, 0^{1/\varepsilon} \rangle$. \ Then
	$$
	\| \distr^{M}_{x,\varepsilon} - \distr^{A}_{x,\varepsilon} \| \le \varepsilon.
	$$
\end{lemma}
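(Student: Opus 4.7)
The plan is to build $A$ as an iterative hybrid simulation of $M$. Letting $C$ be the canonical quantum circuit produced by $M$ on input $\langle x, 0^{1/\varepsilon}\rangle$, with $T \le \operatorname{poly}(n, 1/\varepsilon)$ oracle gates, $A$ will maintain a running surrogate $f \in \mathsf{SIZE}(q(n))$ that is substituted for $\oracle$ inside the simulation, together with a growing sample set $S$ of labeled queries. For each $t = 1, \ldots, T$, the algorithm does three things: (i)~using $\SampBPP = \SampBQP$, classically draw $s$ samples from the distribution on query-register values obtained by simulating $C$ up to just before the $t$-th oracle gate, with the current surrogate $f$ answering queries $1, \ldots, t-1$ (this really is a $\SampBQP$ task, because the known polynomial-size circuit for $f$ can be folded into $C$, leaving an oracle-free polynomial-size quantum circuit); (ii)~query the real oracle $\oracle$ on these $s$ strings and append the labeled pairs to $S$; (iii)~invoke $\NP \subseteq \BPP$, with the standard search-to-decision reduction, to produce a new $f \in \mathsf{SIZE}(q(n))$ consistent with all of $S$ (such an $f$ exists because $\oracle \in \mathsf{SIZE}(q(n))$ is itself a witness). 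After the $T$ iterations, sample the final output from $C$ under the final surrogate, again via $\SampBPP = \SampBQP$.

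For correctness, introduce hybrid executions $H_0, H_1, \ldots, H_T$, where $H_s$ uses the surrogates $f_1, \ldots, f_s$ for the first $s$ queries and the true oracle $\oracle$ for the remaining $T-s$ queries; so $H_0$ is exactly $M^{\oracle}$ and $H_T$ is exactly $A^{\oracle}$. The only difference between $H_{s-1}$ and $H_s$ lies in how the $s$-th query is answered, and the identical calculation as in the proof of Lemma~\ref{lemma:simulation} (equation \eqref{eq:error-term}) gives the state-vector deviation
$$\|H_{s-1} - H_s\| \;=\; 2\sqrt{\Pr_{y \sim Q_s}[\oracle(y) \ne f_s(y)]},$$
where $Q_s$ is the query-register distribution of the state reached by running $C$ under the surrogates $f_1, \ldots, f_{s-1}$. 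Since $|\mathsf{SIZE}(q(n))| \le 2^{O(q(n) \log q(n))}$, the standard Occam / PAC-learning bound says that with $s = \operatorname{poly}(q(n), T, 1/\varepsilon)$ fresh samples from $Q_s$, any surrogate consistent with them agrees with $\oracle$ on $Q_s$ to within $\delta := \Theta(\varepsilon^4 / T^2)$, except with failure probability $\ll 1/T$. Union-bounding over $s$, triangle-inequality over the hybrids, and finally Corollary~\ref{cor:close-dist} convert the total state deviation $2T\sqrt{\delta} \le \varepsilon^2/2$ into $\|\distr^M_{x,\varepsilon} - \distr^A_{x,\varepsilon}\| \le \varepsilon$ as required.

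The main obstacle is the apparent chicken-and-egg issue: the Occam bound only controls the disagreement of $f_s$ with $\oracle$ on the surrogate's query distribution $Q_s$, not on the distribution the quantum algorithm would see under the true oracle. The hybrid construction is precisely what neutralizes this; when we compare $H_{s-1}$ with $H_s$, both hybrids have already used $f_1, \ldots, f_{s-1}$ for the first $s-1$ queries, so the state just before the $s$-th query is exactly the surrogate state from which we drew the samples, and the learning guarantee applies verbatim. A secondary subtlety is that $A$ must not merely decide existence of a consistent $\mathsf{SIZE}(q(n))$ circuit but actually reconstruct one, which is handled by the usual self-reduction for $\NP$ lifted through $\NP \subseteq \BPP$; and, as in Lemma~\ref{lemma:simulation}, all intermediate bookkeeping is polynomial-time because $\SampBPP = \SampBQP$ already gives us a classical sampler to invoke at each step.
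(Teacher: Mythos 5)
Your proposal is correct and follows essentially the same route as the paper's proof: replace the oracle gates one at a time, sample the query-register distribution of the surrogate state via $\SampBPP=\SampBQP$, use $\NP\subseteq\BPP$ (with search-to-decision) to find a consistent circuit in $\mathsf{SIZE}(q(n))$, bound the per-gate state deviation by the Occam/PAC guarantee on exactly that surrogate distribution, telescope over the hybrids, and finish with Corollary~\ref{cor:close-dist}. The only differences are cosmetic (accumulating one growing sample set versus learning a fresh surrogate per gate), and your explicit resolution of the ``chicken-and-egg'' point is precisely the observation implicit in the paper's induction on $\|\spz{v_t}-\spz{u_t}\|$.
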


Before proving Lemma~\ref{lemma:simulation-2}, we show that it implies Theorem~\ref{theo:assump-needed}.

\begin{proofof}{Theorem~\ref{theo:assump-needed}}
	Let $\oracle \in \ppoly$ be an oracle. \ Then there exists a polynomial $q(n)$ such that $\oracle \in \mathsf{SIZE}(q(n))$.
	
	Let $\mathcal{S}$ be a sampling problem in $\SampBQP^{\oracle}$. \ This means that there is a $\SampBQP$ oracle algorithm $M$, such that for all $x\in \{0,1\}^{*}$ and $\varepsilon$, we have $\|\distr^{M}_{x,\varepsilon} - \mathcal{S}_x \| \le \varepsilon$. \ Let $A_M$ be the corresponding $\SampBPP$ algorithm whose existence we've assumed, and consider the following algorithm $A'$: given input $\langle x,0^{1/\varepsilon} \rangle$, run $A_M$ on input $\langle x, 0^{2/\varepsilon} \rangle$ to get a sample from $\distr^{A_M}_{x,\varepsilon/2}$.
	
	Then we have
	\begin{align*}
	\| \distr^{A'}_{x,\varepsilon} - \mathcal{S}_x \|  &=  \| \distr^{A_M}_{x,\varepsilon/2} - \mathcal{S}_x \| \\
	&\le \|\distr^{M}_{x,\varepsilon/2} - \distr^{A_M}_{x,\varepsilon/2} \| + \|\distr^{M}_{x,\varepsilon/2} - \mathcal{S}_x \| \le 2 \cdot \frac{\varepsilon}{2} \le \varepsilon.
	\end{align*}
	This means that $A'$ solves $\mathcal{S}$ and $\mathcal{S} \in \SampBPP^{\oracle}$. \ Hence $\SampBQP^{\oracle} \subseteq \SampBPP^{\oracle}$.
\end{proofof}

Now we prove Lemma~\ref{lemma:simulation-2}. \ The simulation procedure is similar to that in Lemma~\ref{lemma:simulation}: that is, we replace each oracle gate, one by one, by a known function while minimizing the introduced error. \ The difference is that, instead of the brute-force method as in Lemma~\ref{lemma:simulation}, here we use a more sophisticated PAC learning subroutine to find an ``approximator'' to replace the oracle gates.

\begin{proofof}{Lemma~\ref{lemma:simulation-2}}
	Let $\oracle \in \mathsf{SIZE}(q(n))$; we let $f_n = \oracle_n$ for simplicity.
	
	Recall that there exists a fixed polynomial $p$, such that given input $\langle x,0^{1/\varepsilon} \rangle$, the machine $M$ first constructs a quantum circuit $C$ with $N=p(|x|,1/\varepsilon)$ qubits and $N$ gates classically ($C$ can contain $\oracle$ gates). \ Without loss of generality, we can assume for each $n$, all $f_n$ gates act only on the first $n$ qubits.
	
	For a function $f : \{0,1\}^{k} \to \{0,1\}$, recall that $U_{f}$ denotes the unitary operator mapping $\spz{i}$ to $(-1)^{f(i)} \spz{i}$ for $i \in \{0,1\}^{k}$.
	
	Suppose there are $T$ $\oracle$-gates in total, and the $i$-th $\oracle$-gate is an $f_{n_i}$ gate. \ Then the unitary operator $U$ applied by the circuit $C$ can be decomposed as
	$$
	U = U_{T+1} (U_{f_{n_T}} \otimes I_{N-n_T}) \cdots (U_{f_{n_2}} \otimes I_{N-n_2}) U_2 (U_{f_{n_1}} \otimes I_{N-n_1}) U_1,
	$$
	\noindent where the $U_i$'s are the unitary operators corresponding to the sub-circuits which don't contain an $\oracle$-gate.
	
	Again, the algorithm proceeds by replacing each $\oracle$-gate by a much simpler gate one by one, without affecting the resulting quantum state too much, and then simulating the final circuit to get a sample to output.
	
	{\bf Replacing the $t$-th $\oracle$-gate.} Suppose we have already replaced the first $t-1$ $\oracle$-gates: that is, for each $i \in [t-1]$, we replaced the $f_{n_i}$ gate (the $i$-th $\oracle$-gate) with a $g_i$ gate. \ Now we are going to replace the $t$-th $\oracle$-gate.
	
	Let
	$$
	\spz{v} = U_{t} (U_{g_{t-1}} \otimes I_{N-n_{t-1}}) \cdots (U_{g_2} \otimes I_{N-n_2}) U_2 (U_{g_1} \otimes I_{N-n_1}) U_1 \spz{0}^{\otimes N},
	$$
	\noindent which is the quantum state right before the $t$-th $\oracle$ gate in the circuit after the replacement.
	
	For brevity, we use $f$ to denote the function $f_{n_t}$, and we drop the subscript $t$ of $n_t$ when it is clear from context.
	
	{\bf Analysis of incurred error.} The $t$-th $\oracle$-gate is an $f$ gate. \ If we replace it by a $g$ gate, then the deviation caused to the quantum states is
	$$
	\| U_f \otimes I_{N-n} \spz{v} - U_g \otimes I_{N-n} \spz{v}\| =
	\| (U_f - U_g) \otimes I_{N-n} \spz{v} \|.
	$$
	
	Let $H$ be the Hilbert space corresponding to the last $N-n$ qubits, and let $\rho = \Tr_{H}[\outpt{v}]$. \ Then proceeding exactly as in Lemma~\ref{lemma:simulation}, we have
	\begin{equation}\label{eq:error-333}
	\|((U_f-U_g) \otimes I_{N-n}) \spz{v}\|^2 = 4 \cdot \Pr_{i \sim Q} [f(i) \ne g(i)],
	\end{equation}
	\noindent where $Q$ is the probability on $\{0,1\}^{n}$ defined by $Q(i) = \rpz{i}\rho \spz{i}$, and $[f(i) \ne g(i)]$ is the indicator function that takes value $1$ when $f(i) \ne g(i)$ and $0$ otherwise.
	
	\newcommand{\poly}{\operatorname{poly}}
	
	{\bf Upper bounding the deviation~\eqref{eq:error-333} vis PAC learning.} Now, we want to replace $f$ by another function $g$, so that the deviation term \eqref{eq:error-333} is minimized.
	
	By a standard result of PAC learning (cf. the book of Vapnik \cite{vapnik1998statistical}), for parameters $\varepsilon_1$ and $\delta_1$, we can take a $\poly(n,\varepsilon_1^{-1},\ln \delta_1^{-1})$ number of i.i.d.\ samples from $Q$, and then find a function $g$ in $\mathsf{SIZE}(q(n))$ which agrees with $f$ on those samples. \ Then with probability at least $1-\delta_1$, we will have
	$$
	\Pr_{i \sim Q} [f(i) \ne g(i)] \le \varepsilon_1.
	$$
	
	The choice of $\varepsilon_1$ and $\delta_1$ will be made later. \ In any case, with probability at least $1-\delta_1$, we have
	$$
	\|(U_f-U_g) \otimes I_{N-n} \spz{v}\|^2 \le 4\varepsilon_1,
	$$
	\noindent which in turn implies
	$$
	\|(U_f-U_g) \otimes I_{N-n} \spz{v}\| \le 2 \cdot \sqrt{\varepsilon_1}.
	$$
	
	\newcommand{\Cfinal}{C^{\mathsf{final}}}
	
	{\bf Analysis of the final circuit $\Cfinal$.} Suppose that at the end, for each $t \in [T]$, our algorithm has replaced the $t$-th $\oracle$-gate with a $g_t$ gate, where $g_t$ is a function from $\{0,1\}^{n_t}$ to $\{0,1\}$. \ Let $\Cfinal$ be the circuit after the replacement. \ Also, let
	$$
	V = U_{T+1} (U_{g_T} \otimes I_{N-n_T}) \cdots (U_{g_2} \otimes I_{N-n_2}) U_2 (U_{g_1} \otimes I_{N-n_1}) U_1
	$$
	\noindent be the unitary operator corresponding to $\Cfinal$.
	
	Now we set $\delta_1 = \frac{\varepsilon}{2T}$, and $\varepsilon_1 = \frac{\varepsilon^{4}}{256 T^2}$. \ Then by a union bound over all rounds, and following exactly the same analysis as in Lemma~\ref{lemma:simulation}, with probability at least $1-T \cdot \delta_1 = 1 - \varepsilon/2$, we have
	$$
	\| U \spz{0}^{\otimes N} - V \spz{0}^{\otimes N}\| \le 2 T \cdot \sqrt{\varepsilon_1} = \frac{\varepsilon^2}{8}.
	$$
	
	Our classical algorithm $A$ then simulates stages 2 and 3 of the $\SampBQP$ algorithm $M$ straightforwardly. \ It first takes a sample $z$ by measuring $V \spz{0}^{\otimes N}$ in the computational basis, and then outputs $A^{\mathsf{output}}(z)$ as its sample, where $A^{\mathsf{output}}$ is the classical algorithm used by $M$ in stage 3.
	
	By Corollary~\ref{cor:close-dist}, with probability at least $1-\varepsilon/2$, the final distribution $\distr$ on which $A$ takes samples satisfies
	
	$$
	\|\distr - \distr^{M}_{x,\varepsilon} \| \le \sqrt{2 \cdot \frac{\varepsilon^{2}}{8}} = \frac{\varepsilon}{2}.
	$$
	
	Hence, the outputted distribution $\distr^{A}_{x,\varepsilon}$ satisfies
	$$
	\|\distr^{A}_{x,\varepsilon} - \distr^{M}_{x,\varepsilon} \| \le  \varepsilon.
	$$
	
	{\bf Showing that $A$ is a $\SampBPP$ algorithm.} We still have to show that $A$ is a $\SampBPP$ oracle algorithm. \ From the previous discussion, $A$ needs to do the following non-trivial computations.
	
	\begin{itemize}
		\item Taking a polynomial number of samples from $Q$. This task is in $\SampBQP$ (no oracle involved) by definition. \ By our assumption $\SampBQP = \SampBPP$, it can be done in $\SampBPP$.
		
		\item Finding a $g \in \mathsf{SIZE}(q(n))$ such that $g$ agrees with $f$ on all the samples. \ This can be done in $\NP$, so by our assumption $\NP \subseteq \BPP$, it can be done in $\BPP$.
		
		\item Taking a sample by measuring $V \spz{0}^{\otimes N}$. \ Again, this task is in $\SampBQP$, and hence can be done in $\SampBPP$ by our assumption.
	\end{itemize}
	
	Therefore, $A$ is a $\SampBPP$ oracle algorithm.
	
\end{proofof}

\section{Open Problems}
There are many exciting open problems left by this paper; here we mention just a few.

\begin{enumerate}
	\item[(1)] Is QUATH (our assumption about the hardness of guessing whether
	$|\langle 0|C|0\rangle|^2$ is greater or less than the median) true or
	false?
	
    \item[(2)] Is Conjecture \ref{conj:adv-large} true?  That is, does a random quantum circuit on $n$ qubits sample an unbalanced distribution over $n$-bit strings with $1-1/\exp(n)$ probability?

	\item[(3)] We showed that there exists an oracle relative to which $\SampBPP = \SampBQP$ but $\PH$ is infinite. \ Can we nevertheless show that $\SampBPP=\SampBQP$ would collapse $\PH$ in the unrelativized world? \ (An affirmative answer would, of course, follow from Aaronson and Arkhipov's Permanent-of-Gaussians Conjecture \cite{aark}, as mentioned in Section \ref{CHALLENGES}.)
	
	\item[(4)] Is our classical algorithm to simulate a quantum circuit with $n$
	qubits and $m$ gates optimal?  \ Or could we reduce the complexity, say
	from $m^{O(n)}$ to $2^{O(n)} \cdot m^{O(1)}$, while keeping the space usage
	polynomial? \ Does it matter if we only want to sample from the output
	distribution, rather than actually calculating the probabilities? \ What about if we only want
    to guess an amplitude with small bias, as would be needed to refute QUATH?
	
	\item[(5)] For random quantum circuit sampling, we proved a conditional
	hardness result that talks directly about the observed outputs of a
	sampling process, rather than about the unknown distribution that's
	sampled from. \ Can we get analogous hardness results for the
	$\BosonSampling$ or $\mathsf{IQP}$ models, under some plausible hardness conjecture? \
	Note that the argument from Section \ref{sec:proposal} doesn't work directly for
	$\BosonSampling$ or $\mathsf{IQP}$, for the simple reason that in those models, the
	advantage over chance in guessing a given amplitude is at least
	$1/\exp(n)$, rather than $1/\exp(m)$ for some $m \gg n$ as is the case for random
	circuits.
	
	\item[(6)] We proved a lower bound of $\Omega(N)$ on the classical query complexity of \fsampling, for a rather small error $\varepsilon = \frac{1}{40000}$. \ The error constant does matter for sampling problems, since there is no efficient way to reduce the error in general. \ So can we discover the {\em exact threshold} $\varepsilon$ for an $\Omega(N)$ lower bound? \ That is, find the constant $\varepsilon$ such that there is an $o(N)$ query classical algorithm solving \fsampling\ with error $\varepsilon$, but any classical algorithm with error $<\varepsilon$ needs $\Omega(N)$ queries?
	
	\item[(7)] In Section~\ref{sec:oracle-sepa-ppoly}, we showed that there is an oracle $\oracle$ in $\ppoly$ separating $\BPP$ from $\BQP$, assuming that one-way functions exist. \ Is it possible to weaken the assumption to, say, $\NP \not\subset \BPP$?
\end{enumerate}

\section*{Acknowledgments}
We thank Shalev Ben-David, Sergio Boixo, Yuzhou Gu, Greg Kuperberg, John Martinis, Ashley Montanaro, John Preskill, Vadim Smelyansky, Ronald de Wolf, and Mark Zhandry for helpful discussions about the subject of this paper.

\bibliographystyle{alpha}
\bibliography{team}

\newcommand{\etalchar}[1]{$^{#1}$}
\begin{thebibliography}{PGHAG15}

\bibitem[A{\etalchar{+}}]{aar:zoo}
S.~Aaronson et~al.
\newblock {The Complexity Zoo}.
\newblock www.complexityzoo.com.

\bibitem[AA13]{aark}
S.~Aaronson and A.~Arkhipov.
\newblock The computational complexity of linear optics.
\newblock {\em Theory of Computing}, 9(4):143--252, 2013.
\newblock Earlier version in Proc. ACM STOC'2011. ECCC TR10-170,
  arXiv:1011.3245.

\bibitem[AA15]{aa:forrelation}
Scott Aaronson and Andris Ambainis.
\newblock Forrelation: A problem that optimally separates quantum from
  classical computing.
\newblock In {\em Proceedings of the Forty-Seventh Annual ACM on Symposium on
  Theory of Computing}, pages 307--316. ACM, 2015.

\bibitem[Aar10]{aar:ph}
S.~Aaronson.
\newblock {BQP} and the polynomial hierarchy.
\newblock In {\em Proc. ACM STOC}, 2010.
\newblock arXiv:0910.4698.

\bibitem[Aar14]{aaronson2014equivalence}
Scott Aaronson.
\newblock The equivalence of sampling and searching.
\newblock {\em Theory of Computing Systems}, 55(2):281--298, 2014.

\bibitem[Aar15]{aar:dwave}
S.~Aaronson.
\newblock {G}oogle, {D}-wave, and the case of the factor-10\^{}8 speedup for
  {WHAT}?, 2015.
\newblock http://www.scottaaronson.com/blog/?p=2555.

\bibitem[AB09]{arora2009computational}
Sanjeev Arora and Boaz Barak.
\newblock {\em Computational complexity: a modern approach}.
\newblock Cambridge University Press, 2009.

\bibitem[ABDK15]{aaronson2015separations}
Scott Aaronson, Shalev Ben-David, and Robin Kothari.
\newblock Separations in query complexity using cheat sheets.
\newblock {\em arXiv preprint arXiv:1511.01937}, 2015.

\bibitem[ABKM16]{abkm}
Scott Aaronson, Adam Bouland, Greg Kuperberg, and Saeed Mehraban.
\newblock The computational complexity of ball permutations.
\newblock {\em arXiv preprint arXiv:1610.06646}, 2016.

\bibitem[ABO97]{ab}
D.~Aharonov and M.~Ben-Or.
\newblock Fault-tolerant quantum computation with constant error.
\newblock In {\em Proc. ACM STOC}, pages 176--188, 1997.
\newblock quant-ph/9906129.

\bibitem[ABOE08]{abe}
D.~Aharonov, M.~Ben-Or, and E.~Eban.
\newblock Interactive proofs for quantum computations.
\newblock arXiv:0810.5375, 2008.

\bibitem[Amb05]{ambainis2005polynomial}
Andris Ambainis.
\newblock Polynomial degree and lower bounds in quantum complexity: Collision
  and element distinctness with small range.
\newblock {\em Theory of Computing}, 1(1):37--46, 2005.

\bibitem[AS04]{as}
S.~Aaronson and Y.~Shi.
\newblock Quantum lower bounds for the collision and the element distinctness
  problems.
\newblock {\em J. of the ACM}, 51(4):595--605, 2004.

\bibitem[AW09]{awig}
S.~Aaronson and A.~Wigderson.
\newblock Algebrization: a new barrier in complexity theory.
\newblock {\em ACM Trans. on Computation Theory}, 1(1), 2009.
\newblock Earlier version in Proc. ACM STOC'2008.

\bibitem[BBBV97]{bbbv}
C.~Bennett, E.~Bernstein, G.~Brassard, and U.~Vazirani.
\newblock Strengths and weaknesses of quantum computing.
\newblock {\em SIAM J. Comput.}, 26(5):1510--1523, 1997.
\newblock quant-ph/9701001.

\bibitem[BFK09]{bfk}
A.~Broadbent, J.~Fitzsimons, and E.~Kashefi.
\newblock Universal blind quantum computation.
\newblock In {\em Proc. IEEE FOCS}, 2009.
\newblock arXiv:0807.4154.

\bibitem[BG16]{bravyigosset}
Sergey Bravyi and David Gosset.
\newblock Improved classical simulation of quantum circuits dominated by
  clifford gates.
\newblock {\em arXiv preprint arXiv:1601.07601}, 2016.

\bibitem[BGS75]{baker1975relativizations}
Theodore Baker, John Gill, and Robert Solovay.
\newblock Relativizations of the {P}=?{NP} question.
\newblock {\em SIAM Journal on computing}, 4(4):431--442, 1975.

\bibitem[BHH16]{brandaoharrow}
Fernando~GSL Brand{\~a}o, Aram~W Harrow, and Micha{\l} Horodecki.
\newblock Local random quantum circuits are approximate polynomial-designs.
\newblock {\em Communications in Mathematical Physics}, 346(2):397--434, 2016.

\bibitem[BIS{\etalchar{+}}16]{martinis:supremacy}
Sergio Boixo, Sergei~V Isakov, Vadim~N Smelyanskiy, Ryan Babbush, Nan Ding,
  Zhang Jiang, John~M Martinis, and Hartmut Neven.
\newblock Characterizing quantum supremacy in near-term devices.
\newblock {\em arXiv preprint arXiv:1608.00263}, 2016.

\bibitem[BJS10]{bjs}
M.~Bremner, R.~Jozsa, and D.~Shepherd.
\newblock Classical simulation of commuting quantum computations implies
  collapse of the polynomial hierarchy.
\newblock {\em Proc. Roy. Soc. London}, A467(2126):459--472, 2010.
\newblock arXiv:1005.1407.

\bibitem[BL95]{boneh1995quantum}
Dan Boneh and Richard~J Lipton.
\newblock Quantum cryptanalysis of hidden linear functions.
\newblock In {\em Annual International Cryptology Conference}, pages 424--437.
  Springer, 1995.

\bibitem[BMS15]{bms}
Michael~J Bremner, Ashley Montanaro, and Dan~J Shepherd.
\newblock Average-case complexity versus approximate simulation of commuting
  quantum computations.
\newblock {\em arXiv preprint arXiv:1504.07999}, 2015.

\bibitem[BMS16]{bms2}
Michael~J Bremner, Ashley Montanaro, and Dan~J Shepherd.
\newblock Achieving quantum supremacy with sparse and noisy commuting quantum
  computations.
\newblock {\em arXiv preprint arXiv:1610.01808}, 2016.

\bibitem[BV97]{bv}
E.~Bernstein and U.~Vazirani.
\newblock Quantum complexity theory.
\newblock {\em SIAM J. Comput.}, 26(5):1411--1473, 1997.
\newblock Earlier version in Proc. ACM STOC'1993.

\bibitem[Che16]{chen2016note}
Lijie Chen.
\newblock A note on oracle separations for {BQP}.
\newblock {\em arXiv preprint arXiv:1605.00619}, 2016.

\bibitem[CHS{\etalchar{+}}15]{carolan}
Jacques Carolan, Christopher Harrold, Chris Sparrow, Enrique
  Mart{\'\i}n-L{\'o}pez, Nicholas~J Russell, Joshua~W Silverstone, Peter~J
  Shadbolt, Nobuyuki Matsuda, Manabu Oguma, Mikitaka Itoh, Graham~D Marshall,
  Mark~G Thompson, Jonathan C~F Matthews, Toshikazu Hashimoto, Jeremy~L
  O'Brien, and Anthony Laing.
\newblock Universal linear optics.
\newblock {\em Science}, 349(6249):711--716, 2015.

\bibitem[FFKL03]{fenner2003oracle}
Stephen Fenner, Lance Fortnow, Stuart~A Kurtz, and Lide Li.
\newblock An oracle builder’s toolkit.
\newblock {\em Information and Computation}, 182(2):95--136, 2003.

\bibitem[FH16]{farhiharrow}
Edward Farhi and Aram~W Harrow.
\newblock Quantum supremacy through the quantum approximate optimization
  algorithm.
\newblock {\em arXiv preprint arXiv:1602.07674}, 2016.

\bibitem[FR99]{fr}
L.~Fortnow and J.~Rogers.
\newblock Complexity limitations on quantum computation.
\newblock {\em J. Comput. Sys. Sci.}, 59(2):240--252, 1999.
\newblock cs.CC/9811023.

\bibitem[Fuj16]{fujii}
Keisuke Fujii.
\newblock Noise threshold of quantum supremacy.
\newblock {\em arXiv preprint arXiv:1610.03632}, 2016.

\bibitem[GGM86]{ggm}
O.~Goldreich, S.~Goldwasser, and S.~Micali.
\newblock How to construct random functions.
\newblock {\em J. of the ACM}, 33(4):792--807, 1986.
\newblock Earlier version in Proc. IEEE FOCS'1984, pp. 464-479.

\bibitem[GL89]{goldreich1989hard}
Oded Goldreich and Leonid~A Levin.
\newblock A hard-core predicate for all one-way functions.
\newblock In {\em Proceedings of the twenty-first annual ACM symposium on
  Theory of computing}, pages 25--32. ACM, 1989.

\bibitem[Has86]{hastad1986almost}
Johan Hastad.
\newblock Almost optimal lower bounds for small depth circuits.
\newblock In {\em Proceedings of the eighteenth annual ACM symposium on Theory
  of computing}, pages 6--20. ACM, 1986.

\bibitem[HILL99]{hill}
J.~H{\aa}stad, R.~Impagliazzo, L.~A. Levin, and M.~Luby.
\newblock A pseudorandom generator from any one-way function.
\newblock {\em SIAM J. Comput.}, 28(4):1364--1396, 1999.

\bibitem[IW97]{iw}
R.~Impagliazzo and A.~Wigderson.
\newblock {P=BPP} unless {E} has subexponential circuits: derandomizing the
  {XOR} {L}emma.
\newblock In {\em Proc. ACM STOC}, pages 220--229, 1997.

\bibitem[JVdN14]{jvn}
Richard Jozsa and Marrten Van~den Nest.
\newblock Classical simulation complexity of extended clifford circuits.
\newblock {\em Quantum Information \& Computation}, 14(7\&8):633--648, 2014.

\bibitem[Kal11]{kalai}
Gil Kalai.
\newblock How quantum computers fail: quantum codes, correlations in physical
  systems, and noise accumulation.
\newblock {\em arXiv preprint arXiv:1106.0485}, 2011.

\bibitem[KBF{\etalchar{+}}15]{kelly}
J~Kelly, R~Barends, AG~Fowler, A~Megrant, E~Jeffrey, TC~White, D~Sank,
  JY~Mutus, B~Campbell, Yu~Chen, et~al.
\newblock State preservation by repetitive error detection in a superconducting
  quantum circuit.
\newblock {\em Nature}, 519(7541):66--69, 2015.

\bibitem[Kut05]{kutin2005quantum}
Samuel Kutin.
\newblock Quantum lower bound for the collision problem with small range.
\newblock {\em Theory of Computing}, 1(1):29--36, 2005.

\bibitem[Lev03]{levin}
Leonid~A Levin.
\newblock The tale of one-way functions.
\newblock {\em Problems of Information Transmission}, 39(1):92--103, 2003.

\bibitem[LR88]{luby1988construct}
Michael Luby and Charles Rackoff.
\newblock How to construct pseudorandom permutations from pseudorandom
  functions.
\newblock {\em SIAM Journal on Computing}, 17(2):373--386, 1988.

\bibitem[MFF14]{mff}
Tomoyuki Morimae, Keisuke Fujii, and Joseph~F Fitzsimons.
\newblock Hardness of classically simulating the one-clean-qubit model.
\newblock {\em Physical review letters}, 112(13):130502, 2014.

\bibitem[MS08]{markovshi}
Igor~L Markov and Yaoyun Shi.
\newblock Simulating quantum computation by contracting tensor networks.
\newblock {\em SIAM Journal on Computing}, 38(3):963--981, 2008.

\bibitem[NW94]{nisan1994hardness}
Noam Nisan and Avi Wigderson.
\newblock Hardness vs randomness.
\newblock {\em Journal of computer and System Sciences}, 49(2):149--167, 1994.

\bibitem[PGHAG15]{pgha}
Borja Peropadre, Gian~Giacomo Guerreschi, Joonsuk Huh, and Al{\'a}n
  Aspuru-Guzik.
\newblock Microwave boson sampling.
\newblock {\em arXiv preprint arXiv:1510.08064}, 2015.

\bibitem[Pre12]{preskill:solvay}
John Preskill.
\newblock Quantum computing and the entanglement frontier.
\newblock {\em arXiv preprint arXiv:1203.5813}, 2012.

\bibitem[RR97]{rr}
A.~A. Razborov and S.~Rudich.
\newblock Natural proofs.
\newblock {\em J. Comput. Sys. Sci.}, 55(1):24--35, 1997.
\newblock Earlier version in Proc. ACM STOC'1994, pp. 204-213.

\bibitem[RST15]{rst}
Benjamin Rossman, Rocco~A Servedio, and Li-Yang Tan.
\newblock An average-case depth hierarchy theorem for boolean circuits.
\newblock In {\em Foundations of Computer Science (FOCS), 2015 IEEE 56th Annual
  Symposium on}, pages 1030--1048. IEEE, 2015.

\bibitem[Rud16]{rudolph:optimistic}
Terry Rudolph.
\newblock Why {I} am optimistic about the silicon-photonic route to quantum
  computing.
\newblock {\em arXiv preprint arXiv:1607.08535}, 2016.

\bibitem[Sav70]{savitch}
W.~J. Savitch.
\newblock Relationships between nondeterministic and deterministic tape
  complexities.
\newblock {\em J. Comput. Sys. Sci.}, 4(2):177--192, 1970.

\bibitem[SG04]{servediogortler}
Rocco~A Servedio and Steven~J Gortler.
\newblock Equivalences and separations between quantum and classical
  learnability.
\newblock {\em SIAM Journal on Computing}, 33(5):1067--1092, 2004.

\bibitem[Sha92]{shamir}
A.~Shamir.
\newblock {IP=PSPACE}.
\newblock {\em J. of the ACM}, 39(4):869--877, 1992.
\newblock Earlier version in Proc. IEEE FOCS'1990, pp. 11-15.

\bibitem[Sho97]{shor}
P.~W. Shor.
\newblock Polynomial-time algorithms for prime factorization and discrete
  logarithms on a quantum computer.
\newblock {\em SIAM J. Comput.}, 26(5):1484--1509, 1997.
\newblock Earlier version in Proc. IEEE FOCS'1994. quant-ph/9508027.

\bibitem[Spe14]{spencer2014asymptopia}
Joel Spencer.
\newblock {\em Asymptopia}, volume~71.
\newblock American Mathematical Soc., 2014.

\bibitem[SV03]{sahai2003complete}
Amit Sahai and Salil Vadhan.
\newblock A complete problem for statistical zero knowledge.
\newblock {\em Journal of the ACM (JACM)}, 50(2):196--249, 2003.

\bibitem[TD04]{td}
B.~M. Terhal and D.~P. DiVincenzo.
\newblock Adaptive quantum computation, constant-depth circuits and
  {A}rthur-{M}erlin games.
\newblock {\em Quantum Information and Computation}, 4(2):134--145, 2004.
\newblock quant-ph/0205133.

\bibitem[Tod91]{toda}
S.~Toda.
\newblock {PP} is as hard as the polynomial-time hierarchy.
\newblock {\em SIAM J. Comput.}, 20(5):865--877, 1991.
\newblock Earlier version in Proc. IEEE FOCS'1989, pp. 514-519.

\bibitem[Vap98]{vapnik1998statistical}
Vladimir~Naumovich Vapnik.
\newblock {\em Statistical learning theory}, volume~1.
\newblock Wiley New York, 1998.

\bibitem[Yao85]{yao1985separating}
Andrew Chi-Chih Yao.
\newblock Separating the polynomial-time hierarchy by oracles.
\newblock In {\em 26th Annual Symposium on Foundations of Computer Science
  (sfcs 1985)}, 1985.

\bibitem[Zha12]{zhandry2012construct}
Mark Zhandry.
\newblock How to construct quantum random functions.
\newblock In {\em Foundations of Computer Science (FOCS), 2012 IEEE 53rd Annual
  Symposium on}, pages 679--687. IEEE, 2012.

\bibitem[Zha16]{zhandry2016note}
Mark Zhandry.
\newblock A note on quantum-secure prps.
\newblock {\em arXiv preprint arXiv:1611.05564}, 2016.

\end{thebibliography}

\appendix

\section{Other Results on Oracle Separations in $\ppoly$}
\label{sec:other_ppoly}

In this section we discuss the rest of our results on complexity theory relative to oracles in $\ppoly$ (see Figure~\ref{fig:ppoly-diagram} for an overview). \ For the definitions of the involved complexity classes, see for example \cite{aar:zoo}.

We first discuss $\PTIME$ and $\NP$. \ We observe that there exists an oracle $\oracle \in \ppoly$ such that $\PTIME^{\oracle} \ne \NP^{\oracle}$ {\em unconditionally}, and no oracle $\oracle \in \ppoly$ can make $\PTIME = \NP$ unless $\NP \subset \ppoly$.

Then we discuss $\PTIME$ and $\BPP$. \ We first prove that the standard derandomization assumption (there exists a function $f \in \textsf{E} = \DTIME(2^{O(n)})$ that requires a $2^{\Omega(n)}$-size circuit) also implies that $\PTIME^{\oracle} = \BPP^{\oracle}$ for all $\oracle \in \ppoly$. \ Then, surprisingly, we show that the converse also holds! \ I.e., if no such $f$ exists, then there exists an oracle $\oracle \in \ppoly$ such that $\PTIME^{\oracle} \ne \BPP^{\oracle}$.

Finally, we discuss $\BQP$ and $\SZK$. \ We show that assuming the existence of one-way functions, there exist oracles in $\ppoly$ that separate $\BQP$ from $\SZK$, and also $\SZK$ from $\BQP$.

We will need to use quantum-secure pseudorandom permutations. \ By a very recent result of Zhandry~\cite{zhandry2016note}, their existence follows from the existence of quantum one-way functions.

\begin{lemma}[~\cite{zhandry2016note}]\label{lemma:zhandry-QPRPs}
	Assuming quantum one way functions exist, there exist quantum-secure PRPs.
\end{lemma}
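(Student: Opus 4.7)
The plan is to proceed in three stages: first obtain a quantum-secure pseudorandom function from a quantum one-way function, then convert the PRF into a candidate PRP via a Feistel/Luby--Rackoff construction, and finally analyze the security of that construction against adversaries who may query in quantum superposition.

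First I would invoke the quantum-secure analogues of standard classical constructions. The Goldreich--Levin hardcore-bit theorem has a quantum-secure proof, so a quantum one-way function yields a quantum-secure pseudorandom generator. Iterating this via the GGM tree construction, which Zhandry (2012) already showed to preserve security against quantum distinguishers making superposition queries, produces a quantum-secure PRF $F: \keyraw \times \{0,1\}^n \to \{0,1\}^n$. This reduces the lemma to the following purely permutation-theoretic claim: quantum-secure PRFs imply quantum-secure PRPs.

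Second I would apply a $k$-round Feistel network (for $k$ to be chosen in the analysis) with independently-keyed copies of $F$ as the round functions, giving a candidate keyed permutation $P:\keyraw^k \times \{0,1\}^{2n} \to \{0,1\}^{2n}$. A standard hybrid argument, each of whose steps is justified by the quantum security of $F$, then lets me replace each round function by a truly random function at the cost of negligible quantum advantage. What remains is to argue that $k$-round Feistel on truly random round functions is indistinguishable from a truly uniform permutation on $\{0,1\}^{2n}$ against any quantum polynomial-time adversary. I would do this in two pieces: (i) a quantum PRP/PRF switching lemma, showing that a uniformly random permutation on $\{0,1\}^{2n}$ is statistically close to a uniformly random function against $q$ quantum queries for $q$ up to roughly $2^{n/3}$; and (ii) an analysis showing that $k$-round Feistel on random round functions is quantum-indistinguishable from a random function of the same shape.

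The hard part will be piece (ii). The classical Luby--Rackoff analysis is driven by lazy-sampling of round-function values and by counting ``bad'' query patterns under adaptive classical queries; neither machinery survives quantum queries, which can touch all inputs in superposition in a single call. The route through this obstacle is Zhandry's compressed-oracle technique: maintain a compact description of the internal random round functions that is updated consistently with quantum queries, then show that the adversary's joint state with the compressed oracle remains close to the joint state obtained when the oracle is a random function. Provided the Feistel structure is taken with sufficiently many rounds so that internal collisions across rounds are quantum-rare (bounded by either a compressed-oracle counting bound or a one-way-to-hiding style argument), the distinguishing advantage is negligible. Combining (i) and (ii), together with the opening PRF-to-random-function hybrid, yields that $P$ is a quantum-secure PRP, completing the proof.
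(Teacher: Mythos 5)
The paper does not prove this lemma; it is imported wholesale from Zhandry's note \cite{zhandry2016note}, so the relevant comparison is between your proposal and Zhandry's actual argument. Your first stage is fine and is the standard route: quantum one-way functions give a quantum-secure PRG via Goldreich--Levin, and Zhandry's 2012 work already shows that GGM preserves security against superposition queries, so quantum-secure PRFs exist. The problem is in your second and third stages. The claim at the heart of your piece (ii) --- that a constant-round Feistel network with truly random round functions is quantum-indistinguishable from a random permutation --- is \emph{false} for three rounds and was open for every larger constant at the time. Kuwakado and Morii showed that three-round Luby--Rackoff hides a Simon-style period that an adversary making superposition queries can extract in polynomial time, thereby distinguishing it from a random permutation; this is exactly the kind of structural attack that your ``lazy sampling fails, so use compressed oracles'' sketch does not rule out. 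Asserting that one can ``take sufficiently many rounds so that internal collisions are quantum-rare'' is not a proof: the compressed-oracle analysis of multi-round Feistel is a hard project in its own right (positive results for four rounds appeared only years later, and initially only for forward-query security, not the strong security with inverse queries that a PRP requires).

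Zhandry's actual construction avoids Feistel entirely for precisely this reason. He composes a quantum-secure PRF with a classical format-preserving-encryption-style function-to-permutation converter chosen so that the resulting permutation is \emph{statistically} close to a uniformly random permutation even conditioned on the adversary holding the entire truth table of the underlying random function. Since the security of that converter is information-theoretic and independent of the query model, superposition access buys the adversary nothing, and the only computational step is swapping the random function for the quantum-secure PRF. Your piece (i), the quantum PRP/PRF switching lemma with the $2^{n/3}$ bound, is a known result of Zhandry and is fine, but it does not rescue piece (ii). To repair the proposal you would need to replace the Feistel stage with a converter of the above kind; as written, the argument rests on an unproven (and, for small round numbers, false) claim.
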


\begin{figure}[H]
	\centering
	\hspace*{-7em} 
	\begin{tikzpicture}[->,>=stealth',shorten >=1pt,auto,
	semithick,scale = 2.0]
	\tikzstyle{every state}=[draw=none, text=black, rectangle]
	\tikzstyle{contain}=[thick]
	\tikzstyle{separation}=[thick, dashed]
	\tikzstyle{contain as}=[thick, color = red]
	\tikzstyle{separation as}=[thick,dashed, color = red, dashed]
	\tikzstyle{separation as quantum}=[thick,dashed, color = blue, dashed]
	
	\node [state] (P) at (0, 1) {$\PTIME$};
	\node [state] (BPP) at (0, 2) {$\BPP$};
	\node [state] (BQP) at (0, 3) {$\BQP$};
	\node [state] (SZK) at (1, 3.5) {$\SZK$};
	\node [state] (NP) at (-1, 3.5) {$\NP$};
	\node [state] (SampBPP) at (2, 2.5) {$\SampBPP$};
	\node [state] (SampBQP) at (3.5, 2.5) {$\SampBQP$};
	\draw[contain] (P) to [out=120, in=240] (BPP);
	\draw[contain as] (BPP) to [out=300, in=60] (P);
	
	\draw[contain] (BPP) to [out=120, in=240] (BQP);
	\draw[separation as] (BQP) to [out=300, in=60] (BPP);
	
	\draw[separation as] (BQP) to [out=60, in=180] (SZK);
	\draw[separation as quantum] (SZK) to [out=240,in=0] (BQP);
	
	\draw[separation] (NP) -- (BQP);

	\draw[contain] (SampBPP) to [out=30,in=150] (SampBQP);
	\draw[separation as] (SampBQP) to [out=210,in=330] (SampBPP);
	\end{tikzpicture}
	
	\newcommand{\class}{\mathcal{C}}
	
	\caption{$\class_1 \to \class_2$ indicates $\class_1$ is contained in $\class_2$ respect to every oracle in $\ppoly$, and $\class_1 \dashrightarrow \class_2$ denotes that there is an oracle $\oracle \in \ppoly$ such that $\class_1^{\oracle} \not\subset \class_2^{\oracle}$. {\color{red}Red} indicates this statement is based on the existence of {\em classical one-way functions}, {\color{blue}Blue} indicates the statement is based on the existence of {\em quantum one-way functions}, and Black indicates the statement holds unconditionally.}
	\label{fig:ppoly-diagram}
\end{figure}

\subsection{$\PTIME$, $\BPP$, $\BQP$ vs $\NP$}
We begin with the relationships of $\PTIME$, $\BPP$, and $\BQP$ to $\NP$ relative to oracles in $\ppoly$.

The first observation is that using the function $\OR$ and standard diagonalization techniques, together with the fact that $\OR$ is hard for quantum algorithms~\cite{bbbv}, we immediately have:

\begin{ob}
	There is an oracle $\oracle \in \ppoly$ such that $\NP^{\oracle} \not\subset \BQP^{\oracle}$.
\end{ob}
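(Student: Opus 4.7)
The plan is to construct the oracle $\oracle$ by a stage-by-stage diagonalization, where the oracle's restriction to each input length $n$ is sparse enough to admit linear-size circuits. At each length $n$ we will arrange for $\oracle_n : \{0,1\}^n \to \{0,1\}$ to be either identically zero, or else a ``point function'' that outputs $1$ at exactly one planted location $x_n^\star \in \{0,1\}^n$. Either way, $\oracle_n$ is computable by an $O(n)$-size circuit (a single equality test against a hardcoded location), so $\oracle \in \ppoly$ by construction. With this setup, the natural unary language $L = \{ 0^n : \exists x \in \{0,1\}^n,\ \oracle(x) = 1\}$ lies in $\NP^{\oracle}$ via the obvious guess-and-verify routine that uses one oracle query.

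To force $L \notin \BQP^{\oracle}$, we will enumerate polynomial-time $\BQP$ oracle machines $M_1, M_2, \ldots$ and kill each $M_i$ at a dedicated input length $n_i$, choosing $n_1 < n_2 < \cdots$ to grow so rapidly that (i)~no $M_j$ with $j < i$ can query any oracle bit of length $n_i$ (so earlier diagonalization commitments are preserved when later stages plant bits), and (ii)~$t_i(n_i) = o(2^{n_i/2})$, where $t_i$ is a polynomial bounding the runtime, and hence the query count, of $M_i$. At stage $i$, let $p_0$ denote the acceptance probability of $M_i$ on input $0^{n_i}$ when $\oracle_{n_i}$ is provisionally set to all-zeros while other lengths retain their committed values. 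If $p_0 > 1/2$, we leave $\oracle_{n_i}$ as all-zeros, placing $0^{n_i} \notin L$ while $M_i$ accepts — defeating $M_i$. If $p_0 \le 1/2$, we invoke the BBBV hybrid argument~\cite{bbbv}: the total squared query magnitude of $M_i$ summed over the $2^{n_i}$ possible planted locations is at most $t_i(n_i)$, so some location $x^\star \in \{0,1\}^{n_i}$ must receive query weight at most $t_i(n_i)/2^{n_i}$, and flipping $\oracle(x^\star)$ from $0$ to $1$ shifts $M_i$'s acceptance probability by at most $O(t_i(n_i) / 2^{n_i/2}) = o(1)$. Planting a $1$ at $x^\star$ then places $0^{n_i} \in L$ while keeping $M_i$'s acceptance probability well below $2/3$, again defeating $M_i$.

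The main obstacle is purely in setting up the quantitative version of the BBBV hybrid bound in the averaged form above, but this step is entirely standard. Iterating the construction over the countably many $\BQP$ oracle machines and stitching together the per-length choices yields a single oracle $\oracle \in \ppoly$ with $L \in \NP^{\oracle} \setminus \BQP^{\oracle}$, establishing $\NP^{\oracle} \not\subset \BQP^{\oracle}$.
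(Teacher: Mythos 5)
Your proposal is correct and follows essentially the same route as the paper, which obtains this observation by combining the BBBV quantum lower bound for $\OR$ with standard diagonalization, using sparse (point-function) oracle slices to keep the oracle in $\ppoly$. The quantitative hybrid step you flag is indeed the standard one, and your bound $O(t_i(n_i)/2^{n_i/2})$ on the acceptance-probability shift is the right one.
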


On the other side, we also show that unless $\NP \subset \ppoly$ ($\textsf{BQP/poly}$), there is no oracle $\oracle \in \ppoly$ such that $\NP^{\oracle} \subseteq \BPP^{\oracle}$ ($\BQP^{\oracle}$).

\begin{theo}
	Unless $\NP \subset \ppoly$, there is no oracle $\oracle \in \ppoly$ such that $\NP^{\oracle} \subseteq \BPP^{\oracle}$. \ Likewise, there is no oracle $\oracle \in \ppoly$ such that $\NP^{\oracle} \subseteq \BQP^{\oracle}$ unless $\NP \subseteq \textsf{BQP/poly}$.
\end{theo}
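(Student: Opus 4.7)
The plan is to prove both statements by contrapositive, by showing that the existence of an oracle $\oracle \in \ppoly$ with $\NP^{\oracle} \subseteq \BPP^{\oracle}$ (resp.\ $\NP^{\oracle} \subseteq \BQP^{\oracle}$) forces $\NP \subseteq \ppoly$ (resp.\ $\NP \subseteq \textsf{BQP/poly}$). The key observation is almost trivial: a polynomial-time machine with oracle access to something in $\ppoly$ behaves like a non-uniform machine of the same flavor, because polynomial-size circuits for the oracle can simply be folded into the advice.

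More concretely, suppose $\oracle \in \ppoly$ satisfies $\NP^{\oracle} \subseteq \BPP^{\oracle}$, and let $L \in \NP$ be arbitrary. Since the $\NP$ machine for $L$ can ignore the oracle, trivially $L \in \NP^{\oracle}$, and hence $L \in \BPP^{\oracle}$ by assumption. Let $M^{\oracle}$ be a randomized polynomial-time oracle machine deciding $L$, running in time $p(n)$ on inputs of length $n$. Then $M$ only ever queries strings of length at most $p(n)$, and since $\oracle \in \ppoly$ there exist circuits $C_1,\ldots,C_{p(n)}$ with $\sum_i |C_i| \le \operatorname{poly}(n)$ that compute $\oracle$ on those input lengths. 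Bundling these circuits into a single polynomial-size advice string, one obtains a $\BPP/\operatorname{poly}$ algorithm for $L$ that simulates $M$ while answering each oracle query by running the appropriate circuit from the advice. By the standard Adleman-style argument (fix random coins that work simultaneously on all $2^n$ inputs of length $n$ and append them to the advice), $\BPP/\operatorname{poly} = \ppoly$, so $L \in \ppoly$. Since $L$ was arbitrary, $\NP \subseteq \ppoly$, as desired.

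The quantum variant is handled in exactly the same way: replacing $\BPP$ by $\BQP$ throughout, the same folding argument puts $L$ into $\textsf{BQP/poly}$, and here we do not even need the Adleman step since $\textsf{BQP/poly}$ already permits polynomial advice by definition. I do not anticipate any real obstacle; the entire argument is a book-keeping exercise that leverages the fact that a polynomial-time computation only consults its oracle at polynomially many input lengths, and so a $\ppoly$ oracle can always be absorbed into a non-uniform advice string of polynomial size. The only minor thing to check is that the concatenation of the $C_i$'s is of polynomial total size, which is immediate because $p(n)$ is polynomial and each $|C_i|$ is polynomial in $i$.
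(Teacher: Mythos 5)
Your proposal is correct and follows essentially the same route as the paper: both arguments run the chain $\NP \subseteq \NP^{\oracle} \subseteq \BPP^{\oracle}$ and then absorb the polynomial-size circuits for $\oracle$ into non-uniform advice, using Adleman's derandomization in the classical case. The only cosmetic difference is the order in which the oracle-absorption and the Adleman step are applied, which does not affect the argument.
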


\begin{proof}
	Suppose there is an oracle $\oracle \in \ppoly$ such that $\NP^{\oracle} \subseteq \BPP^{\oracle}$. Since $\BPP \subset \ppoly$, and $\mathsf{P^{\oracle}/poly} \subseteq \ppoly$ (since the relevant parts of the oracle $\oracle$ can be directly supplied to the $\ppoly$ algorithm), we have $\NP \subseteq \NP^{\oracle} \subset \ppoly$. \ The second claim can be proved in the same way.
\end{proof}

The following corollary is immediate.

\begin{cor}
	There is an oracle $\oracle \in \ppoly$ such that $\PTIME^{\oracle} \ne \NP^{\oracle}$, and there is no oracle $\oracle \in \ppoly$ such that $\PTIME^{\oracle} = \NP^{\oracle}$ unless $\NP \subset \ppoly$.
\end{cor}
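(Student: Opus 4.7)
The corollary should follow almost immediately from the two preceding results, so my plan is simply to assemble them.

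For the first half, I would invoke the preceding observation that there exists an oracle $\oracle \in \ppoly$ with $\NP^{\oracle} \not\subset \BQP^{\oracle}$ (obtained by hiding an $\OR$-type instance against which the BBBV bound applies, and which still has small circuits because a planted 1-location admits a polynomial-size witness). Since $\PTIME^{\oracle} \subseteq \BQP^{\oracle}$ trivially, the same oracle witnesses $\NP^{\oracle} \not\subset \PTIME^{\oracle}$, and in particular $\PTIME^{\oracle} \neq \NP^{\oracle}$.

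For the second half, I would argue by contrapositive using the preceding theorem. Suppose some $\oracle \in \ppoly$ satisfies $\PTIME^{\oracle} = \NP^{\oracle}$. Then in particular $\NP^{\oracle} \subseteq \BPP^{\oracle}$, so the theorem already proved gives $\NP \subset \ppoly$. Alternatively, one can unfold the one-line direct argument: $\NP \subseteq \NP^{\oracle} = \PTIME^{\oracle} \subseteq \mathsf{P}^{\oracle}/\mathsf{poly} \subseteq \ppoly$, where the last inclusion holds because an $\oracle \in \ppoly$ can be absorbed into the advice of a $\ppoly$ machine.

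There is essentially no obstacle here — both halves are immediate consequences of results already established in the section, and the only substantive content is the observation that $\PTIME \subseteq \BQP$ lets us downgrade the first separation from a $\BQP$-statement to a $\PTIME$-statement, and that $\BPP \subseteq \ppoly$ lets us apply the second theorem's hypothesis via $\PTIME = \NP \Rightarrow \NP \subseteq \BPP$. I would keep the write-up to two or three sentences.
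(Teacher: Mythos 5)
Your proposal is correct and matches the paper's intent exactly: the paper simply declares the corollary "immediate" from the preceding observation (an $\oracle \in \ppoly$ with $\NP^{\oracle} \not\subset \BQP^{\oracle}$, hence $\not\subset \PTIME^{\oracle}$) and the preceding theorem (via $\PTIME^{\oracle} = \NP^{\oracle} \Rightarrow \NP^{\oracle} \subseteq \BPP^{\oracle} \Rightarrow \NP \subset \ppoly$), which is precisely the assembly you describe.
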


\subsection{$\PTIME$ vs $\BPP$}

Next we consider the relationship between $\PTIME$ and $\BPP$. \ It is not hard to observe that the standard derandomization assumption for $\PTIME = \BPP$ is in fact strong enough to make $\PTIME^{\oracle} = \BPP^{\oracle}$ for every oracle $\oracle$ in $\ppoly$.

Given a function $f : \{0,1\}^{n} \to \{0,1\}$, let $\Hwrs(f)$ be the minimum size of circuits computing $f$ exactly.

\begin{ob}[Implicit in~\cite{nisan1994hardness,iw}, see also Theorem 20.7 in~\cite{arora2009computational}]
	If there exists a function $f \in \textsf{E} = \DTIME(2^{O(n)})$ and $\varepsilon > 0$ such that $\Hwrs(f) \ge 2^{\varepsilon n}$ for sufficiently large $n$, then $\BPP^{\oracle} = \PTIME^{\oracle}$ for every $\oracle \in \ppoly$.
\end{ob}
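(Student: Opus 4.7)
The idea is that the Impagliazzo--Wigderson pseudorandom generator fools \emph{arbitrary} polynomial-size circuits, not merely $\BPP$ machines; since the oracle $\oracle$ itself has polynomial-size circuits, the composition of a $\BPP^{\oracle}$ machine with its oracle is still a polynomial-size circuit in the random string, so the same PRG derandomizes it. The point is that on the algorithmic side we never need to construct $\oracle$'s circuit explicitly: we only need to know that one exists in order to apply the PRG's correctness guarantee.

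\textbf{Steps.} First, I would invoke the Impagliazzo--Wigderson theorem: from the assumed worst-case hard function $f \in \mathsf{E}$ with $\Hwrs(f) \geq 2^{\varepsilon n}$, construct, for every constant $c$, a pseudorandom generator $G_c : \{0,1\}^{O(\log n)} \to \{0,1\}^n$, computable in time $\operatorname{poly}(n)$, such that $G_c$ fools every Boolean circuit of size $n^c$ with error at most $1/10$. Next, let $L \in \BPP^{\oracle}$ via a probabilistic oracle machine $M$ that, on input $x \in \{0,1\}^n$, uses $r(n)$ random bits and runs in time $t(n)$, both polynomial in $n$, with error bounded by $1/3$. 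Pick a polynomial $q(n)$ upper-bounding the circuit size of $\oracle_m$ for all $m \leq t(n)$, which exists because $\oracle \in \ppoly$.

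Now the key observation: for every fixed $x \in \{0,1\}^n$, the function $\Phi_x : \{0,1\}^{r(n)} \to \{0,1\}$ defined by $\Phi_x(\rho) = M^{\oracle}(x, \rho)$ is computable by a Boolean circuit of size $s(n) \leq t(n) \cdot q(n) = \operatorname{poly}(n)$, obtained by simulating $M$ gate-by-gate and substituting the size-$q(n)$ circuit for $\oracle_m$ at each oracle query of length $m$. Picking $c$ so that $s(n) \leq n^c$ for all large $n$, the generator $G_c$ fools $\Phi_x$ to within error $1/10$. The deterministic simulation is then standard: on input $x$, enumerate all $2^{O(\log n)} = \operatorname{poly}(n)$ seeds $s$, compute $G_c(s)$, run $M^{\oracle}(x, G_c(s))$ (which, crucially, uses \emph{actual} oracle access to $\oracle$, not a hardwired circuit), and output the majority vote. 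This runs in polynomial time with oracle access to $\oracle$, and its output agrees with the $\BPP^{\oracle}$ algorithm by the PRG guarantee; hence $L \in \PTIME^{\oracle}$.

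\textbf{Main obstacle.} There is essentially no obstacle — the proof is a black-box invocation of Impagliazzo--Wigderson combined with the one-line remark that $\ppoly$ oracles fold into the circuit whose indistinguishability the PRG provides. The only subtlety worth emphasizing is the logical gap between the \emph{existence} of polynomial-size circuits for $\oracle$ (used in the analysis to upper-bound $s(n)$) and the fact that the derandomized algorithm does \emph{not} need these circuits at runtime; this is what lets the derandomization be uniform relative to $\oracle$ even though $\oracle$ itself is only non-uniformly small.
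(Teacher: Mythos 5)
Your proposal is correct and matches the paper's own (sketched) argument exactly: invoke the Impagliazzo--Wigderson generator, observe that a $\BPP^{\oracle}$ computation with $\oracle\in\ppoly$ folds into a polynomial-size circuit that the generator fools, and derandomize by enumerating seeds with genuine oracle access at runtime. Your elaboration of the uniformity subtlety is a welcome addition but does not change the route.
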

\begin{proofsketch}
	From~\cite{nisan1994hardness} and \cite{iw}, the assumption leads to a strong PRG which is able to fool circuits of a fixed polynomial size with a logarithmic seed length.
	
	An algorithm with an oracle $\oracle \in \ppoly$ with a certain input can still be represented by a polynomial size circuit, so we can still enumerate all possible seeds to get a deterministic algorithm.
\end{proofsketch}

Surprisingly, we show that condition is not only sufficient, but also necessary.

\begin{theo}
	If for every $f \in \textsf{E} = \DTIME(2^{O(n)})$ and $\varepsilon > 0$, there are infinitely many $n$'s with $\Hwrs(f) < 2^{\varepsilon n}$, then there exists an oracle $\oracle \in \ppoly$ such that $\BPP^{\oracle} \ne \PTIME^{\oracle}$.
\end{theo}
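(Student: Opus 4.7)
I would prove the contrapositive: assume $\PTIME^{\oracle} = \BPP^{\oracle}$ for every oracle $\oracle \in \ppoly$, and deduce the Impagliazzo--Wigderson hypothesis, i.e., that there exist $f \in \textsf{E}$ and $\varepsilon > 0$ with $\Hwrs(f_n) \geq 2^{\varepsilon n}$ for all sufficiently large $n$.

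The first step is to observe that the hypothesis is equivalent to a \emph{uniform} derandomization statement. Since any $\oracle \in \ppoly$ is by definition described by a family of polynomial-size circuits, such a circuit $C$ can be passed directly as part of the input to a $\PTIME$ algorithm. So the hypothesis is equivalent to: for every $\BPP$ oracle machine $M$, the language $\{(x, C) : M^{\oracle_C}(x) \text{ accepts with probability} \geq 2/3\}$ belongs to $\PTIME$. This gives a single deterministic polynomial-time simulator for arbitrary $\BPP$ oracle computations whose oracle is specified by a circuit in the input. This is strictly stronger than the $\PTIME = \BPP$ one would get by plugging in the trivial oracle, and it is the extra strength that will be exploited.

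The second step is to use this uniform derandomizer to construct, in time $2^{O(n)}$, a truth table $T_n \in \{0,1\}^{2^n}$ of circuit complexity at least $2^{\varepsilon n}$, viewed as the truth table of an $\textsf{E}$-function $f$. The Shannon counting bound guarantees that random truth tables have circuit complexity $\Omega(2^n/n)$ with overwhelming probability, so there is a natural $\BPP$ \emph{search} problem of outputting a hard truth table. To get a handle on this via decision-style derandomization, I would attempt to compute ``the lexicographically first truth table of circuit complexity $\geq 2^{\varepsilon n}$'' bit by bit: at each step decide, given a prefix $p$ of bits already produced, whether $p$ extends to a hard truth table. Combining these bit-level decisions assembles $T_n$ in time $2^{O(n)}$, giving the desired $f \in \textsf{E}$.

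The main obstacle is the verification sub-step: deciding ``does this truth table have circuit complexity $\geq s$'' is $\mathsf{coNP}$-complete and not obviously in $\BPP$, so it's not clear how to plug it into the derandomizer directly. I expect that this is precisely where the oracle-version of the hypothesis enters crucially: one packages the witness-search (over candidate small circuits) and the partial-prefix-extension question into an oracle $\oracle \in \ppoly$---for instance, an oracle encoding the partially-specified truth table plus a universal circuit evaluator---so that the relevant decision becomes a genuine $\BPP^{\oracle}$ question, which the uniform derandomizer then turns into a $\PTIME^{\oracle}$ computation. Making this bridge from derandomization-with-$\ppoly$-oracles to an unconditional $\textsf{E}$-lower-bound, rather than merely to the weaker $\NEXP \not\subseteq \ppoly$-style conclusions obtained by Kabanets--Impagliazzo--Wigderson, is the technical heart of the argument.
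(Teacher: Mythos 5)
There is a genuine gap, and it sits exactly where you flagged ``the technical heart'': your plan never actually produces the hard function. By passing to the contrapositive you have converted the statement into the hard direction of hardness-versus-randomness --- ``derandomization implies an $\textsf{E}$ circuit lower bound'' --- and your proposed mechanism for crossing that bridge does not work. The sub-step you need, ``given a prefix $p$, does $p$ extend to a truth table of circuit complexity $\geq 2^{\varepsilon n}$,'' is a $\mathsf{coNP}$/$\Sigma_2$-type question about the \emph{nonexistence} of a small circuit. The hypothesis $\PTIME^{\oracle}=\BPP^{\oracle}$ for all $\oracle\in\ppoly$ only lets you eliminate \emph{randomness} relative to such oracles; it gives you no handle on nondeterminism, and no choice of $\ppoly$ oracle (a universal circuit evaluator, a partial truth table, etc.) turns ``$\forall$ small circuits $C$, $C$ disagrees with $p$ somewhere'' into a $\BPP^{\oracle}$ predicate. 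So the bit-by-bit construction of the lexicographically first hard truth table cannot be fed to the derandomizer, and the proof stops at its crucial step. (A secondary issue: your claimed equivalence in the first step between the oracle collapse and the uniform statement about inputs $(x,C)$ has uniformity problems in both directions --- a $\PTIME^{\oracle}$ machine is not handed the circuits for $\oracle$, and the collapse gives a possibly different simulator for each fixed oracle rather than one machine uniform in $C$.)

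The paper avoids all of this by proving the implication in the stated (direct) direction with an elementary diagonalization. The $\BPP$-easy language is $L_{\oracle}=\{0^n:\GapMaj(\oracle_n)=1\}$. Against the $i$-th $\PTIME$ machine $A_i$ (running in time $n^c$), one defines the function $f\in\textsf{E}$ that, on input encoding a length $n$ and a prefix $p$, simulates $A_i$ on $0^n$ and reports whether $A_i$ queried any oracle position with prefix $p$. The hypothesis --- every $\textsf{E}$ function has circuits of size $2^{\varepsilon m}$ for infinitely many $m$ --- is applied to \emph{this} $f$ to get a small circuit for it on some suitable length; hardwiring yields a polynomial-size circuit for the indicator $g$ of ``positions sharing a long prefix with a queried position.'' Setting $\oracle_{n_i}:=\neg g$ keeps the oracle in $\ppoly$, makes $A_i$ see only zeros (so it answers as it did before the oracle was fixed), yet leaves all but an $n_i^{-3c}$ fraction of the oracle equal to $1$, so $\GapMaj(\oracle_{n_i})=1$ and $A_i$ errs. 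The hypothesis is thus used only to certify that the \emph{adversary's} oracle is succinct --- a completely different (and much lighter) use than extracting a hard truth table from a derandomizer.
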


\begin{proof}
	For simplicity, in the following we will specify an oracle $\oracle$ by a sequence of functions $\{f_{i}\}$, where each $f_{i}$ is a function from $\{0,1\}^{n_i} \to \{0,1\}$ and the sequence $\{n_i\}$ is strictly increasing. \ That is, $\oracle_{n_i}$ is set to $f_i$, and $\oracle$ maps all strings with length not in $\{n_i\}$ to $0$.
	
	As there are only countably many $\PTIME$ oracle TM machines, we let $\{A_i\}_{i=1}^{+\infty}$ be an ordering of them.
	
	{\bf The $\GapMaj$ function.} Recall that the gapped-majority function, $\GapMaj : \{0,1\}^N \to \{0,1\}$, which outputs $1$ if the input has Hamming weight $\ge 2N/3$, or $0$ if the input has Hamming weight $\le N/3$, and is undefined otherwise, is the function which separates $\PTIME$ and $\BPP$ in the query complexity world. \ We are going to encode inputs to $\GapMaj$ in the oracle bits to achieve our separation.
	
	We call an oracle \textit{valid}, if for each $n$, either $|\oracle_n^{-1}(0)| \ge \frac{2}{3} \cdot 2^{n}$ or $|\oracle_n^{-1}(1)| \ge \frac{2}{3} \cdot 2^{n}$. \ That is, if we interpret $\oracle_n$ as a binary string with length $2^{n}$, then $\GapMaj(\oracle_n)$ is defined.
	
	{\bf The language $L^{\oracle}$.} For a valid oracle $\oracle$, we define the following language:
	$$
	L_{\oracle} =  \{0^n : \GapMaj(\oracle_n) = 1 \}.
	$$
	Clearly, this language lies in $\BPP^{\oracle}$. \ To prove the theorem, we will construct a valid oracle $\oracle$ such that $L_{\oracle} \not\in \PTIME^{\oracle}$.
	
	{\bf Construction of $\oracle$.} To construct such an oracle, we resort to the standard diagonalization method: for each integer $i$, we find an integer $n_i$ and set the function $\oracle_{n_i}$ so that the machine $A_i$ can't decide $0^{n_i}$ correctly. \ In order to do this, we will make sure that each $A_i$ can only see $0$ when querying the function $\oracle_{n_i}$. \ Since $A_i$ can only see a polynomial number of bits, we can set the remaining bits in $\oracle_{n_i}$ adversarially.
	
	\newcommand{\oraclepart}{\oracle_{\mathsf{part}}}
	
	Let $\oraclepart^i$ be the oracle specified by $\{\oracle_{n_j}\}_{j=1}^{i}$, and let $T_i$ be the maximum integer such that a bit in $\oracle_{T_i}$ is queried by $A_{i}$ when running on input $0^{n_{i}}$. \ Observe that by setting $n_{i+1} > T_i$, we can make sure that $A_{i}^{\oracle}(0^{n_i}) = A_{i}^{\oraclepart^i}(0^{n_i})$ for each $i$.
	
	{\bf Diagonalization against $A_i$.} Suppose we have already constructed $\oracle_{n_1},\dotsc,\oracle_{n_{i-1}}$, and we are going to deal with $A_i$. \ Since $A_i$ is a $\PTIME$ machine, there exists a constant $c$ such that $A_i$ runs in at most $n^{c}$ steps for inputs with length $n$. \ Thus, $A_i$ can query at most $n^{c}$ values in $\oracle_n$ on input $0^n$.
	
	{\bf Construction and Analysis of $f$.} Now consider the following function $f$, which analyzes the behavior of $A_i^{\oraclepart^{i-1}}$:
	
	\begin{itemize}
		\item given an input $x \in \{0,1\}^*$, let $m = |x|$;
		\item the first $m_1 = \floor{m/5c}$ bits of $x$ encode an integer $n \in [2^{m_1}]$;
		\item the next $m_2 = m - m_1$ bits of $x$ encode a string $p \in \{0,1\}^{m_2}$;
		\item $f(x) = 1$ iff $A_i^{\oraclepart^{i-1}}(0^n)$ has queried $\oracle_n(z)$ for an $z \in \{0,1\}^{n}$ with $p$ as a prefix.\footnote{For simplicity, we still use $\oracle_n$ to denote the restriction of $\oraclepart^{i-1}$ on $\{0,1\}^n$.}
	\end{itemize}
	
	It is not hard to see that $f \in \mathsf{E}$: the straightforward algorithm which directly simulates $A_i^{\oraclepart^{i-1}}(0^n)$ runs in $O(n^c) = 2^{O(m/5c \cdot c)} = 2^{O(m)}$ time (note that the input length is $m = |x|$). \ Therefore, by our assumption, there exists an integer $m$ such that $2^{\floor{m/5c}} > \max(T_{i-1},n_{i-1})$ and $\Hwrs(f_m) < 2^{m/c}$. \ Then we set $n_i = 2^{\floor{m/5c}}$.
	
	{\bf Construction and Analysis of $\oracle_{n_i}$.} Now, if $A_i^{\oraclepart^{i-1}}(0^{n_i}) = 1$ , we set $\oracle_{n_i}$ to be the constant function $\mathbf{0}$, so that $L^{\oracle}(0^{n_i}) = 0$.
	
	Otherwise, $A_i^{\oraclepart^{i-1}}(0^{n_i}) = 0$. \ We define a function $g:\{0,1\}^{n_i} \to \{0,1\}$ as follows: $g(z) = 1$ iff $A_i^{\oraclepart^{i-1}}(0^{n_i})$ has queried $\oracle_{n_i}(z')$ for an $z' \in \{0,1\}^{n_i}$ such that $z$ and $z'$ share a prefix of length $m-\floor{m/5c}$. \ Note that $g(z)$ can be implemented by hardwiring $n_i$ and $z_{1\dotsc m - \floor{m/5c}}$ (that is, the first $m-\floor{m/5c}$ bits of $z$) into the circuit for $f_m$, which means that there is a circuit of size $2^{m/c} = n_i^{O(1)}$ for $g$. \ We set $\oracle_{n_i} := \neg g$.
	
	From the definition of $g$ and the fact that $A_i^{\oraclepart^{i-1}}(0^{n_i})$ makes at most $n_i^{c}$ queries, there is at most a
	$$
	\frac{n_i^c}{2^{m-\floor{m/5c}}} <
	\frac{n_i^c}{2^{4c \floor{m/5c}}} = n_i^{-3c}
	$$
	\noindent fraction of inputs that are $0$ in $\neg g$. \ Hence, $\GapMaj(\neg g) = 1$ and $L^{\oracle}(0^{n_i}) = 1$.
	
	We claim that in both cases, we have $A_i^{\oraclepart^{i-1}}(0^{n_i}) = A_i^{\oraclepart^{i}}(0^{n_i})$. \ This holds trivially in the first case since we set $\oracle_{n_i} := \mathbf{0}$. \ For the second case, note from the definition of $g$ that all queries by $A_i^{\oraclepart^{i}}(0^{n_i})$ to $\oracle_{n_i}$ return $0$, and hence $A_i$ will behave exactly the same.
	
	Finally, since we set $n_i > T_{i-1}$ for each $i$, we have $A_i^{\oracle}(0^{n_i}) = A_i^{\oraclepart^{i}}(0^{n_i}) \ne L^{\oracle}(0^{n_i})$, which means that no $A_i$ can decide $L^{\oracle}$.
	
\end{proof}

\subsection{$\BQP$ vs $\SZK$}

Next we investigate the relationship between $\BQP$ and $\SZK$ relative to oracles in $\ppoly$. \ We first show that, by using quantumly-secure pseudorandom permutations, as well as the quantum lower bound for distinguishing permutations from $2$-to-$1$ functions \cite{as}, we can construct an oracle in $\ppoly$ which separates $\SZK$ from $\BQP$.

\begin{theo}
	Assuming quantum-secure one way functions exist, there exists an oracle $\oracle \in \ppoly$ such that $\SZK^{\oracle} \not\subset \BQP^{\oracle}$.
\end{theo}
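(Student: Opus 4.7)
My plan is to run the classical collision-based $\SZK$ versus $\BQP$ oracle separation (originally due to Aaronson), but with the random permutations and random $2$-to-$1$ functions replaced by \emph{quantum-secure} pseudorandom variants, which is exactly the ingredient guaranteed by Lemma~\ref{lemma:zhandry-QPRPs}. Concretely, let $\sigma : \keyraw \times \{0,1\}^n \to \{0,1\}^n$ be a quantum-secure PRP, and fix the canonical $2$-to-$1$ map $\tau(x_1,\ldots,x_n) = (0,x_2,\ldots,x_n)$. I will sample the oracle $\oracle$ one input length at a time: for each $n$, with probability $1/2$ set $\oracle_n = \sigma_{k_n}$ for a uniformly random key $k_n$, and with probability $1/2$ set $\oracle_n = \sigma_{k_n} \circ \tau$. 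In the first case $\oracle_n$ is a permutation of $\{0,1\}^n$; in the second it is $2$-to-$1$. Define the unary language $L_{\oracle} = \{0^n : \oracle_n \text{ is }2\text{-to-}1\}$. Since every $\oracle_n$ is computable by a polynomial-size circuit (hard-wiring $k_n$ into the PRP circuit, and applying $\tau$ for free), the resulting $\oracle$ lies in $\ppoly$.

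The $\SZK$ upper bound is the easy direction: deciding $L_{\oracle}$ is a standard collision problem, and the statistical-difference protocol for the collision problem places it in $\SZK^{\oracle}$ essentially out of the box (the prover gets to convince the verifier which of the two distributions ``$\oracle_n$ on a random input'' and ``$\oracle_n$ on an independently random input'' one is looking at; these are $0$-far in the permutation case and $1/2$-far in the $2$-to-$1$ case). So $L_{\oracle} \in \SZK^{\oracle}$ with probability $1$.

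The heart of the argument is the $\BQP$ lower bound, which has two layers. Fix any $\BQP$ oracle machine $M$. First, quantum security of $\sigma$ implies that for any polynomial-time quantum distinguisher, replacing $\sigma_{k_n}$ by a truly Haar-random permutation $\pi_n$ changes the acceptance probability by a negligible amount; the same holds for $\sigma_{k_n} \circ \tau$ versus $\pi_n \circ \tau$ (otherwise precomposing the distinguisher with $\tau$ breaks the PRP). Second, the Aaronson--Shi collision lower bound says that distinguishing a uniformly random permutation from a function of the form $\pi_n \circ \tau$ with $\pi_n$ a uniformly random permutation requires $\Omega(2^{n/3})$ quantum queries. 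Chaining these two facts, the probability that $M^{\oracle}(0^n)$ decides $L_{\oracle}(0^n)$ correctly is at most $1/2 + \mathsf{negl}(n)$ for all sufficiently large $n$. A Borel--Cantelli / union bound over the countably many $\BQP$ machines $M_1, M_2, \ldots$ then shows that with probability $1$ over $\oracle$, no $M_i$ decides $L_{\oracle}$, so $L_{\oracle} \notin \BQP^{\oracle}$, giving the desired $\SZK^{\oracle} \not\subset \BQP^{\oracle}$.

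The main obstacle I anticipate is making the second step above clean: I need Aaronson--Shi to apply to the specific distribution ``uniform permutation'' versus ``uniform permutation composed with the fixed $2$-to-$1$ map $\tau$'', rather than to uniform $2$-to-$1$ functions. This is routine since $\pi_n \circ \tau$ is a $2$-to-$1$ function whose domain partition into fibers is independent of $\pi_n$, and Aaronson--Shi's polynomial-method lower bound is insensitive to which partition the algorithm faces. The other mildly delicate point is quantifying the PRP hybrid: the distinguishing advantage must be pushed below any fixed inverse polynomial while the number of quantum queries stays polynomial, which is exactly the regime in which quantum security is defined, so nothing beyond a standard hybrid argument is needed.
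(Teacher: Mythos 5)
Your overall architecture (quantum-secure PRP in place of a truly random permutation, Aaronson--Shi collision lower bound, unary language, diagonalization over countably many machines) matches the paper's, but your construction of the pseudorandom $2$-to-$1$ function has a fatal flaw. You set the $2$-to-$1$ oracle to be $\sigma_{k_n}\circ\tau$ with the \emph{fixed, publicly known} collapse $\tau(x_1,\dots,x_n)=(0,x_2,\dots,x_n)$, so the collision pairs are exactly $\{0w,1w\}$ for every $w\in\{0,1\}^{n-1}$, independent of the key. A distinguisher therefore needs no quantum power at all: query $\oracle_n(0^n)$ and $\oracle_n(10^{n-1})$ and accept iff they are equal. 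Two classical queries decide $L_{\oracle}$ with certainty, so $L_{\oracle}\in\PTIME^{\oracle}\subseteq\BQP^{\oracle}$ and the separation collapses. Your claim that the Aaronson--Shi bound ``is insensitive to which partition the algorithm faces'' is exactly backwards: the $\Omega(N^{1/3})$ lower bound is proved against a \emph{random} (hidden) fiber partition, and symmetrizing over partitions is essential to the polynomial-method argument; once the partition is fixed and known, the collision problem is trivial. (The same issue would arise if you instead meant $\tau\circ\sigma_{k_n}$: then the fibers are hidden but the range is the fixed set $\{0\}\times\{0,1\}^{n-1}$, which a constant number of queries detects, since a permutation's outputs start with $0$ only half the time.)

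The missing idea is that the pseudorandom $2$-to-$1$ function must be computationally indistinguishable from a \emph{uniformly random} $2$-to-$1$ function, which requires randomizing both the fiber structure and the range. The paper does this with a two-PRP sandwich, $\PRP_{k_2}\bigl((\PRP_{k_1}(x)\bmod N/2)+1\bigr)$: the inner PRP scrambles the domain so the collision pairs are pseudorandom, and the outer PRP scrambles the image so the range is pseudorandom. Replacing both PRPs by truly random permutations yields exactly a uniformly random $2$-to-$1$ function, so a standard hybrid argument reduces indistinguishability to PRP security, and only then does the Aaronson--Shi lower bound (for random permutation versus random $2$-to-$1$ function) apply to the truly random hybrids. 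With that substitution the rest of your argument (the $\SZK$ upper bound via statistical difference and the Borel--Cantelli diagonalization) goes through as you describe.
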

\begin{proof}
	Let $\PRP$ be a quantum-secure pseudorandom permutation from $\key \times \domain \to \domain$, whose existence is guaranteed by Lemma~\ref{lemma:zhandry-QPRPs}.
	
	\newcommand{\PRFtto}{\PRF^{2 \to 1}}
	\newcommand{\keytto}{\key^{2 \to 1}}
	
	We first build a pseudorandom 2-to-1 function from $\PRP$. \ We interpret $\domain$ as $[N]$ where $N = |\domain|$, and assume that $N$ is even. \ We construct $\PRFtto : (\key \times \key) \times \domain \to \domain$ as follows:
	
	\begin{itemize}
		\item The key space $\keytto$ is $\key \times \key$. That is, a key $k \in \keytto$ is a pair of keys $(k_1,k_2)$.
		\item $\PRFtto_{(k_1,k_2)}(x) := \PRP_{k_2}( (\PRP_{k_1}(x) \bmod N/2) + 1 )$.
	\end{itemize}
	
	Note that $\PRFtto$ would be a uniformly random 2-to-1 function from $[N] \to [N]$, if $\PRP_{k_1}$ and $\PRP_{k_2}$ were replaced by two uniformly random permutations on $[N]$. \ Hence, by a standard reduction argument, $\PRFtto$ is a quantumly-secure pseudorandom 2-to-1 function. \ That is, for any polynomial-time quantum algorithm $A$, we have
	
	$$
	\left| \Pr_{k \leftarrow \keytto}[A^{\PRFtto_k}() = 1] - \Pr_{f \leftarrow \mathsf{F}^{2 \to 1}_{\domain}}[A^{f}() = 1] \right| < \varepsilon,
	$$
	where $\varepsilon$ is a negligible function and $\mathsf{F}^{2 \to 1}_{\domain}$ is the set of 2-to-1 functions from $\domain \to \domain$.
	
	Also, from the definition of $\PRP$, we have
	
	$$
	\left| \Pr_{k \leftarrow \key^\PRP}[A^{\PRP_k}() = 1] - \Pr_{f \leftarrow \mathsf{Perm}_{\domain}}[A^{f}() = 1] \right| < \varepsilon,
	$$
	\noindent where $\mathsf{Perm}_{\domain}$ is the set of permutations on $\domain$.
	
	From the results of Aaronson and Shi~\cite{as}, Ambainis~\cite{ambainis2005polynomial} and Kutin~\cite{kutin2005quantum}, no $o(N^{1/3})$-query quantum algorithm can distinguish a random permutation from a random 2-to-1 function. \ Therefore, we have
	$$
	\left| \Pr_{f \leftarrow \mathsf{F}^{2 \to 1}_{\domain}}[A^{f}() = 1] - \Pr_{f \leftarrow \mathsf{Perm}_{\domain}}[A^{f}() = 1] \right| < o(1).
	$$
	
	Putting the above three inequalities together, we have
	$$
	\left| \Pr_{k \leftarrow \keytto}[A^{\PRFtto_k}() = 1] - \Pr_{k \leftarrow \key^\PRP}[A^{\PRP_k}() = 1] \right| < o(1),
	$$
	\noindent which means $A$ cannot distinguish $\PRFtto_{\keytto}$ and $\PRP_{\key^\PRP}$.
	
	On the other side, an $\SZK$ algorithm can easily distinguish a permutation from a two-to-one function. \ Therefore, we can proceed exactly as in Theorem~\ref{theo:BPPvsBQP} to construct an oracle $\oracle \in \ppoly$ such that $\SZK^{\oracle} \not\subset \BQP^{\oracle}$.
\end{proof}

Very recently, Chen~\cite{chen2016note} showed that, based on a construction similar to the ``cheat-sheet" function by Aaronson, Ben-David and Kothari~\cite{aaronson2015separations}, we can take any function which is hard for $\BPP$ algorithms, and turn it into a function which is hard for $\SZK$ algorithms in a black-box fashion. \ We are going to adapt this construction, together with a PRF, to build an oracle in $\ppoly$ which separates $\BQP$ from $\SZK$.

\begin{theo}
	Assuming one-way functions exist, there exists an oracle $\oracle \in \ppoly$ such that $\BQP^{\oracle} \not\subset \SZK^{\oracle}$.
\end{theo}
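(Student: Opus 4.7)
The plan is to combine two ingredients: (i) the Zhandry-style construction of Section~\ref{sec:zhandry-result}, which under the existence of one-way functions yields a function $f\in\ppoly$ (drawn from either $\PRPraw$ or $\PRFmod$) such that a $\BQP$ machine can decide the resulting unary language while no $\BPP$ machine can, and (ii) the cheat-sheet style transformation of Chen~\cite{chen2016note} which, in a black-box way, lifts any function that is easy for $\BQP$ but hard for $\BPP$ into a function that is easy for $\BQP$ but hard for $\SZK$. The overall strategy is to apply Chen's transformation directly to the Zhandry oracle and then verify that the composite oracle still lives in $\ppoly$ (which is the only genuinely new ingredient beyond quoting the two prior results).

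The construction, in a bit more detail, is as follows. For each input length $n$ we sample $f_n$ as in Theorem~\ref{theo:BPPvsBQP}: with probability $1/2$ we take $f_n:=\PRPraw_k$ and with probability $1/2$ we take $f_n:=\PRFmod_{(k,a)}$, so that a $\BQP$ machine can decide the bit $b_n\in\{0,1\}$ indicating which branch was taken. On top of this base oracle we add a cheat-sheet layer: for each $n$ and each string $\alpha\in\{0,1\}^{\ell(n)}$ (with $\ell(n)=\Theta(\log n)$ appropriately chosen) we attach a ``cell'' $C_{n,\alpha}$ holding a block of bits, where the unique \emph{honest} cell $C_{n,\alpha^{\star}}$ is keyed to $\alpha^{\star}$ encoding the sequence of bits $(b_n,b_{n+1},\ldots,b_{n+\ell(n)-1})$, and $C_{n,\alpha^{\star}}$ contains a solution to an instance of the canonical $\SZK$-hard problem (say a witness that a certain circuit is a permutation rather than $2$-to-$1$, in the sense of \cite{as}); all other cells are filled with junk that is efficiently verifiable as ``wrong''. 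The $\BQP$ machine solves the $\ell(n)$ base instances in quantum polynomial time, looks up the honest cell, and reads off the answer; so the induced language lies in $\BQP^{\oracle}$. For $\SZK$-hardness we invoke Chen's analysis: any $\SZK^{\oracle}$ algorithm solving the cheat-sheet problem must, in an appropriate black-box sense, locate the honest cell, and hence solve all $\ell(n)$ base instances, which would constitute a $\BPP$-style distinguisher between $\PRPraw$ and $\PRFmod$, contradicting Lemma~\ref{lemma:zhandry}.

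The main obstacle is ensuring $\oracle\in\ppoly$. Each $\PRPraw_k$ and $\PRFmod_{(k,a)}$ is by construction computable by a polynomial-size circuit (given $k$ and $a$ as nonuniform advice), so the base part is in $\ppoly$. The cheat-sheet part is the delicate piece: we need polynomial-size circuits that, on input $(n,\alpha,i)$, output the $i$-th bit of $C_{n,\alpha}$. This requires (a) a circuit that tests whether $\alpha=\alpha^{\star}$, which is only a comparison against the nonuniformly hardwired string $\alpha^{\star}=\alpha^{\star}(n)$, and (b) circuits generating either the honest payload or the deterministic junk payload, both of which can be made poly-size by hardwiring the (randomly chosen) payloads as advice. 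Thus the whole oracle admits circuits of size $\operatorname{poly}(n)$ and lies in $\ppoly$.

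Putting these pieces together, with probability $1$ over the random choices used to build $\oracle$ we obtain $\oracle\in\ppoly$ with $\BQP^{\oracle}\not\subset\SZK^{\oracle}$; hardwiring finitely many bad input lengths (exactly as in the proof of Theorem~\ref{theo:BPPvsBQP}) then yields an explicit such oracle. The two expected pain points are (1) importing Chen's black-box reduction from $\BPP$-hardness to $\SZK$-hardness in a form that accepts a \emph{distributional} hardness assumption such as the one supplied by Lemma~\ref{lemma:zhandry} (rather than worst-case hardness), and (2) keeping the cell-gadget uniformly describable by small circuits so that $\oracle\in\ppoly$; both appear to go through by routine adaptations of \cite{chen2016note,aaronson2015separations}, but this is where the real work sits.
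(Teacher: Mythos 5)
Your high-level instinct---hide the answer at a location that quantum period-finding can recover but that classical samplers cannot---matches the paper, but the concrete construction you propose has a gap at exactly the point you flag as ``where the real work sits,'' and the paper's construction is engineered specifically to avoid that gap. The paper does not build a multi-cell cheat sheet over the $\PRPraw$-vs-$\PRFmod$ decision bits. Instead it augments a single $f=\PRFmod_{(k,a)}$ with one auxiliary string $z\in\{0,1\}^{\sqrt{N}/2}$ that is all zeros except possibly at position $a$, and lets the language be the value of $z_a$. The address space is therefore exponential (all of $\moduliset$), the honest ``cell'' is a single oracle bit, and---crucially---a classical algorithm that locates $a$ immediately produces a collision $f(1)=f(1+a)$, which breaks the pseudorandomness of $\PRFmod$ against a truly random function. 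The $\SZK$ lower bound then goes through the Sahai--Vadhan characterization: a protocol distinguishing the two oracle distributions yields samplers whose output distributions are sensitive to the single bit $z_a$, hence one of them queries position $a$ with probability $\ge 1/3$ even when run on the simulatable all-zeros string $z_0$; recording its queries and testing each recorded index for a collision breaks $\PRFmod$.

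Your version breaks down in three places. First, with $\ell(n)=\Theta(\log n)$ you have only $\mathrm{poly}(n)$ cells, so the Sahai--Vadhan samplers can afford to query every cell; ``the sampler must query the honest cell'' then carries no information, and random guessing already locates the honest address with probability $1/\mathrm{poly}(n)$, so it is unclear how to extract a non-negligible advantage in guessing any single $b_i$ from ``one of the polynomially many queried cells is honest.'' Second, the reduction is circular: to run the samplers you must simulate the cheat-sheet layer, but the honest address $\alpha^{\star}$ is determined by the very bits $b_n,\dots,b_{n+\ell-1}$ you are trying to learn; the paper avoids this because its default oracle $(f,z_0)$ is simulatable without knowing $a$, and the final contradiction is collision-finding rather than bit-guessing. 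Third, if the junk cells are ``efficiently verifiable as wrong'' without reference to $\alpha^{\star}$, a classical machine finds the honest cell by brute force over the $\mathrm{poly}(n)$ cells and the language is not even hard for $\PTIME$. Citing Chen's note does not patch this: that is a worst-case query-complexity statement, and adapting the idea to a distributional, computationally secure, $\ppoly$ setting is precisely the content of the paper's explicit argument. To salvage your route you would need an exponential address space together with an efficiently checkable, PRF-breaking consequence of knowing the address---which is exactly what taking the address to be the period $a$ of $\PRFmod_{(k,a)}$ provides.
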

\begin{proof}
	
	We will use the $\PRFmod : \keymod \times \domainmod \to \domainmod$ defined in Section~\ref{sec:zhandry-result} here. \ For simplicity, we will use $\domain$ to denote $\domainmod$ in this proof. Recall that $\domain$ is interpreted as $[N]$ for $N = N(n) = |\domain|$.
	
	{\bf Construction of distributions $\distr_n^{i}$.} For each $n$, we define distributions $\distr_n^{0}$ and $\distr_n^{1}$ on $(\domain_n \to \domain_n) \times \{0,1\}^{\sqrt{N}/2}$ as follows. \ We draw a function $f_n : \domain \to \domain$ from $\PRFmod_{\keymod}$, that is, we draw $(k,a) \leftarrow \keymod = \keyraw \times A$, and set $f_n := \PRFmod_{(k,a)}$; then we let $z=0^{\sqrt{N}/2}$ first, and set $z_a = i$ in $\distr_n^{i}$; finally we output the pair $(f,z)$ as a sample.
	
	{\bf Distinguishing $\distr_n^0$ and $\distr_n^1$ is hard for $\SZK$.} Recall that $\SZK$ is a {semantic} class. \ That is, a given protocol $\Pi$ might be invalid with different oracles or different inputs (i.e., the protocol might not satisfy the zero-knowledge constraint, or the verifier might accept with a probability that is neither $\ge 2/3$ nor $\le 1/3$). \ We write $\Pi^{(f,z)}() = \bot$ when $\Pi$ is invalid given oracle access to $(f,z)$.
	
	We claim that for any protocol $\Pi$, one of the following two claims must hold for sufficiently large $n$:
	
	\begin{itemize}
		\item[(A)] $\Pr_{(f,z) \leftarrow \distr_n^{0}}\left[\Pi^{(f,z)}() = \bot\right] > 0.1$ or $\Pr_{(f,z) \leftarrow \distr_n^{1}}\left[\Pi^{(f,z)}() = \bot\right] > 0.1$.
		\item[(B)] $\left| \Pr_{(f,z) \leftarrow \distr_n^{0}}\left[\Pi^{(f,z)}() = 1\right] - \Pr_{(f,z) \leftarrow \distr_n^{1}}\left[\Pi^{(f,z)}() = 1\right] \right| < 0.2$.
	\end{itemize}
	
	That is, either $\Pi$ is invalid on a large fraction of oracles, or else $\Pi$ cannot distinguish $\distr_n^0$ from $\distr_n^1$ with a very good probability.
	
	{\bf Building a $\BPP$ algorithm to break $\PRFmod$.}  Suppose for a contradiction that there are infinitely many $n$ such that none of (A) and (B) hold. \ Without loss of generality, we can assume that
	$$
	\Pr_{(f,z) \leftarrow \distr_n^{1}}\left[\Pi^{(f,z)}() = 1\right] - \Pr_{(f,z) \leftarrow \distr_n^{0}}\left[\Pi^{(f,z)}() = 1\right] \ge 0.2.
	$$
	
	We are going to build a $\BPP$ algorithm which is able to break $\PRFmod$ on those $n$, thereby contradicting Lemma~\ref{lemma:zhandry}.
	
	From (A), we have
	$$
	\Pr_{(f,z) \leftarrow \distr_n^{1}}\left[\Pi^{(f,z)}() = 1\right] - \left(1-\Pr_{(f,z) \leftarrow \distr_n^{0}}\left[\Pi^{(f,z)}() = 0\right]\right) \ge 0.1,
	$$
	which simplifies to
	$$
	\Pr_{(f,z) \leftarrow \distr_n^{1}}\left[\Pi^{(f,z)}() = 1\right] + \Pr_{(f,z) \leftarrow \distr_n^{0}}\left[\Pi^{(f,z)}() = 0\right] \ge 1.1.
	$$
	
	From the definition of $\distr_n^0$ and $\distr_n^1$, the above implies that
	$$
	\Pr_{(k,a) \leftarrow \keymod}\left[ \Pi^{(f,z_1)}() = 1 \text{ and } \Pi^{(f,z_0)}() = 0, f = \PRFmod_{(k,a)}, z_0 = 0^{\sqrt{N}/2}, z_1 = e_{a} \right] \ge 0.1,
	$$
	where $e_a$ denotes the string of length $\sqrt{N}/2$ that is all zero except for the $a$-th bit.
	
	{\bf Analysis of distributions $A_i^{(f,z)}$.} By a result of Sahai and Vadhan~\cite{sahai2003complete}, there are two polynomial-time samplable distributions $A_0^{(f,z)}$ and $A_1^{(f,z)}$ such that $\| A_0^{(f,z)} - A_1^{(f,z)} \| \ge 1 - 2^{-n}$ when $\Pi^{(f,z)}() = 1$; and $\| A_0^{(f,z)} - A_1^{(f,z)} \| \le 2^{-n}$ when $\Pi^{(f,z)}() = 0$.
	
	Hence, with probability $0.1$ over $(k,a) \leftarrow \keymod$, we have
	$$
	\| A_0^{(f,z_1)} - A_1^{(f,z_1)} \| \ge 1 - 2^{-n} \text{ and } \| A_0^{(f,z_0)} - A_1^{(f,z_0)} \| \le 2^{-n}.
	$$
	This means that either $\|A_0^{(f,z_0)} - A_0^{(f,z_1)}\| \ge 1/3$ or $\| A_1^{(f,z_0)} - A_1^{(f,z_1)} \| \ge 1/3$.
	
	Now we show that the above implies an algorithm that breaks $\PRFmod$, and therefore contradicts Lemma~\ref{lemma:zhandry}.
	
	{\bf The algorithm and its analysis.} Given oracle access to a function $f \leftarrow \PRFmod_{\keymod}$, our algorithm first picks a random index $i \in \{0,1\}$. \ It then simulates $A_i$ with oracle access to $(f,z)$ to take a sample from $A_i^{(f,z)}$, where $z=z_0=0^{\sqrt{N}/2}$; it records all the indices in $z$ that are queried by $A_i$. \ Now, with probability at least $0.1/2 = 0.05$, we have $\|A_i^{(f,z_0)} - A_i^{(f,z_1)}\| \ge 1/3$. \ Since $(f,z_0)$ and $(f,z_1)$ only differ at the $a$-th index of $z$, we can see that $A_i^{(f,z_0)}$ must have queried the $a$-th index of $z$ with probability at least $1/3$.
	
	Hence, with probability at least $0.05/3 = \Omega(1)$, one of the values recorded by our algorithm is $a$, and in that case our algorithm can find a collision in $f$ easily. \ However, when $f$ is a truly random function, no algorithm can find a collision with a non-negligible probability. \ Therefore, this algorithm is a distinguisher between $\PRFmod$ and a truly random function, contradicting the fact that $\PRFmod$ is secure by Lemma~\ref{lemma:zhandry}.
	
	{\bf Construction of the oracle $\oracle$.} Finally, we are ready to construct our oracle $\oracle$. \ We will let $\oracle$ encode pairs $(f_1,z_1),(f_2,z_2),\dotsc$, where $f_n$ is a function from $\domain_n$ to $\domain_n$ and $z_n \in \{0,1\}^{\sqrt{N}/2}$.

	For each $n$, we draw a random index $i \leftarrow \{0,1\}$, and then draw $(f_n,z_n) \leftarrow \distr_n^i$. \ We set $L$ to be the unary language consisting of all $0^n$ for which $(f_n,z_n)$ is drawn from $\distr_n^1$.
	
	From Lemma~\ref{lemma:zhandry}, a quantum algorithm can distinguish $\distr_n^0$ from $\distr_n^1$, except with negligible probability, by recovering $a$. \ Therefore, by a similar argument as in the proof of Theorem~\ref{theo:BPPvsBQP}, we have $L \in \BQP^{\oracle}$ with probability $1$.
	
	On the other hand, for a protocol $\Pi$ and a sufficiently large $n$, either (A) happens, which means that $\Pi^{(f_n,z_n)}$ is invalid with probability $0.05$ on input $0^n$, or (B) happens, which means that $\Pi$ cannot distinguish $\distr_n^0$ and $\distr_n^1$ with a constant probability.
	
	In both cases $\Pi$ cannot decide whether $0^n$ belongs to $L$ correctly with bounded error. \ Hence, again by a similar argument as in the proof of Theorem~\ref{theo:BPPvsBQP}, the probability that $\Pi$ decides $L$ is $0$. \ And since there are only countably many protocols, we have $L \notin \SZK^{\oracle}$ with probability $1$, which means that $\BQP^{\oracle} \not\subset \SZK^{\oracle}$ with probability $1$.
	
	Finally, it is easy to see that $\oracle \in \ppoly$, which completes the proof.
\end{proof}

\section{Missing Proofs in Section~\ref{sec:proposal}}
\label{sec:missing-proofs-proposal}

We first prove Lemma~\ref{lemma:dev-to-adv}.

\begin{proofof}{Lemma~\ref{lemma:dev-to-adv}}
	
	Let $N=2^n$ for simplicity and $L$ be a list consisting of $N$ reals: $|\langle u\spz{w}|^2 - 2^{-n}$ for each $w \in \{0,1\}^n$. \ We sort all reals in $L$ in increasing order, and denote them by $a_1,a_2,\dotsc,a_{N}$. \ We also let $\Delta = \dev(\spz{u})$ for brevity.
	
	Then from the definitions of $\adv(\spz{u})$ and $\dev(\spz{u})$, we have
	$$
	\sum_{i=1}^{N} a_i = 0,
	$$
	$$
	\sum_{i=1}^{N} |a_i| = \Delta,
	$$
	and
	$$
	\adv(\spz{u}) = \frac{1}{2} + \sum_{i=N/2+1}^{N} a_i.
	$$
	
	Now, let $t$ be the first index such that $a_t \ge 0$. \ Then we have
	$$
	\sum_{i=t}^{N} a_i = \sum_{i=t}^{N} |a_i| = \frac{\Delta}{2} \qquad \text{and} \qquad \sum_{i=1}^{t-1} a_i = - \sum_{i=1}^{t-1} |a_i| = -\frac{\Delta}{2}.
	$$
	
	We are going to consider the following two cases.
	
	\begin{itemize}
		\item (i): $t \ge N/2+1$. \
		Note that $a_i$'s are increasing and for all $i < t$, $a_i < 0$, we have
		$$
		\sum_{i=1}^{N/2} |a_i| \ge \sum_{i=N/2+1}^{t-1} |a_i|,
		$$
		which means
		$$
		\sum_{i=N/2+1}^{t-1} |a_i| \le \frac{1}{2} \cdot \sum_{i=1}^{t-1} |a_i| \le \frac{\Delta}{4}.
		$$
		
		Therefore,
		$$\sum_{i=N/2+1}^{N} a_i \ge \sum_{i=t}^{N} a_i + \sum_{i=N/2+1}^{t-1} a_i \ge \frac{1}{2} + \frac{\Delta}{2} - \frac{\Delta}{4} \ge \frac{1}{2} + \frac{\Delta}{4}.
		$$
		\item (ii): $t \le N/2$. In this case, note that we have
		$$
		\sum_{i=N/2+1}^{N} a_i \ge \sum_{i=t}^{N/2} a_i.
		$$
		Therefore,
		$$
		\sum_{i=N/2+1}^{N} a_i \ge \frac{1}{2} \cdot \sum_{i=t}^{N} a_i \ge \frac{\Delta}{4}.
		$$
		
	\end{itemize}
	
	Since in both cases we have $\sum_{i=N/2+1}^{N} a_i \ge \frac{\Delta}{4}$, it follows that $$\adv(\spz{u}) = \frac{1}{2} + \sum_{i=N/2+1}^{N} a_i \ge \frac{1}{2} + \frac{\Delta}{4},$$
which completes the proof.
	
\end{proofof}

Now we prove Lemma~\ref{lemma:random}.

\begin{proofof}{Lemma~\ref{lemma:random}}
	The random pure state $\spz{u}$ can be generated as follows: draw four i.i.d. reals $x_1,x_2,x_3,x_4 \sim \mathcal{N}(0,1)$, and set
	$$
	\spz{u} = \frac{(x_1 + x_2 i)\spz{0} + (x_3 + x_4 i)\spz{1}}{\sqrt{x_1^2+x_2^2+x_3^2+x_4^2}}.
	$$

	Hence, we have
	\begin{align*}
	\Ex\left[ \Big||\rpz{u} 0 \rangle|^2 - |\rpz{u} 1 \rangle|^2 \Big|\right]
	=& \int_{-\infty}^{\infty}\int_{-\infty}^{\infty}\int_{-\infty}^{\infty}\int_{-\infty}^{\infty} \frac{1}{(2\pi)^2} \frac{| x_1^2 + x_2^2 - x_3^2 - x_4^2|}{x_1^2 + x_2^2 + x_3^2 + x_4^2}
	\cdot e^{-(x_1^2+x_2^2+x_3^2+x_4^2)/2} dx_1 dx_2 dx_3 dx_4\\
	=& \int_{0}^{2\pi}\int_{0}^{2\pi}\int_{0}^{+\infty}\int_{0}^{+\infty} \frac{1}{(2\pi)^2} \cdot \frac{|\rho_1^2 - \rho_2^2|}{\rho_1^2+\rho_2^2} \cdot\rho_1\rho_2 \cdot e^{-(\rho_1^2+\rho_2^2)/2} d\rho_1 d \rho_2 d\theta_1 d\theta_2
	\tag{$x_1=\rho_1 \sin \theta_1$, $y_1 = \rho_1 \cos \theta_1$, $x_2=\rho_2 \sin \theta_2$, $y_2 = \rho_2 \cos \theta_2$}\\
	=& \int_{0}^{+\infty}\int_{0}^{+\infty} \frac{|\rho_1^2 - \rho_2^2|}{\rho_1^2+\rho_2^2} \cdot\rho_1\rho_2 \cdot e^{-(\rho_1^2+\rho_2^2)/2} d\rho_1 d \rho_2\\
	=& \frac{1}{2}
	\end{align*}
\end{proofof}

\section{Missing Proofs in Section~\ref{sec:sepa-samp-relation}}
\label{sec:missing-proofs-sepa-samp}

\newcommand{\Var}{\operatorname{Var}}

We prove Lemma~\ref{lemma:con-adv} here.

\begin{proofof}{Lemma~\ref{lemma:con-adv}}
	We prove the concentration inequality by bounding the variance,
	$$
	\Var[\adv(f)] = \Ex[\adv(f)^2] - \Ex[\adv(f)]^2.
	$$
	
	Note that
	$$
	\Ex[\adv(f)]^2 = \left( \frac{2}{\sqrt{2\pi}}\int_{1}^{+\infty} x^2e^{-x^2/2} dx \right)^2 = \Qsucc^2.
	$$
	
	We now calculate $\Ex[\adv(f)^2]$. \ We have
	\begin{align*}
	\Ex_{f}[\adv(f)^2] &= \Ex_{f} \left[ \left( \Ex_{z \in \{0,1\}^n} [\widehat{f}^2(z) \cdot \mathbf{1}_{|\widehat{f}(z)| \ge 1} ] \right)^2 \right] \\
	&= \Ex_{f} \left[ \Ex_{z_1,z_2 \in \{0,1\}^n} \left[\widehat{f}^2(z_1) \widehat{f}^2(z_2) \cdot \mathbf{1}_{|\widehat{f}(z_1)| \ge 1 \land |\widehat{f}(z_2)| \ge 1} \right] \right].\\
	&= \Ex_{z_1,z_2 \in \{0,1\}^n} \left[ \Ex_{f} \left[\widehat{f}^2(z_1) \widehat{f}^2(z_2) \cdot \mathbf{1}_{|\widehat{f}(z_1)| \ge 1 \land |\widehat{f}(z_2)| \ge 1} \right] \right].
	\end{align*}
	
	Now there are two cases: $z_1 = z_2$ and $z_1 \ne z_2$. \ When $z_1 = z_2$, let $z=z_1=z_2$; then we have
	\begin{align*}
	Ex_{f} \left[\widehat{f}^2(z_1) \widehat{f}^2(z_2) \cdot \mathbf{1}_{|\widehat{f}(z_1)| \ge 1 \land |\widehat{f}(z_2)| \ge 1} \right]
	=& \Ex_{f}\left[\widehat{f}^4(z) \cdot \mathbf{1}_{|\widehat{f}(z)| \ge 1} \right]\\
	=& \frac{2}{\sqrt{2\pi}}\int_{1}^{+\infty} x^4e^{-x^2/2} dx\\
	=& O(1).
	\end{align*}
	
	Next, if $z_1 \ne z_2$, then without loss of generality, we can assume $z_1 = 0^N$. \ Now we define two sets $A$ and $B$,
	
	$$
	A = \{x\in\{0,1\}^n : (z_2 \cdot x) = 0 \} \text{ and } B = \{x\in\{0,1\}^n : (z_2 \cdot x) = 1 \}.
	$$
	
	We also define
	$$
	\widehat{f}_A := \frac{1}{\sqrt{N/2}} \cdot \sum_{z \in A} f(z) \text{ and }
	\widehat{f}_B := \frac{1}{\sqrt{N/2}} \cdot \sum_{z \in B} f(z).
	$$
	
	Then from the definitions of $\widehat{f}(z_1)$ and $\widehat{f}(z_2)$, we have
	$$
	\widehat{f}(z_1) = \frac{1}{\sqrt{2}} \cdot (\widehat{f}_A + \widehat{f}_B) \text{ and }
	\widehat{f}(z_1) = \frac{1}{\sqrt{2}} \cdot (\widehat{f}_A - \widehat{f}_B).
	$$
	
	Therefore,
	\begin{align*}
	\Ex_{f} \left[\widehat{f}^2(z_1) \widehat{f}^2(z_2) \cdot \mathbf{1}_{|\widehat{f}(z_1)| \ge 1 \land |\widehat{f}(z_2)| \ge 1} \right]
	=& \left(\frac{1}{\sqrt{2\pi}} \right)^2 \cdot \int_{ |a+b| \ge \sqrt{2} \atop |a-b| \ge \sqrt{2}} \frac{1}{4} \cdot (a+b)^2 \cdot (a-b)^2 \cdot e^{-(a^2+b^2)/2} \cdot da db.
	\end{align*}
	
	Let $x=a+b$ and $y=a-b$. \ Then
$$ a = \frac{x+y}{2},~~~b = \frac{x-y}{2},~~~da = \frac{dx + dy}{2},~~~\text{and}~~~db = \frac{dx - dy}{2}.$$
Also note that $x^2 + y^2 = 2(a^2 + b^2)$. \ Plugging in $x$ and $y$, the above can be simplified to
	\begin{align*}
	\frac{1}{2\pi} \int_{|x| \ge \sqrt{2} \atop |y| \ge \sqrt{2}} \frac{1}{4} x^2 y^2 e^{-(x^2+y^2)/4} \cdot \frac{1}{2} dx dy
	=& \frac{1}{2\pi} \left( \int_{|x| \ge \sqrt{2}} \frac{1}{2\sqrt{2}} \cdot x^2 e^{-x^2/4} dx \right)^2\\
	=& \frac{1}{2\pi} \left( \int_{\sqrt{2}}^{+\infty} \frac{1}{\sqrt{2}} \cdot x^2 e^{-x^2/4} dx \right)^2\\
	=& \frac{1}{2\pi} \left( \int_{1}^{+\infty} 2 t^2 e^{-t^2/2} dt \right)^2 \tag{$t = x/\sqrt{2}$}\\
	=&   \left( \frac{2}{\sqrt{2\pi}}\int_{1}^{+\infty} t^2e^{-t^2/2} dt \right)^2
    =& \Qsucc^2.
	\end{align*}
	
	Putting two cases together, we have
	$$
	\Ex_{f}[\adv(f)^2] = \frac{1}{N} \cdot O(1) + \frac{N-1}{N} \cdot \Qsucc^2,
	$$
	
	which in turn implies
	$$
	\Var[\adv(f)] = O(1/N).
	$$
\end{proofof}

\section{Missing Proofs in Section~\ref{sec:oracle-sepa-ppoly}}
\label{sec:missing-proofs-oracle-sepa}

For completeness, we prove Lemma~\ref{lemma:zhandry} here.

\begin{proofof}{Lemma~\ref{lemma:zhandry}}
	In the following, we will always use $\varepsilon=\varepsilon(n)$ to denote a negligible function. \ And we will denote $\domainraw$ as $\domain$ for brevity. \ Recall that we interpret $\domain$ as $[N]$ for $N = N(n) =\domain$.
	
	{\bf Both $\PRPraw$ and $\PRFmod$ are classically-secure PRFs.} It is well-known that a secure PRP is also a secure PRF; therefore $\PRPraw$ is a classically-secure PRF. \ So we only need to prove this for $\PRFmod$.
	
	Recall that $\PRFmod_{(k,a)}(x) = \PRPraw_k( (x-1) \bmod a + 1 )$. \ We first show that if the $\PRPraw$ in the definition of $\PRFmod$ were replaced by a truly random function, then no classical polynomial-time algorithm $A$ could distinguish it from a truly random function. \ That is,
	\begin{equation} \label{eq:neg-lig}
	\left| \Pr_{f \leftarrow \domain^{\domain}, a \leftarrow \moduliset}[A^{f_{\bmod a}}() = 1] - \Pr_{f \leftarrow \domain^{\domain}}[A^{f}() = 1] \right| < \varepsilon,
	\end{equation}
	where $f_{\bmod a}(x) := f((x-1) \bmod a + 1)$.
	
	Clearly, as long as $A$ never queries its oracle on two points $x$ and $x'$ such that $x \equiv x' \pmod a$, the oracle will look random. \ Suppose $A$ makes $q$ queries in total. \ There are $\binom{q}{2}$ possible differences between query points, and each difference is at most $N$. \ So for large enough $N$, each difference can be divisible by at most two different moduli from $\moduliset$ (recall that each number in $\moduliset$ lies in $[\sqrt{N}/4,\sqrt{N}/2]$). \ And since $|\moduliset| \ge \Omega(\sqrt{N}/\log N)$, the total probability of querying two $x$ and $x'$ such that $x \equiv x' \pmod a$ is at most
	$$
	O\left( \frac{q^2 \log N}{\sqrt{N}} \right),
	$$
\noindent which is negligible as $N$ is exponential in $n$. \ This implies \eqref{eq:neg-lig}.
	
	Now, since $\PRPraw$ is a classically-secure PRF, for any polynomial-time algorithm $A$, we have
	\begin{equation}\label{eq:close-33}
	\left| \Pr_{f \leftarrow \domain^{\domain}, a \leftarrow \moduliset}[A^{f_{\bmod a}}() = 1] - \Pr_{f \leftarrow \PRPraw_{\keyraw}, a \leftarrow \moduliset}[A^{f_{\bmod a}}() = 1] \right| < \varepsilon,
	\end{equation}
\noindent since otherwise we can directly construct a distinguisher between $\PRPraw_{\keyraw}$ and $\domain^{\domain}$.
	
	Note that
	$$
	\Pr_{f \leftarrow \PRPraw_{\keyraw}, a \leftarrow \moduliset}[A^{f_{\bmod a}}() = 1] = \Pr_{f \leftarrow \PRFmod_{\keymod}}[A^{f}() = 1]
	$$
	by their definitions. Hence, \eqref{eq:neg-lig} and \eqref{eq:close-33} together imply that
	$$
	\left| \Pr_{f \leftarrow \PRFmod_{\keymod}}[A^{f}() = 1] - \Pr_{f \leftarrow \domain^{\domain}}[A^{f}() = 1] \right| <\varepsilon
	$$
\noindent for any polynomial-time algorithm $A$. \ This completes the proof for the first statement.
	
	{\bf Quantum algorithm for recovering $a$ given oracle access to $\PRFmod_{\keymod}$.} Let $(k,a) \leftarrow \keymod$, $f = \PRFmod_{(k,a)}$ and $g = \PRPraw_{k}$. \ From the definitions, we have $f = g_{\bmod a}$.
	
	Since $g$ is a permutation, there is no collision $(x,x')$ such that $g(x) = g(x')$. \ Moreover, in this case, $f = g_{\bmod a}$ has a unique period $a$. \ Therefore, we can apply Boneh and Lipton's quantum period-finding algorithm~\cite{boneh1995quantum} to recover $a$. \ Using a polynomial number of repetitions, we can make the failure probability negligible, which completes the proof for the second statement.
	
	{\bf Quantum algorithm for distinguishing $\PRPraw$ and $\PRFmod$.} Finally, we show the above algorithm implies a good quantum distinguisher between $\PRPraw$ and $\PRFmod$. \ Given oracle access to a function $f$, our distinguisher $A$ tries to recover a period $a$ using the previously discussed algorithm, and accepts only if $f(1) = f(1+a)$.
	
	When $f \leftarrow \PRPraw_{\keyraw}$, note that $f$ is a permutation, which means $A$ accepts with probability $0$ in this case.
	
	On the other side, when $f \leftarrow \PRFmod_{\keymod}$, from the second statement, $A$ can recover the period $a$ with probability at least $1 - \varepsilon$. \ Therefore $A$ accepts with probability at least $1-\varepsilon$.
	
	Combining, we find that $A$ is a distinguisher with advantage $1-\varepsilon$, and this completes the proof for the last statement.
\end{proofof}

\section{Numerical Simulation For Conjecture~\ref{conj:adv-large} }
\label{sec:tukareta}

\newcommand{\Cthr}{C_{\mathsf{thr}}}

Recall Conjecture \ref{conj:adv-large}, which said that a random quantum circuit $C$ on $n$ qubits satisfies $\adv(C) \ge \Cthr - \epsilon$ with probability $1 - 1/\exp(n)$, where

$$\Cthr := \frac{1 + \ln 2}{2}.$$

We first explain where the magic number $\Cthr$ comes from. \ Suppose $C$ is drawn from $\muharr^{2^n}$ instead of $\mugrid$. \ Then $C \spz{0^n}$ is a random quantum state, and therefore the values $2^n \cdot |\rpz{x} C\spz{0}|$'s, for each $x \in \{0,1\}^n$ are distributed very closely to $2^n$ i.i.d.\ exponential distributions with $\lambda = 1$.

So, assuming that, we can see that the median of $\probs(C\spz{0})$ concentrates around $\ln 2$, as
$$
\int_{0}^{\ln 2} dx e^{-x} = \frac{1}{2},
$$
which also implies that $\adv(C)$ concentrates around
$$
\int_{\ln 2}^{+\infty} x e^{-x} dx = \Cthr = \frac{1 + \ln 2}{2} \approx 0.846574.
$$

In the following, we first provide some numerical evidence that the values in $\probs(C\spz{0})$ \textit{also} behave like exponentially distributed random variables, which explains why the constant should indeed be $\Cthr$. \ Then we provide a direct numerical simulation for the distribution of $\adv(C)$ to argue that $\adv(C)$ approximately follows a nice normal distribution. \ Finally we examine the decreasing rate of the standard variance of $\adv(C)$ to support our conjecture.

\subsection{Numerical Simulation Setting}

In the following we usually set $n = 9$ or $n = 16$ (so that $\sqrt{n}$ is an integer); and we always set $m = n^2$ as in Conjecture~\ref{conj:adv-large}.

\subsection{Distribution of $\probs(C \spz{0})$ : Approximate Exponential Distribution}

In Figure~\ref{fg:distprobC} we plot the histogram of the distribution of the normalized probabilities in $\probs(C\spz{0})$ where $C \leftarrow \mugrid^{16,256}$, that is,
$$
\{ 2^{n} \cdot p  : p \in \probs( C\spz{0}) \}.
$$
And we compare it with the exponential distribution with $\lambda = 1$. \ From Figure~\ref{fg:distprobC}, it is easy to observe that these two distributions are quite similar.

\begin{figure}
	\begin{center}
	\includegraphics[width=12cm]{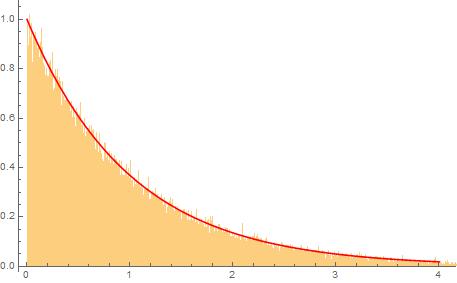}
	\caption{A histogram of (normalized) $\probs(C\spz{0})$, where $C \leftarrow \mugrid^{16,256}$.
		The x-axis represents the probability, and the y-axis represents the estimated density, and the {\color{red} red} line indicates the PDF of the exponential distribution with $\lambda = 1$.}\label{fg:distprobC}
    \end{center}
\end{figure}

\subsection{Distribution of $\adv(C)$ : Approximate Normal Distribution}

Next we perform direct numerical simulation to see how $\adv(C)$ is distributed when $C \leftarrow \mugrid^{n,m}$. \ Our results suggest that $\adv(C)$ approximately follows a normal distribution with mean close to $\Cthr$.

\subsubsection{$\mugrid^{9,81}$, $10^5$ samples}

We first draw $10^5$ i.i.d. samples from $\mugrid^{9,81}$ and plot the distribution of the corresponding $\adv(C)$'s in Figure~\ref{fg:n3m81}. \ From Figure~\ref{fg:n3m81}, we can see that the distribution of $\adv(C)$ follows a nice normal distribution, with mean very close to $\Cthr$.

\begin{figure}
	\begin{center}
		\includegraphics[width=12cm]{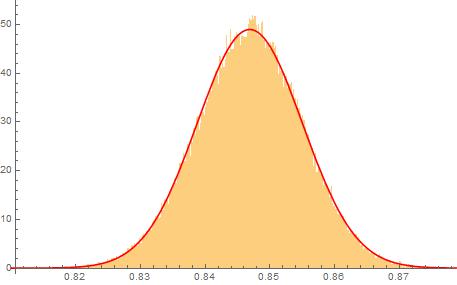}
		\caption{A histogram of the $\adv(C)$'s of the $10^5$ i.i.d. samples from $\mugrid^{9,81}$.
			The x-axis represents the value of $\adv(C)$, and the y-axis represents the estimated density, and the {\color{red} red} line indicates the PDF of the normal distribution $\mathcal{N}(0.846884, 0.00813911^2)$.}\label{fg:n3m81}
	\end{center}
\end{figure}

\subsubsection{$\mugrid^{16,256}$, $10^5$ samples}

Next, we draw $10^5$ i.i.d.\ samples from $\mugrid^{16,256}$ and plot the distribution of the corresponding $\adv(C)$'s in Figure~\ref{fg:n4m256}. \ From Figure~\ref{fg:n4m256}, we can observe that the distribution of $\adv(C)$ in this case also mimics a nice normal distribution, with mean even closer to $\Cthr$ than in the previous case.

\begin{figure}
	\begin{center}
		\includegraphics[width=12cm]{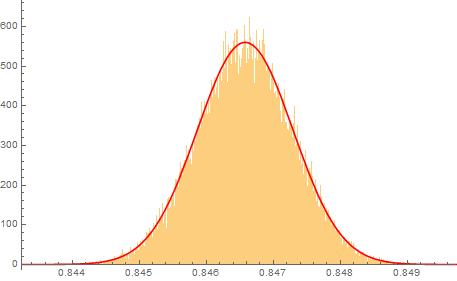}
		\caption{A histogram of the $\adv(C)$'s of the $10^5$ i.i.d. samples from $\mugrid^{16,256}$.
			The x-axis represents the value of $\adv(C)$, the y-axis represents the estimated density, and the {\color{red} red} line indicates the PDF of the normal distribution $\mathcal{N}(0.846579, 0.000712571^2)$.}\label{fg:n4m256}
	\end{center}
\end{figure}

\subsection{The Empirical Decay of Variance}

The previous subsection suggests that $\adv(C)$ follows a normal distribution with mean approaching $\Cthr$. \ If that's indeed the case, then informally, Conjecture \ref{conj:adv-large} becomes equivalent to the conjecture that the variance $\sigma$ of $\Cthr$ becomes $O(1/n)$ as $n \to +\infty$. \ So we wish to verify the latter conjecture for $\mugrid^{n,n^2}$ with some numerical simulation.

\newcommand{\mugeneral}{\mu_{\mathsf{general}}}

\subsubsection*{The circuit distribution $\mugeneral^{n,m}$}

Unfortunately, the definition of $\mugrid^{n,m}$ requires $n$ to be a perfect square, and there are only five perfect squares for which we can perform quick simulations ($n \in \{1,4,9,16,25\}$). \ So we consider the following distribution $\mugeneral^{n,m}$ on $n$ qubits and $m$ circuits instead: each of $m$ gates is a Haar random two-qubit gate acting on two qubits chosen uniformly at random. \ In this case, since we don't need to arrange the qubits in a square grid, $n$ can be any positive integer.

Numerical simulation shows that $\adv(C)$ is distributed nearly the same when $C$ is drawn from $\mugeneral^{n,n^2}$ or $\mugrid^{n,n^2}$ for $n = 3$ or $n = 4$, so it is reasonable to consider $\mugeneral$ instead of $\mugrid$.

For each $n = 2,3,\dotsc,16$, we draw 1000 i.i.d.\ samples from $\mugeneral^{n,n^2}$, and calculate the variance of the corresponding $\adv(C)$'s. \ The results are summarized in Figure~\ref{fg:vardecay}.

\begin{figure}
	\begin{center}
		\includegraphics[width=12cm]{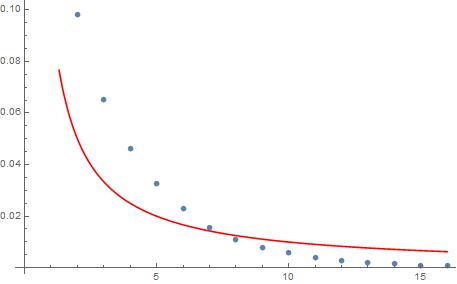}
		\caption{The empirical decay of the variance of $\adv(C)$. \ Here a point $(x,y)$ means that the standard variance of the corresponding $\adv(C)$'s for the 1000 i.i.d.\ samples from $\mugeneral^{x,x^2}$ is $y$. \ Also, the {\color{red} red} line represents the function $y = 0.1/x$.}\label{fg:vardecay}
	\end{center}
\end{figure}

From Figure~\ref{fg:vardecay}, we can observe that the variance decreases faster than the inverse function $1/x$; hence it supports Conjecture~\ref{conj:adv-large}.

\end{document}